\documentclass[aps,prx,twocolumn,nofootinbib,longbibliography,notitlepage,floatfix,10pt]{revtex4-2}
\usepackage[dvipsnames]{xcolor}
\usepackage{amsmath}   % provides gather, align, …
\usepackage{amssymb}
\usepackage{tikz}
\usetikzlibrary{3d}
\usetikzlibrary{decorations.pathmorphing}
\usetikzlibrary{arrows.meta}
\usepackage[english]{babel}
\usepackage{physics}
\usepackage{bm}%给希腊字母加粗
\usepackage{appendix}%加附录的包
\usepackage{setspace}
\usepackage{url}
\usepackage{tikz-cd}

\usepackage{pgfplots} % Required for 3D plots
\usepackage{tikz-3dplot} % Required for 3D plotting

\usepackage{graphicx}%插入图片的包
\graphicspath{{pics/}}%插入图片路径
\usepackage{wasysym}
\usepackage{color}
\usepackage{wrapfig,cancel} 
\usepackage{float} 
\usepackage{subcaption}
\usepackage{amsthm}
\usepackage{mathrsfs}
\usepackage[bottom, perpage]{footmisc}
\usepackage{extarrows}

\usepackage{scalerel}
\usetikzlibrary{svg.path}
\makeatletter
\AddToHook{package/ltcaption/before}{%
	\expandafter\let\csname longtable*\endcsname\undefined
	\expandafter\let\csname endlongtable*\endcsname\undefined
}
\makeatother
\usepackage[justification=raggedright]{caption}

\usepackage{array}
\usepackage{multirow}
\usepackage{longtable}
\newcommand{\bx}{\bar{x}}
\newcommand{\by}{\bar{y}}
\newcommand{\bz}{\bar{z}}
\newcommand{\bF}{\bar{F}}
\newcommand{\bA}{\bar{A}}

\newcommand{\cI}{\mathcal{I}}
\newcommand{\imop}{\operatorname{im}}
\newcommand{\coker}{\operatorname{coker}}
\newcommand{\torsion}{\operatorname{torsion}}
\newcommand{\supp}{\operatorname{supp}}
\newcommand{\annF}{\ann_{\hat R}(F)}

\newcommand{\vectwo}[2]{\begin{pmatrix}#1\\#2\end{pmatrix}}

\newcommand{\ann}{\operatorname{ann}}

\newtheorem{definition}{Definition}[section]

\newtheorem{theorem}{Theorem}[section]
\newtheorem{lemma}[theorem]{Lemma}

\definecolor{darkblue}{RGB}{0,0,150}
\definecolor{darkred}{RGB}{150,0,0}

\definecolor{orcidlogocol}{HTML}{A6CE39}
\tikzset{
	orcidlogo/.pic={
		\fill[orcidlogocol] svg{M256,128c0,70.7-57.3,128-128,128C57.3,256,0,198.7,0,128C0,57.3,57.3,0,128,0C198.7,0,256,57.3,256,128z};
		\fill[white] svg{M86.3,186.2H70.9V79.1h15.4v48.4V186.2z}
		svg{M108.9,79.1h41.6c39.6,0,57,28.3,57,53.6c0,27.5-21.5,53.6-56.8,53.6h-41.8V79.1z M124.3,172.4h24.5c34.9,0,42.9-26.5,42.9-39.7c0-21.5-13.7-39.7-43.7-39.7h-23.7V172.4z}
		svg{M88.7,56.8c0,5.5-4.5,10.1-10.1,10.1c-5.6,0-10.1-4.6-10.1-10.1c0-5.6,4.5-10.1,10.1-10.1C84.2,46.7,88.7,51.3,88.7,56.8z};
	}
}

\newcommand\orcidicon[1]{\href{https://orcid.org/#1}{\mbox{\scalerel*{
				\begin{tikzpicture}[yscale=-1,transform shape]
					\pic{orcidlogo};
				\end{tikzpicture}
			}{|}}}}

\makeatletter
\protected\def\my@emoji@pic #1#2{\leavevmode@ifvmode
	\lower\dimexpr #1\p@*1/10\hbox{\includegraphics[height={#1\p@}]{#2}}}
\def\my@emoji@math #1{%
	\mathchoice
	{\my@emoji@pic\tf@size{#1}}{\my@emoji@pic\tf@size{#1}}
	{\my@emoji@pic\sf@size{#1}}{\my@emoji@pic\ssf@size{#1}}}
\protected\def\myemoji #1{{\ifmmode\my@emoji@math{#1}\else\my@emoji@pic\f@size{#1}\fi}}
\makeatother

\newcommand{\revsubset}{\mathchoice%
	{\rotatebox[origin=c]{180}{$\subset$}}% Display style
	{\rotatebox[origin=c]{180}{$\subset$}}% Text style
	{\rotatebox[origin=c]{180}{$\scriptstyle\subset$}}% Script style
	{\rotatebox[origin=c]{180}{$\scriptscriptstyle\subset$}}% Script script style
}

\newcommand{\dual}{\overset{\text{dual}}{\longleftrightarrow}}

\usepackage[
  colorlinks=true,
  linkcolor=darkblue,
  citecolor=darkblue,
  urlcolor=darkred,
  linktocpage=true
]{hyperref}

\begin{document}
	\title{Fracton topological holography}
	
	\begin{abstract}
		Topological holography (TH), or SymTFT, realizes symmetries and dualities of a quantum system as boundary data of a topological bulk in one higher dimension. 
		We formulate fracton topological holography (FTH), extending this mechanism from liquid topological orders to fracton stabilizer codes. 
		The construction is organized as a general four-stage framework: prepare the bulk model and compute its excitations, determine boundary data and admissible gapped top boundaries, identify the low-energy preserving operator algebra together with its symmetry, relation, and twist data, and then switch among top boundaries to compare the induced boundary descriptions. 
		As a type-I example, we develop FTH for the X-cube model with smooth and rough top boundaries; for a minimal effective Hamiltonian, both yield transverse-field plaquette Ising models, with exchanged subsystem symmetry and twist data, and the boundary switch is implemented by a linear-depth local unitary sequential quantum circuit (SQC). 
		As a type-II example, we formulate FTH for Haah's cubic code in the Laurent-polynomial stabilizer formalism and analyze the natural $(Z)$ and $(X)$ top boundaries, which induce two two-dimensional qubit systems related locally by exchanging generalized plaquette Ising and transverse-field terms and nonlocally by a symmetry--relation duality. 
		These results show that FTH is a genuine extension of TH to both type-I and type-II fracton orders. 
		FTH therefore provides a concrete framework for organizing and understanding duality, with the prospect of offering a systematic route to new dualities.
	\end{abstract}

	\date{\today}

	\author{Yu-Tao Hu}\thanks{These authors contributed equally.}

	\author{Jie-Yu Zhang}\thanks{These authors contributed equally.}
	
	\author{Peng Ye\orcidicon{0000-0002-6251-677X}}
	\email{yepeng5@mail.sysu.edu.cn}
	\affiliation{Guangdong Provincial Key Laboratory of Magnetoelectric Physics and Devices, State Key Laboratory of Optoelectronic Materials and Technologies,
		and School of Physics, Sun Yat-sen University, Guangzhou, 510275, China}

	\maketitle

	\tableofcontents

	\section{Introduction}\label{sec_introduction}

		Duality matches microscopically distinct systems that are actually alike. Examples range from the electro-magnetic duality of QED~\cite{jackson_2021_classical_electrodynamics}, Kramers--Wannier duality~\cite{kramers_wannier_1941_part1}, and the Jordan--Wigner transformation~\cite{jordan_wigner_1928} to particle-vortex duality~\cite{peskin_1978_mandelstam_t_hooft,dasgupta_halperin_1981_lattice_superconductivity}, boson-fermion duality~\cite{wilczek_1982_magnetic_flux,polyakov_1988_fermi_bose,jain_1989_composite_fermion}, and, in a broader setting, AdS/CFT duality~\cite{maldacena_1999_large_n}. Topological Holography (TH) / Symmetry Topological Field Theory (SymTFT) / Symmetry Topological Order (SymTO) formalism provides a way of constructing, representing and understanding them~\cite{gaiotto_kapustin_seiberg_willett_2015_generalized_global_symmetries,brennan_hong_2023_introduction_generalized_global_symmetries,luo_wang_wang_2024_lecture_notes_generalized_symmetries,kong_wen_zheng_2015_boundary_bulk_relation,ji_wen_2020_categorical_symmetry,kong_lan_wen_zhang_zheng_2020_categorical_symmetry,lichtman_thorngren_lindner_stern_berg_2021_bulk_anyons,chatterjee_wen_2023_shadow,moradi2023topological_holography,freed_moore_teleman_2022_topological_symmetry,kaidi_ohmori_zheng_2023_symmetry_tfts,kaidi_nardoni_zafrir_zheng_2023_symmetry_tfts_anomalies,bhardwaj_schafer_nameki_2025_generalized_charges_part_ii,lin_shao_2023_bootstrapping_noninvertible_symmetries,choi_sanghavi_shao_zheng_2025_lattice_gauge_symmetries,chen_cui_haghighat_wang_2023_symtfts_duality_defects,jia_luo_tian_wang_zhang_2025_flavor_symmetry,lin_okada_seifnashri_tachikawa_2023_density_noninvertible,kong_zheng_2018_gapless_edges,kong_zheng_2020_gapless_edges_part_i,kong_zheng_2021_gapless_edges_part_ii,kong_wen_zheng_2022_1d_gapped_phases,kong_zheng_2022_quantum_liquids_i,kong_zheng_2024_quantum_liquids_ii,xu_zhang_2024_categorical_descriptions,luo_wang_bi_2025_mixed_state_topological_holography,cao_symmetry_2024}, and references therein.

		In the TH/SymTFT/SymTO picture, a $(d-1)$-dimensional system is realized as a holographic sandwich with a $d$-dimensional topological bulk \cite{kong_wen_zheng_2015_boundary_bulk_relation,freed_moore_teleman_2022_topological_symmetry}. 
		The gapped top boundary fixes the symmetry of the sandwich, the bulk labels symmetry sectors, and the bottom boundary carries the dynamics. 
		Duality is induced by changing the top boundary while keeping the bulk and bottom boundary fixed \cite{kong_lan_wen_zhang_zheng_2020_categorical_symmetry,chatterjee_wen_2023_gaplessness,huang_cheng_2025_topological_holography}. 
		For a large class of topological bulks this story can be presented microscopically: lattice models were explicitly constructed, where the notion of TH appears~\cite{you_bi_rasmussen_slagle_xu_2014_strange_correlator,aasen_mong_fendley_2016_ising_defects,aasen_fendley_mong_2020_dualities_degeneracies,vanhove_bal_williamson_bultinck_haegeman_verstraete_2018_strange_correlators,vanhove_lootens_tu_verstraete_2022_potts}; for TH with string-net models and 3d toric code bulks, linear-depth local unitary sequential quantum circuit (SQC) that changes the top boundary condition was explicitly constructed, realizing the TH induced duality~\cite{chen_dua_hermele_stephen_tantivasadakarn_vanhove_zhao_2024_sqc,dual_via_SQC}. 
		Closely related versions of the topological-holography picture have also appeared in the mathematical-physics literature under the name ``Topological Wick rotation'' \cite{kong_zheng_2018_gapless_edges,kong_zheng_2020_gapless_edges_part_i,kong_zheng_2021_gapless_edges_part_ii}. 

		However, this conventional TH framework fundamentally relies on the mobility of topological excitations and standard string-like Wilson lines/membranes. When applying this paradigm to fracton topological orders~\cite{chamon_2005_quantum_glassiness,haah_2011_no_strings,haah_2013_modules,vijay_haah_fu_2016_fracton,shirley_slagle_wang_chen_2018_general_manifolds,hyt_1_prepare_TD_model_via_SQC,li2020fracton,li_ye_2021_fracton_extended_gsd,li_2023_gERG,vijay_haah_fu_2015_new_kind,song_schonmeier_kromer_liu_viyuela_pollet_martin_delgado_2022_fracton_thresholds,ma_lake_chen_hermele_2017_coupled_layers,shirley_slagle_chen_2019_gauging_subsystem,prem_haah_nandkishore_2017_glassy_dynamics,dua_kim_cheng_williamson_2019_sorting_stabilizer,nandkishore_hermele_2019_fractons,bulmash_barkeshli_2019_gauging_fractons,prem_huang_song_hermele_2019_cage_net,slagle_2021_foliated_qft,zhou_li_yan_ye_meng_2022_dynamical_signature,zhu_chen_ye_trebst_2023_fracton_phase_transitions,canossa_pollet_martin_delgado_song_liu_2024_self_dual_fracton,li_zhou_ye_2024_boundary_toeplitz}, the traditional TH framework faces fundamental challenges. Fracton phases are characterized by excitations that are strictly immobile or restricted to sub-dimensional regions, meaning the requisite string-like operators to move a single fracton simply do not exist. In addition, their ground state degeneracy (GSD) depends on both the topology and geometry of the lattice, unlike the purely topological GSD of liquid topological orders. 

		To bridge this fundamental gap, in this work we propose and formulate \textit{Fracton Topological Holography (FTH)}. To achieve this, we introduce the concept of \textit{transport operators} for fracton excitations, serving as the fracton generalization of Wilson lines to connect boundaries and induce duality. As the main conceptual result of this paper, we establish a universal four-stage framework for FTH: prepare the bulk model and its excitations, determine boundary data and admissible gapped top boundaries, identify the low-energy preserving operator algebra together with its symmetry, relation, and twist data, and then switch among top boundaries to compare the induced boundary descriptions. An illustrative figure of FTH is shown in Fig.~\ref{fig:intro_fth}(a).

		As proof of concept, we successfully implement our four-stage FTH framework in both type-I and type-II fracton orders~\cite{vijay_haah_fu_2016_fracton}. Type-I fracton refers to those fracton phases whose topological excitations are immobile, but the composite of excitations becomes mobile. The X-cube model is a canonical type-I fracton model, with line-like subsystem symmetries and can be viewed as the coupled toric code layers~\cite{vijay_haah_fu_2016_fracton,ma_lake_chen_hermele_2017_coupled_layers,shirley_slagle_wang_chen_2018_general_manifolds}. It can also be viewed as a special member of the broader TD stabilizer-code family, which contains the toric code and X-cube model as special cases and includes higher-dimensional fracton models with spatially extended excitations~\cite{hyt_1_prepare_TD_model_via_SQC,li2020fracton,li_ye_2021_fracton_extended_gsd,li_2023_gERG}. Type-II fracton refers to those fracton phases whose excitations are immobile anyway, even when all composite excitations are considered. Haah's cubic code is a canonical type-II fracton model, with no string logical operators and is convenient to study in terms of translation-invariant stabilizer modules over Laurent polynomial rings~\cite{haah_2011_no_strings,haah_2013_modules,williamson_2016_fractal_symmetries}. Complementary holographic viewpoints on stabilizer codes and fracton orders have also appeared recently, including a holographic view of topological stabilizer codes with fracton examples and emergent boundary symmetries~\cite{schuster_tantivasadakarn_vishwanath_yao_2023_holographic_stabilizer}, and a mixed-state holographic viewpoint on subdimensional entanglement and fracton orders~\cite{li_ye_2025_subdimensional_entanglement_entropy}.

		\begin{figure*}[htbp]
	\centering
	% Subfigure a: FTH Sandwich
	\begin{subfigure}{\textwidth}
	\centering
	\tdplotsetmaincoords{70}{110}
	\begin{tikzpicture}[tdplot_main_coords, scale=1.2]
	    
	    % Custom perspective command for text with reset cm to avoid double projection
	    \def\persp#1#2#3{\tikz[baseline=(T.base)]{\node[reset cm, cm={#1}, transform shape, scale=1.2, inner sep=0pt, #2] (T) {#3};}}
	    \def\drawsandwich#1#2#3#4#5#6{
	        % #1 = Anyon text (e.g. \mathcal{A})
	        % #2 = Code to draw Anyon shape in 3D
	        % #3 = Code for bottom excitations in 3D
	        % #4 = Arrow out angle
	        % #5 = Pillar color
	        % #6 = Bottom Boundary color
	        
	        % Back bulk faces (Faint Color)
	        \draw[draw=none, fill=cyan!10, fill opacity=0.4] (0,0,0) -- (3,0,0) -- (3,0,2.5) -- (0,0,2.5) -- cycle;
	        \draw[draw=none, fill=cyan!10, fill opacity=0.4] (0,0,0) -- (0,3,0) -- (0,3,2.5) -- (0,0,2.5) -- cycle;
	
	        % Bottom Boundary Fill
	        \fill[#6, opacity=0.8] (0,0,0) -- (3,0,0) -- (3,3,0) -- (0,3,0) -- cycle;
	        % Bottom Boundary Edges (Invisible)
	        \draw[dashed, thick, gray, draw opacity=0.6] (3,0,0) -- (0,0,0) -- (0,3,0);
	
	        % Vertical bulk lines (Invisible)
	        \draw[dashed, thick, gray, draw opacity=0.6] (0,0,0) -- (0,0,2.5);
	        
	        \node at (1.5, 1.5, 1.25) {\persp{0.940, -0.117, 0, 1, (0,0)}{font=\footnotesize\bfseries, text=black!70, align=center}{Transport\\Operator}};
	        % \node at (1.5, 0, 1.25) {\persp{0.342, 0.321, 0, 1, (0,0)}{font=\large\bfseries, text=black!70, align=center}{Fracton\\Bulk}};
	        \node at (2.6, 1.4, 0) {\persp{0.940, -0.117, 0.513, 0.482, (0,0)}{font=\normalsize\bfseries, text=black!70, anchor=west}{Excitations}};
	        
	        % Bottom Excitations
	        #3
	
	        % Transport Pillar (Semi-transparent)
	        % Back faces
	        \draw[fill=#5, fill opacity=0.15, draw=#5, draw opacity=0.3] (1.2,1.2,0) -- (1.2,1.8,0) -- (1.2,1.8,2.5) -- (1.2,1.2,2.5) -- cycle;
	        \draw[fill=#5, fill opacity=0.15, draw=#5, draw opacity=0.3] (1.2,1.8,0) -- (1.8,1.8,0) -- (1.8,1.8,2.5) -- (1.2,1.8,2.5) -- cycle;
	        % Front faces
	        \draw[fill=#5, fill opacity=0.25, draw=#5, draw opacity=0.4] (1.8,1.8,0) -- (1.8,1.2,0) -- (1.8,1.2,2.5) -- (1.8,1.8,2.5) -- cycle;
	        \draw[fill=#5, fill opacity=0.25, draw=#5, draw opacity=0.4] (1.8,1.2,0) -- (1.2,1.2,0) -- (1.2,1.2,2.5) -- (1.8,1.2,2.5) -- cycle;
	        
	        % Top Boundary
	        \draw[fill=red!15, opacity=0.8, thick] (0,0,2.5) -- (3,0,2.5) -- (3,3,2.5) -- (0,3,2.5) -- cycle;
	
	        % Bottom Boundary Edges (Visible)
	        \draw[thick, draw opacity=0.8] (3,0,0) -- (3,3,0) -- (0,3,0);
	
	        % Vertical bulk lines (Visible)
	        \draw[thick, gray] (3,0,0) -- (3,0,2.5);
	        \draw[thick, gray] (3,3,0) -- (3,3,2.5);
	        \draw[thick, gray] (0,3,0) -- (0,3,2.5);
	        
	        % Anyon Shape and Label
	        #2
	        \node at (1.5, 1.5, 2.5) {\persp{0.940, -0.117, 0.342, 0.321, (0,0)}{font=\normalsize, align=center}{#1}};
	    }
	
	    % Macro for a small square excitation with perspective
	    \def\drawexcite#1#2{
	        \draw[fill=yellow, draw=orange, thick] (#1-0.1, #2-0.1, 0) -- (#1+0.1, #2-0.1, 0) -- (#1+0.1, #2+0.1, 0) -- (#1-0.1, #2+0.1, 0) -- cycle;
	    }
	
	    % --- Left Sandwich ---
	    \begin{scope}[xshift=0cm]
	        \drawsandwich{$\mathcal{A}$}{
	            % Perspective Hexagon
	            \draw[fill=white, draw=black, thick, join=round] 
	                ({1.5 + 0.55*cos(0)},   {1.5 + 0.55*sin(0)},   2.5) --
	                ({1.5 + 0.55*cos(60)},  {1.5 + 0.55*sin(60)},  2.5) --
	                ({1.5 + 0.55*cos(120)}, {1.5 + 0.55*sin(120)}, 2.5) --
	                ({1.5 + 0.55*cos(180)}, {1.5 + 0.55*sin(180)}, 2.5) --
	                ({1.5 + 0.55*cos(240)}, {1.5 + 0.55*sin(240)}, 2.5) --
	                ({1.5 + 0.55*cos(300)}, {1.5 + 0.55*sin(300)}, 2.5) -- cycle;
	        }{
	            % Excitations A: 4 edge centers
	            \drawexcite{1.5}{0.9}
	            \drawexcite{1.5}{2.1}
	            \drawexcite{0.9}{1.5}
	            \drawexcite{2.1}{1.5}
	        }{-20}{Magenta}{blue!15}
	        
	        % Left sandwich labels
	        \node at (-0.5, 1.2, 2.5) {\persp{0.940, -0.117, 0, 1, (0,0)}{font=\small\bfseries, anchor=west}{Condense Excitation $\mathcal{A}$}};

	        \node at (3.6, 1.7, 0) {\persp{0.940, -0.117, 0, 1, (0,0)}{font=\small\bfseries, anchor=west}{Boundary Phase A}};
	        
	        \coordinate (C1) at (1.5, 1.5, 1.25);
	
	    \end{scope}
	
	    % --- Right Sandwich ---
	    \begin{scope}[xshift=6cm]
	        \drawsandwich{$\mathcal{B}$}{
	            % Perspective Hexagon
	            \draw[fill=white, draw=black, thick, join=round]
	                ({1.5 + 0.55*cos(0)},   {1.5 + 0.55*sin(0)},   2.5) --
	                ({1.5 + 0.55*cos(60)},  {1.5 + 0.55*sin(60)},  2.5) --
	                ({1.5 + 0.55*cos(120)}, {1.5 + 0.55*sin(120)}, 2.5) --
	                ({1.5 + 0.55*cos(180)}, {1.5 + 0.55*sin(180)}, 2.5) --
	                ({1.5 + 0.55*cos(240)}, {1.5 + 0.55*sin(240)}, 2.5) --
	                ({1.5 + 0.55*cos(300)}, {1.5 + 0.55*sin(300)}, 2.5) -- cycle;
	        }{
	            % Excitations B: 4 corners
	            \drawexcite{1.0}{1.0}
	            \drawexcite{1.0}{2.0}
	            \drawexcite{2.0}{1.0}
	            \drawexcite{2.0}{2.0}
	        }{-20}{Cyan}{green!15}
	        
	        % Right sandwich labels
	        \node at (-0.5, 1.2, 2.5) {\persp{0.940, -0.117, 0, 1, (0,0)}{font=\small\bfseries, anchor=west}{Condense Excitation $\mathcal{B}$}};
	        \node at (3.6, 1.7, 0) {\persp{0.940, -0.117, 0, 1, (0,0)}{font=\small\bfseries, anchor=west}{Boundary Phase B}};
	        \coordinate (C2) at (1.5, 1.5, 1.25);
	    \end{scope}
	
	    % Duality Arrow connecting the two bulk systems
	    % Using 2D shifts from the center coordinates to ensure it's perfectly horizontal
	    \draw[<->, thick, line width=1.5pt] ([xshift=2.4cm]C1) -- ([xshift=-2.4cm]C2) node[midway, above, font=\small\bfseries] {Duality};
	
	\end{tikzpicture}
	\caption{Schematic of Fracton Topological Holography}
	\label{fig:fth_schematic}
	\end{subfigure}
	\vspace{0.5cm} % Add some vertical space between subfigures
	
	% Subfigure b: Dualities on Lattice
	\begin{subfigure}{\textwidth}
	\centering
	\begin{tikzpicture}[scale=0.75, every node/.style={transform shape}]
	    % Macro to draw a grid
	    \def\drawlattice#1#2#3#4{
	        \begin{scope}[shift={(#1,#2)}]
	            \draw[gray!40, step=1] (#3,#3) grid (#4,#4);
	            \foreach \x in {#3,...,#4} {
	                \foreach \y in {#3,...,#4} {
	                    \fill[gray!60] (\x,\y) circle (1.5pt);
	                }
	            }
	        \end{scope}
	    }
	
	    % Common styles (Fluent Design / Glassmorphism with 3D Shadow)
	    \tikzstyle{zblob} = [preaction={transform canvas={shift={(0.08,-0.08)}}, fill=black, fill opacity=0.12, draw=none}, fill=blue, fill opacity=0.1, draw=blue, draw opacity=0.4, thick, rounded corners=8pt]
	    \tikzstyle{xblob} = [preaction={transform canvas={shift={(0.08,-0.08)}}, fill=black, fill opacity=0.12, draw=none}, fill=red, fill opacity=0.1, draw=red, draw opacity=0.4, thick, rounded corners=8pt]
	    \tikzstyle{ztext} = [blue, fill=white, circle, inner sep=0.5pt, drop shadow={opacity=0.1}] % Just kidding, no drop shadow on text
	    \tikzstyle{ztext} = [blue, fill=white, circle, inner sep=0.5pt]
	    \tikzstyle{xtext} = [red, fill=white, circle, inner sep=0.5pt]
	
	    % ================= X-cube Duality =================
	    % Background box
	    \draw[fill=blue!2, draw=blue!10, thick, rounded corners=15pt] (0.25, -1.5) rectangle (10.25, 5.3);
	
	    \node[font=\bfseries, align=center] at (5.25, 4.7) {X-cube Model FTH Duality};
	    
	    % Smooth Boundary (Spins on Plaquettes)
	    \drawlattice{0}{0}{1}{4}
	    \node[align=center, font=\small] at (2.5, -0.5) {Smooth Boundary\\[0.5ex]$\prod_{p\subset v}\tilde{Z}_p$ and $\tilde{X}_p$};
	    
	    % Draw Z on 4 adjacent plaquettes (around vertex (2,2))
	    \draw[zblob] (1.1, 1.1) rectangle (2.9, 2.9);
	    \node[ztext] at (1.5, 1.5) {$Z$};
	    \node[ztext] at (2.5, 1.5) {$Z$};
	    \node[ztext] at (1.5, 2.5) {$Z$};
	    \node[ztext] at (2.5, 2.5) {$Z$};
	    
	    % Draw X on a single plaquette
	    \draw[xblob] (3.1, 3.1) rectangle (3.9, 3.9);
	    \node[xtext] at (3.5, 3.5) {$X$};
	
	    % Main dual arrow (Lengthened)
	    \draw[<->, thick] (4.49, 2.3) -- (5.49, 2.3) node[midway, above, font=\footnotesize] {Dual};
	
	    % Rough Boundary (Spins on Vertices)
	    \drawlattice{5.5}{0}{1}{4}
	    \node[align=center, font=\small] at (8.0, -0.5) {Rough Boundary\\[0.5ex]$\prod_{v\subset p}\tilde{Z}_v$ and $\tilde{X}_v$};
	    
	    % Draw X on 4 adjacent vertices (around plaquette centered at 2.5, 2.5)
	    \begin{scope}[shift={(5.5,0)}]
	        \draw[xblob] (1.6, 1.6) rectangle (3.4, 3.4);
	        \node[xtext] at (2, 2) {$Z$};
	        \node[xtext] at (3, 2) {$Z$};
	        \node[xtext] at (2, 3) {$Z$};
	        \node[xtext] at (3, 3) {$Z$};
	        
	        % Draw Z on a single vertex
	        \draw[zblob] (0.6, 0.6) rectangle (1.4, 1.4);
	        \node[ztext] at (1, 1) {$X$};
	    \end{scope}
	
	    % Connecting mapping arrows for X-cube
	    \draw[<->, thick, blue, densely dashed] (2.9, 2) to[out=0, in=180] (6.1, 1);
	    \draw[<->, thick, red, densely dashed] (3.9, 3.5) to[out=0, in=180] (7.1, 2.5);

	    % ================= Haah's Code Duality =================
	    % Background box
	    \draw[fill=red!2, draw=red!10, thick, rounded corners=15pt] (11.25, -1.5) rectangle (21.25, 5.3);
	
	    \node[font=\bfseries, align=center] at (16.25, 4.7) {Haah's Code FTH Duality};
	    
	    % (Z) Top Boundary
	    \drawlattice{11}{0}{1}{4}
	    \node[align=center, font=\small] at (13.5, -0.5) {$(Z)$ Top Boundary\\[0.5ex]$F\tilde{Z}$ and $\tilde{X}$};
	    
	    \begin{scope}[shift={(11,0)}]
	        % Draw Haah's Gen Ising term F = 1+x+x^2+y+y^2+xy
	        \draw[zblob] (1.1, 1.1) -- (3.9, 1.1) -- (3.9, 1.9) -- (2.9, 1.9) -- (2.9, 2.9) -- (1.9, 2.9) -- (1.9, 3.9) -- (1.1, 3.9) -- cycle;
	        
	        \node[ztext] at (1.5, 1.5) {$Z$};
	        \node[ztext] at (2.5, 1.5) {$Z$};
	        \node[ztext] at (3.5, 1.5) {$Z$};
	        \node[ztext] at (1.5, 2.5) {$Z$};
	        \node[ztext] at (2.5, 2.5) {$Z$};
	        \node[ztext] at (1.5, 3.5) {$Z$};
	        
	        % X field
	        \draw[xblob] (3.1, 3.1) rectangle (3.9, 3.9);
	        \node[xtext] at (3.5, 3.5) {$X$};
	    \end{scope}
	
	    % Main dual arrow (Lengthened)
	    \draw[<->, thick] (15.7, 2.5) -- (16.8, 2.5) node[midway, above, font=\footnotesize] {Dual};
	
	    % (X) Top Boundary
	    \drawlattice{16.5}{0}{1}{4}
	    \node[align=center, font=\small] at (19.0, -0.5) {$(X)$ Top Boundary\\[0.5ex]$\bar{F}\tilde{Z}$ and $\tilde{X}$};
	    
	    \begin{scope}[shift={(16.5,0)}]
	        % Draw Haah's Transverse Field mapping \bar{F}
	        \draw[xblob] (1.1, 3.9) -- (3.9, 3.9) -- (3.9, 1.1) -- (3.1, 1.1) -- (3.1, 2.1) -- (2.1, 2.1) -- (2.1, 3.1) -- (1.1, 3.1) -- cycle; 
	        
	        \node[xtext] at (3.5, 3.5) {$Z$};
	        \node[xtext] at (2.5, 3.5) {$Z$};
	        \node[xtext] at (1.5, 3.5) {$Z$};
	        \node[xtext] at (3.5, 2.5) {$Z$};
	        \node[xtext] at (2.5, 2.5) {$Z$};
	        \node[xtext] at (3.5, 1.5) {$Z$};
	        
	        % Z field
	        \draw[zblob] (1.1, 1.1) rectangle (1.9, 1.9);
	        \node[ztext] at (1.5, 1.5) {$X$};
	    \end{scope}
	
	    % Connecting mapping arrows for Haah's Code
	    \draw[<->, thick, blue, densely dashed] (14.9, 1.5) -- (17.6, 1.5);
	    \draw[<->, thick, red, densely dashed] (14.9, 3.5) -- (17.6, 3.5);
	
	\end{tikzpicture}
	\caption{Specific operator interchanges drawn on the effective 2D spatial lattices.}
	\label{fig:fth_dualities}
	\end{subfigure}
	\caption{Overview of the Fracton Topological Holography (FTH) framework. (a) The topological holography sandwich structure. A $d$-dimensional fracton bulk is terminated by a gapped top boundary and a dynamical bottom boundary. Transport operators $\mathcal{T}$ connect the two boundaries. Changing the top boundary condition (e.g., between $\mathcal{A}$ condensed and $\mathcal{B}$ condensed) induces an exact duality ($\tilde{H}_A \protect\dual \tilde{H}_B$) on the $(d-1)$-dimensional bottom system. (b) The specific operator interchanges and dualities on the effective 2D lattice generated by the FTH framework for the X-cube model (yielding a plaquette Ising Kramers-Wannier self-duality) and Haah's cubic code (yielding a generalized plaquette Ising Kramers-Wannier, not a self-duality).}
	\label{fig:intro_fth}
	\end{figure*}

		As outlined above, our four-stage framework for FTH operates as follows: Stage 1 prepares the bulk model and its excitations. Stage 2 chooses an admissible gapped top boundary and derives the low-energy subspace of the slab. Stage 3 identifies the induced bottom-boundary qudits, Pauli algebra, effective Hamiltonian, and symmetry, relation, and, when available, twist data. Stage 4 switches the top boundary and matches the induced boundary models under different top boundary conditions, with an SQC changing the top boundary condition when available. We implement this four-stage construction to the X-cube model and Haah's cubic code, obtaining two concrete FTH constructions.

		For the X-cube FTH, we analyze smooth and rough top-boundary completions, corresponding respectively to planeon-condensed and lineon-condensed boundary conditions~\cite{bulmash_iadecola_2019_braiding_boundaries,luo_spieler_sun_karch_2022_xcube_boundary}.  For each choice, we identify the bottom-boundary Pauli algebra, the effective Hamiltonian, and the subsystem symmetries and twist sectors encoded by bulk logical operators and top boundary defects.  Changing the top boundary from rough to smooth gives an exact duality between the two boundary theories.  With a minimal boundary Hamiltonian, the identified 2d systems of the X-cube FTH are both transverse-field plaquette Ising models (TFPIMs); thus the duality is the TFPIM version of Kramers--Wannier duality~\cite{xu_moore_2004_self_duality,TFPIM_KW_duality}---a self-duality, as shown on the left side of Fig.~\ref{fig:intro_fth}(b).  
		Before lifting the relations of plaquette Ising terms to twist DOFs (enlarging the Hilbert space), the identified TFPIMs under the two top boundaries have Hilbert spaces with different dimension, thus the duality is non-invertible, in agree with the discussions in Refs.~\cite{cao_li_yamazaki_zheng_2023_subsystem_noninvertible,2024_TFPIM_non_invertible_operator}.
		Adding defects on the top boundary realizes the twist of the identified bottom-boundary effective Hamiltonian.  
		By deleting specific top stabilizer generators, adding defects on the top boundary causes no energy any more. 
		Consequently, the low-energy subspace (or equivalently, the Hilbert space of the identified TFPIM) is enlarged. 
		The dimension of the enlarged Hilbert spaces of the identified TFPIM under both top boundaries are the same, thus the TFPIM version KW duality is a unitary operator between the two enlarged Hilbert spaces, in agree with the discussion in Ref.~\cite{2025_TFPIM_duality_unitary_in_enlarged_Hilbert_space}.
		We also write down an explicit linear-depth local-unitary sequential quantum circuit (SQC) that implements the top-boundary change, making the fracton-holographic duality operational at the circuit level.
	
		The Haah's cubic code FTH serves as a type-II fracton stress test.  This example is important because the cubic code is not built from a simple foliation of two-dimensional topological orders and does not admit string logical operators \cite{haah_2011_no_strings,Haah_code_boundary_2024}.  We formulate the Haah's code FTH using the Laurent-polynomial stabilizer formalism and the local boundary-gauge-operator framework \cite{haah_2013_modules,Chen_Yu_An_2024}.  Under the two natural top boundaries~\cite{Haah_code_boundary_2024}, which we call the $(Z)$ and $(X)$ top boundaries, the general forms of low-energy preserving Pauli operators are constructed.  
		Then, following the four-Stage framework, under each top boundary, the low-energy preserving Pauli operators are identified as Pauli operators of a 2d qubit system.  
		Consequently, the minimal low-energy effective Hamiltonian $\tilde H$ on the bottom boundary is identified as a generalized TFPIM under both $(Z)$ and $(X)$ top [see the left and right side of Fig.~\ref{fig:intro_fth}(b), respectively], differing by a inversion of relative field streng parameter and a spatial inversion.  
		By changing the top boundaries between $(Z)$ and $(X)$, a duality is induced, which swaps the generalized Ising term and the transverse field term [shown in Fig.~\ref{fig:intro_fth}(b)], as well as the symmetries and relations.

		In all, FTH is a generalization of TH, enabling the TH framework to contain geometric information, inherited from the fracton bulk.  The four-stage framework we introduce makes the construction of FTH and the induced duality into a clear pipeline.  The X-cube and Haah's code examples show the effectiveness of this pipeline, for both type-I and type-II fracton bulks.  The rest of the paper is organized as follows. Section~\ref{sec_TH_and_FTH} reviews TH in the toric code example, introduces the four-stage framework for constructing qudit FTH, and discuss the difference between TH and FTH.  Section~\ref{sec_X_cube_FTH} illustrates the FTH with the X-cube bulk, the induced duality of X-cube FTH, and constructs a linear-depth local unitary SQC that realizes the duality.  Section~\ref{sec_Haah_code_FTH} constructs the Haah's code FTH and the induced duality. Some technical details are collected to the appendices.

	\section{Topological holography and fracton topological holography}\label{sec_TH_and_FTH}

	In this section, we set up the general framework of topological holography (TH) and fracton topological holography (FTH) used throughout the paper. We first fix the notation and formulate the holographic sandwich picture as a systematic four-stage procedure for qudit stabilizer-code TH/FTH with point-like topological excitations. We then review ordinary lattice TH through the 2d toric-code example, which illustrates the operator identification, the emergence of symmetry and twist sectors, and the duality induced by changing the gapped top boundary. Finally, we explain how this construction is extended to FTH and highlight the genuinely new features that arise there, especially the role of generalized transport operators. The explicit FTH constructions for the X-cube model and Haah's cubic code are presented in Secs.~\ref{sec_X_cube_FTH} and~\ref{sec_Haah_code_FTH}, respectively.
	
	To facilitate the presentation and avoid confusion among the various operator notations, we summarize the main types of operators and their symbols used throughout this paper in Table~\ref{table_notation_summary}. Specifically, they are all built upon standard Pauli $X$ and $Z$ operators, but we use different typographic styles of $X$ and $Z$ to distinguish their different roles and the scales/spaces they act on: the standard italic letters $X, Z$ represent local Pauli operators acting on individual physical qubits or qudits of the $d$-dimensional bulk lattice; the tilde-decorated letters $\tilde{X}, \tilde{Z}$ represent effective physical Pauli operators acting on the $(d-1)$-dimensional dynamical boundary system; and the calligraphic letters represent extensive non-local operators traversing or running along the bulk (which are constructed as products of the local bulk Pauli operators $X, Z$). For these non-local operators, we use $\mathcal{X}, \mathcal{Z}$ to denote extensive operators running parallel to the boundaries (such as uncontractible loops or lines serving as symmetry/twist indicators), and $\mathcal{T}$ (or $\mathcal{T}'$) to represent extensive transport operators traversing the bulk (such as standard 1D Wilson lines/strings or general fractonic transport operators) to connect the bottom and top boundaries. In addition, throughout this paper, we follow the convention that $A$-type stabilizer terms (such as $A_v$ or $A_c$) are products of Pauli $X$, and $B$-type stabilizer terms (such as $B_p$ or $B_{v,xy}$) are products of Pauli $Z$.
	
	\onecolumngrid
	\begin{longtable}{l l l}
		\caption{Summary of notation conventions and important symbols, mathematical/physical meanings, and the main sections where they are introduced or used within the (F)TH framework. \label{table_notation_summary}}\\
		\hline\hline
		\textbf{Symbol(s)} & \textbf{Mathematical/Physical Meaning} & \textbf{Main Section(s)} \\ \hline
		\endfirsthead
		
		\multicolumn{3}{c}%
		{{\bfseries Table \thetable{} -- continued from previous page}} \\
		\hline\hline
		\textbf{Symbol(s)} & \textbf{Mathematical/Physical Meaning} & \textbf{Main Section(s)} \\ \hline
		\endhead
		
		\hline
		\multicolumn{3}{r}{{Continued on next page...}} \\
		\endfoot
		
		\hline\hline
		\endlastfoot
		
		$X, Z$ & \parbox[t]{9.5cm}{Pauli $X/Z$ operators acting on the physical qubits or qudits of the $d$-dimensional bulk lattice.} & \parbox[t]{5cm}{\raggedright \ref{sec_TH_and_FTH}, \ref{sec_X_cube_FTH}, \ref{sec_Haah_code_FTH}} \\
		$\tilde{X}, \tilde{Z}$ & \parbox[t]{9.5cm}{Effective physical Pauli operators acting on the $(d-1)$-dimensional dynamical boundary system (e.g., boundary spin models).} & \parbox[t]{5cm}{\raggedright \ref{sec_TH_and_FTH}, \ref{sec_X_cube_FTH}, \ref{sec_Haah_code_FTH}} \\
		$\mathcal{X}, \mathcal{Z}$ & \parbox[t]{9.5cm}{Extensive uncontractible loops or lines running parallel to the boundaries, acting as symmetry/twist indicators.} & \parbox[t]{5cm}{\raggedright \ref{sec_TH_and_FTH}, \ref{sec_X_cube_FTH}} \\
		$\mathcal{T}, \mathcal{T}'$ & \parbox[t]{9.5cm}{Extensive transport operators traversing the bulk (such as standard 1D Wilson lines/strings or general fractonic transport operators), connecting bottom and top boundaries.} & \parbox[t]{5cm}{\raggedright \ref{sec_TH_and_FTH}, \ref{sec_X_cube_FTH}, \ref{sec_Haah_code_FTH}} \\ \hline
		$R, \hat{R}$ & \parbox[t]{9.5cm}{Laurent polynomial ring over $\mathbb{F}_2$ representing translation invariance (defined in 3D as $\mathbb{F}_2[x^{\pm 1}, y^{\pm 1}, z^{\pm 1}]$ and in 2D as $\mathbb{F}_2[x^{\pm 1}, y^{\pm 1}]$), and $\hat{R} = \mathbb{F}_2[[x^{\pm 1}, y^{\pm 1}]]$ for possibly infinite-support series.} & \parbox[t]{5cm}{\raggedright \ref{sec_Haah_code_FTH}} \\
		$S_X, S_Z$ & \parbox[t]{9.5cm}{CSS stabilizer generators of Haah's cubic code.} & \parbox[t]{5cm}{\raggedright \ref{subsec_Haah_Stage1}, \ref{subsec_Haah_Stage2}, \ref{subsec_Haah_Stage3}} \\
		$G, P, \hat{P}, E$ & \parbox[t]{9.5cm}{Bulk generator module $G \cong R^2$, Pauli module $P$, infinite-support Pauli module $\hat{P}$, and excitation module $E \cong R^2$.} & \parbox[t]{5cm}{\raggedright \ref{subsec_Haah_Stage1}, \ref{subsec_Haah_Stage2}, \ref{subsec_Haah_Stage3}} \\
		$\sigma, \epsilon$ & \parbox[t]{9.5cm}{Bulk stabilizer map $\sigma: G \to P$ and bulk syndrome map $\epsilon: P \to E$.} & \parbox[t]{5cm}{\raggedright \ref{subsec_Haah_Stage1}, \ref{subsec_Haah_Stage2}} \\
		$\coker\epsilon$ & \parbox[t]{9.5cm}{Cokernel of the bulk syndrome map, representing the bulk topological excitation module $E/\imop\epsilon \cong R/\mathcal{I}_X \oplus R/\mathcal{I}_Z$.} & \parbox[t]{5cm}{\raggedright \ref{subsec_Haah_Stage1}} \\
		$a, b, c, d$ & \parbox[t]{9.5cm}{Bulk shorthand Laurent polynomials representing the generator and syndrome components of Haah's cubic code.} & \parbox[t]{5cm}{\raggedright \ref{subsec_Haah_Stage1}} \\
		$A, B, C, D$ & \parbox[t]{9.5cm}{Shorthand polynomials used in layer-by-layer calculations: $A = 1+x+y$, $B = 1+xy$, $C = 1$, $D = x+y$.} & \parbox[t]{5cm}{\raggedright \ref{subsec_Haah_Stage2}, \ref{subsec_Haah_Stage3}} \\
		$F, \bar{F}$ & \parbox[t]{9.5cm}{Characteristic polynomials $F = 1+x+x^2+y+y^2+xy$ and its conjugate $\bar{F}$ governing boundary syndromes.} & \parbox[t]{5cm}{\raggedright \ref{subsec_Haah_Stage2}, \ref{subsec_Haah_Stage3}, \ref{subsec_Haah_Stage4}} \\
		$e_X, e_Z$ & \parbox[t]{9.5cm}{Basic bulk topological excitations, corresponding to point-like excitations represented by $S_X=-1$ and $S_Z=-1$.} & \parbox[t]{5cm}{\raggedright \ref{subsec_Haah_Stage1}} \\
		$\mathcal{I}_X, \mathcal{I}_Z$ & \parbox[t]{9.5cm}{Bulk excitation ideals $\mathcal{I}_X = (c,d)$ and $\mathcal{I}_Z = (a,b)$ that define accessible syndromes by finite-support operators in the bulk.} & \parbox[t]{5cm}{\raggedright \ref{subsec_Haah_Stage1}} \\
		$L_z, \pi$ & \parbox[t]{9.5cm}{Layer thickness in $z$-direction, and operator truncation map from 3D infinite lattice to 3d finite-layer slab.} & \parbox[t]{5cm}{\raggedright \ref{subsec_Haah_Stage2}, \ref{subsec_Haah_Stage3}} \\
		$\pi P$ & \parbox[t]{9.5cm}{The finite-support Pauli module in the 3d finite-layer slab.} & \parbox[t]{5cm}{\raggedright \ref{subsec_Haah_Stage2}, \ref{subsec_Haah_Stage3}} \\
		$\mathcal{S}_B, \mathcal{S}_T$ & \parbox[t]{9.5cm}{Submodules of bulk stabilizers and boundary-truncated stabilizers fully supported on the truncated lattice.} & \parbox[t]{5cm}{\raggedright \ref{subsec_Haah_Stage2}, \ref{subsec_Haah_Stage3}} \\
		$S_{X,Z}^{\text{top}}, S_{X,Z}^{\text{bot}}$ & \parbox[t]{9.5cm}{Truncated stabilizers on the bottom and top boundaries.} & \parbox[t]{5cm}{\raggedright \ref{subsec_Haah_Stage2}, \ref{subsec_Haah_Stage3}} \\
		$\mathcal{S}_B^\Omega$ & \parbox[t]{9.5cm}{Symplectic complement of bulk stabilizer module $\mathcal{S}_B$ on the slab, representing operators that commute with bulk stabilizers.} & \parbox[t]{5cm}{\raggedright \ref{subsec_Haah_Stage2}} \\
		$\mathcal{G}, \mathcal{G}^{\text{top}}, \mathcal{G}^{\text{bot}}$ & \parbox[t]{9.5cm}{Boundary gauge group $\mathcal{S}_B^\Omega/\mathcal{S}_B$, and its top/bottom boundary subgroups.} & \parbox[t]{5cm}{\raggedright \ref{subsec_Haah_Stage2}, \ref{subsec_Haah_Stage3}} \\
		$\mathcal{G}_{1,2}^{\text{top}}, \mathcal{G}_{1,2}^{\text{bot}}$ & \parbox[t]{9.5cm}{Boundary gauge generators, which are truncated $S_X$ and $S_Z$ on the top and bottom boundaries for Haah's code FTH.} & \parbox[t]{5cm}{\raggedright \ref{subsec_Haah_Stage2}, \ref{subsec_Haah_Stage3}} \\
		$E^\square, P^\square, \eta^\square$ & \parbox[t]{9.5cm}{Boundary excitation module, boundary Pauli module, and composite boundary gauge syndrome map $\eta^\square = \epsilon^\square \circ \sigma^\square$ where $\square \in \{\text{top}, \text{bot}\}$.} & \parbox[t]{5cm}{\raggedright \ref{subsec_Haah_Stage2}} \\
		$\coker\eta^\square$ & \parbox[t]{9.5cm}{Cokernel of boundary gauge syndrome map, representing the boundary topological excitation module $E^\square/\imop\eta^\square$.} & \parbox[t]{5cm}{\raggedright \ref{subsec_Haah_Stage2}} \\
		$e_X^{\text{top/bot}}, e_Z^{\text{top/bot}}$ & \parbox[t]{9.5cm}{Boundary topological excitations on the top or bottom boundaries.} & \parbox[t]{5cm}{\raggedright \ref{subsec_Haah_Stage2}, \ref{subsec_Haah_Stage3}} \\
		$(Z), (X)$ & \parbox[t]{9.5cm}{Top boundary conditions ($e_Z$-lineon condensed and $e_X$-planeon condensed boundaries).} & \parbox[t]{5cm}{\raggedright \ref{subsec_Haah_Stage2}, \ref{subsec_Haah_Stage3}, \ref{subsec_Haah_Stage4}} \\
		$\mathcal{C}_Z^{\text{top}}, \mathcal{C}_X^{\text{top}}$ & \parbox[t]{9.5cm}{Submodules of boundary excitations that condense on the top boundary.} & \parbox[t]{5cm}{\raggedright \ref{subsec_Haah_Stage2}} \\
		$\tilde{X}, \tilde{Z}$ & \parbox[t]{9.5cm}{Effective 2D boundary Pauli operators under the FTH holographic duality on the bottom boundary.} & \parbox[t]{5cm}{\raggedright \ref{subsec_Haah_Stage3}, \ref{subsec_Haah_Stage4}} \\
		$\Phi_Z, \Phi_X$ & \parbox[t]{9.5cm}{Operator identification maps ($R$-linear symplectic homomorphisms) under $(Z)$ and $(X)$ boundary conditions.} & \parbox[t]{5cm}{\raggedright \ref{subsec_Haah_Stage3}, \ref{subsec_Haah_Stage4}} \\
		$\mathcal{T}_{e_Z,z}, \mathcal{T}_{e_X,z}$ & \parbox[t]{9.5cm}{3D pure transport operators along the $z$-direction.} & \parbox[t]{5cm}{\raggedright \ref{subsec_Haah_Stage3}} \\
		$\mathcal{S}_{(Z/X)}, \mathcal{L}_{(Z/X)}, \mathcal{O}_{(Z/X)}$ & \parbox[t]{9.5cm}{Stabilizer, logical, and low-energy preserving operator modules of FTH system under $(Z)$ or $(X)$ boundary conditions.} & \parbox[t]{5cm}{\raggedright \ref{subsec_Haah_Stage3}} \\
		$\tilde{H}, \tilde{H}_{(Z/X)}, g$ & \parbox[t]{9.5cm}{Minimal effective boundary Hamiltonian, its identified 2D forms under $(Z)$ or $(X)$ boundaries, and the relative strength factor $g \in \mathbb{R}$.} & \parbox[t]{5cm}{\raggedright \ref{subsec_Haah_Stage3}, \ref{subsec_Haah_Stage4}} \\
		$+_{\mathbb{R}}$ & \parbox[t]{9.5cm}{Addition in $\mathbb{R}$, used to distinguish from the addition in $R$.} & \parbox[t]{5cm}{\raggedright \ref{subsec_Haah_Stage3}} \\
		$\Omega, \hat{\Omega},\tilde\Omega$ & \parbox[t]{9.5cm}{Symplectic bilinear forms encoding the commutation phase of Pauli operators.} & \parbox[t]{5cm}{\raggedright \ref{subsec_Haah_Stage1}, \ref{subsec_Haah_Stage3}, \ref{appendix_pedagogical_review}} \\
		$\ann_{\hat{R}}(F), \ann_{\hat{R}}(\bar{F})$ & \parbox[t]{9.5cm}{Annihilators of $F$ and $\bar{F}$ in $\hat{R}$ representing relation spaces under $(Z)$ or $(X)$ boundary conditions.} & \parbox[t]{5cm}{\raggedright \ref{subsec_Haah_Stage3}, \ref{subsec_Haah_Stage4}} \\
		$\mathcal{S}^{\text{ind}}_{(Z/X)}, \Phi_{\text{sym}}^{(Z/X)}$ & \parbox[t]{9.5cm}{Symmetry indicator spaces and symmetry-relation duality isomorphisms.} & \parbox[t]{5cm}{\raggedright \ref{subsec_Haah_Stage3}, \ref{subsec_Haah_Stage4}} \\
		$\lambda,\lambda_q$ & \parbox[t]{9.5cm}{Matrix representation of symplectic bilinear form in specific basis.} & \parbox[t]{5cm}{\raggedright \ref{subsec_Haah_Stage1}, \ref{subsec_Haah_Stage2}, \ref{appendix_pedagogical_review}} \\
		\hline\hline
	\end{longtable}
	\twocolumngrid

	\subsection{Review: toric code topological holography and 1d TFIM}\label{sec::2dTC_TH_review}
	
	A canonical example of TH is the 2d toric code TH. We review it here along a pedagogical route, starting from the Kramers-Wannier (KW) duality of the 1d transverse field Ising model (TFIM). The TH with 2d toric code bulk realizes 1d TFIM, and the change of the top boundary condition (which can be implemented by a linear-depth LU SQC), together with operator reidentification, realizes the KW duality\cite{dual_via_SQC}.
	
	Consider an $L_x$-site ring with a $\frac12$-spin on each site,
	\begin{equation}
		H_{TFI}=-\sum_{i=0}^{L_x-1}(Z_iZ_{i+1}+gX_i)\ ,\ \ g\in[0,\infty)\,.
	\end{equation}
	$H_{TFI}$ has the global $\mathbb{Z}_2$ symmetry $\prod_i X_i$. There is a phase transition at $g=1$. KW duality is the map
	\begin{equation}
		Z_iZ_{i+1}\mapsto X_i\ \ ,\ \ X_i\mapsto Z_iZ_{i+1}\,,
	\end{equation}
	which maps the Hamiltonian $H_{TFI}$ and symmetry indicator (Definition~\ref{symm_indicators_and_togglers}) $\prod_i X_i$ as
	\begin{gather}
		H_{TFI}\mapsto-\sum_{i=0}^{L_x-1}(gZ_iZ_{i+1}+X_i)\,,\\
		\prod_{i=0}^{L_x-1}X_i\mapsto\prod_{i=0}^{L_x-1}Z_iZ_{i+1}=1\,,
	\end{gather}
	which is non-invertible, and maps between the spontaneous symmetry breaking phase and symmetric phase.
	Here $\prod_{i=0}^{L_x-1}\tilde Z_i\tilde Z_{i+1}=1$ is called a \textit{relation} of the Hamiltonian terms.
	One can make KW duality invertible by doubling the Hilbert space. After doubling, the total Hilbert space is $\mathcal{H}=\mathcal{H}_{\text{twist}}\bigotimes_{i=0}^{L_x-1}\mathcal{H}_i$, where $\mathcal{H}_i\cong\mathbb{C}^2$ is the Hilbert space of the $i$-th spin, and $\mathcal{H}_{\text{twist}}\cong\mathbb{C}^2$ encodes the twist information. 
	Traditionally, both 1d TFIM and twisted 1d TFIM are defined on a spin ring with Hilbert space $\bigotimes_{i=0}^{L_x-1}\mathcal{H}_i$, the only difference is that the Hamiltonian of twisted 1d TFIM is
	\begin{equation}
		H_{TFI}^{\text{twisted}}=-\sum_{i=1}^{L_x-1}Z_iZ_{i+1}+Z_0Z_1-\sum_{i=0}^{L_x-1} gX_i,
	\end{equation}
	where $g\in[0,\infty)$. At the weak field limit ($g\to0$), the energy of (twisted) 1d TFIM of a $Z$ basis configuration can be obtained by counting the number of domain walls (a domain wall here is a pair of neighbor spins $\langle i,i+1\rangle$ that is not favored by the local term $\pm Z_iZ_{i+1}$ in Hamiltonian). Domain walls are created and annihilated in pair, no matter the 1d TFIM is twisted or not. When not twisted (with Hamiltonian $H_{TFI}$), there are always even number of domain walls\footnote{If the system is in an eigenstate of $H_{TFI}$, the number of domain walls is definite. If the system is not in an eigenstate of $H_{TFI}$, the system is in a superposition of states with different \textit{even number} of domain walls.}; when twisted (with Hamiltonian $H_{TFI}^{\text{twisted}}$), there are always odd number of domain walls. After the Hilbert space is doubled, denoting the Pauli $Z$ and Pauli $X$ of $\mathcal{H}_{\text{twist}}$ as $Z^{\text{twist}}$ and $X^{\text{twist}}$, we can write $H_{TFI}$ and $H_{TFI}^{\text{twisted}}$ in a unified form as following:
	\begin{equation*}
		H'_{TFI}=-\sum_{i=1}^{L_x-1}(Z_iZ_{i+1}+gX_i)-(Z^{\text{twist}} Z_0Z_1+gX_0)\,,
	\end{equation*}
	when $Z^{\text{twist}}=+1$, $H'_{TFI}$ reduces to $H_{TFI}$; when $Z^{\text{twist}}=-1$, $H'_{TFI}$ reduces to $H_{TFI}^{\text{twisted}}$. Modify the KW duality as following:
	\begin{gather}
		Z^{\text{twist}} Z_0Z_1\mapsto X_0\ \ ,\ \ X_0\mapsto Z^{\text{twist}} Z_0 Z_1\,,\notag\\
		Z_iZ_{i+1}\mapsto X_i\ \ ,\ \ X_i\mapsto Z_iZ_{i+1},
	\end{gather}
	$i=1,\cdots,L_x-1$. The modified KW duality maps the Hamiltonian $H'_{TFI}$, symmetry indicator $\prod_iX_i$ and twist indicator $Z^{\text{twist}}$ as
	\begin{gather}
		H'_{TFI}\mapsto -\sum_{i=1}^{L_x-1}(gZ_iZ_{i+1}+X_i)-(gZ^{\text{twist}} Z_0 Z_1+X_0)\,,\notag\\
		\prod_{i=0}^{L_x-1}X_i\leftrightarrow Z^{\text{twist}}\prod_{i=0}^{L_x-1}Z_iZ_{i+1}=Z^{\text{twist}}\,.\notag
	\end{gather}
	The modified KW duality is invertible. The symmetry indicator $\prod_iX_i$ and twist indicator $Z^{\text{twist}}$ are exchanged by the modified KW duality. Separate the total Hilbert space into four sectors with different total symmetry charge and twist parity, labeled by
	$\left(\prod_i X_i,Z^{\text{twist}}\right)=\left(\pm1,\pm1\right)$, the modified KW duality maps between the four sectors as
	\begin{equation}
		(\mu,\nu)\leftrightarrow(\nu,\mu),\quad \mu,\nu=\pm1.
	\end{equation}
	The unified (twisted and untwisted) 1d TFIM can be represented by a 2d toric code on a cylinder (i.e., a toric code TH) as shown in Fig.~\ref{2dTC_TH_1}. The cylinder is discretized as an $L_x\times L_y$ square lattice under periodic boundary condition (PBC) along $x$-direction and open boundary condition (OBC) along $y$-direction. The 1d TFIM before and after the KW duality is represented by the 2d toric code TH under rough ($e$ condensed) top boundary and smooth ($m$ condensed) top boundary, respectively.

	\textbf{Under smooth top boundary:} The \textit{stabilizers}\footnote{More precisely speaking, the stabilizer generators. The effect of considering stabilizers and stabilizer generators here are the same.} are all the complete $A_p,B_v$ terms and the three-leg $B_v$ terms on the top boundary, except the hollowed $B_{v_0}$ illustrated in Fig.~\ref{2dTC_TH_1}(a). The Hamiltonian of the 2d toric code TH is
	\begin{equation}\label{stabilizer_TH_Hamiltonian}
		H_{\text{TH}}=-\eta\sum\text{stabilizers}+\tilde H\,,
	\end{equation}
	where $\eta\gg||\tilde H||$ ensures the TH system is strictly constrained to the low-energy subspace. The low-energy effective Hamiltonian $\tilde H$ is formed by low-energy preserving operators (i.e., the operators commuting with all stabilizers). These low-energy preserving operators are identified as the operators of 1d TFIM. Specifically, the $B_v$ terms on the bottom boundary are identified as $\tilde{Z}_i\tilde{Z}_{i+1}$, except that the $B_v$ term associated with the vertex at $(0,0)$ is identified as $\tilde{Z}_{1/2}\tilde{Z}_{-1/2}\tilde{Z}^{\text{twist}}$. The Pauli $X$ of edge spins in the bottom boundary are identified as $\tilde{X}_i$. 
	Letting $\tilde H$ to be the sum of three-leg $B_v$ terms and single edge Pauli $X$ on the bottom boundary, 
	\begin{equation}\label{eq::low_energy_Hamiltonian}
		\tilde{H}=-\sum_{v\subset \text{bottom boundary}}B_v-g\sum_{e\subset\text{bottom boundary}}X_e\,,
	\end{equation}
	the 1d TFIM Hamiltonian is obtained, realized as the identification of low-energy effective Hamiltonian $\tilde H$ under smooth top boundary. 
	Specifically, $\tilde H$ is identified as
	\begin{align}\label{eq::low_energy_Hamiltonian_2dTC_2}
		\tilde{H}\sim &- \Bigg(\tilde{Z}_{-1/2}\tilde{Z}_{1/2}\tilde{Z}^{\text{twist}} + \sum_{i=1}^{L_x-1}\tilde{Z}_{i-1/2}\tilde{Z}_{i+1/2}\Bigg) \notag\\&- g\sum_{i=0}^{L_x-1}\tilde{X}_{i+1/2}
	\end{align}
	under smooth top boundary.
	The uncontractible Pauli $X$ Wilson loop along $x$-direction $\mathcal{X}_x$ is then naturally the symmetry indicator $\prod_i\tilde{X}_i\sim\mathcal{X}_x$, where $O_1\sim O_2$ means $O_1=O_2\cdot\text{stabilizers}$. 
	And the uncontractible Pauli $Z$ dual Wilson loop along $x$-direction $\bar{\mathcal{Z}}_x$ is naturally the twist indicator $\tilde{Z}^{\text{twist}}\sim\bar{\mathcal{Z}}_x$. 
	The $B_{v_0}$ term (i.e. the omitted/hollowed $B_v$ term) in Fig.~\ref{2dTC_TH_1} (a) is equivalent (differ by multiplying stabilizers) to $\tilde{Z}^{\text{twist}}$. 
	$B_{v_0}=-1$ can be viewed as an $e$ defect on the top boundary, letting $B_{v_0}=-1$ realizes the twist of 1d TFIM. The extensive $Z$ dual Wilson line along $y$-direction $\bar{\mathcal{Z}}_y$, anticommuting with $\prod_i\tilde{X}_i$, is then naturally identified as the symmetry toggler $\tilde{Z}_i$. The extensive $X$ Wilson line along $y$-direction $\mathcal{X}_y$ that attaches to $v_0$, anticommuting with $\tilde{Z}^{\text{twist}}\sim B_{v_0}$ and commuting with all stabilizers, is identified as the twist toggler $\tilde{X}^{\text{twist}}$.

	\begin{figure*}[htbp]
		\centering
		\begin{subfigure}{0.45\textwidth}
			\centering
			\begin{tikzpicture}
				\foreach \i in {0,1.2,2.4,3.6,4.8} {\draw (\i,0)--(\i,4.8);}
				\foreach \j in {0,1.2,2.4,3.6,4.8} {\draw (0,\j)--(6,\j);}
				\draw[dashed] (6,0)--(6,4.8);
				
				\fill[red, opacity=0.1] (2.55,0.2)--(2.55,-0.2)--(3.45,-0.2)--(3.45,0.2);
				\node[red] at (3,0) {$X$};
				\node[red] at (3,-0.45) {$\sim\tilde{X}_i$};
				
				\fill[ForestGreen, opacity=0.1] (3.9,0.2)--(3.9,-0.2)--(5.7,-0.2)--(5.7,0.2);
				\fill[ForestGreen, opacity=0.1] (4.6,-0.2)--(5,-0.2)--(5,0.9)--(4.6,0.9);
				\node[ForestGreen] at (4.2,0) {$Z$};
				\node[ForestGreen] at (5.4,0) {$Z$};
				\node[ForestGreen] at (4.8,0.6) {$Z$};
				\node[ForestGreen] at (4.8,-0.45) {$\sim\tilde{Z}_i\tilde{Z}_{i+1}$};
				
				\fill[ForestGreen, opacity=0.1] (-0.2,1.6)--(-0.2,2)--(6,2)--(6,1.6);
				\foreach \i in {0.2,1.2,2.4,3.6,4.8} {\node[ForestGreen] at (\i,1.8) {$Z$};}
				\node[ForestGreen] at (6.3,1.8) {$\mathcal{Z}_x$};
				\node[ForestGreen] at (4.2,1.4) {$\sim\tilde{Z}^{\text{twist}}$};
				
				\fill[red, opacity=0.1] (-0.2,2.2)--(-0.2,2.6)--(6,2.6)--(6,2.2);
				\foreach \i in {0.8,1.8,3,4.2,5.4} {\node[red] at (\i,2.4) {$X$};}
				\node[red] at (6.3,2.4) {$\mathcal{X}_x$};
				\node[red] at (4.8,2.8) {$\sim\prod_i\tilde{X}_i$};
				
				\draw[ForestGreen, line width=1.2pt] (0,4.8) circle (0.15);
				\fill[red, opacity=0.1] (-0.2,4.8)--(0.2,4.8)--(0.2,0)--(-0.2,0);
				\foreach \j in {0.6,1.9,3,4.2} {\node[red] at (0,\j) {$X$};}
				\node[red] at (0,-0.3) {$\tilde{X}^{\text{twist}}$};
				
				\fill[ForestGreen, opacity=0.1] (0.4,-0.2)--(0.8,-0.2)--(0.8,5)--(0.4,5);
				\foreach \j in {0,1.2,2.2,3.6,4.8} {\node[ForestGreen] at (0.6,\j) {$Z$};}
				\node[ForestGreen] at (0.8,-0.45) {$\tilde{Z}_{1/2}$};
				
				\fill[ForestGreen, opacity=0.1] (2.7,4.6)--(2.7,5)--(4.5,5)--(4.5,4.6);
				\fill[ForestGreen, opacity=0.1] (3.4,5)--(3.8,5)--(3.8,3.9)--(3.4,3.9);
				\foreach \i/\j in {3/4.8, 4.2/4.8, 3.6/4.2} {\node[ForestGreen] at (\i,\j) {$Z$};}
				\node[ForestGreen] at (3.6,5.2) {$B_v$};
				
				\foreach \i in {4.8,6} {\fill[red, opacity=0.1] (\i-0.2,4.65)--(\i+0.2,4.65)--(\i+0.2,3.75)--(\i-0.2,3.75);}
				\foreach \j in {3.6,4.8} {\fill[red, opacity=0.1] (4.95,\j-0.2)--(4.95,\j+0.2)--(5.85,\j+0.2)--(5.85,\j-0.2);}
				\foreach \i/\j in {5.4/3.6, 5.4/4.8, 4.8/4.2, 6/4.2} {\node[red] at (\i,\j) {$X$};}
				\node[red] at (5.4,4.2) {$A_p$};
				
				\node[ForestGreen] at (0,5.2) {$B_{v_0}$};
				
				\node at (0,-1) {0};
				\node at (1.2,-1) {1};
				\node at (2.4,-1) {2};
				\node at (3.6,-1) {3};
				\node at (4.8,-1) {4};
				\node at (6,-1) {5};
				\draw[->] (0,-1.25)--(6.2,-1.25);
				\node at (6.4,-1.25) {$x$};
				
				\draw[->] (7.3,0)--(7.3,4.5);
				\node at (7.3,4.7) {$y$};
				\node at (7.3,-0.2) {$0$};
				
				\node at (3,-2) {(a)};
			\end{tikzpicture}
		\end{subfigure}
		\begin{subfigure}{0.45\textwidth}
			\centering
			\begin{tikzpicture}
				\foreach \i in {0,1.2,2.4,3.6,4.8} {\draw (\i,0)--(\i,4.8);}
				\foreach \j in {0,1.2,2.4,3.6} {\draw (0,\j)--(6,\j);}
				\draw[dashed] (6,0)--(6,4.8);
				
				\fill[red, opacity=0.1] (2.55,0.2)--(2.55,-0.2)--(3.45,-0.2)--(3.45,0.2);
				\node[red] at (3,0) {$X$};
				\node[red] at (3,-0.45) {$\sim\tilde{Z}_i\tilde{Z}_{i+1}$};
				
				\fill[ForestGreen, opacity=0.1] (3.9,0.2)--(3.9,-0.2)--(5.7,-0.2)--(5.7,0.2);
				\fill[ForestGreen, opacity=0.1] (4.6,-0.2)--(5,-0.2)--(5,0.9)--(4.6,0.9);
				\node[ForestGreen] at (4.2,0) {$Z$};
				\node[ForestGreen] at (5.4,0) {$Z$};
				\node[ForestGreen] at (4.8,0.6) {$Z$};
				\node[ForestGreen] at (4.8,-0.45) {$\sim\tilde{X}_i$};
				
				\fill[ForestGreen, opacity=0.1] (-0.2,1.6)--(-0.2,2)--(6,2)--(6,1.6);
				\foreach \i in {0.2,1.2,2.4,3.6,4.8} {\node[ForestGreen] at (\i,1.8) {$Z$};}
				\node[ForestGreen] at (6.3,1.8) {$\mathcal{Z}_x$};
				\node[ForestGreen] at (4.15,1.4) {$\sim\prod_i\tilde{X}_i$};
				
				\fill[red, opacity=0.1] (-0.2,2.2)--(-0.2,2.6)--(6,2.6)--(6,2.2);
				\foreach \i in {0.8,1.8,3,4.2,5.4} {\node[red] at (\i,2.4) {$X$};}
				\node[red] at (6.3,2.4) {$\mathcal{X}_x$};
				\node[red] at (4.75,2.8) {$\sim\tilde{Z}^{\text{twist}}$};
				
				\fill[red, opacity=0.1] (-0.2,4.8)--(0.2,4.8)--(0.2,0)--(-0.2,0);
				\foreach \j in {0.6,1.9,3,4.2} {\node[red] at (0,\j) {$X$};}
				\node[red] at (0,-0.3) {$\tilde{Z}_0$};
				
				\fill[ForestGreen, opacity=0.1] (0.4,-0.2)--(0.8,-0.2)--(0.8,4.2)--(0.4,4.2);
				\foreach \j in {0,1.2,2.2,3.6} {\node[ForestGreen] at (0.6,\j) {$Z$};}
				\node[ForestGreen] at (0.8,-0.45) {$\tilde{X}^{\text{twist}}$};
				
				\fill[ForestGreen, opacity=0.1] (2.7,3.4)--(2.7,3.8)--(4.5,3.8)--(4.5,3.4);
				\fill[ForestGreen, opacity=0.1] (3.4,4.5)--(3.8,4.5)--(3.8,2.7)--(3.4,2.7);
				\foreach \i/\j in {3/3.6, 4.2/3.6, 3.6/3, 3.6/4.2} {\node[ForestGreen] at (\i,\j) {$Z$};}
				\node[ForestGreen] at (3.2,4) {$B_v$};
				
				\foreach \i in {4.8,6} {\fill[red, opacity=0.1] (\i-0.2,4.65)--(\i+0.2,4.65)--(\i+0.2,3.75)--(\i-0.2,3.75);}
				\foreach \j in {3.6} {\fill[red, opacity=0.1] (4.95,\j-0.2)--(4.95,\j+0.2)--(5.85,\j+0.2)--(5.85,\j-0.2);}
				\foreach \i/\j in {5.4/3.6, 4.8/4.2, 6/4.2} {\node[red] at (\i,\j) {$X$};}
				\node[red] at (5.4,4.2) {$A_p$};
				
				\draw[red, line width=1.5pt] (0.6,4.2) circle (0.15);
				
				\node[red] at (0.6,4.6) {$A_{p_0}$};
				
				\node at (0,-1) {0};
				\node at (1.2,-1) {1};
				\node at (2.4,-1) {2};
				\node at (3.6,-1) {3};
				\node at (4.8,-1) {4};
				\node at (6,-1) {5};
				\draw[->] (0,-1.25)--(6.2,-1.25);
				\node at (6.4,-1.25) {$x$};
				
				\node at (3,-2) {(b)};
			\end{tikzpicture}
		\end{subfigure}
		\caption{Microscopic lattice realization of the 1d Transverse Field Ising Model (TFIM) via 2d toric code topological holography (TH) on a cylinder ($x$-direction periodic, $y$-direction open and finite). (a) Under the smooth top boundary: the stabilizer generators are complete four-leg $B_v$ terms, four-edge plaquette $A_p$ terms, three-leg top $B_v$ terms except $B_{v_0}$. The absence of $B_{v_0}$ allows to add an $e$ defect on the top boundary. Adding an $e$ on the top boundary realizes a twist in the identified 1d TFIM. The effective physical degrees of freedom on the dynamical bottom boundary are identified on the edges: single edge $X$ acts as $\tilde{X}_i$, and three-leg $B_v$ acts as the Ising term $\tilde{Z}_i\tilde{Z}_{i+1}$. The uncontractible $X$-loop $\mathcal{X}_x$ and dual $Z$-loop $\mathcal Z_x$ are identified as the symmetry indicator $\prod_i \tilde{X}_i$ and the twist indicator $\tilde{Z}^{\text{twist}}$, respectively. (b) Under the rough top boundary: the stabilizer generators are complete bulk $A_p,B_v$ terms and top boundary three-edge terms except $A_{p_0}$. The absence of $A_{p_0}$ allows to add an $m$ defect on the top boundary. Adding an $m$ on the top boundary realizes a twist in the 1d TFIM. The boundary operator dictionary is swapped: the bottom boundary edge $X$ now represents the Ising term $\tilde{Z}_i\tilde{Z}_{i+1}$ while the vertex $B_v$ term represents $\tilde{X}_i$. The roles of the uncontractible loops $\mathcal{X}_x$ and $\mathcal{Z}_x$ are exchanged, naturally implementing the Kramers-Wannier duality ($\prod_i \tilde{X}_i \leftrightarrow \tilde{Z}^{\text{twist}}$) via this change of the top boundary.}
		\label{2dTC_TH_1}
	\end{figure*}

	\textbf{Under the rough top boundary: }The stabilizers are all the complete $A_p,B_v$ terms and the three-edge $A_p$ terms on the top boundary, except the hollowed $A_{p_0}$ term illustrated in Fig.~\ref{2dTC_TH_1}(b). 
	The Hamiltonian of the TH has exactly the same form as Eq.~(\ref{stabilizer_TH_Hamiltonian}), the only difference is the DOFs and stabilizers on the top boundary. 
	Now, the $B_v$ terms on the bottom boundary are identified as $\tilde{X}_i$, while the single edge Pauli $X$ of spins on the bottom boundary are identified as $\tilde{Z}_i\tilde{Z}_{i+1}$, except that the Pauli $X$ of the spin on the edge centered at $(1/2,0)$ is identified as $\tilde{Z}_i\tilde{Z}_{i+1}\tilde{Z}^{\text{twist}}$.
	Consequently, the minimal low-energy effective Hamiltonian $\tilde H$ [defined in Eq.~(\ref{eq::low_energy_Hamiltonian})] is identified as 
	\begin{equation}
		\tilde{H}\sim -\sum_{i=0}^{L_x-1}\tilde{X}_i - g\Bigg(\tilde{Z}_0\tilde{Z}_1\tilde{Z}^{\text{twist}} + \sum_{i=1}^{L_x-1} \tilde{Z}_i\tilde{Z}_{i+1}\Bigg)
	\end{equation}
	under rough top boundary.
	The uncontractible Pauli $X$ Wilson loop along $x$-direction $\mathcal{X}_x$ is then naturally the twist indicator $\tilde{Z}^{\text{twist}}\sim\mathcal{X}_x$. And the uncontractible Pauli $Z$ dual Wilson loop along $x$-direction $\bar{\mathcal{Z}}_x$ is naturally the symmetry indicator $\prod_i\tilde{X}_i\sim\bar{\mathcal{Z}}_x$. 
	The $A_{p_0}$ term (i.e. the omitted/hollowed $A_p$ term) in Fig.~\ref{2dTC_TH_1} (b) is equivalent to $\tilde{Z}^{\text{twist}}$. 
	$A_{p_0}=-1$ can be viewed as an $m$ defect on the top boundary, adding an $m$ defect on the top boundary realizes the twist of 1d TFIM. 
	The extensive $Z$ dual Wilson line along $y$-direction $\mathcal{Z}_y$ that attaches to $p_0$, anticommuting with $\tilde{Z}^{\text{twist}}\sim A_{p_0}$ and commuting with all stabilizers, is then naturally the twist toggler $\tilde{X}^{\text{twist}}$. The extensive $X$ Wilson line along $y$-direction $\mathcal{X}_y$, anticommuting with $\prod_i\tilde{X}_i$, is then naturally identified as the symmetry toggler $\tilde{Z}_i$. 
	
	An even more detailed   step-by-step identification from 2d toric code TH to 1d TFIM, starting from low-energy Hilbert space identification, is given in Appendix~\ref{details_2dTC_TH_Hilbert_space_structure_and_operator_identification}.

	Changing the top boundary between smooth and rough swaps the identities of $B_v$ terms and single edge $X$ on the bottom boundary, as well as the identities of symmetry and twist operators, thus realizes the KW duality. The change of top boundary can be implemented by a linear-depth local unitary (LU) sequential quantum circuit (SQC), see details in Ref.~\cite{dual_via_SQC}.
	
	The lattice TH construction introduced in this subsection can be generalized to fracton stabilizer code, which is a nontrivial generalization since fractons are immobile. We introduce a general framework for $d$-dimensional abelian stabilizer code TH and FTH in the next subsection, and examine the effectiveness of the framework with examples in Secs.~\ref{sec_X_cube_FTH},\ref{sec_Haah_code_FTH}.
	
	At last, we would like to stress that what topological excitations are condensed on the top boundary is not dependent on the lattice shape (smooth or rough), but what operators are added to the stabilizer generator set to fulfill the topological order (TO) condition. In the remaining of this paper, smooth and rough only refer to the lattice shape, but not what excitations are condensed.

	\subsection{The four-Stage construction framework and basic definitions}\label{subsec_holographic_sandwich}\label{subsec_TH_FTH_difference}

	In this subsection, we formulize a four-Stage holographic sandwich construction pipeline for $\mathbb Z_p$ stabilizer code TH or FTH. The 2d toric code TH introduced in the last subsection can be viewed as the first example of this framework. The framework focuses on the TH and FTH construction from liquid or fracton topological bulk theory with only point-like excitations. While Stage-3 of the framework is still hypothetical in general, the framework successfully constructed the FTH with X-cube and Haah's cubic code bulks. 
	
	The central physical picture of this holographic framework is now a $d$-dimensional sandwich. The $d$-dimensional topological bulk is frozen in its low-energy subspace, its top boundary is gapped and satisfies TO condition, while its bottom boundary remains dynamical. Consequently, the entire $d$-dimensional system effectively collapses into a $(d-1)$-dimensional quantum system whose dynamics is decided by the dynamical bottom boundary.
	
	We formalize this holographic mechanism into a systematic four-stage procedure for constructing TH or FTH with point-like excitations, as outlined in Fig.~\ref{fig_FTH_procedure_flowchart}. Stage 1 prepares the bulk lattice model, identifying its stabilizers and topological excitations. Stage 2 constructs the holographic sandwich by selecting an appropriate top-boundary condensation, verifying the topological order condition, and deriving the low-energy subspace. Stage 3 calculates the boundary gauge syndromes of boundary gauge and transport operators, then identifies the boundary gauge and transport operators as effective Pauli of $(d-1)$-dimensional system, and extracts the emerging symmetries and relations, and lifts relations to twist degrees of freedom when possible. Finally, Stage 4 switches to another legal top boundary, repeats the process to establish the exact duality mapping, and constructs the switching sequential quantum circuit (SQC) when possible. 
	The formal definitions and mathematical structures introduced in the rest of this section serve as the rigorous foundation for executing these stages.

	\begin{figure*}[htbp]
		\centering
		\usetikzlibrary{backgrounds,fit,positioning}
		\begin{tikzpicture}[
			% styles
			stepbox/.style={rectangle, rounded corners=2pt, draw=black!60, fill=white,
				text width=5.6cm, minimum height=0.7cm, align=left,
				inner xsep=6pt, font=\small},
			stagetitle/.style={font=\small\bfseries, anchor=center},
			stagebg/.style={rounded corners=6pt, draw=black!40, fill=blue!4,
				inner xsep=8pt, inner ysep=6pt},
			arrow/.style={-{Stealth[length=2.5mm]}, thick, black!50},
			bigarrow/.style={-{Stealth[length=3mm]}, very thick, black!70},
			]
			% ---- Stage 1 (top-left) ----
			\node[stagetitle] (T1) {\textit{Stage 1.}\ Model Preparation};
			\node[stepbox, below=6pt of T1] (S1) {a. Define bulk lattice \& Hamiltonian};
			\node[stepbox, below=10pt of S1] (S2) {b. Identify stabilizers \& excitations};
			\draw[arrow] (S1) -- (S2);
			\begin{scope}[on background layer]
				\node[stagebg, fit=(T1)(S1)(S2)] (P1) {};
			\end{scope}
			
			% ---- Stage 2 (top-right) ----
			\node[stagetitle, right=3.5cm of T1] (T2) {\textit{Stage 2.}\ Boundary Data \& Setting};
			\node[stepbox, below=6pt of T2] (S3) {a. Compute boundary data};
			\node[stepbox, below=10pt of S3] (S5) {b. Choose top condensation \& verify TO condition};
			\node[stepbox, below=10pt of S5] (S6) {c. Derive low-energy subspace};
			\draw[arrow] (S3) -- (S5);
			\draw[arrow] (S5) -- (S6);
			\begin{scope}[on background layer]
				\node[stagebg, fit=(T2)(S3)(S5)(S6)] (P2) {};
			\end{scope}
			
			% ---- Stage 3 (bottom-left, below Stage 1) ----
			\node[stagetitle, below=1.4cm of S2] (T3) {\textit{Stage 3.}\ Holographic Sandwich};
			\node[stepbox, below=6pt of T3] (S7) {a. Construct transport operators};
			\node[stepbox, below=10pt of S7] (S8) {b. Identify effective Pauli operators};
			\node[stepbox, below=10pt of S8] (S9) {c. Extract symmetries \& relations};
			\node[stepbox, below=10pt of S9] (S10) {(d. Lift relations to twist DOFs)};
			\draw[arrow] (S7) -- (S8);
			\draw[arrow] (S8) -- (S9);
			\draw[arrow] (S9) -- (S10);
			\begin{scope}[on background layer]
				\node[stagebg, fit=(S7)(S8)(S9)(S10)(T3)] (P3) {};
			\end{scope}
			
			% ---- Stage 4 (bottom-right, below Stage 2) ----
			\node[stagetitle, below=1cm of S6] (T4) {\textit{Stage 4.}\ Duality \& SQC};
			\node[stepbox, below=6pt of T4] (S11) {a. Switch to another top boundary};
			\node[stepbox, below=10pt of S11] (S12) {b. Repeat Stages 2--3};
			\node[stepbox, below=10pt of S12] (S13) {c. Establish duality \& (construct SQC)};
			
			\draw[arrow] (S11) -- (S12);
			\draw[arrow] (S12) -- (S13);
			\begin{scope}[on background layer]
				\node[stagebg, fit=(S11)(S12)(S13)(T4)] (P4) {};
			\end{scope}
			
			% ---- Inter-stage arrows (zigzag) ----
			\draw[bigarrow] (P1.east) -- (P2.west);
			\draw[bigarrow] (P2.west) -- (P3.east);
			\draw[bigarrow] (P3.east) -- (P4.west);
		\end{tikzpicture}
		\caption{Systematic procedure for constructing a (fracton) topological holography. The workflow proceeds as following: after preparing the bulk model (Stage~1), one computes the boundary data and chooses a top boundary that satisfies TO condition (Stage~2), constructs transport operators, identifies the effective Pauli operators, symmetries, relations, lifts relations to twist DOFs (Stage~3), then repeats with another legal top boundary to establish a holographic duality and constructs the switching SQC (Stage~4). Words in parentheses stand for optional operations. A zigzag procedure is when lifting relations to twist DOFs is available: after lifting relations to twist DOFs, the low-energy subspace is enlarged, and the step ``Derive low-energy subspace'' needs to be redone.}
		\label{fig_FTH_procedure_flowchart}
	\end{figure*}

	\paragraph{\textbf{Stage 1. Model Preparation}}
	The FTH construction begins with a well-defined topological bulk. As a starting point, we define the bulk lattice and its Hamiltonian $H$, and identify its complete set of stabilizers and the associated point-like (including fracton) and extensive excitations. While this step is standard for any stabilizer code, making the excitation content explicit is crucial for determining the allowed boundary condensations in the subsequent stages. In this paper, we focus on the bulk with point-like topological excitations.
	
	\paragraph{\textbf{Stage 2. Boundary Data \& Setting}}
	For a $d$ dimensional qudit topological order, we take the $x_d$-direction to be finite OBC with size $L_d$, and other directions to be PBC or infinite OBC. Name the boundary near $x_d=0$ and $x_d=L_d-1$ to be bottom and top boundaries, respectively. The system can be viewed as the truncation of the infinite system where the $x_d$-direction is infinite. The truncation is captured by a projector $\pi$, which projects Pauli operators to their truncation on the finite $L_d$ size system.
	
	The Hamiltonian of the qudit TH or FTH is given by
	\begin{equation}\label{eq_general_TH_FTH_Hamiltonian}
		H_{\text{TH or FTH}}=-\eta\sum\text{stabilizers}+\tilde H\,,
	\end{equation}
	where $\eta\gg||\tilde H||$ ensures the (F)TH system is strictly constrained to the low-energy subspace. Here, the ``stabilizers'' include both the bulk stabilizers and the specific top boundary gauge operators added to the stabilizer set to satisfy the TO condition near top boundary, see the 2d toric code (2dTC) TH introduced in the last subsection as an example.
	Meanwhile, $\tilde H$ encodes the Hamiltonian of the represented $(d-1)$-dimensional physical system and is constructed solely from bottom boundary gauge operators, but as an arbitrary function of bottom boundary gauge operators, which means the bottom boundary is dynamical.

	As can be seen in the 2dTC TH example, the complete stabilizers and truncated stabilizers play different roles in the holographic sandwich. To capture this distinction, we make the following definition.

	\begin{definition}[Bulk stabilizers and boundary gauge operators]\label{def_stabilizers_and_gauge_operators}
		The complete stabilizers in the truncated $L_d$ size system are called \textit{bulk stabilizers}. The uniformly local (defined soon) Pauli operators commutable with all bulk stabilizers are called \textit{boundary gauge operators}\cite{Chen_Yu_An_2024}\footnote{The boundary gauge operators can be classified to two kinds: (i) primary boundary gauge operators refer to the truncated stabilizers; (ii) secondary boundary gauge operators refer to those that are not primary boundary gauge operators. We will only meet primary boundary gauge operators in the concrete constructions in this paper.}. In our setting, boundary gauge operators can be divided into top boundary gauge operators and bottom boundary gauge operators, provided $L_d$ is large enough.
	\end{definition}

	To make the above definition mathematically rigorous, the top and bottom boundaries must be strictly distinguished, and the TO condition must be properly adapted for a finite slab. For a large enough $L_d$, we distinguish finitely supported and uniformly local operators.
	\begin{definition}[Linear size and uniformly local operators]\label{def_uniformly_local_operators}
		The \textit{linear size} of an operator $O$ is the maximal single-direction distance of points in $\supp(O)$,
		$$
		\max_{\substack{\alpha=1,\cdots,d\\i,j\in\supp(O)}}\{|x_\alpha(i)-x_\alpha(j)|\}.
		$$
		A finitely supported operator $O$ is an operator with a finite linear size.
		An operator family $\{O'(L)\}$ is \textit{uniformly local} if the linear size of the operator $O'(L)$ has an upper bound, no matter what $L$ takes\footnote{Here if $x_1,\cdots,x_{d-1}$-directions are under infinite OBC, then $L$ represents $L_d$. If $x_1,\cdots,x_{d-1}$-directions are under PBC, then $L$ represents $L_1,\cdots,L_d$.}.
	\end{definition}
	Since $L_d$ is a finite number, so for example a straight Wilson line from $x=(0,\cdots,0)$ to $x=(0,\cdots,L_d)$ is finitely supported, but it is not uniformly local since its linear size grows with $L_d$ without an upper bound. In the rest of this paper, ``local'' stands for uniformly local if not specified.
	In FTH constructions, the local operators we care do not change size when $L$ changes. So in practice, we can choose a constant $h$, s.t. all the local operators we care do not have a linear size larger than $h$. Then we choose a large enough $L_d$ $(L_d>2h)$, so that the local operators near top and bottom boundaries can be well separated.
	
	Since in our TH/FTH setting the $x_d$-direction is finite, the TO condition should be anchored with uniformly local, rather than finite support. On the other hand, the bottom boundary is intentionally left free, so we do not regard the bottom boundary gauge operators as logical operators.
	\begin{definition}[TH/FTH logical operator]\label{def_th_fth_logical_operators}
		A Pauli operator (modulo stabilizers) is a \textit{TH/FTH logical operator} if it commutes with all stabilizers and with all bottom boundary gauge operators, modulo multiplication by stabilizers.
	\end{definition}
	Examples of TH logical operators include $\mathcal X_x$ and $\mathcal Z_x\sim B_{v_0}$ in the 2dTC TH under smooth top boundary [see Fig.~\ref{2dTC_TH_1}(a)]. As a consequence of Definition~\ref{def_th_fth_logical_operators}, the proper TO condition is: there is no nontrivial uniformly local TH/FTH logical operator. This replaces Haah's original bulk TO condition $\imop\sigma=\ker\epsilon$ in the infinite-system setting\cite{haah_2013_modules}, which can be interpreted as the absence of nontrivial finite-support logical operators commuting with all stabilizers. In (F)TH, TO condition is satisfied only before deleting the specific top stabilizers [e.g., $B_{v_0}$ in Fig.~\ref{2dTC_TH_1}(a) and $A_{p_0}$ in Fig.~\ref{2dTC_TH_1}(b)] that lift the twist DOFs. After deleting the selected top stabilizers, the system should be regarded as a TO-completed boundary with extra defect DOFs; the only local logical operators are the deleted top stabilizers.
	
	On top of above definitions, we find out all generators of boundary gauge operators, with the help of Algorithm 1 in Ref.~\cite{Chen_Yu_An_2024}. Then, we implement TO completion on the top boundary (i.e., adding a specific set of top boundary gauge operators to the stabilizer set, to fulfill the TO condition), which is usually not unique. Under a specific top boundary, we calculate the dimension of low-energy subspace (i.e., $\text{stabilizer}=+1$ subspace) $\tilde{\mathcal H}$ and move to the stage 3.

	After the holographic sandwich is constructed, if lifting relations to twist DOFs by deleting specific top stabilizers [e.g., deleting $B_{v_0}$ to lift the relation $\prod_{i=0}^{L_x-1}\tilde Z_i\tilde Z_{i+1}=1$ to a twist DOF with Pauli generators $\tilde Z^{\text{twist}},\tilde X^{\text{twist}}$, see Fig.~\ref{2dTC_TH_1}(a)] is available, we delete specific top stabilizers and recompute the dimension of low-energy subspace.

	\paragraph{\textbf{Stage 3. Holographic Sandwich}}
	This stage is still hypothetical in general. 
	The goal of this Stage is identifying the low-energy preserving operators as the operators of the $(d-1)$-dimensional system. 
	Under the (possibly composite) $\{f_1,\cdots,f_n\}$ condensed top boundary (where $f_\alpha$ is a basis topological excitation), the $d$-dimensional (F)TH is identified as a $(d-1)$-dimensional system with $n$ qudits per cell. 
	For the 2dTC TH under both $e$ condensed and $m$ condensed top boundaried, $n=1$.
	Denote the Pauli generators of the $n$ qudits by $\tilde Z_\alpha,\tilde X_\alpha$, $\alpha=1,\cdots,n$.
	Separate the low-energy preserving Pauli generators into three parts:
	\begin{enumerate}
		\item those solely has a bottom boundary gauge syndrome of $f_\alpha$\footnote{$f_\alpha$ is a topological bottom boundary excitation, which corresponds to a bottom boundary gauge syndrome equivalence class, with the equivalence relation being differing by a local boundary gauge operator. Here we choose a minimal weight representative in the equivalence class to represent $f_\alpha$.} with configuration $\mathscr C$, where $f_\alpha$ is condensed on the top boundary;
		\item logical operators of the (F)TH;
		\item other low-energy preserving operators.
	\end{enumerate}
	The Pauli generators with a bottom boundary gauge syndrome of $f_\alpha$ with configuration $\mathscr C$ are identified as $\prod_{\bm x\in\mathscr C}\tilde Z_{\bm x,\alpha}$ (up to multiplication of logical operators), while other bottom boundary gauge operators are identified from the commutation relations.
	Here $\bm x:=(x_1,\cdots,x_{d-1})$ and $\mathscr C \subset \mathbb{Z}^{d-1}$ is the spatial configuration (i.e., a subset of coordinate points on the bottom-boundary lattice) representing the distribution of the $f_\alpha$ excitations. 
	The Pauli generators with only a bottom boundary $f_\alpha$ syndrome ($\alpha=1,\cdots,n$) includes a part of bottom boundary gauge operators [e.g., the bottom truncated $B_v$ terms of 2dTC TH under $m$ condensed top boundary, see Fig.~\ref{2dTC_TH_1}(a)] and $f_\alpha$-transport operators ($\alpha=1,\cdots,n$). 
	\begin{definition}[$f$-transport operator]\label{def_f_transport}
		An \textit{$f$-transport operator along $x_d$-direction} $\mathcal T'_{f,x_d}$ is a Pauli operator that:
		\begin{enumerate}
			\item is locally connected;
			\item reaches the top and bottom boundary;
			\item is commutable with all bulk stabilizers (Definition~\ref{def_stabilizers_and_gauge_operators});
			\item whose only nontrivial syndrome is a boundary syndrome consisting of possibly multiple $f$'s.
		\end{enumerate}
	\end{definition}
	The Wilson lines across $y$-direction in the 2dTC TH are special, simplest examples of transport operators. Literally, the $f$-transport operator transports $f$ particles from the bottom boundary to top boundary (or vice versa), in the sense that if applying the $f$-transport operator layer-by-layer ($x_d$-layers), the process would be first creating $f$ near bottom boundary, then transporting a part of $f$ from the bottom boundary to the top boundary (or vice versa). 
	The number of $f$ particles could change during the transportation, which is a core difference between TH and FTH.
	Since the top boundary is $f_\alpha$ condensed, the only syndrome of $f_\alpha$-transport operators along $x_d$-direction is a bottom boundary gauge syndrome which correspond to some configuration $\mathscr C$ of $f_\alpha$, thus is identified as $\tilde Z$ product with configuration $\mathscr C$.

	It follows immediately from the definition that an $f$-transport operator along $x_d$-direction, multiplied by an operator commutable with all bulk stabilizers (which could be a bulk stabilizer, a boundary gauge operator, another $f$-transport operator along $x_d$-direction, or a logical operator), is still an $f$-transport operator along $x_d$-direction. Since we only work within the low-energy subspace in the context of (F)TH, viewing $f$-transport operators differing by stabilizers equivalent is natural and expected. Multiplication of $f$-transport operators and boundary gauge operators is also easy to deal with since the FTH identification is required to be a homomorphism anyway. It is also natural to claim the multiplication of two $f$-transport operators along $x_d$-direction to be another $f$-transport operator along $x_d$-direction. However, $f$-transport operators differing by nontrivial logical operators should not be naively viewed equivalent.
	
	\begin{definition}[Pure $f$-transport operator]\label{def_pure_f_transport}
		A \textit{pure $f$-transport operator along $x_d$-direction}, denoted by $\mathcal T_{f,x_d}$, is an $f$-transport operator along $x_d$-direction that contains no nontrivial logical operator.
	\end{definition}
	The prime in $\mathcal T'_{f,x_d}$ indicates a possibly impure transport operator, while the unprimed notation $\mathcal T_{f,x_d}$ is reserved for a pure one. There is a one-to-one correspondence between $f$-transport operator classs (module stabilizers and logical operators) and pure $f$-transport operators (modulo stabilizers), but the operator identification should be done to pure $f$-transport operators, instead of $f$-transport operator classes, unless there is no nontrivial logical operator.

	In the 2dTC TH, we obtained a symmetry-twist duality by switching the top boundary. This symmetry-twist duality is obtained in two steps:
	\begin{enumerate}
		\item the duality of symmetry and relation;
		\item lifting the relations to twist degrees of freedom (DOFs) by deleting specific top stabilizer generators.
	\end{enumerate}
	Before lifting the relations to twist DOFs, the only logical operator of 2dTC TH [e.g., $\mathcal X_x$ in Fig.~\ref{2dTC_TH_1}(a)] is identified as the symmetry indicator. After deleting the specific top stabilizer [e.g., $B_{v_0}$ in Fig.~\ref{2dTC_TH_1}(a)], the relation is lifted to a twist DOF, with a new pair of anticommuting logical generators appearing ($\tilde Z^{\text{twist}}$ and $\tilde X^{\text{twist}}$). This motivates us to distinguish these two kinds of (F)TH logical operators.
	\begin{definition}[Intrinsic and twist logical operators]\label{def_intrinsic_and_twist_logical_operators}
		We distinguish two kinds of logical operators in a (F)TH system:
		\begin{enumerate}
			\item \textit{Intrinsic logical operators} are those that exist before deleting the top stabilizers, which are intrinsic to the (F)TH sandwich with the specific top boundary.
			\item \textit{Twist logical operators} are those that are equivalent (i.e. differing only by stabilizers) to the product of deleted top stabilizers, and their anti-commuting counterparts.
		\end{enumerate}
	\end{definition}
	The study of X-cube FTH shows that some of the intrinsic logical operators are identified as symmetry indicators, while some are redundant (see Sec.~\ref{sec_X_cube_FTH} for details).

	\begin{definition}[Symmetry indicators and symmetry togglers]\label{symm_indicators_and_togglers}
		Symmetry indicators refer to the non-local Pauli operators commuting with the low-energy effective Hamiltonian $\tilde H$, while symmetry togglers refer to the non-local Pauli operators that solely change the eigenvalue of the corresponding symmetry indicators when applied.
	\end{definition}
	In all three examples of this paper, symmetry indicators are identified from the intrinsic logical operators of (F)TH [e.g., $\mathcal X_x$ in Fig.~\ref{2dTC_TH_1}(a)], while symmetry togglers are identified from transport operators of excitations condensed on the top boundary [e.g., $\mathcal T_{m,y}$ in Fig.~\ref{2dTC_TH_1}(a)].
	
	\begin{definition}[Twist indicators and twist togglers]\label{twist_indicators_and_togglers}
		Twist indicators refer to the operators whose eigenvalues reflect the twist sector of the identified $(d-1)$-dimensional system, denoted as $\tilde Z^{\text{twist}}$, i.e. the Pauli $Z$ of twist DOFs. On the other hand, twist togglers are the operators that solely change the eigenvalue of the corresponding twist indicators when applied.
	\end{definition}
	Twist indicators are identified from the twist logical operators of the (F)TH [e.g., $\mathcal Z_x$ in Fig.~\ref{2dTC_TH_1}(a)], while twist togglers are identified from transport operators of excitations not condensed on the top boundary [e.g., $\mathcal T_{e,y}$ in Fig.~\ref{2dTC_TH_1}(a)].
	
	\paragraph{\textbf{Stage 4. Duality \& SQC}}
	The fundamental utility of this holographic sandwich is that the $(d-1)$-dimensional physics on the dynamical bottom boundary is entirely dictated by the choice of condensation on the top boundary. Specifically, the process of adding specific top boundary gauge operators (Definition~\ref{def_stabilizers_and_gauge_operators}) to the stabilizer set is called \textit{TO completion}. Changing the top boundary (i.e., performing different TO completions) naturally induces quantum dualities in the effective $(d-1)$-dimensional boundary theory. In the 2dTC TH and X-cube FTH cases, the induced dualities become unitary after lifting the all the relations to twist DOFs. When the induced duality is unitary, it is possible to construct a linear-depth local-unitary sequential quantum circuit (SQC) to realize the duality. The linear-depth LU SQC realizing the 2dTC TH induced dualities are given in Ref.~\cite{dual_via_SQC}. We construct a linear-depth LU SQCs realizing the X-cube induced duality in Sec.~\ref{subsec_X_cube_Stage4}.

	\section{Fracton topological holography with X-cube bulk}\label{sec_X_cube_FTH}

	In this section, we construct an FTH sandwich with the X-cube bulk, following the four-Stage pipeline introduced in Sec.~\ref{subsec_holographic_sandwich}. The basis topological excitations of X-cube are fractons and lineons, while two neighboring fractons fuse to a planeon. We construct the X-cube FTH under planeon condensed top boundary and lineon condensed top boundary, respectively. 
	We choose a minimal low-energy effective Hamiltonian $\tilde H$ for the free bottom boundary.
	Under both top boundary conditions, $\tilde H$ is identified as the Hamiltonian of a transverse field plaquette Ising model (TFPIM) with extra twist DOFs, while under the planeon condensed top boundary, the identified TFPIM has a global constraint.
	The symmetry and twist operators of the identified TFPIM are identified from the logical operators of the X-cube FTH.
	Before adding the twist DOFs, the identified TFPIM under the two top boundaries have different Hilbert space dimensions, while after adding the twist DOFs, the enlarged Hilbert spaces under the two top boundaries are of the same dimension.
	The change of top boundary condition induces the TFPIM version KW duality, where the plaquette Ising terms and transverse field terms are swapped, the symmetry operators and twist operators are swapped. 
	The extra twist DOFs are lifted from the plaquette Ising term relations by deleting specific stabilizer generators, or equivalently, allowing adding defects on the top boundary. 
	Without the extra twist DOFs, the duality is non-invertible, while the duality becomes invertible between the enlarged Hilbert spaces with the extra twist DOFs.
	We end this section with an explicit linear-depth LU SQC that changes the top boundary condition and thus realizes the invertible duality between the enlarged Hilbert spaces.
	
	\subsection{Model preparation (X-cube)}\label{subsec_X_cube_Stage1}
	
	We consider the X-cube model on cubic lattices\cite{vijay_haah_fu_2016_fracton}. Each edge hosts a $\frac12$-spin, and the Hamiltonian is
	\begin{equation}
		H=-\sum_c A_c-\sum_v(B_{v,xy}+B_{v,yz}+B_{v,xz})\,,
	\end{equation}
	where $A_c=\prod_{e\subset c}X_e$ is the product of Pauli $X$ operators on all edges in cube $c$, and $B_{v,xy}$ denotes the product of four Pauli $Z$ operators forming an $X$ in the $xy$-plane (similarly for $B_{v,yz}$ and $B_{v,xz}$). Fig.~\ref{X-cube-stabilizers} illustrates these operators.
	\begin{figure}[htbp]
		\centering
		\includegraphics[width=0.276\textwidth]{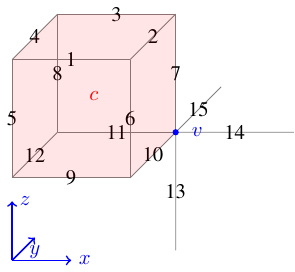}
		\caption{Illustration of $A$ terms and $B$ terms in the X-cube model. The $A$ term associated with cube $c$ is $A_c = X_1 X_2 \cdots X_{12}$. The three different $B$ terms attached to vertex $v$ are $B_{v,xy} = Z_{10}Z_{11}Z_{15}Z_{14}$, $B_{v,xz} = Z_7Z_{11}Z_{13}Z_{14}$, and $B_{v,yz} = Z_7Z_{10}Z_{13}Z_{15}$.}
		\label{X-cube-stabilizers}
	\end{figure}
	
	The excitation content of the X-cube model will be used repeatedly in this paper, so we summarize the relevant facts here \cite{vijay_haah_fu_2016_fracton,pretko_chen_you_2020_fracton_review}. Since the model is a CSS stabilizer code, excitations are stabilizer violations, and the $A$- and $B$-sectors are generated independently by $Z$- and $X$-type operators. A cube violation $A_c=-1$ is a fracton, denoted by $f$. At each vertex, the three cross terms are not independent but satisfy
	\begin{equation}
		B_{v,xy}B_{v,yz}B_{v,xz}=1.
	\end{equation}
	Hence the elementary $B$-sector excitations have three lineon species,
	\begin{gather}
		l_x:\  B_{v,xy}=B_{v,xz}=-1,\notag\\
		l_y:\  B_{v,xy}=B_{v,yz}=-1,\notag\\
		l_z:\  B_{v,xz}=B_{v,yz}=-1.
	\end{gather}
	A local Pauli $Z_e$ flips the four cube stabilizers adjacent to $e$, so an isolated fracton cannot be moved locally. By contrast, a local Pauli $X_e$ on an edge parallel to $\mu$ creates a pair of $l_\mu$ at the two endpoints. The elementary creation processes are sketched in Fig.~\ref{fig_X_cube_excitation_content}; stacking the local $Z_e$ on a dual lattice rectangular gives the open membrane operator whose four corners carry fractons, while stacking the local $X_e$ process along a coordinate axis gives a lineon string operator.
	
	\begin{figure}[htbp]
		\centering
		\begin{subfigure}{0.45\textwidth}
			\centering
			\begin{tikzpicture}[baseline=4ex, scale=1.2]
				\foreach \j in {0,1,2,3}{\foreach \k in {0,1}{\draw[gray!50](0,\j,\k)--(3,\j,\k);}}
				\foreach \i in {0,1,2,3}{\foreach \j in {0,1,2,3}{\draw[gray!50](\i,\j,0)--(\i,\j,1);}}
				\foreach \i in {0,1,2,3}{\foreach \k in {0,1}{\draw[gray!50](\i,0,\k)--(\i,3,\k);}}
				
				% Z errors on z-edges
				\foreach \i in {1,2}{\foreach \j in {1,2}{
						\draw[blue, line width=1.5pt] (\i,\j,0)--(\i,\j,1);
						\node[blue, right] at (\i,\j,0.5) {$Z$};
				}}
				
				% 4 corner cubes excited
				\foreach \i/\j in {0.5/0.5, 2.5/0.5, 0.5/2.5, 2.5/2.5}{
					\fill[red] (\i,\j,0.5) circle[radius=3pt];
				}
				\node[red] at (0.5,0.5,1.5) {$A_c=-1$};
			\end{tikzpicture}
			\caption{Membrane of $Z$ creating four fractons.}
		\end{subfigure}
		\hfill
		\begin{subfigure}{0.45\textwidth}
			\centering
			\begin{tikzpicture}[baseline=4ex, scale=1.2]
				\foreach \j in {0,1,2}{\foreach \k in {0,1}{\draw[gray!50](0,\j,\k)--(3,\j,\k);}}
				\foreach \i in {0,1,2,3}{\foreach \j in {0,1,2}{\draw[gray!50](\i,\j,0)--(\i,\j,1);}}
				\foreach \i in {0,1,2,3}{\foreach \k in {0,1}{\draw[gray!50](\i,0,\k)--(\i,2,\k);}}
				
				% X errors on x-edges
				\foreach \i in {1,2}{
					\draw[red, line width=1.5pt] (\i-1,1,0)--(\i,1,0);
					\node[red, above] at (\i-0.5,1,0) {$X$};
				}
				
				% 2 end vertices excited
				\fill[blue] (0,1,0) circle[radius=3pt];
				\fill[blue] (2,1,0) circle[radius=3pt];
				
				\node[blue] at (0,1,0.5) {$l_y$};
				\node[blue] at (2.45,1,0.5) {$l_y$};
			\end{tikzpicture}
			\caption{String of $X$ creating two lineons.}
		\end{subfigure}
		\caption{Elementary excitations of the X-cube model. (a) A single local $Z_e$ anticommutes with the four neighboring cube stabilizers and creates four fractons $f$. Open rectangular membrane operators are obtained by stacking such local processes, creating four fractons $f$ at the membrane corners. (b) A $y$-directed string of $X$ operators creates a pair of $l_y$ lineons at its endpoints, or transports them along $y$-direction. Analogously, $l_x,l_z$ are created and moved by $x,z$-directed strings of $X$ operators.}
		\label{fig_X_cube_excitation_content}
	\end{figure}

	\subsection{Boundary data and setting (X-cube)}\label{subsec_X_cube_Stage2}
	
	Following the systematic procedure outlined in Sec.~\ref{subsec_holographic_sandwich}, we prepare the lattice for the X-cube FTH by truncating the cubic lattice along the $z$-direction, while letting the $x,y$-directions under PBC. In this stage, we define the concrete lattice geometries, compute boundary gauge operators and syndromes, choose the top-boundary condensations, verify the topological order (TO) condition, and compute the dimensions of the low-energy subspaces. 
	
	Using the Algorithm 1 in Ref.~\cite{Chen_Yu_An_2024}, it is checked that there is no secondary boundary gauge operators on both bottom and top boundaries. Therefore, the only boundary gauge operators are truncated $B_{v,xz}\sim B_{v,yz}$ terms, and truncated $A_c$ terms. The boundary syndrome of truncated $B_{v,xz}\sim B_{v,yz}$ is the four truncated $A_c$ terms around $v$, while the boundary syndrome of truncated $A_c$ is the four $B_{v,xz}\sim B_{v,yz}$ on the four corners of the truncated $A_c$, for both the top and bottom boundaries. 
	
	While for X-cube FTH under $x,y$-PBC, lifting relations to twist DOFs is available, we directly illustrate the stabilizer setting after deleting the designated top stabilizers.

	\subsubsection{Planeon condensed top boundary (smooth top)}\label{subsec_X_cube_FTH_under_planeon_condensed_top_boundary}
	
	We start from the case where the top boundary is planeon condensed, i.e. $A_c=-1$ composite excitation condensed. Consider the $L_x\times L_y\times L_z$ cubic lattice with $x,y$-directions under PBC, $z$-direction under OBC. By convention, we take the top and bottom boundaries of lattice to be smooth, see Fig.~\ref{fig_stabilizer_set_of_X_cube_FTH}). As stated in the framework before [Eq.~(\ref{eq_general_TH_FTH_Hamiltonian})], the Hamiltonian of the X-cube FTH is
	\begin{equation}\label{eq_X_cube_FTH_Hamiltonian}
		H_{FTH} = -\eta\sum\text{stabilizers} + \tilde{H}\,,
	\end{equation}
	where $\eta\gg||\tilde{H}||$. The stabilizers of the X-cube FTH consist of two parts---bulk stabilizers (Definition~\ref{def_stabilizers_and_gauge_operators}) and top boundary truncated stabilizers. The planeon condensation is realized on the smooth top boundary by keeping the truncated three-leg $B_{v,xz},B_{v,yz}$ terms in the stabilizer set, while leaving the truncated $A_c$ terms out of the stabilizer set. On the other hand, the truncated stabilizers on bottom boundary (including truncated $A_c,B_{v,xz},B_{v,yz}$) are all left out of the stabilizer set, reserved as the operator generators of the low-energy subspace $\tilde{\mathcal H}$ (defined in Sec.~\ref{subsec_holographic_sandwich}). It turns out the relations of X-cube FTH can be lifted to twist DOFs by deleting some truncated top $B_{v,xz},B_{v,yz}$ terms (explained in detail later). In all, the stabilizer set [in the sum in Eq.~(\ref{eq_X_cube_FTH_Hamiltonian})] is illustrated and explained in Fig.~\ref{fig_stabilizer_set_of_X_cube_FTH}.
	\begin{figure}
		\centering
		\begin{tikzpicture}[baseline=4ex]
			\foreach \j in {0,1,2}{\foreach \k in {0,1}{\draw(0,\j,\k)--(3,\j,\k);}}
			\foreach \i in {0,1,2}{\foreach \j in {0,1,2}{\draw(\i,\j,0)--(\i,\j,2);}}
			\foreach \i in {0,1,2}{\foreach \k in {0,1}{\draw(\i,0,\k)--(\i,2,\k);}}
			
			\foreach \i in {0,1,2,3}{\draw[dashed] (\i,0,2)--(\i,2,2);}
			\foreach \j in {0,1,2}{\draw[dashed] (0,\j,2)--(3,\j,2)   (3,\j,0)--(3,\j,2);}
			\foreach \k in {0,1}{\draw[dashed] (3,0,\k)--(3,2,\k);}
			
			\foreach \i/\k in {0/0, 1/0, 2/0, 0/1}{
				\draw[blue, line width=1pt] (\i-0.2,2,\k) arc [start angle=180, end angle=360, radius=0.2];
				\begin{scope}[canvas is yz plane at x=\i]
					\draw[blue, line width=1pt] (2,\k+0.2) arc [start angle=90, end angle=270, radius=0.2];
				\end{scope}
			}
			
			\foreach \i in {1}{\foreach \k in {1}{
					\draw[blue, line width=1pt] (\i-0.2,0,\k) arc [start angle=180, end angle=0, radius=0.2];
					\begin{scope}[canvas is yz plane at x=\i]
						\draw[blue, line width=1pt] (0,\k+0.2) arc [start angle=90, end angle=-90, radius=0.2];
					\end{scope}
			}}
			\node at (1.2,-0.2,1) {$v_0$};
			\fill[black] (1,0,1) circle[radius=1pt];
			
			\fill[blue, opacity=0.2] (2,0,1)--(3,0,1)--(3,0,2)--(2,0,2);
			\node at (2.5,0,1.5) {$p_0$};
			
			\fill[blue] (0,0,0) circle[radius=1.5pt];
			
			\draw[->, blue, line width=1pt] (-2-0.7,0,0)--(-2-0.7,0,1);
			\draw[->, blue, line width=1pt] (-2-0.7,0,0)--(-1.2-0.7,0,0);
			\draw[->, blue, line width=1pt] (-2-0.7,0,0)--(-2-0.7,0.8,0);
			\node[blue] at (-2.18-0.7,0,1) {$x$};
			\node[blue] at (-1.2-0.7,-0.22,0) {$y$};
			\node[blue] at (-1.75-0.7,0.8,0) {$z$};
		\end{tikzpicture}
		\caption[X-cube FTH stabilizer setting under the planeon condensed top boundary]{Illustration of the lattice and stabilizer setting of X-cube FTH under the planeon condensed top boundary. All the complete bulk stabilizers are in the stabilizer set. On the bottom boundary, the truncated $A_c$ terms (e.g. $\prod_{e\subset p_0}X_e$) and truncated $B_{v,l}$ terms (e.g. $B_{v_0,xz},B_{v_0,yz}$) are not in the stabilizer set. $B_{v_0,xz}$ and $B_{v_0,yz}$ are viewed equivalent since they differ only by a stabilizer $B_{v_0,xy}$. On the top boundary, the truncated $B_{v,xz},B_{v,yz}$ terms are added to the stabilizer set, fulfilling the TO condition near the top boundary, while the truncated $A_c$ terms are not added. Then, the $B_{v,xz},B_{v,yz}$ terms with at least one of $x(v),y(v)$ being 0 are deleted, to lift the relations to twist DOFs (explained in detail later). The origin $(0,0,0)$ is filled blue.}
		\label{fig_stabilizer_set_of_X_cube_FTH}
	\end{figure}
	It is calculated that under this setting, the dimension of the low-energy subspace $\tilde{\mathcal H}$ (i.e., the common $+1$ eigenspace of all stabilizers) is
	\begin{equation}\label{eq_X_cube_FTH_smooth_top_dimension_of_low_energy_subspace}
		\log_2\dim\tilde{\mathcal{H}}=L_xL_y+L_x+L_y+2L_z\,,
	\end{equation}
	see details in Appendix~\ref{appendix_X_cube_FTH_low_energy_subspace_planeon_condensed_top}. On the other hand, using Algorithm 1 in Ref.~\cite{Chen_Yu_An_2024}, it is verified that there are no secondary boundary gauge operators on both the smooth top and smooth bottom boundaries. The only boundary gauge operators (whose general properties are introduced in Sec.~\ref{subsec_holographic_sandwich}) are the truncated $A_c$ and truncated $B_{v,l}$ terms. On the top boundary, before deleting the half-circled top truncated $B_{v,l}$ terms illustrated in Fig.~\ref{fig_stabilizer_set_of_X_cube_FTH}, all the truncated $B_{v,l}$ terms are in the stabilizer set. So there are no Pauli operators commuting with all stabilizers except the stabilizers themselves, since the truncated $A_c$ terms do not commute with the truncated $B_{v,l}$ terms on the top boundary. Therefore, the topological order (TO) condition (which ensures that the boundary is gapped and does not host robust gapless boundary modes; see Sec.~\ref{subsec_holographic_sandwich}) is satisfied on the top boundary before deleting the half-circled top truncated $B_{v,l}$ terms.
	
	\subsubsection{Lineon condensed top boundary (rough top)}\label{subsec_X_cube_FTH_under_lineon_condensed_top_boundary}
	
	In this subsection, we discuss the X-cube FTH with the $l_z$ lineon (i.e., $B_{v,xz}=B_{v,yz}=-1$ excitations) condensed top boundary. Consider an $L_x\times L_y\times (L_z+1)$ cubic lattice, with the $x,y$-directions under PBC and the $z$-direction under OBC. Take the top boundary to be rough and the bottom boundary to be smooth [illustrated in Fig.~\ref{fig_X_cube_FTH_rough_top_stabilizers}]. As before, we start from the Hamiltonian of the FTH,
	\begin{equation}\label{eq_X_cube_FTH_rough_top}
		H_{FTH} = -\eta\sum\text{stabilizers} + \tilde{H}\,,
	\end{equation}
	where $\eta\gg\|\tilde{H}\|$. $\tilde{H}$ is the same as when the top boundary is smooth [see Eq.~(\ref{eq_X_cube_FTH_low_energy_Hamiltonian_definition}) for the definition], and only the DOFs and the stabilizers on the top boundary are different from the case when the top boundary is smooth. Again, viewing the $L_x\times L_y\times (L_z+1)$ lattice as a truncation from the $L_x\times L_y\times\mathbb Z$ infinite lattice, the bulk stabilizers (Definition~\ref{def_stabilizers_and_gauge_operators}) refer to the untruncated, complete stabilizers. The stabilizers in the sum in Eq.~(\ref{eq_X_cube_FTH_rough_top}) consist of all bulk stabilizers and truncated $A_c$ terms on the top boundary. The bottom boundary truncated stabilizers are not in the stabilizer set, being reserved as the generators of the operator algebra of the low-energy subspace $\tilde{\mathcal H}$\footnote{The low-energy subspace under the rough top boundary is not the same as the low-energy subspace under the smooth top boundary (although they have the same dimension, as we will see soon); they are labeled by the same symbol for convenience.}. $L_x+L_y-2$ independent truncated $A_c$ terms (illustrated in Fig.~\ref{fig_X_cube_FTH_rough_top_stabilizers}) on the top boundary are deleted to lift the relations to twist DOFs. It is calculated that under this lattice stabilizer setting, the dimension of the low-energy subspace is
	\begin{equation}
		\log_2\dim\tilde{\mathcal H}=L_xL_y+L_x+L_y+2L_z\,,
	\end{equation}
	see details in Appendix~\ref{appendix_low_energy_subspace_dim_under_linear_condensed_top}. Notice that the dimensions of the low-energy subspaces under the planeon-condensed top boundary [see Eq.~(\ref{eq_X_cube_FTH_smooth_top_dimension_of_low_energy_subspace})] and under the $l_z$ lineon-condensed top boundary are the same; this is necessary for the existence of an LU circuit to switch between two different top boundaries. On the other hand, using Algorithm 1 in Ref.~\cite{Chen_Yu_An_2024}, it is verified that there are no secondary boundary gauge operators on the rough top boundary. The only boundary gauge operators on the top boundary are the truncated $A_c$ and truncated $B_{v,l}$ terms. Before deleting the blue-filled truncated $A_c$ terms illustrated in Fig.~\ref{fig_X_cube_FTH_rough_top_stabilizers}, all the truncated $A_c$ terms are in the stabilizer set. Thus, there are no Pauli operators commuting with all stabilizers except the stabilizers themselves, since the truncated $B_{v,l}$ terms do not commute with the truncated $A_c$ terms on the top boundary. Therefore, the TO condition is satisfied near the top boundary before deleting the blue-filled truncated $A_c$ terms.
	\begin{figure}
		\centering
		\begin{tikzpicture}[baseline=4ex]
			\foreach \j in {0,1}{\foreach \k in {0,1}{\draw(0,\j,\k)--(3,\j,\k);}}
			\foreach \i in {0,1,2}{\foreach \j in {0,1}{\draw(\i,\j,0)--(\i,\j,2);}}
			\foreach \i in {0,1,2}{\foreach \k in {0,1}{\draw(\i,0,\k)--(\i,2,\k);}}
			
			\foreach \i in {0,1,2,3}{\draw[dashed] (\i,0,2)--(\i,2,2);}
			\foreach \j in {0,1}{\draw[dashed] (0,\j,2)--(3,\j,2)   (3,\j,0)--(3,\j,2);}
			\foreach \k in {0,1}{\draw[dashed] (3,0,\k)--(3,2,\k);}
			
			\foreach \i/\k in {0/0, 1/0, 2/0, 0/1}{
				\fill[blue, opacity=0.1] (\i,2,\k)--(\i+1,2,\k)--(\i+1,1,\k)--(\i+1,1,\k+1)--(\i,1,\k+1)--(\i,2,\k+1);
				\draw[blue, line width=1pt] (\i+0.5,1.5,\k+0.5) circle (0.15);
			}
			
			\fill[blue] (0,0,0) circle[radius=1.5pt];
			
			\draw[->, blue, line width=1pt] (-2-0.7,0,0)--(-2-0.7,0,1);
			\draw[->, blue, line width=1pt] (-2-0.7,0,0)--(-1.2-0.7,0,0);
			\draw[->, blue, line width=1pt] (-2-0.7,0,0)--(-2-0.7,0.8,0);
			\node[blue] at (-2.18-0.7,0,1) {$x$};
			\node[blue] at (-1.2-0.7,-0.22,0) {$y$};
			\node[blue] at (-1.75-0.7,0.8,0) {$z$};
		\end{tikzpicture}
		\caption{Illustration of the lattice and stabilizer setting of X-cube FTH under $l_z$ lineon condensed top boundary. All the bulk stabilizers are in the stabilizer set. On the bottom boundary, the truncated $A_c,B_{v,l}$ terms are not in the stabilizer set. On the top boundary, the truncated $A_c$ terms are added to the stabilizer set, fulfilling the TO condition near the top boundary, while the truncated $B_{v,l}$ terms are not added. Then, the $A_c$ terms with at least one of $x(c),y(c)$ being $\frac{1}{2}$ are deleted, to lift the relations to twist DOFs (explained in detail later). The origin $(0,0,0)$ is filled blue.}
		\label{fig_X_cube_FTH_rough_top_stabilizers}
	\end{figure}

	\subsection{Holographic sandwich (X-cube)}\label{subsec_X_cube_Stage3}
	
	Under the low-energy subspace derived in Stage 2, we now traverse the low-energy preserving Pauli operators, and identify them as effective Pauli operators of the identified 2d system, following the procedure introduced in Sec.~\ref{subsec_holographic_sandwich}. Then we extract the emerging subsystem symmetries and relations, and show how the deleted top stabilizers lift the relations to twist DOFs.
	
	\subsubsection{Under planeon condensed top boundary (smooth top)}\label{subsec_X_cube_Stage3_planeon}
	The low-energy subspace $\tilde{\mathcal H}$ is identified as the Hilbert space of the identified 2d system, as a result, only the operators commuting with all stabilizers are candidates of physical operators of the identified 2d system. Following the framework introduced in Sec.~\ref{subsec_holographic_sandwich}, since the top boundary is now composite $A_c=-1$ fracton condensed, we identify the operators commuting with all stabilizers, but leaving $A_c=-1$ fractons on the bottom boundary with configuration $\mathscr C$ as $\prod_{(x,y)\in\mathscr C}\tilde Z_{(x,y)}$ (where the typographic styles of boundary physical, non-local, and transport operators follow the conventions summarized in Table~\ref{table_notation_summary}). Consequently, the truncated $B_{v,xz}$ or $B_{v,yz}$ terms on the bottom boundary, which leaves four fractons on a square in the bottom boundary, are identified as $\prod_{p\revsubset v}\tilde Z_p$ (up to multiplication of logical operator), as illustrated in Fig.~\ref{fig_X_cube_FTH_smooth_top_operator_identification_1}(a). For the same reason, the extensive dual $Z$ Wilson lines along $z$-direction, which are the transport operators of $A_c=-1$ fracton (a specific type of the general $f$-transport operator defined in Definition~\ref{def_f_transport}), leave nothing but two fractons on the bottom boundary, so they are identified as $\tilde Z_p\tilde Z_{p'}$ for a neighboring pair $p,p'$, up to logical operator [illustrated in Fig.~\ref{fig_X_cube_FTH_smooth_top_operator_identification_1}(a)]. Then, according to commutation relation, the truncated $A_c$ associated with every plaquette $p$ on the bottom boundary is identified as $\tilde X_p$, as illustrated in Fig.~\ref{fig_X_cube_FTH_smooth_top_operator_identification_1}(b). The operator identification must preserve the operator algebra. Since $\prod_{p\subset \text{b.b.}}\prod_{e\subset p}X_e=1$, we must have $\prod_{p\subset\text{b.b.}}\tilde X_p=+1$, where b.b. is abbreviated for bottom boundary. So the low-energy subspace $\tilde{\mathcal H}$ only contains the $\prod_p\tilde X_p=+1$ subspace of the 2d plaquette-qubit system's Hilbert space.
	\begin{figure*}
		\centering
		\begin{subfigure}{0.5\textwidth}
			\centering
			\begin{tikzpicture}[baseline=4ex, scale=1.25]
				\foreach \j in {0,1,2}{\foreach \k in {0,1}{\draw[gray](0,\j,\k)--(3,\j,\k);}}
				\foreach \i in {0,1,2}{\foreach \j in {0,1,2}{\draw[gray](\i,\j,0)--(\i,\j,2);}}
				\foreach \i in {0,1,2}{\foreach \k in {0,1}{\draw[gray](\i,0,\k)--(\i,2,\k);}}
				
				\foreach \i in {0,1,2,3}{\draw[dashed, gray] (\i,0,2)--(\i,2,2);}
				\foreach \j in {0,1,2}{\draw[dashed, gray] (0,\j,2)--(3,\j,2)   (3,\j,0)--(3,\j,2);}
				\foreach \k in {0,1}{\draw[dashed, gray] (3,0,\k)--(3,2,\k);}
				
				\fill[ForestGreen, opacity=0.1] (1.25,0.2,2-1)--(2.75,0.2,2-1)--(2.75,-0.2,2-1)--(1.25,-0.2,2-1);
				\fill[ForestGreen, opacity=0.1] (1.8,0.75,2-1)--(2.2,0.75,2-1)--(2.2,0,2-1)--(1.8,0,2-1);
				\foreach \i/\j in {1.5/0/2, 2.5/0/2, 2/0.5/2}{\node[ForestGreen] at (\i,\j,2-1) {$Z$};}
				
				\node at (1.5,0,0.5) {$p_1$};
				\node at (2.5,0,0.5) {$p_2$};
				\node at (1.5,0,1.5) {$p_3$};
				\node at (2.5,0,1.5) {$p_4$};
				\node at (2,0,1) {$v$};
				
				\node[ForestGreen] at (2,-0.3,2) {$\sim\tilde{Z}_{p_1}\tilde{Z}_{p_2}\tilde{Z}_{p_3}\tilde{Z}_{p_4}$};
				\draw[->, ForestGreen, >={Triangle[round,length=2mm,width=1.33mm]}] (2,-0.1,1)--(2,-0.15,2);
				
				\foreach \i/\k in {0/0, 1/0, 2/0, 0/1}{
					\draw[blue, line width=1pt] (\i-0.2,2,\k) arc [start angle=180, end angle=360, radius=0.2];
					\begin{scope}[canvas is yz plane at x=\i]
						\draw[blue, line width=1pt] (2,\k+0.2) arc [start angle=90, end angle=270, radius=0.2];
					\end{scope}
				}

				\foreach \j in {0,1,2}{
					\node[ForestGreen] at (0.5,\j,1) {$Z$};
				}
				\fill[ForestGreen, opacity=0.1] (0.3,-0.2,1)--(0.7,-0.2,1)--(0.7,2.2,1)--(0.3,2.2,1);
				\node at (0.5,0,0.5) {$p$};
				\node at (0.5,0,1.5) {$p'$};

				\draw[->, blue, line width=1pt] (-2-0.7,0,0)--(-2-0.7,0,1);
				\draw[->, blue, line width=1pt] (-2-0.7,0,0)--(-1.2-0.7,0,0);
				\draw[->, blue, line width=1pt] (-2-0.7,0,0)--(-2-0.7,0.8,0);
				\node[blue] at (-2.18-0.7,0,1) {$x$};
				\node[blue] at (-1.2-0.7,-0.22,0) {$y$};
				\node[blue] at (-1.75-0.7,0.8,0) {$z$};

				\node at (1.5,-1.2,1) {(a)};
			\end{tikzpicture}
		\end{subfigure}
		\begin{subfigure}{0.3\textwidth}
			\centering
			\begin{tikzpicture}[baseline=4ex, scale=1.25]
				\foreach \j in {0,1,2}{\foreach \k in {0,1}{\draw[gray](0,\j,\k)--(3,\j,\k);}}
				\foreach \i in {0,1,2}{\foreach \j in {0,1,2}{\draw[gray](\i,\j,0)--(\i,\j,2);}}
				\foreach \i in {0,1,2}{\foreach \k in {0,1}{\draw[gray](\i,0,\k)--(\i,2,\k);}}
				
				\foreach \i in {0,1,2,3}{\draw[dashed, gray] (\i,0,2)--(\i,2,2);}
				\foreach \j in {0,1,2}{\draw[dashed, gray] (0,\j,2)--(3,\j,2)   (3,\j,0)--(3,\j,2);}
				\foreach \k in {0,1}{\draw[dashed, gray] (3,0,\k)--(3,2,\k);}
				
				\fill[red, opacity=0.1] (0.8,0,1.1)--(1.2,0,1.1)--(1.2,0,1.9)--(0.8,0,1.9);
				\fill[red, opacity=0.1] (1.8,0,1.1)--(2.2,0,1.1)--(2.2,0,1.9)--(1.8,0,1.9);
				\fill[red, opacity=0.1] (1.125,0,0.6)--(1.125,0,1.4)--(1.875,0,1.4)--(1.875,0,0.6);
				\fill[red, opacity=0.1] (1.125,0,1.6)--(1.125,0,2.4)--(1.875,0,2.4)--(1.875,0,1.6);
				\foreach \i/\k in {1/1.5, 2/1.5, 1.5/1, 1.5/2}{\node[red] at (\i,0,\k) {$X$};}
				
				\node at (1.5,0,1.5) {$p$};
				\node[red] at (1.5,-0.5,1.5) {$\sim\tilde{X}_p$};
				
				\node at (1.5,-1.2,1) {(b)};
			\end{tikzpicture}
		\end{subfigure}
		\caption{The illustration of operator identification of X-cube FTH under planeon condensed top boundary. (a) $B_{v,xz}\sim B_{v,yz}$ leaves nothing but four fractons at $p_1,p_2,p_3,p_4$ in the bottom boundary. $B_{v,xz}\sim B_{v,yz}$ is identified as $\tilde Z_{p_1}\tilde Z_{p_2}\tilde Z_{p_3}\tilde Z_{p_4}$ (up to logical operator). The dual $Z$ Wilson line along $z$-direction (which is also a fracton transport operator by definition) drawn in the figure leaves nothing but two fractons at $p,p'$ in the bottom boundary, thus is identified as $\tilde Z_p\tilde Z_{p'}$ (up to logical operator). (b) According to commutation relation, the truncated $A_c$ term associated with the plaquette $p$ is identified as $\tilde X_p$.}
		\label{fig_X_cube_FTH_smooth_top_operator_identification_1}
	\end{figure*}
	
	We take the low-energy subspace Hamiltonian $\tilde H$ to be a minimal one---the sum of all truncated stabilizers on the bottom boundary:
	\begin{equation}\label{eq_X_cube_FTH_low_energy_Hamiltonian_definition}
		\tilde H = -\frac12\sum_{v\subset\text{b.b.}}\sum_{l=xz,yz}B_{v,l} - h\sum_{p\subset\text{b.b.}}\prod_{e\subset p}X_e\,.
	\end{equation}
	Here the factor $\frac12$ is added to make the form symmetric to $x,y$-directions, $B_{v,xz}=B_{v,yz}$ in the low-energy subspace. $h\in\mathbb R$ is a free parameter. Under the operator identification discussed above, we have
	\begin{equation}
		\tilde H \sim -\sum_{v}\prod_{p\revsubset v}\tilde Z_p - h\sum_{p}\tilde X_p\,,
	\end{equation}
	the identified 2d system is a transverse field plaquette Ising model (TFPIM). Note that this identification is still not precise, each term is identified up to multiplication of FTH's logical operator.
	
	The 2d TFPIM has line-like subsystem symmetries. We introduce the coordinate representation of $n$-cubes to describe these subsystem symmetries.
	\begin{definition}[Coordinate notation of cubic cells]\label{def_coordinate_notation}
		Following the coordinate notation introduced in Refs.~\cite{li2020fracton,hyt_1_prepare_TD_model_via_SQC}, we denote the smallest closed $n$-dimensional cubic cell in the $D$-dimensional cubic lattice centered at $(x_1,x_2,\cdots,x_D)$ as $[x_1,x_2,\cdots,x_D]$. For example, $[2,3]$ represents the vertex at $(2,3)$; $[5,3+\frac12]$ represents the edge centered at $(5,3+\frac12)$; and $[\frac12,\frac32,4]$ represents the plaquette centered at $(\frac12,\frac32,4)$.
	\end{definition}
	Using the coordinate representation, the subsystem symmetry generators of the 2d TFPIM can be written as (one of the following symmetry is not independent)
	\begin{align}
		& \prod_{i=0}^{L_x-1}\tilde{X}_{[i+1/2,j+1/2]}\sim \mathcal{X}_x(j)\mathcal{X}_x(j+1),\\
		& \prod_{j=0}^{L_y-1}\tilde{X}_{[i+1/2,j+1/2]}\sim \mathcal{X}_y(i)\mathcal{X}_y(i+1),
	\end{align}
	where $j=0,1,\cdots,L_y-1$; $i=0,1,\cdots,L_x-1$,
	\begin{align}\label{eq_X_cube_X_loop_notation}
		\mathcal{X}_x(j)\equiv\prod_{i=0}^{L_x-1}X_{[i+1/2,j,0]},\\
		\mathcal{X}_y(i)\equiv\prod_{j=0}^{L_y-1}X_{[i,j+1/2,0]},
	\end{align}
	are the uncontractible $X$ loops along $x,y$-directions, respectively (acting as subsystem symmetry indicators (Definition~\ref{symm_indicators_and_togglers}) parallel to boundaries, whose notation convention follows Table~\ref{table_notation_summary}), $[i,j,k]$ represents the $n$-cube centered at $(i,j,k)$ in the cubic lattice. These subsystem symmetries are intrinsic logical operators (Definition~\ref{def_intrinsic_and_twist_logical_operators}) of the X-cube FTH under smooth top boundary, illustrated as following [we show $\prod_{j=0}^{L_y-1}\tilde{X}_{[1/2,j+1/2]}\sim\mathcal{X}_y(0)\mathcal{X}_y(1)$ as an example in Fig.~\ref{fig_X_cube_symmetry_relation_smooth_top}(a)].
	\begin{figure*}[htbp]
		\centering
		\begin{subfigure}{0.45\textwidth}
			\centering
			\begin{tikzpicture}
				\foreach \j in {0,1,2}{\foreach \k in {0,1}{\draw[gray](0,\j,\k)--(3,\j,\k);}}
				\foreach \i in {0,1,2}{\foreach \j in {0,1,2}{\draw[gray](\i,\j,0)--(\i,\j,2);}}
				\foreach \i in {0,1,2}{\foreach \k in {0,1}{\draw[gray](\i,0,\k)--(\i,2,\k);}}
				
				\foreach \i in {0,1,2,3}{\draw[dashed, gray] (\i,0,2)--(\i,2,2);}
				\foreach \j in {0,1,2}{\draw[dashed, gray] (0,\j,2)--(3,\j,2)   (3,\j,0)--(3,\j,2);}
				\foreach \k in {0,1}{\draw[dashed, gray] (3,0,\k)--(3,2,\k);}
				
				\fill[red, opacity=0.1] (0,0,-0.4)--(0,0,0.4)--(3,0,0.4)--(3,0,-0.4);
				\foreach \i in {0.5, 1.5, 2.5}{\node[red] at (\i,0,0) {$X$};}
				\fill[red, opacity=0.1] (0,0,0.6)--(0,0,1.4)--(3,0,1.4)--(3,0,0.6);
				\foreach \i in {0.5, 1.5, 2.5}{\node[red] at (\i,0,1) {$X$};}
				
				\node[red] at (3.5,0,0) {$\mathcal X_y(0)$};
				\node[red] at (3.5,0,1) {$\mathcal X_y(1)$};
				
				\draw[->, blue, line width=1pt] (-2-0.7,0,0)--(-2-0.7,0,1);
				\draw[->, blue, line width=1pt] (-2-0.7,0,0)--(-1.2-0.7,0,0);
				\draw[->, blue, line width=1pt] (-2-0.7,0,0)--(-2-0.7,0.8,0);
				\node[blue] at (-2.18-0.7,0,1) {$x$};
				\node[blue] at (-1.2-0.7,-0.22,0) {$y$};
				\node[blue] at (-1.75-0.7,0.8,0) {$z$};
				
				\node at (1.5,-1,1) {(a)};
			\end{tikzpicture}
		\end{subfigure}
		\begin{subfigure}{0.3\textwidth}
			\centering
			\begin{tikzpicture}
				\foreach \j in {0,1,2}{\foreach \k in {0,1}{\draw[gray](0,\j,\k)--(3,\j,\k);}}
				\foreach \i in {0,1,2}{\foreach \j in {0,1,2}{\draw[gray](\i,\j,0)--(\i,\j,2);}}
				\foreach \i in {0,1,2}{\foreach \k in {0,1}{\draw[gray](\i,0,\k)--(\i,2,\k);}}
				
				\foreach \i in {0,1,2,3}{\draw[dashed, gray] (\i,0,2)--(\i,2,2);}
				\foreach \j in {0,1,2}{\draw[dashed, gray] (0,\j,2)--(3,\j,2)   (3,\j,0)--(3,\j,2);}
				\foreach \k in {0,1}{\draw[dashed, gray] (3,0,\k)--(3,2,\k);}
				
				\fill[ForestGreen, opacity=0.1] (-0.2,0.3,1)--(-0.2,0.7,1)--(3,0.7,1)--(3,0.3,1);
				\foreach \i in {0,1,2}{\node[ForestGreen] at (\i,0.5,1) {$Z$};}
				\node[ForestGreen] at (3.4,0.5,1) {$\mathcal Z_y(1)$};
				
				%x=0,y=0
				\draw[blue, line width=1pt] (-0.2,2,1) arc [start angle=180, end angle=360, radius=0.2];
				
				\node at (1.5,-1,1) {(b)};
			\end{tikzpicture}
		\end{subfigure}
		\caption{Illustration of symmetry indicator and twist logical operator of X-cube FTH under planeon condensed top boundary. (a) The line-like subsystem symmetry indicator  $\prod_{j=0}^{L_y-1}\tilde{X}_{[1/2,j+1/2]}$ is identified from the subsystem symmetry or intrinsic logical operator of X-cube $\mathcal X_y(0)\mathcal X_y(1)$. (b) The twist indicator $\tilde Z^{\text{twist}}(1,0)$ is identified from the top deleted stabilizer or twist logical operator $B_{v,yz}$ ($v=[1,0,L_z]$). The twist logical operator $B_{v,yz}$ differ from the line-like subsystem symmetry of X-cube $\mathcal Z_y(1)$ only by a stabilizer. Note that $x,y$-directions are symmetric, so the $y$-direction line-like symmetries are identified similarly.}
		\label{fig_X_cube_symmetry_relation_smooth_top}
	\end{figure*}
	
	On the other hand, the plaquette Ising terms of TFPIM have relations under PBC. Note that the plaquette Ising terms have the following relations:
	\begin{gather}
		\prod_{i=0}^{L_x-1}\Big(\prod_{p\revsubset[i,j]}\tilde Z_p\Big) = 1,\qquad j=0,1,\cdots,L_y-1,\\
		\prod_{j=0}^{L_y-1}\Big(\prod_{p\revsubset[i,j]}\tilde Z_p\Big) = 1,\qquad i=0,1,\cdots,L_x-1.
	\end{gather}
	There are $L_x+L_y-1$ independent relations above in total, which can be lifted to twist DOFs by deleting $L_x+L_y-1$ independent top stabilizers. The deleted top stabilizers are already shown in Fig.~\ref{fig_stabilizer_set_of_X_cube_FTH}. These deleted top stabilizers, commuting with all stabilizers and truncated bottom stabilizers, are twist logical operators of FTH (Definition~\ref{def_intrinsic_and_twist_logical_operators}), denoted by $\tilde Z^{\text{twist}}$ (acting as twist indicators in the identified bottom-boundary theory; see also Table~\ref{table_notation_summary} for typographic conventions). The ``up to logical operator'' ambiguity of operator identification is cleared by discussing the twist logical operators. Specifically, we make the following identification of twist logical operators:
	At the boundaries ($x=0$ or $y=0$), the Ising terms are modified by the twist indicators $\tilde{Z}^{\text{twist}}(i,j)$ to absorb the relations of plaquette Ising terms. Similar to the 2dTC TH, this is realized by deleting specific top stabilizers, allowing to add $l_z$-lineon defect at those points. We illustrate $\tilde Z^{\text{twist}}(1,0)$ as an example in Fig.~\ref{fig_X_cube_symmetry_relation_smooth_top}(b) while leave the detailed construction to Appendix \ref{appendix_X_cube_details_planeon}. Consequently, the low-energy Hamiltonian is identified without logical operator ambiguity as 
	\begin{widetext}
	\begin{align}
		\tilde{H} & \sim-\sum_{\substack{v\\x(v),y(v)\neq0}}\prod_{p\revsubset v}\tilde Z_p - \sum_{\substack{v\\0\in\{x(v),y(v)\}}}\prod_{p\revsubset v}\tilde Z_p\cdot\tilde Z^{\text{twist}}\big(x(v),y(v)\big) - h\sum_{p\subset\text{b.b.}}\tilde X_p,
	\end{align}
	\end{widetext}
	which is the Hamiltonian of 2d TFPIM with extra twist DOFs at $i=0$ and $j=0$.

	On the other hand, the $X$-type operator commuting with all stabilizers and bottom boundary truncated $A_c,B_{v,l}$ terms but anti-commuting with a specific $\tilde Z^{\text{twist}}(i,j)$ is identified as $\tilde X^{\text{twist}}(i,j)$. $\tilde Z^{\text{twist}}(i,j)$ and $\tilde X^{\text{twist}}(i,j)$ generate the operator algebra of the corresponding twist DOF's Hilbert space. A natural choice of $\tilde X^{\text{twist}}(i,j)$ is the $l_z$-transport operator along $z$-direction at $(i,j)$ (which is a pure $f$-transport operator following Definition~\ref{def_pure_f_transport}),
	\begin{equation}
		\mathcal T_{l_z,z}(i,j)\equiv\prod_{k=0}^{L_z-1} X_{[i,j,k+1/2]}\sim \tilde{X}^{\text{twist}}(i,j)\,,
	\end{equation}
	where $i\in\mathbb{Z}_{L_x},\ j\in\mathbb{Z}_{L_y},\ 0\in\{i,j\}$. We illustrate $\tilde{X}^{\text{twist}}(0,1)$ as an example in Appendix \ref{appendix_X_cube_details_planeon}.
	
	Apart from the Pauli operators of twist DOFs and the qubits in the identified 2d system, there are $2(L_z+1)$ pairs of independent intrinsic logical $X,Z$ operators. These are redundant intrinsic logical DOFs with no physical effect on the identified 2d system. Their explicit coordinate construction and visualization are left to Appendix \ref{appendix_X_cube_details_planeon}.
	The mutually independent Pauli operators of plaquette qubits, twist DOFs and redundant DOFs generate the operator algebra of low-energy subspace $\tilde{\mathcal H}$. More specifically,
	\begin{equation}
		\tilde{\mathcal{H}}=\tilde{\mathcal{H}}^{\text{twist}}\otimes\tilde{\mathcal{H}}^{\text{red.}}\otimes\Big(\bigotimes_{p\subset\text{b.b.}}\tilde{\mathcal{H}}_p\Big)_+\,,
	\end{equation}
	where
	\begin{enumerate}
		\item each $\tilde{\mathcal{H}}_p\cong\mathbb{C}^2$, indexed by the plaquette $p$ in the bottom boundary, is a local DOF with respect to $x,y$-directions, there are $L_xL_y$ plaquettes $p$ in the bottom boundary b.b.;
		\item $\left(\bigotimes_{p\subset\text{b.b.}}\tilde{\mathcal{H}}_p\right)_+$ is the subspace of $\bigotimes_{p\subset\text{b.b.}}\tilde{\mathcal{H}}_p$ with eigenvalue $\prod_{p\subset\text{b.b.}}X_p=+1$, $\text{dim}\left(\bigotimes_{p\subset\text{b.b.}}\tilde{\mathcal{H}}_p\right)_+=2^{L_xL_y-1}$;
		\item $\tilde{\mathcal{H}}^{\text{twist}}\cong\left(\mathbb{C}^2\right)^{L_x+L_y-1}$ encodes the twist information;
		\item $\tilde{\mathcal{H}}^{\text{red.}}\cong\left(\mathbb{C}^2\right)^{2(L_z+1)}$ is the Hilbert space of $2(L_z+1)$ redundant DOFs with no physical meaning in the identified TFPIM.
	\end{enumerate}

	\subsubsection{Under lineon condensed top boundary (rough top)}\label{subsec_X_cube_Stage3_lineon}
	
	Following the FTH framework introduced in Sec.~\ref{subsec_holographic_sandwich}, we identify the truncated $A_c$ terms on the bottom boundary, which leaves four $l_z$ lineons on the bottom boundary, as the plaquette Ising term, illustrated in Fig.~\ref{fig_X_cube_FTH_rough_top_operator_identification_1}(a). For the same reason, the $l_z$-transport operator along $z$-direction, which leaves a single $l_z$ lineon on the bottom boundary, is identified as a single $\tilde Z$, illustrated in Fig.~\ref{fig_X_cube_FTH_rough_top_operator_identification_1}(a). From the commutation relation, the truncated $B_{v,l}$ terms on the bottom boundary are identified as $\tilde X$, illustrated in Fig.~\ref{fig_X_cube_FTH_rough_top_operator_identification_1}(b).
	\begin{figure*}
		\centering
		\begin{subfigure}{0.5\textwidth}
			\centering
			\begin{tikzpicture}[scale=1.25]
				\foreach \j in {0,1}{\foreach \k in {0,1}{\draw[gray](0,\j,\k)--(3,\j,\k);}}
				\foreach \i in {0,1,2}{\foreach \j in {0,1}{\draw[gray](\i,\j,0)--(\i,\j,2);}}
				\foreach \i in {0,1,2}{\foreach \k in {0,1}{\draw[gray](\i,0,\k)--(\i,2,\k);}}
				
				\foreach \i in {0,1,2,3}{\draw[dashed, gray] (\i,0,2)--(\i,2,2);}
				\foreach \j in {0,1}{\draw[dashed, gray] (0,\j,2)--(3,\j,2)   (3,\j,0)--(3,\j,2);}
				\foreach \k in {0,1}{\draw[dashed, gray] (3,0,\k)--(3,2,\k);}
				
				\fill[red, opacity=0.1] (-0.2,0,1)--(0.2,0,1)--(0.2,2,1)--(-0.2,2,1);
				\foreach \j in {0.5,1.5}{\node[red] at (0,\j,1) {$X$};}
				\node[red] at (0,-0.3,1) {$\sim\tilde{Z}_{v}$};
				\node at (0,0,1) {$v$};

				\fill[red, opacity=0.1] (0.8,0,0.1)--(1.2,0,0.1)--(1.2,0,0.9)--(0.8,0,0.9);
				\fill[red, opacity=0.1] (1.8,0,0.1)--(2.2,0,0.1)--(2.2,0,0.9)--(1.8,0,0.9);
				\fill[red, opacity=0.1] (1.125,0,-0.4)--(1.125,0,0.4)--(1.875,0,0.4)--(1.875,0,-0.4);
				\fill[red, opacity=0.1] (1.125,0,0.6)--(1.125,0,1.4)--(1.875,0,1.4)--(1.875,0,0.6);
				\foreach \i/\k in {1/0.5, 2/0.5, 1.5/0, 1.5/1}{\node[red] at (\i,0,\k) {$X$};}
				
				\node at (1.5,0,0.5) {$p$};
				\node at (1,0,0) {$v_1$};
				\node at (2,0,0) {$v_2$};
				\node at (2,0,1) {$v_3$};
				\node at (1,0,1) {$v_4$};
				\node[red] at (1.8,-0.55,0.5) {$\sim\tilde{Z}_{v_1}\tilde{Z}_{v_2}\tilde{Z}_{v_3}\tilde{Z}_{v_4}$};

				\fill[blue] (0,0,0) circle[radius=1.5pt];
				
				\draw[->, blue, line width=1pt] (-2-0.7,0,0)--(-2-0.7,0,1);
				\draw[->, blue, line width=1pt] (-2-0.7,0,0)--(-1.2-0.7,0,0);
				\draw[->, blue, line width=1pt] (-2-0.7,0,0)--(-2-0.7,0.8,0);
				\node[blue] at (-2.18-0.7,0,1) {$x$};
				\node[blue] at (-1.2-0.7,-0.22,0) {$y$};
				\node[blue] at (-1.75-0.7,0.8,0) {$z$};
			\end{tikzpicture}
		\end{subfigure}
		\begin{subfigure}{0.3\textwidth}
			\centering
			\begin{tikzpicture}[scale=1.25]
				\foreach \j in {0,1}{\foreach \k in {0,1}{\draw[gray](0,\j,\k)--(3,\j,\k);}}
				\foreach \i in {0,1,2}{\foreach \j in {0,1}{\draw[gray](\i,\j,0)--(\i,\j,2);}}
				\foreach \i in {0,1,2}{\foreach \k in {0,1}{\draw[gray](\i,0,\k)--(\i,2,\k);}}
				
				\foreach \i in {0,1,2,3}{\draw[dashed, gray] (\i,0,2)--(\i,2,2);}
				\foreach \j in {0,1}{\draw[dashed, gray] (0,\j,2)--(3,\j,2)   (3,\j,0)--(3,\j,2);}
				\foreach \k in {0,1}{\draw[dashed, gray] (3,0,\k)--(3,2,\k);}
				
				\fill[ForestGreen, opacity=0.1] (1.25-1,0.2,2-1)--(2.75-1,0.2,2-1)--(2.75-1,-0.2,2-1)--(1.25-1,-0.2,2-1);
				\fill[ForestGreen, opacity=0.1] (1.8-1,0.75,2-1)--(2.2-1,0.75,2-1)--(2.2-1,0,2-1)--(1.8-1,0,2-1);
				\foreach \i/\j in {1.5/0/2, 2.5/0/2, 2/0.5/2}{\node[ForestGreen] at (\i-1,\j,2-1) {$Z$};}
				\node[ForestGreen] at (2.2-1,-0.4,2-1) {$\sim\tilde{X}_{v}$};
				\node at (1,0,1) {$v$};
				
				\fill[blue] (0,0,0) circle[radius=1.5pt];
				
				\node at (3.8,0,0.5) {$.$};
			\end{tikzpicture}
		\end{subfigure}
		\caption{The illustration of operator identification of X-cube FTH under $l_z$ lineon condensed top boundary. (a) The truncated $A_c$ associated with $p$ leaves nothing but four $l_z$ lineons at $v_1,v_2,v_3,v_4$ in the bottom boundary, thus it is identified as $\tilde Z_{v_1}\tilde Z_{v_2}\tilde Z_{v_3}\tilde Z_{v_4}$ (up to logical operator). The $X$ Wilson line along $z$-direction (which is also an $l_z$-transport operator by definition) drawn in the figure leaves nothing but a single $l_z$ lineon at $v$ in the bottom boundary, thus it is identified as $\tilde Z_v$ (up to logical operator). (b) According to commutation relation, the truncated $B_{v,l}$ term associated with vertex $v$ is identified as $\tilde X_v$.}
		\label{fig_X_cube_FTH_rough_top_operator_identification_1}
	\end{figure*}
	
	As stated before, the low-energy Hamiltonian $\tilde H$ is independent from the top boundary. So according to the above operator identification, $\tilde H$ [defined in Eq.~(\ref{eq_X_cube_FTH_low_energy_Hamiltonian_definition})] is identified as
	\begin{equation}
		\tilde H \sim -\sum_{v}\tilde X_v - h\sum_{p}\prod_{v\subset p}\tilde Z_v\,.
	\end{equation}
	The identified 2d system is again a TFPIM, but now with qubits on vertices. Note that this identification is still not precise, each term is identified up to multiplication of FTH's logical operator.
	
	The line-like subsystem symmetries of this TFPIM are
	\begin{gather}
		\prod_{i=0}^{L_x-1}\tilde{X}_{[i,j]}\sim\prod_{i=0}^{L_x-1}Z_{[i,j,1/2]}= \mathcal Z_x(j),\notag\\
		\prod_{j=0}^{L_y-1}\tilde{X}_{[i,j]}\sim\prod_{j=0}^{L_y-1}Z_{[i,j,1/2]}=\mathcal Z_y(i),
	\end{gather}
	where $j=0,\cdots,L_y-1$; $i=0,\cdots,L_x-1$.
	The equivalence in the above equation just use the definition of $\mathcal Z_x(i),\mathcal Z_y(j)$, see Eq.~(\ref{eq_X_cube_FTH_dual_Z_loop_terminology}). These operators are intrinsic logical operators (Definition~\ref{def_intrinsic_and_twist_logical_operators}) of the X-cube FTH under lineon condensed top boundary. Under planeon condensed top boundary, they are twist logical operators [see Fig.~\ref{fig_X_cube_symmetry_relation_smooth_top}(b) for illustration].
	
	On the other hand, this TFPIM has relations
	\begin{gather}
		\prod_{i=0}^{L_x-1}\Big(\prod_{v\subset[i+1/2,j+1/2]}\tilde Z_v\Big) = 1,\\
		\prod_{j=0}^{L_y-1}\Big(\prod_{v\subset[i+1/2,j+1/2]}\tilde Z_v\Big) = 1,
	\end{gather}
	where $j=0,1,\cdots,L_y-1$; $i=0,1,\cdots,L_x-1$. These relations are lifted to twist DOFs by deleting the truncated $A_c$ terms on the top boundary illustrated in Fig.~\ref{fig_X_cube_FTH_rough_top_stabilizers}. Specifically, we identify the truncated $A_c$ terms on the top boundary as twist logical operators or twist indicators (Definition~\ref{twist_indicators_and_togglers})
	\begin{equation}\label{eq_X_cube_FTH_rough_top_twist_identification}
		\tilde{Z}^{\text{twist}}(i+1/2,j+1/2)\equiv A_{[i+1/2,j+1/2,L_z+1/2]},
	\end{equation}
	where $i\in\mathbb{Z}_{L_x},\ j\in\mathbb{Z}_{L_y},\ 0\in\{i,j\}$.
	These twist indicators are equivalent to the following composite uncontractible $X$ loops:
	\begin{gather}
		\tilde{Z}^{\text{twist}}(i+1/2,1/2)\sim\mathcal{X}_y(i)\mathcal{X}_y(i+1),\notag\\
		\tilde{Z}^{\text{twist}}(1/2,j+1/2)\sim\mathcal{X}_x(j)\mathcal{X}_x(j+1),
	\end{gather}
	where $i=1,\cdots,L_x-1$; $j=1,\cdots,L_y-1$,
	\begin{align}
		\mathcal{X}_x(j)\equiv\prod_{i=0}^{L_x-1}X_{[i+1/2,j,0]}\ \ ,\ \
		\mathcal{X}_y(i)\equiv\prod_{j=0}^{L_y-1}X_{[i,j+1/2,0]}\,,\notag
	\end{align}
	as defined in Eq.~(\ref{eq_X_cube_X_loop_notation}). We illustrate $\tilde{Z}^{\text{twist}}(3/2,1/2)$ as an example in Appendix \ref{appendix_X_cube_details_lineon}. 

	We make the following operator identification without ambiguity of logical operator:
	\begin{equation}\label{eq_X_cube_FTH_rough_top_transport_identification_without_ambiguity}
		\mathcal T_{l_z,z}(i,j)=\prod_{k=0}^{L_z}X_{[i,j,k+1/2]}\sim\tilde Z_{[i,j]},
	\end{equation}
	where $i\in\mathbb Z_{L_x},\ j\in\mathbb Z_{L_y}$.
	From Eqs.~(\ref{eq_X_cube_FTH_rough_top_twist_identification},\ref{eq_X_cube_FTH_rough_top_transport_identification_without_ambiguity}) and $\text{stabilizer}=+1$ in low-energy subspace, we get the identification of bottom boundary truncated $A_c$ without logical operator ambiguity, which are plaquette Ising terms with extra twist DOFs at $x=1/2$ and $y=1/2$ (see identification details in Appendix~\ref{appendix_X_cube_details_lineon}). Therefore, we get the identification of low-energy Hamiltonian $\tilde H$ without logical operator ambiguity:
	\begin{widetext}
	\begin{equation}
		\tilde{H}\sim -\sum_{v}\tilde{X}_v - h\sum_{\substack{p\\x(p),y(p)\neq1/2}}\prod_{v\subset p}\tilde{Z}_v - h\sum_{\substack{p\\1/2\in\{x(p),y(p)\}}}\prod_{v\subset p}\tilde{Z}_v\cdot\tilde{Z}^{\text{twist}}\big(x(p),y(p)\big).
	\end{equation}
	\end{widetext} 
	The twist indicators $\tilde Z^{\text{twist}}$ can only be flipped in pairs in the low-energy subspace, because of the existence of relation
	\begin{equation}
		\prod_{\substack{i\in\mathbb Z_{L_x},j\in\mathbb Z_{L_y}}}A_{[i+1/2,j+1/2,L_z+1/2]} = 1\,,
	\end{equation}
	which implies
	\begin{equation}
		\prod_{\substack{i\in\mathbb Z_{L_x},j\in\mathbb Z_{L_y}\\0\in\{i,j\}}}\tilde Z^{\text{twist}}(i+1/2,j+1/2) = \text{stabilizer} = 1\,.
	\end{equation}
	In other words, these twist DOFs are not independent, they have a relation as shown above. As a result, only parity-even twist togglers (Definition~\ref{twist_indicators_and_togglers}) are available in the identified TFPIM. The natural choices of these parity-even twist togglers are the fracton-transport operators
	\begin{widetext}
	\begin{gather}
		\mathcal T_{f,z}(i,1/2) = \prod_{k=0}^{L_z}Z_{[i,1/2,k]} \sim \tilde{X}^{\text{twist}}(i-1/2,1/2)\tilde{X}^{\text{twist}}(i+1/2,1/2)\ \ ,\ \ i=1,\cdots,L_x-1\,,\notag\\
		\mathcal T_{f,z}(1/2,j) = \prod_{k=0}^{L_z}Z_{[1/2,j,k]} \sim \tilde{X}^{\text{twist}}(1/2,j-1/2)\tilde{X}^{\text{twist}}(1/2,j+1/2)\ \ ,\ \ j=1,\cdots,L_y-1\,,
	\end{gather}
	\end{widetext}
	where $\tilde{X}^{\text{twist}}(i,j)$ anti-commutes with $\tilde{Z}^{\text{twist}}(i,j)$ and commutes with other $\tilde{Z}^{\text{twist}}(i',j')$. The physical meaning of $\tilde X^{\text{twist}}(i,j)$, though not able to be applied solely under $x,y$ PBC, is the operator to add a fracton at $x=i,y=j$ on the top boundary.
	
	There are $L_x+L_y-2$ independent pairs of $\tilde X^{\text{twist}},\tilde Z^{\text{twist}}$, they generate the operator algebra of $L_x+L_y-2$ twist DOFs' Hilbert space. The operator algebra of low-energy subspace $\tilde{\mathcal H}$ have $2(L_z+1)$ more pairs of independent generators, which are redundant without physical meaning in the identified 2d system. Their explicit coordinate construction and visualization are deferred to Appendix \ref{appendix_X_cube_details_lineon}.
	The mutually independent Pauli operators of vertex qubits, twist DOFs and redundant DOFs generate the operator algebra of $\tilde{\mathcal H}$. More specifically,
	\begin{equation}
		\tilde{\mathcal H} = \tilde{\mathcal{H}}^{\text{twist}}\otimes\tilde{\mathcal{H}}^{\text{red.}}\otimes\bigotimes_{v\subset\text{b.b.}}\tilde{\mathcal{H}}_v,
	\end{equation}
	where
	\begin{enumerate}
		\item  $\tilde{\mathcal{H}}_v\cong\mathbb{C}^2$ is the Hilbert space of a $\frac12$-spin indexed by the vertex $v$ on the bottom boundary, localized (w.r.t. $x,y$-directions) at $v$ on the bottom boundary;
		\item $\tilde{\mathcal{H}}^{\text{twist}}\cong\left(\mathbb{C}^2\right)^{L_x+L_y-2}$ is the space encoding twist information;
		\item $\tilde{\mathcal{H}}^{\text{red.}}\cong\left(\mathbb{C}^2\right)^{2(L_z+1)}$ is the Hilbert space of $2(L_z+1)$ redundant DOFs with no definite physical meaning.
	\end{enumerate}

	\subsection{Duality and SQC (X-cube)}\label{subsec_X_cube_Stage4}
	
	In this subsection, we discuss the duality induced by changing the top boundary of X-cube FTH, and constructs a linear-depth LU SQC that changes the top boundary from lineon condensed to planeon condensed.
	
	We list the operators with triple identities---in X-cube FTH, in the identified 2d system under planeon condensed top boundary, in the identified 2d system under lineon condensed top boundary---in Table~\ref{tab_model_comparison}.
	\begin{table*}[htbp]
		\caption{The operators with triple identities. In the first column, we have the operators in X-cube, commutable with all stabilizers (after deleting the specific top stabilizers). These operators are identified as different identities under different top boundaries. The three operators in a row represent the triple identities of a same operator, with the second column standing for the identified operator under planeon condensed top boundary, and the third column standing for the identified operator under $l_z$-lineon condensed top boundary. For example, the second row is read as: $B_{v,xz}\sim B_{v,yz}\ (v\in\text{b.b.})$ is identified as $\prod_{p\revsubset v}\tilde Z_p$ or $\prod_{p\revsubset v}\tilde Z_p\cdot \tilde Z^{\text{twist}}$ under planeon condensed top boundary, while identified as $\tilde X_v$ under $l_z$-lineon condensed top boundary.}
		\label{tab_model_comparison}
		\begin{ruledtabular}
		\begin{tabular}{ccc}
			X-cube                                         & planeon condensed top                                                                             & $l_z$-lineon condensed top                                                                         \\
			\hline
			bottom truncated $B_{v,l}$                     & plaquette Ising term                                                                              & transverse field                                                                                   \\
			$B_{v,xz}\sim B_{v,yz}\ (v\subset\text{b.b.})$ & $\prod_{p\revsubset v}\tilde Z_p\text{ or }\prod_{p\revsubset v}\tilde Z_p\cdot \tilde Z^{\text{twist}}$ & $\tilde{X}_{v}$                                                                                    \\
			\hline
			bottom truncated $A_c$                         & transverse field                                                                                  & plaquette Ising term                                                                               \\
			$\prod_{e\subset p}X_e\ (p\subset\text{b.b.})$ & $\tilde{X}_p$                                                                                     & $\prod_{v\subset p}\tilde Z_v\text{ or }\prod_{v\subset p}\tilde Z_v\cdot \tilde Z^{\text{twist}}$ \\
			\hline
			uncontractible $X$ loop                        & subsystem symmetry indicator                                                                       & twist indicator                                                                                    \\
			$\mathcal X_x(j)\mathcal X_x(j+1)$             & $\prod_{i=0}^{L_x-1}\tilde X_{[i+1/2,j+1/2]}$                                                     & $\tilde Z^{\text{twist}}(1/2,j+1/2)$                                                               \\
			$\mathcal X_y(i)\mathcal X_y(i+1)$             & $\prod_{j=0}^{L_y-1}\tilde X_{[i+1/2,j+1/2]}$                                                     & $\tilde Z^{\text{twist}}(i+1/2,1/2)$                                                               \\
			\hline
			$f$-transport operator generator               & subsystem symmetry toggler                                                                        & twist toggler                                                                                      \\
			$\mathcal T_{f,z}(i,1/2)$                      & $\tilde Z_{[i-1/2,1/2]}\tilde Z_{[i+1/2,1/2]}$                                                    & $\tilde X^{\text{twist}}(i-1/2,1/2)\tilde X^{\text{twist}}(i+1/2,1/2)$                             \\
			$\mathcal T_{f,z}(1/2,j)$                      & $\tilde Z_{[1/2,j-1/2]}\tilde Z_{[1/2,j+1/2]}$                                                    & $\tilde X^{\text{twist}}(1/2,j-1/2)\tilde X^{\text{twist}}(1/2,j+1/2)$                             \\
			\hline
			uncontractible dual $Z$ loop                   & twist indicator                                                                                   & subsystem symmetry indicator                                                                       \\
			$\mathcal Z_x(j)$                        & $\tilde Z^{\text{twist}}(0,j)$                                                                    & $\prod_{i=0}^{L_x-1}\tilde X_{[i,j]}$                                                              \\
			$\mathcal Z_y(i)$                        & $\tilde Z^{\text{twist}}(i,0)$                                                                    & $\prod_{j=0}^{L_y-1}\tilde X_{[i,j]}$                                                              \\
			\hline
			$l_z$-transport operator generator             & twist toggler                                                                                     & subsystem symmetry toggler                                                                         \\
			$\mathcal T_{l_z,z}(i,0)$                      & $\tilde X^{\text{twist}}(i,0)$                                                                    & $\tilde Z_{[i,0]}$                                                                                 \\
			$\mathcal T_{l_z,z}(0,j)$                      & $\tilde X^{\text{twist}}(0,j)$                                                                    & $\tilde Z_{[0,j]}$                                                                                 \\
		\end{tabular}
		\end{ruledtabular}
	\end{table*}
	From Table~\ref{tab_model_comparison} the duality induced by changing top boundary condition can be straightforwardly read. From the second and third rows we see that the local terms are mapped as 
	\begin{equation}
		\prod_{p\revsubset v}\tilde Z_p\dual\tilde X_v\quad,\quad \tilde X_p\dual \prod_{v\subset p}\tilde Z_v
	\end{equation}
	by the duality. By deleting the specific top stabilizers [see Fig.~\ref{fig_stabilizer_set_of_X_cube_FTH} for planeon condensed top boundary and Fig.~\ref{fig_X_cube_FTH_rough_top_stabilizers} for lineon condensed top boundary], the relations of the plaquette Ising terms are lifted to twist DOFs, and the duality is revised near $x=0$ and $y=0$:
	\begin{gather}
		\prod_{p'\revsubset v}\tilde Z_{p'}\cdot\tilde Z^{\text{twist}}\big(x(v),y(v)\big) \dual \tilde X_v,\notag\\
		\tilde X_p\dual \prod_{v'\subset p}\tilde Z_{v'}\cdot\tilde Z^{\text{twist}}\big(x(p),y(p)\big),
	\end{gather}
	for vertices $v$ that satisfy $0\in\{x(v),y(v)\}$ and plaquettes $p$ that satisfy $1/2\in\{x(p),y(p)\}$. The introduction of twist DOFs enlarge the Hilbert space of TFPIM.

	Before enlarging the Hilbert spaces with twist DOFs, the TFPIM identified from the X-cube FTH under planeon condensed top boundary has the dimension $2^{L_xL_y-1}$ (see the end of Sec.~\ref{subsec_X_cube_Stage3_planeon}), while the TFPIM identified from the X-cube FTH under lineon condensed top boundary has the dimension $2^{L_xL_y}$ (see the end of Sec.~\ref{subsec_X_cube_Stage3_lineon}). So the duality is non-invertible before enlarging the Hilbert space with twist DOFs. On the other hand, the dimension of the twist DOFs' Hilbert space is $2^{L_x+L_y-1}$ under planeon condensed top boundary, while the dimension of the twist DOFs' Hilbert space is $2^{L_x+L_y-2}$ under lineon condensed top boundary. So the dimension of the enlarged Hilbert spaces under both planeon condensed and lineon condensed top boundaries are $2^{L_xL_y+L_x+L_y-2}$, the duality is unitary between the two enlarged Hilbert spaces of the identified TFPIM.

	The duality also involve a swap between symmetry operators and twist operators, as can be read from the 4-th to the 7-th row of Table~\ref{tab_model_comparison}. 
	As can be read from the 4-th row, the composite uncontractible $X$ loops $\mathcal X_x(j)\mathcal X_x(j+1)$ and $\mathcal X_y(i)\mathcal X_y(i+1)$ are identified as subsystem symmetry indicators under the planeon condensed top boundary, while identified as twist indicators under lineon condensed top boundary.
	On the other hand (see the 5-th row), the anti-commuting counterpart of the composite uncontractible $X$ loops---the $f$ fracton-transport operators along $z$-direction, $\mathcal T_{f,z}$---are identified as subsystem symmetry togglers under planeon condensed top boundary and as twist togglers under lineon condensed top boundary. 
	The 6,7-th rows of Table~\ref{tab_model_comparison} are read similarly. 
	For any one of the operators in the 4 to 7-th rows, 1st column of Table~\ref{tab_model_comparison}, the change of top boundary condition swaps its identification between twist operator and symmetry operator, while keeps its identity as an indicator or a toggler.

	Finally, we construct a linear-depth LU SQC that changes the top boundary from lineon condensed to planeon condensed (thus realizes the TFPIM version KW duality), illustrated as following,
	\begin{widetext}
	\begin{equation}\label{X_cube_rough_to_smooth_top_SQC}
		\hspace*{3.5cm}
		\begin{tikzpicture}[>={Triangle[round,length=2mm,width=1.33mm]}]
			\foreach \i in {0,1,2,3,4}{\foreach \k in {0,1,2,3}{
					\draw[line width=1pt] (\i,0,\k)--(\i,1,\k);
					\draw (\i,0,\k)--(\i,-0.5,\k);
			}}
			\foreach \i in {0,1,2,3,4}{
				\draw[line width=1pt, dashed] (\i,0,4)--(\i,1,4);
				\draw[dashed] (\i,0,4)--(\i,-0.5,4);
			}
			\foreach \k in {0,1,2,3}{
				\draw[line width=1pt, dashed] (5,0,\k)--(5,1,\k);
				\draw[dashed] (5,0,\k)--(5,-0.5,\k);
			}
			\draw[line width=1pt, dashed] (5,0,4)--(5,1,4);
			\draw[dashed] (5,0,4)--(5,-0.5,4);
			\foreach \i in {0,1,2,3,4}{\draw[line width=1pt] (\i,0,0)--(\i,0,4);}
			\draw[line width=1pt, dashed] (5,0,0)--(5,0,4);
			\foreach \k in {0,1,2,3}{\draw[line width=1pt] (0,0,\k)--(5,0,\k);}
			\draw[line width=1pt, dashed] (0,0,4)--(5,0,4);
			
			\fill[blue, opacity=0.2] (0,0,4)--(1,0,4)--(1,0,1)--(5,0,1)--(5,0,0)--(5,1,0)--(0,1,0)--(0,1,4);
			
			\draw[red, line width=1pt, ->] (1,0.5,1)--(1.5,0,1);
			\draw[red, line width=1pt, ->] (1,0.5,1)--(1,0,1.5);
			\draw[red, line width=1pt, ->] (1,0.5,1)--(2,0,1.5);
			\draw[red, line width=1pt, ->] (1,0.5,1)--(1.5,0,2);
			\draw[red, line width=1pt, ->] (1,0.5,1)--(2,0.5,1);
			\draw[red, line width=1pt, ->] (1,0.5,1)--(2,0.5,2);
			\draw[red, line width=1pt, ->] (1,0.5,1)--(1,0.5,2);
			
			\draw[gray, line width=1pt, ->] (2,0.5,1)--(2.5,0,1);
			\draw[gray, line width=1pt, ->] (2,0.5,1)--(2,0,1.5);
			\draw[gray, line width=1pt, ->] (2,0.5,1)--(3,0,1.5);
			\draw[gray, line width=1pt, ->] (2,0.5,1)--(2.5,0,2);
			\draw[gray, line width=1pt, ->] (2,0.5,1)--(3,0.5,1);
			\draw[gray, line width=1pt, ->] (2,0.5,1)--(3,0.5,2);
			\draw[gray, line width=1pt, ->] (2,0.5,1)--(2,0.5,2);
			
			\draw[gray, line width=1pt, ->] (1,0.5,2)--(1.5,0,2);
			\draw[gray, line width=1pt, ->] (1,0.5,2)--(1,0,2.5);
			\draw[gray, line width=1pt, ->] (1,0.5,2)--(2,0,2.5);
			\draw[gray, line width=1pt, ->] (1,0.5,2)--(1.5,0,3);
			\draw[gray, line width=1pt, ->] (1,0.5,2)--(2,0.5,2);
			\draw[gray, line width=1pt, ->] (1,0.5,2)--(2,0.5,3);
			\draw[gray, line width=1pt, ->] (1,0.5,2)--(1,0.5,3);
			
			\draw[blue, line width=1pt, ->] (1,0.5,3)--(1.5,0,3);
			\draw[blue, line width=1pt, ->] (1,0.5,3)--(1,0,3.5);
			\draw[blue, line width=1pt, ->] (1,0.5,3)--(2,0,3.5);
			\draw[blue, line width=1pt, ->] (1,0.5,3)--(1.5,0,4);
			\draw[blue, line width=1pt, ->] (1,0.5,3)--(2,0.5,3);
			\draw[blue, line width=1pt, ->] (1,0.5,3)--(2,0.5,4);
			\draw[blue, line width=1pt, ->] (1,0.5,3)--(1,0.5,4);
			
			\draw[blue, line width=1pt, ->] (2,0.5,2)--(2.5,0,2);
			\draw[blue, line width=1pt, ->] (2,0.5,2)--(2,0,2.5);
			\draw[blue, line width=1pt, ->] (2,0.5,2)--(3,0,2.5);
			\draw[blue, line width=1pt, ->] (2,0.5,2)--(2.5,0,3);
			\draw[blue, line width=1pt, ->] (2,0.5,2)--(3,0.5,2);
			\draw[blue, line width=1pt, ->] (2,0.5,2)--(3,0.5,3);
			\draw[blue, line width=1pt, ->] (2,0.5,2)--(2,0.5,3);
			
			\draw[blue, line width=1pt, ->] (3,0.5,1)--(3.5,0,1);
			\draw[blue, line width=1pt, ->] (3,0.5,1)--(3,0,1.5);
			\draw[blue, line width=1pt, ->] (3,0.5,1)--(4,0,1.5);
			\draw[blue, line width=1pt, ->] (3,0.5,1)--(3.5,0,2);
			\draw[blue, line width=1pt, ->] (3,0.5,1)--(4,0.5,1);
			\draw[blue, line width=1pt, ->] (3,0.5,1)--(4,0.5,2);
			\draw[blue, line width=1pt, ->] (3,0.5,1)--(3,0.5,2);
			
			\draw[cyan, line width=1pt, ->] (2,0.5,3)--(2.5,0,3);
			\draw[cyan, line width=1pt, ->] (2,0.5,3)--(2,0,3.5);
			\draw[cyan, line width=1pt, ->] (2,0.5,3)--(3,0,3.5);
			\draw[cyan, line width=1pt, ->] (2,0.5,3)--(2.5,0,4);
			\draw[cyan, line width=1pt, ->] (2,0.5,3)--(3,0.5,3);
			\draw[cyan, line width=1pt, ->] (2,0.5,3)--(3,0.5,4);
			\draw[cyan, line width=1pt, ->] (2,0.5,3)--(2,0.5,4);
			
			\draw[cyan, line width=1pt, ->] (3,0.5,2)--(3.5,0,2);
			\draw[cyan, line width=1pt, ->] (3,0.5,2)--(3,0,2.5);
			\draw[cyan, line width=1pt, ->] (3,0.5,2)--(4,0,2.5);
			\draw[cyan, line width=1pt, ->] (3,0.5,2)--(3.5,0,3);
			\draw[cyan, line width=1pt, ->] (3,0.5,2)--(4,0.5,2);
			\draw[cyan, line width=1pt, ->] (3,0.5,2)--(4,0.5,3);
			\draw[cyan, line width=1pt, ->] (3,0.5,2)--(3,0.5,3);
			
			\draw[cyan, line width=1pt, ->] (4,0.5,1)--(4.5,0,1);
			\draw[cyan, line width=1pt, ->] (4,0.5,1)--(4,0,1.5);
			\draw[cyan, line width=1pt, ->] (4,0.5,1)--(5,0,1.5);
			\draw[cyan, line width=1pt, ->] (4,0.5,1)--(4.5,0,2);
			\draw[cyan, line width=1pt, ->] (4,0.5,1)--(5,0.5,1);
			\draw[cyan, line width=1pt, ->] (4,0.5,1)--(5,0.5,2);
			\draw[cyan, line width=1pt, ->] (4,0.5,1)--(4,0.5,2);
			
			\draw[magenta, line width=1pt, ->] (3,0.5,3)--(3.5,0,3);
			\draw[magenta, line width=1pt, ->] (3,0.5,3)--(3,0,3.5);
			\draw[magenta, line width=1pt, ->] (3,0.5,3)--(4,0,3.5);
			\draw[magenta, line width=1pt, ->] (3,0.5,3)--(3.5,0,4);
			\draw[magenta, line width=1pt, ->] (3,0.5,3)--(4,0.5,3);
			\draw[magenta, line width=1pt, ->] (3,0.5,3)--(4,0.5,4);
			\draw[magenta, line width=1pt, ->] (3,0.5,3)--(3,0.5,4);
			
			\draw[magenta, line width=1pt, ->] (4,0.5,2)--(4.5,0,2);
			\draw[magenta, line width=1pt, ->] (4,0.5,2)--(4,0,2.5);
			\draw[magenta, line width=1pt, ->] (4,0.5,2)--(5,0,2.5);
			\draw[magenta, line width=1pt, ->] (4,0.5,2)--(4.5,0,3);
			\draw[magenta, line width=1pt, ->] (4,0.5,2)--(5,0.5,2);
			\draw[magenta, line width=1pt, ->] (4,0.5,2)--(5,0.5,3);
			\draw[magenta, line width=1pt, ->] (4,0.5,2)--(4,0.5,3);
			
			\draw[ForestGreen, line width=1pt, ->] (4,0.5,3)--(4.5,0,3);
			\draw[ForestGreen, line width=1pt, ->] (4,0.5,3)--(4,0,3.5);
			\draw[ForestGreen, line width=1pt, ->] (4,0.5,3)--(5,0,3.5);
			\draw[ForestGreen, line width=1pt, ->] (4,0.5,3)--(4.5,0,4);
			\draw[ForestGreen, line width=1pt, ->] (4,0.5,3)--(5,0.5,3);
			\draw[ForestGreen, line width=1pt, ->] (4,0.5,3)--(5,0.5,4);
			\draw[ForestGreen, line width=1pt, ->] (4,0.5,3)--(4,0.5,4);
			
			\foreach \i/\k in {1/0, 2/0, 3/0, 4/0}{
				\draw[orange, line width=1pt, ->] (\i,-0.47,\k)--(\i,0.47,\k);
				\draw[orange, line width=1pt, ->] (\i-0.5,0,\k)--(\i-0.05,0.47,\k);
				\draw[orange, line width=1pt, ->] (\i+0.5,0,\k)--(\i+0.05,0.47,\k);
			}
			\draw[orange, line width=1pt, ->] (0,-0.47,0)--(0,0.47,0);
			\draw[orange, line width=1pt, ->] (4.5,0,0)--(4.95,0.47,0);
			\draw[orange, line width=1pt, ->] (0.5,0,0)--(0.05,0.47,0);
			
			\foreach \i/\k in {0/1, 0/2, 0/3}{
				\draw[orange, line width=1pt, ->] (\i,-0.47,\k)--(\i,0.47,\k);
				\draw[orange, line width=1pt, ->] (\i,0,\k-0.5)--(\i,0.47,\k-0.15);
				\draw[orange, line width=1pt, ->] (\i,0,\k+0.5)--(\i,0.47,\k+0.15);
			}
			
			\node at (7,0.5,0) {\textbf{layer}};
			\node[red] at (6,0,0) {\textbf{1}};
			\node[gray] at (7,0,0) {\textbf{2}};
			\node[blue] at (8,0,0) {\textbf{3}};
			\node[cyan] at (6,-0.5,0) {\textbf{4}};
			\node[magenta] at (7,-0.5,0) {\textbf{5}};
			\node[ForestGreen] at (8,-0.5,0) {\textbf{6}};
			\node[ForestGreen] at (9.35,-0.54,0) {$=L_x+L_y-3$};
			\node[orange] at (7,-1,0) {\textbf{7}};
		\end{tikzpicture}
	\end{equation}
	\end{widetext}
	The blue colored cubes in Eq.~(\ref{X_cube_rough_to_smooth_top_SQC}) host the deleted $A_c$ terms, as shown in Fig.~\ref{fig_X_cube_FTH_rough_top_stabilizers}. Each arrow represents a CNOT gate with the tail being control qubit and head being target qubit. The arrows/CNOT gates with the same color are commutable, thus form a single layer of the SQC\footnote{The usual convention of layer in quantum circuit context is that each local unitary gate do not have intersecting DOF. Here we adopt another convention that local unitary gates within a same layer could intersect, but commutable. The depth of circuit in these two conventions differ at most by multiplying a constant, as long as the circuit is a SQC.}. The SQC is LU and linear depth, with $L_x+L_y-2$ layers in total. The CNOT gate $\text{CNOT}_{c,t}$ maps the Pauli $X,Z$ of control $c$ and target $t$ as following:
	\begin{align}
		& \text{CNOT}_{c,t}X_c\text{CNOT}_{c,t}^\dagger = X_cX_t\notag \\
		& \text{CNOT}_{c,t}Z_c\text{CNOT}_{c,t}^\dagger = Z_c\notag    \\
		& \text{CNOT}_{c,t}X_t\text{CNOT}_{c,t}^\dagger = X_t\notag    \\
		& \text{CNOT}_{c,t}Z_t\text{CNOT}_{c,t}^\dagger = Z_cZ_t\,.
	\end{align}
	The first $L_x+L_y-3$ layers map the undeleted $A_c$ terms on the top boundary to a single Pauli $X$ as following:
	\begin{equation}
		\begin{tikzpicture}[>={Triangle[round,length=1.5mm,width=1mm]}]
			\draw[line width=1pt] (0,0,0)--(1,0,0)--(1,0,1)--(0,0,1)--cycle;
			\foreach \i in {0,1}{\foreach \k in {0,1}{
					\draw[line width=1pt] (\i,0,\k)--(\i,1,\k);
			}}
			
			\foreach \i/\j/\k in {0.5/0/0, 1/0/0.5, 0.5/0/1, 0/0/0.5, 0/0.5/0, 1/0.5/0, 1/0.5/1, 0/0.5/1}{
				\node[red] at (\i,\j,\k) {$X$};
			}

			\draw[line width=1pt, ->] (2,0.5,0.5)--(3,0.5,0.5);
			
			\draw (0+2.15,0+0.75,0)--(0.5+2.15,0+0.75,0)--(0.5+2.15,0+0.75,0.5)--(0+2.15,0+0.75,0.5)--cycle;
			\foreach \i in {0+2.15,0.5+2.15}{\foreach \k in {0,0.5}{
					\draw (\i,0+0.75,\k)--(\i,0.5+0.75,\k);
			}}
			\draw[blue, line width=0.5pt, ->] (0+2.15,0.25+0.75,0)--(0.5+2.15,0.25+0.75,0);
			\draw[blue, line width=0.5pt, ->] (0+2.15,0.25+0.75,0)--(0.5+2.15,0.25+0.75,0.5);
			\draw[blue, line width=0.5pt, ->] (0+2.15,0.25+0.75,0)--(0+2.15,0.25+0.75,0.5);
			\draw[blue, line width=0.5pt, ->] (0+2.15,0.25+0.75,0)--(0.25+2.15,0+0.75,0);
			\draw[blue, line width=0.5pt, ->] (0+2.15,0.25+0.75,0)--(0.5+2.15,0+0.75,0.25);
			\draw[blue, line width=0.5pt, ->] (0+2.15,0.25+0.75,0)--(0.25+2.15,0+0.75,0.5);
			\draw[blue, line width=0.5pt, ->] (0+2.15,0.25+0.75,0)--(0+2.15,0+0.75,0.25);

			\draw[line width=1pt] (0+4,0,0)--(1+4,0,0)--(1+4,0,1)--(0+4,0,1)--cycle;
			\foreach \i in {0+4,1+4}{\foreach \k in {0,1}{
					\draw[line width=1pt] (\i,0,\k)--(\i,1,\k);
			}}
			
			\node[red] at (0+4,0.5,0) {$X$};
		\end{tikzpicture}
	\end{equation}
	At the same time, each layer of the first $L_x+L_y-3$ layers map the two $B_{v,yz},B_{v,xz}$ terms as following (while keep other $B_{v,l}$ terms invariant):
	\begin{equation}
		\begin{tikzpicture}[>={Triangle[round,length=1.5mm,width=1mm]}]
			%左一图
			\draw[line width=1pt] (0,0,0)--(1,0,0)--(1,0,1)--(0,0,1)--cycle;
			\foreach \i in {0,1}{\foreach \k in {0,1}{
					\draw[line width=1pt] (\i,0,\k)--(\i,1,\k);
			}}
			\draw[line width=1pt]
			(-1,0,0)--(0,0,0)
			(0,-1,0)--(0,0,0)
			(0,0,-1)--(0,0,0);
			
			\foreach \i/\j/\k in {0.5/0/0, 0/0.5/0, -0.5/0/0, 0/-0.5/0}{
				\node[ForestGreen] at (\i,\j,\k) {$Z$};
			}
			\fill[ForestGreen, opacity=0.2] (-0.875,0.2,0)--(0.875,0.2,0)--(0.875,-0.2,0)--(-0.875,-0.2,0);
			\fill[ForestGreen, opacity=0.2] (0.2,0.875,0)--(0.2,-0.875,0)--(-0.2,-0.875,0)--(-0.2,0.875,0);

			%箭头
			\draw[line width=1pt, ->] (2,0.2,0.5)--(3.25,0.2,0.5);
			
			\draw (0+2.4,0+0.7,0)--(0.5+2.4,0+0.7,0)--(0.5+2.4,0+0.7,0.5)--(0+2.4,0+0.7,0.5)--cycle;
			\foreach \i in {0+2.4,0.5+2.4}{\foreach \k in {0,0.5}{
					\draw (\i,0+0.7,\k)--(\i,0.5+0.7,\k);
			}}
			\draw[blue, line width=0.5pt, ->] (0+2.4,0.25+0.7,0)--(0.5+2.4,0.25+0.7,0);
			\draw[blue, line width=0.5pt, ->] (0+2.4,0.25+0.7,0)--(0.5+2.4,0.25+0.7,0.5);
			\draw[blue, line width=0.5pt, ->] (0+2.4,0.25+0.7,0)--(0+2.4,0.25+0.7,0.5);
			\draw[blue, line width=0.5pt, ->] (0+2.4,0.25+0.7,0)--(0.25+2.4,0+0.7,0);
			\draw[blue, line width=0.5pt, ->] (0+2.4,0.25+0.7,0)--(0.5+2.4,0+0.7,0.25);
			\draw[blue, line width=0.5pt, ->] (0+2.4,0.25+0.7,0)--(0.25+2.4,0+0.7,0.5);
			\draw[blue, line width=0.5pt, ->] (0+2.4,0.25+0.7,0)--(0+2.4,0+0.7,0.25);
			\draw
			(-0.5+2.4,0+0.7,0)--(0+2.4,0+0.7,0)
			(0+2.4,-0.5+0.7,0)--(0+2.4,0+0.7,0)
			(0+2.4,0+0.7,-0.5)--(0+2.4,0+0.7,0);

			%左二图
			\draw[line width=1pt] (0+4.75,0,0)--(1+4.75,0,0)--(1+4.75,0,1)--(0+4.75,0,1)--cycle;
			\foreach \i in {0+4.75,1+4.75}{\foreach \k in {0,1}{
					\draw[line width=1pt] (\i,0,\k)--(\i,1,\k);
			}}
			\draw[line width=1pt]
			(-1+4.75,0,0)--(0+4.75,0,0)
			(0+4.75,-1,0)--(0+4.75,0,0)
			(0+4.75,0,-1)--(0+4.75,0,0);
			
			\foreach \i/\j/\k in {0.5+4.75/0/0, -0.5+4.75/0/0, 0+4.75/-0.5/0}{
				\node[ForestGreen] at (\i,\j,\k) {$Z$};
			}
			\fill[ForestGreen, opacity=0.2] (-0.875+4.75,0.2,0)--(0.875+4.75,0.2,0)--(0.875+4.75,-0.2,0)--(-0.875+4.75,-0.2,0);
			\fill[ForestGreen, opacity=0.2] (0.2+4.75,0.2,0)--(0.2+4.75,-0.875,0)--(-0.2+4.75,-0.875,0)--(-0.2+4.75,0.2,0);

			%逗号
			\node at (6.5,-0.5,0) {$,$};

			\begin{scope}[xshift=-8.25cm, yshift=-2.8cm]
				%右一图
				\draw[line width=1pt] (0+8.25,0,0)--(1+8.25,0,0)--(1+8.25,0,1)--(0+8.25,0,1)--cycle;
				\foreach \i in {0+8.25,1+8.25}{\foreach \k in {0,1}{
						\draw[line width=1pt] (\i,0,\k)--(\i,1,\k);
				}}
				\draw[line width=1pt]
				(-1+8.25,0,0)--(0+8.25,0,0)
				(0+8.25,-1,0)--(0+8.25,0,0)
				(0+8.25,0,-1)--(0+8.25,0,0);
				
				\foreach \i/\j/\k in {0+8.25/0/0.5, 0+8.25/0.5/0, 0+8.25/0/-0.5, 0+8.25/-0.5/0}{
					\node[ForestGreen] at (\i,\j,\k) {$Z$};
				}
				\fill[ForestGreen, opacity=0.2] (0.2+8.25,0.875,0)--(0.2+8.25,-0.875,0)--(-0.2+8.25,-0.875,0)--(-0.2+8.25,0.875,0);
				\fill[ForestGreen, opacity=0.2] (-0.2+8.25,0,1)--(0.2+8.25,0,1)--(0.2+8.25,0,-1)--(-0.2+8.25,0,-1);

				%箭头
				\draw[line width=1pt, ->] (2+8.25,0.2,0.5)--(3.25+8.25,0.2,0.5);
				
				\draw (0+10.65,0+0.7,0)--(0.5+10.65,0+0.7,0)--(0.5+10.65,0+0.7,0.5)--(0+10.65,0+0.7,0.5)--cycle;
				\foreach \i in {0+10.65,0.5+10.65}{\foreach \k in {0,0.5}{
						\draw (\i,0+0.7,\k)--(\i,0.5+0.7,\k);
				}}
				\draw[blue, line width=0.5pt, ->] (0+10.65,0.25+0.7,0)--(0.5+10.65,0.25+0.7,0);
				\draw[blue, line width=0.5pt, ->] (0+10.65,0.25+0.7,0)--(0.5+10.65,0.25+0.7,0.5);
				\draw[blue, line width=0.5pt, ->] (0+10.65,0.25+0.7,0)--(0+10.65,0.25+0.7,0.5);
				\draw[blue, line width=0.5pt, ->] (0+10.65,0.25+0.7,0)--(0.25+10.65,0+0.7,0);
				\draw[blue, line width=0.5pt, ->] (0+10.65,0.25+0.7,0)--(0.5+10.65,0+0.7,0.25);
				\draw[blue, line width=0.5pt, ->] (0+10.65,0.25+0.7,0)--(0.25+10.65,0+0.7,0.5);
				\draw[blue, line width=0.5pt, ->] (0+10.65,0.25+0.7,0)--(0+10.65,0+0.7,0.25);
				\draw
				(-0.5+10.65,0+0.7,0)--(0+10.65,0+0.7,0)
				(0+10.65,-0.5+0.7,0)--(0+10.65,0+0.7,0)
				(0+10.65,0+0.7,-0.5)--(0+10.65,0+0.7,0);

				%右一图
				\draw[line width=1pt] (0+13,0,0)--(1+13,0,0)--(1+13,0,1)--(0+13,0,1)--cycle;
				\foreach \i in {0+13,1+13}{\foreach \k in {0,1}{
						\draw[line width=1pt] (\i,0,\k)--(\i,1,\k);
				}}
				\draw[line width=1pt]
				(-1+13,0,0)--(0+13,0,0)
				(0+13,-1,0)--(0+13,0,0)
				(0+13,0,-1)--(0+13,0,0);
				
				\foreach \i/\j/\k in {0+13/0/0.5, 0+13/0/-0.5, 0+13/-0.5/0}{
					\node[ForestGreen] at (\i,\j,\k) {$Z$};
				}
				\fill[ForestGreen, opacity=0.2] (0.2+13,0,0)--(0.2+13,-0.875,0)--(-0.2+13,-0.875,0)--(-0.2+13,0,0);
				\fill[ForestGreen, opacity=0.2] (-0.2+13,0,1)--(0.2+13,0,1)--(0.2+13,0,-1)--(-0.2+13,0,-1);
			\end{scope}

			\node at (6.5,-3.5,0) {$.$};
		\end{tikzpicture}
	\end{equation}
	The final layer of the SQC consists of two parts, part 1 lies along the $x=0,z=L_z$ line, part 2 lies along the $y=0,z=L_z$ line. Part 1 maps the $B_{v,yz},B_{v,xz}$ terms attached to vertices $v$ on the top boundary with $x=0$ as following (while keep $B_{v,xy}$ and $A_c$ invariant):
	\begin{equation}\label{X_cube_rough_to_smooth_top_SQC_2}
		\begin{tikzpicture}[>={Triangle[round,length=1.5mm,width=1mm]}, baseline=-3ex]
			%左一图
			\draw[line width=1pt]
			(-1,0,0)--(1,0,0)
			(0,-1,0)--(0,1,0)
			(0,0,-1)--(0,0,1);
			
			\foreach \i/\j/\k in {0.5/0/0, 0/0.5/0, -0.5/0/0, 0/-0.5/0}{
				\node[ForestGreen] at (\i,\j,\k) {$Z$};
			}
			\fill[ForestGreen, opacity=0.2] (-0.875,0.2,0)--(0.875,0.2,0)--(0.875,-0.2,0)--(-0.875,-0.2,0);
			\fill[ForestGreen, opacity=0.2] (0.2,0.875,0)--(0.2,-0.875,0)--(-0.2,-0.875,0)--(-0.2,0.875,0);

			%箭头
			\draw[line width=1pt, ->] (2,0.2,0.5)--(3.25,0.2,0.5);
			
			\draw
			(-0.5+2.4,0+0.7,0)--(0.5+2.4,0+0.7,0)  (0+2.4,-0.5+0.7,0)--(0+2.4,0.5+0.7,0)
			(0+2.4,0+0.7,-0.5)--(0+2.4,0+0.7,0.5);
			
			\draw[blue, line width=1pt, ->] (0+2.4,-0.225+0.7,0)--(0+2.4,0.225+0.7,0);
			\draw[blue, line width=1pt, ->] (-0.25+2.4,0+0.7,0)--(0-0.05+2.4,0.225+0.7,0);
			\draw[blue, line width=1pt, ->] (+0.25+2.4,0+0.7,0)--(0+0.05+2.4,0.225+0.7,0);

			%左图二
			\draw[line width=1pt]
			(-1+4.75,0,0)--(1+4.75,0,0)
			(0+4.75,-1,0)--(0+4.75,1,0)
			(0+4.75,0,-1)--(0+4.75,0,1);
			
			\foreach \i/\j/\k in {0+4.75/0.5/0}{
				\node[ForestGreen] at (\i,\j,\k) {$Z$};
			}
			\fill[ForestGreen, opacity=0.2] (0.2+4.75,0.875,0)--(0.2+4.75,0,0)--(-0.2+4.75,0,0)--(-0.2+4.75,0.875,0);

			%逗号
			\node at (6.5,-0.5,0) {$,$};

			\begin{scope}[xshift=-8.25cm, yshift=-2.8cm]
				%右图一
				\draw[line width=1pt]
				(-1+8.25,0,0)--(1+8.25,0,0)
				(0+8.25,-1,0)--(0+8.25,1,0)
				(0+8.25,0,-1)--(0+8.25,0,1);
				
				\foreach \i/\j/\k in {0+8.25/0.5/0, 0+8.25/-0.5/0, 0+8.25/0/0.5, 0+8.25/0/-0.5}{
					\node[ForestGreen] at (\i,\j,\k) {$Z$};
				}
				\fill[ForestGreen, opacity=0.2] (0.2+8.25,0.875,0)--(0.2+8.25,-0.875,0)--(-0.2+8.25,-0.875,0)--(-0.2+8.25,0.875,0);
				\fill[ForestGreen, opacity=0.2] (-0.2+8.25,0,-1)--(0.2+8.25,0,-1)--(0.2+8.25,0,1)--(-0.2+8.25,0,1);

				%箭头
				\draw[line width=1pt, ->] (2+8.25,0.2,0.5)--(3.25+8.25,0.2,0.5);
				
				\draw
				(-0.5+10.65,0+0.7,0)--(0.5+10.65,0+0.7,0)  (0+10.65,-0.5+0.7,0)--(0+10.65,0.5+0.7,0)
				(0+10.65,0+0.7,-0.5)--(0+10.65,0+0.7,0.5);
				
				\draw[blue, line width=1pt, ->] (0+10.65,-0.225+0.7,0)--(0+10.65,0.225+0.7,0);
				\draw[blue, line width=1pt, ->] (-0.25+10.65,0+0.7,0)--(0-0.05+10.65,0.225+0.7,0);
				\draw[blue, line width=1pt, ->] (+0.25+10.65,0+0.7,0)--(0+0.05+10.65,0.225+0.7,0);

				%右图二
				\draw[line width=1pt]
				(-1+13,0,0)--(1+13,0,0)
				(0+13,-1,0)--(0+13,1,0)
				(0+13,0,-1)--(0+13,0,1);
				
				\foreach \i/\j/\k in {0+13/0.5/0, 0.5+13/0/0, -0.5+13/0/0, 0+13/0/-0.5, 0+13/0/0.5}{
					\node[ForestGreen] at (\i,\j,\k) {$Z$};
				}
				\fill[ForestGreen, opacity=0.2] (0.2+13,0.875,0)--(0.2+13,0,0)--(-0.2+13,0,0)--(-0.2+13,0.875,0);
				\fill[ForestGreen, opacity=0.2] (-0.2+13,0,-1)--(0.2+13,0,-1)--(0.2+13,0,1)--(-0.2+13,0,1);
				\fill[ForestGreen, opacity=0.2] (-0.875+13,0.2,0)--(-0.875+13,-0.2,0)--(0.875+13,-0.2,0)--(0.875+13,0.2,0);
			\end{scope}

			\node at (6.5,-3.5,0) {$.$};
		\end{tikzpicture}
	\end{equation}
	The two operators after the last layer in Eq.~(\ref{X_cube_rough_to_smooth_top_SQC_2}) are equivalent, in the sense that they differ only by multiplying a stabilizer $B_{v,xy}$. The multiplication relation $B_{v,xy}B_{v,xz}B_{v,yz}=1$ is preserved by the last layer, which is consistent with the fact that CNOT is invertible, and any invertible adjust action preserves the relation $B_{v,xy}B_{v,xz}B_{v,yz}=1$. Similarly, part 2 of the last layer maps the $B_{v,xz},B_{v,yz}$ terms attached to vertices $v$ on the top boundary with $y=0$ as following (while keep $B_{v,xy}$ and $A_c$ invariant):
	\begin{equation}\label{X_cube_rough_to_smooth_top_SQC_3}
		\begin{tikzpicture}[>={Triangle[round,length=1.5mm,width=1mm]}, baseline=-3ex]
			%左一图
			\draw[line width=1pt]
			(-1,0,0)--(1,0,0)
			(0,-1,0)--(0,1,0)
			(0,0,-1)--(0,0,1);
			
			\foreach \i/\j/\k in {0.5/0/0, 0/0.5/0, -0.5/0/0, 0/-0.5/0}{
				\node[ForestGreen] at (\i,\j,\k) {$Z$};
			}
			\fill[ForestGreen, opacity=0.2] (-0.875,0.2,0)--(0.875,0.2,0)--(0.875,-0.2,0)--(-0.875,-0.2,0);
			\fill[ForestGreen, opacity=0.2] (0.2,0.875,0)--(0.2,-0.875,0)--(-0.2,-0.875,0)--(-0.2,0.875,0);

			%箭头
			\draw[line width=1pt, ->] (2,0.2,0.5)--(3.25,0.2,0.5);
			
			\draw
			(-0.5+2.4,0+0.7,0)--(0.5+2.4,0+0.7,0)  (0+2.4,-0.5+0.7,0)--(0+2.4,0.5+0.7,0)
			(0+2.4,0+0.7,-0.5)--(0+2.4,0+0.7,0.5);
			
			\draw[blue, line width=1pt, ->] (0+2.4,-0.225+0.7,0)--(0+2.4,0.225+0.7,0);
			\draw[blue, line width=1pt, ->] (0+2.4,0+0.7,-0.25)--(0+2.4,0.225+0.7,-0.1);
			\draw[blue, line width=1pt, ->] (0+2.4,0+0.7,+0.25)--(0+2.4,0.225+0.7,+0.1);

			%左图二
			\draw[line width=1pt]
			(-1+4.75,0,0)--(1+4.75,0,0)
			(0+4.75,-1,0)--(0+4.75,1,0)
			(0+4.75,0,-1)--(0+4.75,0,1);
			
			\foreach \i/\j/\k in {0+4.75/0.5/0, 0.5+4.75/0/0, -0.5+4.75/0/0, 0+4.75/0/-0.5, 0+4.75/0/0.5}{
				\node[ForestGreen] at (\i,\j,\k) {$Z$};
			}
			\fill[ForestGreen, opacity=0.2] (0.2+4.75,0.875,0)--(0.2+4.75,0,0)--(-0.2+4.75,0,0)--(-0.2+4.75,0.875,0);
			\fill[ForestGreen, opacity=0.2] (-0.2+4.75,0,-1)--(0.2+4.75,0,-1)--(0.2+4.75,0,1)--(-0.2+4.75,0,1);
			\fill[ForestGreen, opacity=0.2] (-0.875+4.75,0.2,0)--(-0.875+4.75,-0.2,0)--(0.875+4.75,-0.2,0)--(0.875+4.75,0.2,0);

			%逗号
			\node at (6.5,-0.5,0) {$,$};

			\begin{scope}[xshift=-8.25cm, yshift=-2.8cm]
				%右图一
				\draw[line width=1pt]
				(-1+8.25,0,0)--(1+8.25,0,0)
				(0+8.25,-1,0)--(0+8.25,1,0)
				(0+8.25,0,-1)--(0+8.25,0,1);
				
				\foreach \i/\j/\k in {0+8.25/0.5/0, 0+8.25/-0.5/0, 0+8.25/0/0.5, 0+8.25/0/-0.5}{
					\node[ForestGreen] at (\i,\j,\k) {$Z$};
				}
				\fill[ForestGreen, opacity=0.2] (0.2+8.25,0.875,0)--(0.2+8.25,-0.875,0)--(-0.2+8.25,-0.875,0)--(-0.2+8.25,0.875,0);
				\fill[ForestGreen, opacity=0.2] (-0.2+8.25,0,-1)--(0.2+8.25,0,-1)--(0.2+8.25,0,1)--(-0.2+8.25,0,1);

				%箭头
				\draw[line width=1pt, ->] (2+8.25,0.2,0.5)--(3.25+8.25,0.2,0.5);
				
				\draw
				(-0.5+2.4+8.25,0+0.7,0)--(0.5+2.4+8.25,0+0.7,0)  (0+2.4+8.25,-0.5+0.7,0)--(0+2.4+8.25,0.5+0.7,0)
				(0+2.4+8.25,0+0.7,-0.5)--(0+2.4+8.25,0+0.7,0.5);
				
				\draw[blue, line width=1pt, ->] (0+2.4+8.25,-0.225+0.7,0)--(0+2.4+8.25,0.225+0.7,0);
				\draw[blue, line width=1pt, ->] (0+2.4+8.25,0+0.7,-0.25)--(0+2.4+8.25,0.225+0.7,-0.1);
				\draw[blue, line width=1pt, ->] (0+2.4+8.25,0+0.7,+0.25)--(0+2.4+8.25,0.225+0.7,+0.1);

				%右图二
				\draw[line width=1pt]
				(-1+13,0,0)--(1+13,0,0)
				(0+13,-1,0)--(0+13,1,0)
				(0+13,0,-1)--(0+13,0,1);
				
				\foreach \i/\j/\k in {0+13/0.5/0}{
					\node[ForestGreen] at (\i,\j,\k) {$Z$};
				}
				\fill[ForestGreen, opacity=0.2] (0.2+13,0.875,0)--(0.2+13,0,0)--(-0.2+13,0,0)--(-0.2+13,0.875,0);
			\end{scope}

			\node at (6.5,-3.5,0) {$.$};
		\end{tikzpicture}
	\end{equation}

	\section{Fracton topological holography with Haah's cubic code bulk}\label{sec_Haah_code_FTH}
	
	In this section, we review the formalism of translational invariant $\mathbb Z_p$ stabilizer code\cite{haah_2013_modules} and Haah's cubic code\cite{haah_2011_no_strings}, then we construct FTH with Haah's cubic code bulk, inducing a new duality when changing the top boundary. For the convenience of the reader, the mathematical and physical symbols used in this section, along with the general operator notation conventions, are summarized in Table~\ref{table_notation_summary} at the beginning of Sec.~\ref{sec_TH_and_FTH}.

	The two basis topological excitations of Haah's cubic code, namely, $e_Z$ and $e_X$, are both type II fractons, i.e., they cannot fuse to mobile composite excitations. We consider two top boundary conditions, namely, $(Z)$ and $(X)$, where $e_Z$ and composite $e_X$ are condensed, respectively. Under $(Z)$ and $(X)$ top boundary conditions, the general forms of low-energy preserving Pauli operators are written and identified as Pauli operators of a 2d qubit system. The fracton transport operators arise as a subset of generators of low-energy preserving operators. Assuming a minimal bottom-boundary low-energy effective Hamiltonian $\tilde H$, $\tilde H$ is identified as the Hamiltonian of transverse field generalized plaquette Ising models, under both $(Z)$ and $(X)$ top boundaries. The identified generalized plaquette Ising terms under $(Z)$ and $(X)$ top boundaries differ by a spatial inversion. The identified generalized TFPIM have emergent fractal symmetries and term relations. Changing the two boundary between $(Z)$ and $(X)$ top boundaries induces a duality, where the generalized plaquette Ising terms and transverse field terms are swapped, and the identity of symmetries and relations are swapped simultaneously.

	\subsection{Model preparation (Haah's cubic code)}\label{subsec_Haah_Stage1}
	
	We represent the Pauli operators and stabilizer generators of the translation-invariant stabilizer code on the lattice $\Lambda = \mathbb{Z}^3$ as modules over the Laurent polynomial ring $R = \mathbb{F}_2[x^{\pm1}, y^{\pm1}, z^{\pm1}]$. A pedagogical review of this general formalism and its algebraic notation is provided in Appendix \ref{appendix_pedagogical_review}.
	
	Next, we illustrate how to represent a translational invariant stabilizer code in this formalism with the Haah's cubic code\footnote{Among Haah's original cubic-code family, we focus on the ``CSS Code 1'' model in Ref.~\cite{haah_2011_no_strings}.} example. The Haah's cubic code is defined on 3d cubic lattice, with each vertex of the lattice hosting two qubits. We first consider the code on the cubic lattice with three directions under infinite OBC. There are two stabilizer generators with respect to the base ring $R=\mathbb F_2[x^{\pm1},y^{\pm1},z^{\pm1}]$, namely, $S_X$ and $S_Z$, defined as following:
	\begin{widetext}
	\begin{gather}
		S_X=\begin{tikzpicture}[scale=1.5, baseline=2ex]
			\foreach \i/\j in {0/0, 1/0, 1/1, 0/1}{\draw (\i,\j,0.3)--(\i,\j,0.7);}
			\foreach \i/\k in {0/0, 1/0, 1/1, 0/1}{\draw (\i,0.15,\k)--(\i,0.85,\k);}
			\foreach \j/\k in {0/0, 1/0, 1/1, 0/1}{\draw (0.2,\j,\k)--(0.8,\j,\k);}
			\node[red] at (0,0,0) {$XX$};
			\node[red] at (1,0,0) {$IX$};
			\node[red] at (1,0,1) {$XI$};
			\node[red] at (0,0,1) {$IX$};
			\node[red] at (0,1,0) {$IX$};
			\node[red] at (1,1,0) {$XI$};
			\node at (1,1,1) {$II$};
			\node[red] at (0,1,1) {$XI$};
		\end{tikzpicture}=\bar x\bar y\bar z\Big((1+xy+xz+yz)X_1+(1+x+y+z)X_2\Big),\notag\\
		S_Z=\begin{tikzpicture}[scale=1.5, baseline=2ex]
			\foreach \i/\j in {0/0, 1/0, 1/1, 0/1}{\draw (\i,\j,0.3)--(\i,\j,0.7);}
			\foreach \i/\k in {0/0, 1/0, 1/1, 0/1}{\draw (\i,0.15,\k)--(\i,0.85,\k);}
			\foreach \j/\k in {0/0, 1/0, 1/1, 0/1}{\draw (0.2,\j,\k)--(0.8,\j,\k);}
			\node at (0,0,0) {$II$};
			\node[ForestGreen] at (1,0,0) {$IZ$};
			\node[ForestGreen] at (1,0,1) {$ZI$};
			\node[ForestGreen] at (0,0,1) {$IZ$};
			\node[ForestGreen] at (0,1,0) {$IZ$};
			\node[ForestGreen] at (1,1,0) {$ZI$};
			\node[ForestGreen] at (1,1,1) {$ZZ$};
			\node[ForestGreen] at (0,1,1) {$ZI$};
		\end{tikzpicture}=\bar x\bar y\bar z\Big((xy+xz+yz+xyz)Z_1+(x+y+z+xyz)Z_2\Big),
	\end{gather}
	\end{widetext}
	\noindent
	where $X_1,X_2,Z_1,Z_2$ are associated with the qubits at $[1,1,1]$ and $S_X,S_Z$ are associated with the cube $[1/2,1/2,1/2]$ (see Definition~\ref{def_coordinate_notation}). Any finite support stabilizer can be uniquely written as $uS_X+vS_Z$ with $u,v\in R$. Formally, we can regard $S_X,S_Z$ as just labels, spanning the generator module $G:=\text{span}_R\{S_X,S_Z\}\cong R^2$, and encode the information of what $S_X,S_Z$ actually stand for in a linear map
	\begin{equation}
		\sigma:G\to P.
	\end{equation}
	Choose the $G$ basis $S_X:=(1,0)^T$, $S_Z:=(0,1)^T$, and the $P$ basis $X_1:=(1,0;0,0)^T$, $X_2:=(0,1;0,0)^T$, $Z_1:=(0,0;1,0)$, $Z_2:=(0,0;0,1)^T$, then 
	\begingroup
	\small
	\begin{equation*}\label{eq_Haah_code_sigma}
		\sigma:=
		\begin{pmatrix}
			\bx\by\bz(1+xy+xz+yz) & 0\\
			\bx\by\bz(1+x+y+z) & 0\\
			0 & \bx\by\bz(xy+xz+yz+xyz)\\
			0 & \bx\by\bz(x+y+z+xyz)
		\end{pmatrix}
	\end{equation*}
	\endgroup
	encodes the information of $S_X,S_Z$.
	
	The syndrome caused by applying a Pauli operator $\mathsf p(U)$ for any $U\in P$ is described by another linear map
	\begin{equation}
		\epsilon:P\to E:=\text{span}_R(S_X,S_Z)\cong R^2,
	\end{equation}
	where $E$ is called the excitation module\footnote{Despite the name, the elements of $E$ are all possible syndromes, rather than the syndromes accessible by applying finite support Pauli operators to a ground state, nor the module of topological excitations (explained soon, for more details, see Ref.~\cite{haah_2013_modules}).}, and
	\begin{equation}
		\epsilon=\sigma^\dagger\lambda_2.
	\end{equation}
	$\lambda_2$ is the symplectic matrix
	\begin{equation}
		\lambda_2=\left(\begin{array}{cc}
			0 & 1_2 \\
			-1_2 & 0
		\end{array}\right),
	\end{equation}
	where $1_2$ is the $2\times 2$ identity matrix over $R$. Denoting 
	\begin{align}
		a:=1+xy+xz+yz&,\quad b:=1+x+y+z,\notag\\
		c:=xy+xz+yz+xyz&,\quad d:=x+y+z+xyz,\label{eq_Haah_code_abcd_def}
	\end{align}
	it can be straightforwardly calculated that
	\begin{equation}\label{eq_Haah_code_epsilon}
		\epsilon=
		\begin{pmatrix}
			0&0&d&c\\
			b&a&0&0
		\end{pmatrix}.
	\end{equation}
	It can be read from the above matrix that $\epsilon(X_1)=bS_Z$, the syndrome caused by applying $X_1$ is $bS_Z$; $\epsilon(X_2)=aS_Z$, the syndrome caused by applying $X_2$ is $aS_Z$. Other columns are read similarly. It follows that the submodule of syndromes accessible by applying finite support Pauli operators is
	\begin{equation}
		\imop\epsilon
		=
		R\vectwo{0}{b}
		+R\vectwo{0}{a}
		+R\vectwo{d}{0}
		+R\vectwo{c}{0}.
	\end{equation}
	$\imop\epsilon$ can be separated into the $S_X$ syndrome submodule $\mathcal I_X=(c,d)$ and the $S_Z$ syndrome submodule $\mathcal I_Z=(a,b)$, $\imop\epsilon=\mathcal I_X\oplus\mathcal I_Z$. Consequently, the module of topological excitations, i.e. the syndrome equivalence classes that cannot be annihilated by local/finite support\footnote{In the FTH context, local and finite support are not the same, as defined in Sec.~\ref{subsec_TH_FTH_difference}. However, in the original Haah's translational invariant stabilizer code formalism, they are identified. In the actual FTH construction, we will use only a small support representative of the topological excitation equivalent class. The concept of topological excitation equivalence class we introduce here is just for obtaining the excitation types.}Pauli operators, is\footnote{Rigorously speaking, the module of topological excitation equivalence classes is the torsion submodule of $\coker\epsilon$, which is obtained by requiring a single topological excitation to be creatable/annihilatable by possibly infinite support Pauli operators, with the ones creatable/annihilatable by finite support Pauli operators being the trivial ones\cite{haah_2013_modules}. Here it can be straightforwardly checked that the torsion submodule of $\coker\epsilon$ equals to $\coker\epsilon$ itself.} 
	\begin{equation}
		\coker\epsilon
		=
		E/\imop\epsilon
		\cong
		R/\cI_X\oplus R/\cI_Z.
	\end{equation}
	A natural set of generators of $\coker\epsilon$ is
	\begin{equation}
		e_X:=\vectwo{
			[1]_{\cI_X}}{
			{[0]_{\cI_Z}}},
		\qquad
		e_Z:=\vectwo{[0]_{\cI_X}}{{[1]_{\cI_Z}}},
	\end{equation}
	where $[u]_{\cI_X}:=u+\cI_X,\ [v]_{\cI_Z}:=v+\cI_Z$. These correspond to the two basic topological excitation types of Haah's cubic code,
	\begin{enumerate}
		\item $e_X$ is a basic excitation with the minimal support representation syndrome $S_X=-1$ excitation, with the annihilator $\ann(e_X)=\cI_X=(c,d)$.
		\item $e_Z$ is a basic excitation with the minimal support representation syndrome $S_Z=-1$ excitation, with the annihilator $\ann(e_Z)=\cI_Z=(a,b)$.
	\end{enumerate}
	The annihilator $\ann(e_X)=(c,d)$ implies $ce_X$ and $de_X$ are annihilatible by local Pauli operators. Indeed, any representative of $ce_X$ can be made into $cS_X$ by applying finite support Pauli, and then $cS_X$ can be annihilated by $Z_2$ since $\epsilon(Z_2)=cS_X$. Similarly, $de_X,ae_Z,be_Z$ can be annihilated by $Z_1,X_2,X_1$ with some other finite support Pauli, respectively.

	\subsection{Boundary data and setting (Haah's cubic code)}\label{subsec_Haah_Stage2}
	
	In this stage, we embed the Haah's cubic code into a $\mathbb Z\times\mathbb Z\times\mathbb Z_{L_z}$ cubic lattice, 
	where $x,y$-directions are infinite, under OBC, and $z$-direction is under OBC with length $L_z$ (smooth top and bottom boundaries). Then, we find out the independent generators of boundary gauge operators on both top and bottom boundaries, and calculate their boundary syndromes. Finally, we consider the two natural top boundaries, namely, $(Z)$ and $(X)$, claiming under whether $(Z)$ or $(X)$, there is no nontrivial finite support logical operator, with the technical details left to Appendix~\ref{appendix_Haah_code_FTH_theorems_and_proofs}.
	
	Denote the finite-support stabilizer module, Pauli module and excitation module of Haah's code on $\mathbb Z^3$ as $G,P,E$, respectively. Consider the truncation $\pi$ of such a Haah's code onto a $\mathbb{Z}\times\mathbb{Z}\times\mathbb{Z}_{L_z}$ lattice, where $z$-direction is under finite OBC with cells from $z=0$ to $z=L_z-1$. $\pi P$ is the Pauli group modulo phase factors supported on the truncated lattice. Denote the bulk stabilizer (see Definition~\ref{def_stabilizers_and_gauge_operators}) group as $\mathcal S_B$. Truncated stabilizer group $\mathcal S_T$ is the group enlarged from $\mathcal S_B$ that includes the truncated stabilizers on boundary, $\mathcal S_B\subset\mathcal S_T$. Following Definition~\ref{def_uniformly_local_operators} in Sec.~\ref{subsec_TH_FTH_difference}, ``local'' means uniformly local (Definition~\ref{def_uniformly_local_operators}) in this stage. We choose a large enough $L_z$ ($L_z>2h$), so that local operators near top and bottom boundary are well distinguished. Using Algorithm 1 in Ref.~\cite{Chen_Yu_An_2024}, it is checked that there is no nontrivial secondary boundary gauge operator (Definition~\ref{def_stabilizers_and_gauge_operators}) under our setting, so $\mathcal S_B^\Omega = \mathcal S_T$, where $W^\Omega$ is the symplectic complement of $W\subset\pi P$\footnote{Here $S_B^\Omega$ is the module of Pauli operators commutable with $\mathcal S_B$. In our context, $\mathcal S_B^\Omega$ contains boundary gauge operators that are not uniformly local along $x,y$-directions, but for a large enough $h$, we can always choose a set of local generators of boundary gauge operators.}. The boundary gauge group is $\mathcal G\equiv \mathcal  S_B^\Omega/\mathcal S_B = \mathcal S_T/\mathcal S_B$, obtained by regarding boundary gauge operators differing by a stabilizer equivalent. Separate the boundary gauge group into two parts, $\mathcal{G}^{\text{top}}$ and $\mathcal{G}^{\text{bot}}$.
	
	In the truncated lattice, it is natural to define the base ring
	\begin{equation}
		R=\mathbb Z_2[x^{\pm1},y^{\pm1}]\,,
	\end{equation}
	so that
	\begin{gather}
		\pi P\cong R^{4L_z},\quad \mathcal S_B\cong R^{2(L_z-1)},\quad \mathcal S_B^\Omega=\mathcal S_T\cong R^{2(L_z+1)},\notag\\
		\mathcal G=\mathcal S_T/\mathcal S_B\cong R^4,\quad \mathcal G^{\text{top}}\cong R^2,\quad \mathcal G^{\text{bot}}\cong R^2\,,
	\end{gather}
	where $\mathcal G^{\text{top}}$ is generated by the top boundary truncated $S_X,S_Z$\footnote{Note that $\mathcal G^{\text{top}}$ does not refer to the quotient module modulo stabilizers, but the module directly generated by top boundary truncated $S_X,S_Z$. Similarly, $\mathcal G^{\text{bot}}$ is also not a quotient module.}, and $\mathcal G^{\text{bot}}$ is generated by the bottom boundary truncated $S_X,S_Z$. Denote the truncated $S_X,S_Z$ on the bottom, top boundaries as
	\begin{align}
		\mathcal G_1^{\text{bot}} := S_X^{\text{bot}}\equiv\pi\bz S_X&,\ \ \mathcal G_2^{\text{bot}} := S_Z^{\text{bot}}\equiv\pi\bz S_Z,\label{def_boundary_gauge_bot}\\
		\mathcal G_1^{\text{top}} := S_X^{\text{top}}\equiv \pi z^{L_z-1}S_X&,\ \ \mathcal G_2^{\text{top}} := S_Z^{\text{top}}\equiv \pi z^{L_z-1}S_Z.\label{def_boundary_gauge_top}
	\end{align}
	Now a composite supercell w.r.t. $R$ is all union of all cells with the same $x,y$-coordinates. In bulk, the analogue of $\sigma,\epsilon$ become
	\begin{equation}
		\sigma_B:\mathcal S_B\to\pi P
	\end{equation}
	and
	\begin{equation}
		\epsilon_B:\pi P\to E_B\,,
	\end{equation}
	where $E_B\cong \mathcal S_B$ is the bulk excitation module.
	
	To facilitate analysis near the boundaries, we can separate the stabilizer module $\mathcal S_B$, the truncated Pauli module $\pi P$, and the bulk excitation module $E_B$ into layer-by-layer submodules. For our calculations, we define the shorthand polynomials:
	\begin{gather}\label{eq_Haah_code_FTH_15}
			A :=1+x+y,\ \  B:=1+xy,\ \  C:=1,\ \ D :=x+y,\notag\\
			F :=AD+B=1+x+x^2+y+y^2+xy\,,
	\end{gather}
	which will be massively used. The detailed layer-by-layer bulk syndrome and stabilizer calculations for $\sigma_B$ and $\epsilon_B$ serve as lemmas for the boundary analysis and are deferred to Appendix \ref{appendix_layer_by_layer_calculations}.
	
	The boundary topological excitations are recognized as specific boundary gauge syndromes. Analogous to the bulk excitation module $E$, for each $\square=\text{bot},\text{top}$, we define the boundary excitation module
	\[
	E^\square:=\bigoplus_i R\mathcal G_i^\square ,
	\]
	whose elements represent the syndrome/violation of boundary gauge operators. Also, define two linear maps for each boundary $\square$:
	\begin{equation}
		\sigma^\square:\mathcal G^\square\to \pi P\,,
	\end{equation}
	which encodes the information of boundary gauge operators, and 
	\begin{equation}
		\epsilon^\square:\pi P\to E^\square\,,
	\end{equation}
	which tells the boundary gauge syndrome caused by finite support Pauli operators. In this boundary formalism, the boundary gauge operator generators should be written as $\sigma^\square(\mathcal G_i^\square)$ rigorously, but for simplicity, we will sometimes use $\mathcal G_i^\square$ to represent $\sigma^\square(\mathcal G_i^\square)$ when there is no ambiguity.

	In the calculation of boundary gauge operators, we can replace $\pi P$ by the boundary Pauli submodule $P^\square$, in order to implement small matrix calculation. Here the natural definition of $P^\square$ is the minimal Pauli submodule of layers that boundary gauges $\mathcal G_i^\square$ involve. For example, in the setting of this section, the boundary gauge operators on the bottom boundary only involve $P_0$ (i.e. the Pauli submodule in the $z=0$ layer, see Appendix~\ref{appendix_layer_by_layer_calculations} for definition), so $P^{\text{bot}}=P_0$; the boundary gauge operators on the top boundary only involve $P_{L_z-1}$, so $P^{\text{top}}=P_{L_z-1}$. If, say, for another code FTH, the bottom boundary gauge operators involve Paulis in $P_0,P_1,P_2$, then $P^{\text{bot}}=P_0\oplus P_1\oplus P_2$. For a $\ell$-layer boundary Pauli submodule $P^\square$, 
	\begin{equation}
		\epsilon^\square = (\sigma^\square)^\dagger\lambda_{q\ell}\,,
	\end{equation}
	where
	\begin{equation}
		\lambda_{q\ell}=\left(\begin{array}{cc}
			0&I_{q\ell}\\-I_{q\ell}&0
		\end{array}\right)\,,
	\end{equation}
	$q$ is the number of qudits per cell. For Haah's code, $q=2$, $\ell=1$.

	Unlike in bulk, where the equivalence class of (point-like)\footnote{We focus on the codes where the basic topological excitations are point-like in this paper. 2d Toric code, X-cube and Haah's code all have only point-like topological excitations, no matter in bulk or on boundary.} topological excitation is differing by multiplying any finite support Pauli operators, the equivalence class of topological excitation is differing by multiplying any finite support boundary gauge operators. So the math object corresponding to boundary topological excitation is not $\text{coker}\,\epsilon^\square$, but $\text{coker}\,\eta^\square$\footnote{Analogous to the point-like bulk topological excitation module, the point-like $\square$ boundary topological excitations form the torsion submodule of $\coker\eta^\square$. In Appendix~\ref{appendix_boundary_gauge_syndrome_matrices} it is straightforwardly checked that the torsion submodule of $\coker\eta^\square$ equals to $\coker\eta^\square$ for both $\square=\text{bot},\text{top}$.}, where
	\begin{equation}
		\eta^\square\equiv \epsilon^\square\circ\sigma^\square\,.
	\end{equation}
	A set of generators of $\text{coker}\,\eta^\square$ is a set of basis boundary topological excitations on boundary $\square$.

	Now consider the bottom boundary. The two bottom boundary gauge operator generators $\mathcal G_1^{\text{bot}}$ and $\mathcal G_2^{\text{bot}}$ are
	\begin{gather}\label{eq_Haah_code_FTH_14}
		\sigma^{\text{bot}}(\mathcal G_1^{\text{bot}}) = \pi \bar z S_X = \bar x\bar y\bar z\bigl(DX_1+X_2\bigr)
	\end{gather}
	and
	\begin{equation}
		\sigma^{\text{bot}}(\mathcal G_2^{\text{bot}}) = \pi \bar z S_Z = \bar x\bar y\bar z\bigl(xy\bA Z_1+BZ_2\bigr)\,.
	\end{equation}
	For simplicity, we sometimes use $\mathcal G_{1,2}^{\text{bot}}$ to represent $\sigma^{\text{bot}}(\mathcal G_{1,2}^{\text{bot}})$ when there is no ambiguity. By computing the boundary gauge syndrome maps $\epsilon^{\text{bot}}$ and $\eta^{\text{bot}} = \epsilon^{\text{bot}}\sigma^{\text{bot}}$, we obtain the boundary gauge syndromes of two boundary gauge generators,
	\begin{equation}\label{eq_bottom_boundary_gauge_syndrome}
		\eta^{\text{bot}}\mathcal G_1^{\text{bot}}=\bar x\bar yF\mathcal G_2^{\text{bot}},\qquad
		\eta^{\text{bot}}\mathcal G_2^{\text{bot}}=xy\bar F\mathcal G_1^{\text{bot}}.
	\end{equation}
	There are two basic point-like bottom boundary topological excitations, $e_X^{\text{bot}}$ and $e_Z^{\text{bot}}$, which can be singly created by infinite-support boundary gauge operators. The detailed calculations of $\epsilon^\square$, $\eta^\square$, $\coker\eta^\square$, torsion submodules are given in Appendix~\ref{appendix_boundary_gauge_syndrome_matrices}, and the construction of infinite-support single $e_X^{\text{bot}},e_Z^{\text{bot}}$ creators is given in Appendix \ref{appendix_point_like_excitations}.

	Now consider the top boundary. The two top boundary gauge operator generators $\mathcal G_1^{\text{top}}$ and $\mathcal G_2^{\text{top}}$ are
	\begin{gather}
		\sigma^{\text{top}}(\mathcal G_1^{\text{top}}) = \pi z^{L_z-1} S_X
		= \bar x\bar y z^{L_z-2}\bigl(BX_1+AX_2\bigr)
	\end{gather}
	and
	\begin{equation}
		\sigma^{\text{top}}(\mathcal G_2^{\text{top}}) = \pi z^{L_z-1} S_Z
		= \bar x\bar y z^{L_z-2}\bigl(xy\,Z_1+DZ_2\bigr)\,,
	\end{equation}
	where $A,B,D$ are defined in Eq.~(\ref{eq_Haah_code_FTH_15}). Similarly, by computing the top boundary gauge syndrome maps $\epsilon^{\text{top}}$ and $\eta^{\text{top}} = \epsilon^{\text{top}}\sigma^{\text{top}}$, we obtain the top boundary gauge syndromes of two top boundary gauge generators,
	\begin{equation}
		\eta^{\text{top}}\mathcal G_1^{\text{top}} = \bx\by F\mathcal G_2^{\text{top}},\qquad \eta^{\text{top}}\mathcal G_2^{\text{top}} = xy\bF\mathcal G_1^{\text{top}}.
	\end{equation}
	There are two basic point-like top boundary topological excitations $e_X^{\text{top}}$ and $e_Z^{\text{top}}$,  which can be singly created by infinite-support boundary gauge operators. The detailed calculations for the top boundary are left to Appendices~\ref{appendix_boundary_gauge_syndrome_matrices},\ref{appendix_point_like_excitations}.

	Next, we discuss the top boundary TO completion of the Haah's code. Haah's code has two kinds of straightforward top boundaries satisfying TO condition, $e_Z^{\text{top}}$ condensed top boundary, denoted by $(Z)$, where only truncated $S_X$ are valid (i.e. in the stabilizer set of FTH Hamiltonian); and (composite) $e_X^{\text{top}}$ condensed top boundary, denoted by $(X)$, where only truncated $S_Z$ are valid. We describe these two top boundaries as follows.

	\subsubsection{The (Z) top boundary (\texorpdfstring{$e_Z$}{e\_Z} fracton condensed)}\label{subsubsec_Haah_Z_top_boundary}
	
	The $(Z)$ top boundary is realized by adding the truncated $S_X$ terms (i.e. $\mathcal G_1^{\text{top}}$) on the top boundary into the Hamiltonian, while not adding truncated $S_Z$ terms (i.e. $\mathcal G_2^{\text{top}}$). The $\mathcal G_2^{\text{top}}$ syndromes are created by $z^{L_z-2}X_1$ and $z^{L_z-2}X_2$. We have calculated in Eq.~(\ref{eq_Haah_code_FTH_18}) that $z^{L_z-2}X_1$ violates $\mathcal G_2^{\text{top}}$ with the configuration $1$, and $z^{L_z-2}X_2$ violates $\mathcal G_2^{\text{top}}$ with the configuration $D=x+y$. So, on the $(Z)$ top boundary, the condensed submodule $\mathcal C_Z^{\text{top}}$ is
	\begin{equation}
		\mathcal C_Z^{\text{top}} = (1,x+y)\mathcal G_2^{\text{top}} = R\mathcal G_2^{\text{top}} \subset E^{\text{top}}\,,
	\end{equation}
	which means the $e_Z$ excitations condense on the top boundary with any configuration. There is no nontrivial finite-support (nor to say uniformly local) logical operator on the $(Z)$ top boundary (see Appendix~\ref{appendix_no_nontrivial_local_logical}), the TO condition is satisfied.

	\subsubsection{The (X) top boundary (\texorpdfstring{$e_X$}{e\_X} fracton condensed)}\label{subsubsec_Haah_X_top_boundary}
	
	On the other hand, the $(X)$ top boundary is realized by adding the truncated $S_Z$ terms (i.e. $\mathcal G_2^{\text{top}}$) on the top boundary into the Hamiltonian, while not adding truncated $S_X$ terms (i.e. $\mathcal G_1^{\text{top}}$). Under this setting, though all the $\mathcal G_1^{\text{top}}$ syndromes do not raise energy, not all $\mathcal G_1^{\text{top}}$ are available by applying finite support Pauli operators. We analyze what syndromes of $\mathcal G_1^{\text{top}}$ can be sent to the top boundary (to condense) by Pauli operators, and claim the corresponding composite $e_X$ are condensed. The $\mathcal G_1^{\text{top}}$ syndromes are created by $z^{L_z-2}Z_1$ and $z^{L_z-2}Z_2$. We have calculated in Eq.~(\ref{eq_Haah_code_FTH_19}) that $z^{L_z-2}Z_1$ violates $\mathcal G_1^{\text{top}}$ with the configuration $B=1+xy$, and $z^{L_z-2}Z_2$ violates $\mathcal G_1^{\text{top}}$ with the configuration $xy\bar A=x+y+xy$. So, on the $(X)$ top boundary, the condensed submodule $\mathcal C_X^{\text{top}}$ is
	\begin{equation}
		\mathcal C_{X}^{\text{top}} = (1+xy,x+y+xy)\mathcal G_1^{\text{top}} \subset E^{\text{top}}\,,
	\end{equation}
	which means the $e_X$ excitations condense on the top boundary with the configuration in the ideal $(1+xy,x+y+xy)$. There is no nontrivial finite-support (nor to say uniformly local) logical operator on the $(X)$ top boundary (see Appendix~\ref{appendix_no_nontrivial_local_logical}), the TO condition is satisfied.

	In Ref.~\cite{Haah_code_boundary_2024}, the $(Z)$ top boundary and $(X)$ top boundary are called positive $Z$-type boundary $(e)$ and positive $X$-type boundary $(m_{ABC})$, respectively. For the $(X)$ top boundary, they classified the three units $x,y,xy$ in the generator $x+y+xy$ as three types of excitations, called $A,B,C$, which can be moved by $1+xy$ along the diagonal line. They claimed that the behavior of $A,B,C$ is like the excitations in color code. We do not use these notations here.

	\subsection{Holographic sandwich (Haah's cubic code)}\label{subsec_Haah_Stage3}

	Having made the preparations in Stage 2, we are now ready to proceed to Stage 3: constructing holographic sandwich with Haah's code bulk. For any one of $(Z)$ and $(X)$ top boundary, we first construct the pure transport operators with only bottom boundary gauge syndrome, then identify the constructed transport operator bottom boundary gauge operators as Pauli operators of the identified 2d system. By fixing a minimal low-energy effective Hamiltonian $\tilde H$, we obtain a identified 2d qubit model on square lattice.

	\subsubsection{Under the (Z) top boundary}\label{subsec_Haah_code_FTH_Z_top}

	\begin{figure*}[htbp]
		\centering
		\begin{tikzpicture}
			
			% graph explanation
			\begin{scope}[yshift=10.3cm]
				\fill[white] (-3,0,0) circle (0.1cm);
				\fill[red] (-3,0,0) ++(90:0.1cm) arc[start angle=90,end angle=270,radius=0.1cm] -- (-3,0,0) -- cycle;
				\draw[black, line width=0.6pt] (-3,0,0) circle (0.1cm);
				\node at (-2.45,0,0) {$:X_1$};
			\end{scope}

			\begin{scope}[xshift=2cm, yshift=10.3cm]
				\fill[white] (-3,0,0) circle (0.1cm);
				\fill[red] (-3,0,0) ++(90:0.1cm) arc[start angle=90,end angle=-90,radius=0.1cm] -- (-3,0,0) -- cycle;
				\draw[black, line width=0.6pt] (-3,0,0) circle (0.1cm);
				\node at (-2.45,0,0) {$:X_2$};
			\end{scope}

			\begin{scope}[xshift=4cm, yshift=10.3cm]
				\fill[white] (-3,0,0) circle (0.1cm);
				\fill[red] (-3,0,0) circle (0.1cm);
				\draw[black, line width=0.6pt] (-3,0,0) circle (0.1cm);
				\node at (-2.25,0,0) {$:X_1X_2$};
			\end{scope}

			\begin{scope}[xshift=7cm, yshift=10.3cm]
				\fill[white] (-3,0,0) circle (0.1cm);
				\fill[ForestGreen] (-3,0,0) ++(90:0.1cm) arc[start angle=90,end angle=270,radius=0.1cm] -- (-3,0,0) -- cycle;
				\draw[black, line width=0.6pt] (-3,0,0) circle (0.1cm);
				\node at (-2.45,0,0) {$:Z_1$};
			\end{scope}

			\begin{scope}[xshift=9cm, yshift=10.3cm]
				\fill[white] (-3,0,0) circle (0.1cm);
				\fill[ForestGreen] (-3,0,0) ++(90:0.1cm) arc[start angle=90,end angle=-90,radius=0.1cm] -- (-3,0,0) -- cycle;
				\draw[black, line width=0.6pt] (-3,0,0) circle (0.1cm);
				\node at (-2.45,0,0) {$:Z_2$};
			\end{scope}

			\begin{scope}[xshift=11cm, yshift=10.3cm]
				\fill[white] (-3,0,0) circle (0.1cm);
				\fill[ForestGreen] (-3,0,0) circle (0.1cm);
				\draw[black, line width=0.6pt] (-3,0,0) circle (0.1cm);
				\node at (-2.25,0,0) {$:Z_1Z_2$};
			\end{scope}

			% boundary gauge operator 1
			\begin{scope}[
				x={(-0.6cm,-0.2cm)}, 
				y={(0.6cm,-0.15cm)},
				z={(0cm,1cm)}
				, yshift=7.5cm]
				% 1. 底面网格 (z=0 平面)
				\foreach \i in {0,...,4} \draw[gray, opacity=0.55, line width=0.4pt] (\i,-1,0) -- (\i,5,0);
				\foreach \j in {0,...,4} \draw[gray, opacity=0.55, line width=0.4pt] (-1,\j,0) -- (5,\j,0);

				% 4. 显式的坐标轴箭头（稍微加长到 5.2，标出核心方向）
				\draw[-latex, thick, black!70] (0,0,0) -- (5.2,0,0) node[below left] {$x$};
				\draw[-latex, thick, black!70] (0,0,0) -- (0,5.2,0) node[below right] {$y$};
				\draw[-latex, thick, black!70] (0,0,0) -- (0,0,1.2) node[above] {$z$};

				% bottom boundary gauge syndrome
				\foreach \i/\j in {0/0, 1/0, 2/0, 0/1, 0/2, 1/1}{
					\fill[ForestGreen, opacity=0.25] (\i,\j,0)--(\i-1,\j,0)--(\i-1,\j-1,0)--(\i,\j-1,0)--cycle;
				}

				% boundary gauge operator 1 times xy
				\foreach \i/\j in {1/0, 0/1}{
					\fill[white] (\i,\j,0) circle (0.1cm);
					\fill[red] (\i,\j,0) ++(90:0.1cm) arc[start angle=90,end angle=270,radius=0.1cm] -- (\i,\j,0) -- cycle;
					\draw[black, line width=0.6pt] (\i,\j,0) circle (0.1cm);
				}
				\foreach \i/\j in {0/0}{
					\fill[white] (\i,\j,0) circle (0.1cm);
					\fill[red] (\i,\j,0) ++(90:0.1cm) arc[start angle=90,end angle=-90,radius=0.1cm] -- (\i,\j,0) -- cycle;
					\draw[black, line width=0.6pt] (\i,\j,0) circle (0.1cm);
				}
				
			\end{scope}

			\begin{scope}[yshift=7.5cm]
				\draw[->] (3.5+0.4,0.3,0)--(5+0.4,0.3,0);
				\node at (4.25+0.4,0.6,0) {$\Phi_Z$};
				
				\node at (-4,1.2,0) {(a)};
			\end{scope}

			\begin{scope}[xshift=7cm, yshift=6.3cm, scale=0.6]
				\draw[-latex, thick, black!70] (0,0,0) -- (5.6,0) node[below left] {$x$};
				\draw[-latex, thick, black!70] (0,0,0) -- (0,5.6) node[below right] {$y$};
				\foreach \i in {0,...,4} \draw[gray, opacity=0.55, line width=0.4pt] (\i,-1,0) -- (\i,5,0);
				\foreach \j in {0,...,4} \draw[gray, opacity=0.55, line width=0.4pt] (-1,\j,0) -- (5,\j,0);
				
				\foreach \i/\j in {0/0, 1/0, 2/0, 0/1, 0/2, 1/1}{
					\node at (\i-0.5,\j-0.5) {$\tilde Z$};
					\fill[ForestGreen, opacity=0.25] (\i,\j,0)--(\i-1,\j,0)--(\i-1,\j-1,0)--(\i,\j-1,0)--cycle;
				}
			\end{scope}

			% e_Z transport operator
			\begin{scope}[
				x={(-0.6cm,-0.2cm)}, 
				y={(0.6cm,-0.15cm)},
				z={(0cm,0.9cm)}
				]
				% 1. 底面网格 (z=0 平面)
				\foreach \i in {0,...,4} \draw[gray, opacity=0.55, line width=0.4pt] (\i,-1,0) -- (\i,5,0);
				\foreach \j in {0,...,4} \draw[gray, opacity=0.55, line width=0.4pt] (-1,\j,0) -- (5,\j,0);
				
				% 2. 左背墙网格 (y=0 平面)
				\foreach \i in {0,...,4} \draw[gray, opacity=0.25, line width=0.4pt] (\i,0,0) -- (\i,0,4);
				\foreach \k in {0,...,4} \draw[gray, opacity=0.25, line width=0.4pt] (0,0,\k) -- (5,0,\k);
				
				% 3. 右背墙网格 (x=0 平面)
				\foreach \j in {0,...,4} \draw[gray, opacity=0.25, line width=0.4pt] (0,\j,0) -- (0,\j,4);
				\foreach \k in {0,...,4} \draw[gray, opacity=0.25, line width=0.4pt] (0,0,\k) -- (0,5,\k);
				
				% 4. 显式的坐标轴箭头（稍微加长到 5.2，标出核心方向）
				\draw[-latex, thick, black!70] (0,0,0) -- (5.2,0,0) node[below left] {$x$};
				\draw[-latex, thick, black!70] (0,0,0) -- (0,5.2,0) node[below right] {$y$};
				\draw[-latex, thick, black!70] (0,0,0) -- (0,0,4.6) node[above] {$z$};

				% layer indicator
				\fill[gray, opacity=0.25] (0,0,1)--(3,0,1)--(0,3,1)--cycle;
				\fill[gray, opacity=0.25] (0,0,2)--(2,0,2)--(0,2,2)--cycle;
				\fill[gray, opacity=0.25] (0,0,3)--(1,0,3)--(0,1,3)--cycle;

				% bottom boundary gauge syndrome
				\foreach \i/\j in {0/0, 1/0, 0/1, 4/0, 5/0, 4/1, 0/4, 1/4, 0/5}{
					\fill[ForestGreen, opacity=0.25] (\i,\j,0)--(\i-1,\j,0)--(\i-1,\j-1,0)--(\i,\j-1,0)--cycle;
				}
				
				% z=4 layer
				\foreach \i/\j in {0/0}{
					\fill[white] (\i,\j,4) circle (0.1cm); 
					\fill[red] (\i,\j,4) ++(90:0.1cm) arc[start angle=90,end angle=270,radius=0.1cm] -- (\i,\j,4) -- cycle; 
					\draw[black, line width=0.6pt] (\i,\j,4) circle (0.1cm); 
				}
				
				% z=3 layer
				\foreach \i/\j in {0/0, 1/0, 0/1}{
					\fill[white] (\i,\j,3) circle (0.1cm);
					\fill[red] (\i,\j,3) ++(90:0.1cm) arc[start angle=90,end angle=270,radius=0.1cm] -- (\i,\j,3) -- cycle;
					\draw[black, line width=0.6pt] (\i,\j,3) circle (0.1cm);
				}
				
				% z=2 layer
				\foreach \i/\j in {0/0, 2/0, 0/2}{
					\fill[white] (\i,\j,2) circle (0.1cm);
					\fill[red] (\i,\j,2) ++(90:0.1cm) arc[start angle=90,end angle=270,radius=0.1cm] -- (\i,\j,2) -- cycle;
					\draw[black, line width=0.6pt] (\i,\j,2) circle (0.1cm);
				}
				
				% z=1 layer
				\foreach \i/\j in {0/0, 1/0, 0/1, 2/0, 3/0, 2/1, 0/2, 1/2, 0/3}{
					\fill[white] (\i,\j,1) circle (0.1cm);
					\fill[red] (\i,\j,1) ++(90:0.1cm) arc[start angle=90,end angle=270,radius=0.1cm] -- (\i,\j,1) -- cycle;
					\draw[black, line width=0.6pt] (\i,\j,1) circle (0.1cm);
				}
				
				% z=0 layer
				\foreach \i/\j in {0/0, 4/0, 0/4}{
					\fill[white] (\i,\j,0) circle (0.1cm);
					\fill[red] (\i,\j,0) ++(90:0.1cm) arc[start angle=90,end angle=270,radius=0.1cm] -- (\i,\j,0) -- cycle;
					\draw[black, line width=0.6pt] (\i,\j,0) circle (0.1cm);
				}
			\end{scope}

			\draw[->] (3.5+0.4,1.3,0)--(5+0.4,1.3,0);
			\node at (4.25+0.4,1.6,0) {$\Phi_Z$};
			
			\node at (-4,4.2,0) {(b)};

			\begin{scope}[xshift=7cm, yshift=0.1cm, scale=0.6]
				\draw[-latex, thick, black!70] (0,0,0) -- (5.6,0) node[below left] {$x$};
				\draw[-latex, thick, black!70] (0,0,0) -- (0,5.6) node[below right] {$y$};
				\foreach \i in {0,...,4} \draw[gray, opacity=0.55, line width=0.4pt] (\i,-1,0) -- (\i,5,0);
				\foreach \j in {0,...,4} \draw[gray, opacity=0.55, line width=0.4pt] (-1,\j,0) -- (5,\j,0);
				
				\foreach \i/\j in {0/0, 1/0, 0/1, 4/0, 5/0, 4/1, 0/4, 1/4, 0/5}{
					\node at (\i-0.5,\j-0.5) {$\tilde Z$};
					\fill[ForestGreen, opacity=0.25] (\i,\j,0)--(\i-1,\j,0)--(\i-1,\j-1,0)--(\i,\j-1,0)--cycle;
				}
			\end{scope}

			% boundary gauge operator 2
			\begin{scope}[
				x={(-0.6cm,-0.2cm)}, 
				y={(0.6cm,-0.15cm)},
				z={(0cm,1cm)}, yshift=-4cm
				]
				% 1. 底面网格 (z=0 平面)
				\foreach \i in {0,...,4} \draw[gray, opacity=0.55, line width=0.4pt] (\i,-1,0) -- (\i,5,0);
				\foreach \j in {0,...,4} \draw[gray, opacity=0.55, line width=0.4pt] (-1,\j,0) -- (5,\j,0);

				% 4. 显式的坐标轴箭头（稍微加长到 5.2，标出核心方向）
				\draw[-latex, thick, black!70] (0,0,0) -- (5.2,0,0) node[below left] {$x$};
				\draw[-latex, thick, black!70] (0,0,0) -- (0,5.2,0) node[below right] {$y$};
				\draw[-latex, thick, black!70] (0,0,0) -- (0,0,1.2) node[above] {$z$};

				% bottom boundary gauge syndrome
				\foreach \i/\j in {2/2, 1/2, 0/2, 2/1, 2/0, 1/1}{
					\fill[red, opacity=0.25] (\i,\j,0)--(\i-1,\j,0)--(\i-1,\j-1,0)--(\i,\j-1,0)--cycle;
				}

				% boundary gauge operator 2 times xy
				\foreach \i/\j in {1/0, 0/1}{
					\fill[white] (\i,\j,0) circle (0.1cm);
					\fill[ForestGreen] (\i,\j,0) ++(90:0.1cm) arc[start angle=90,end angle=270,radius=0.1cm] -- (\i,\j,0) -- cycle;
					\draw[black, line width=0.6pt] (\i,\j,0) circle (0.1cm);
				}
				\foreach \i/\j in {0/0}{
					\fill[white] (\i,\j,0) circle (0.1cm);
					\fill[ForestGreen] (\i,\j,0) ++(90:0.1cm) arc[start angle=90,end angle=-90,radius=0.1cm] -- (\i,\j,0) -- cycle;
					\draw[black, line width=0.6pt] (\i,\j,0) circle (0.1cm);
				}
				\foreach \i/\j in {1/1}{
					\fill[ForestGreen] (\i,\j,0) circle (0.1cm);
					\draw[black, line width=0.6pt] (\i,\j,0) circle (0.1cm);
				}
			\end{scope}

			\begin{scope}[yshift=-4.1cm]
				\draw[->] (3.5+0.4,0.3,0)--(5+0.4,0.3,0);
				\node at (4.25+0.4,0.6,0) {$\Phi_Z$};
				
				\node at (-4,1.2,0) {(c)};
			\end{scope}

			\begin{scope}[xshift=7cm, yshift=-5.3cm, scale=0.6]
				\draw[-latex, thick, black!70] (0,0,0) -- (5.6,0) node[below left] {$x$};
				\draw[-latex, thick, black!70] (0,0,0) -- (0,5.6) node[below right] {$y$};
				\foreach \i in {0,...,4} \draw[gray, opacity=0.55, line width=0.4pt] (\i,-1,0) -- (\i,5,0);
				\foreach \j in {0,...,4} \draw[gray, opacity=0.55, line width=0.4pt] (-1,\j,0) -- (5,\j,0);
				
				\foreach \i/\j in {1/1}{
					\node at (\i-0.5,\j-0.5) {$\tilde X$};
					\fill[red, opacity=0.25] (\i,\j,0)--(\i-1,\j,0)--(\i-1,\j-1,0)--(\i,\j-1,0)--cycle;
				}
			\end{scope}
		\end{tikzpicture}
		\caption[Operator identification of Haah's code FTH under the Z top boundary]{The illustration of operator identification of Haah's code FTH under $(Z)$ top boundary. The red circles and green circles represent Pauli $X$ and $Z$, respectively, as shown on the top of this figure. On the left side, generators of low-energy preserving operators and their bottom boundary gauge syndromes are illustrated. Green squares represent $\mathcal G_2^{\text{bot}}$ violation, red squares represent $\mathcal G_1^{\text{bot}}$ violation. The right side illustrates the identified operators in 2d system under $(Z)$ top boundary. (a) The bottom boundary gauge operator $\mathcal G_1^{\text{bot}}:=S_X^{\text{bot}}$, which has the bottom boundary gauge syndrome $\bar x\bar y F\mathcal G_2^{\text{bot}}$, is identified as $\bar x\bar y F\tilde Z$. (b) The $e_Z$-transport operator along $z$-direction $\bar x\bar y\mathcal T_{e_Z,z}$, which has the bottom boundary gauge syndrome $\bar x\bar y A^{L_z}\mathcal G_2^{\text{bot}}$, is identified as $\bar x\bar y A^{L_z}\tilde Z$. Note that $A^{L_z}$ is $L_z$ dependent, here $L_z=5$. (c) The bottom boundary gauge operator $\mathcal G_2^{\text{bot}}:=S_Z^{\text{bot}}$, which has the bottom boundary gauge syndrome $xy\bar F\mathcal G_1^{\text{bot}}$, is identified as $\tilde X$.}
		\label{fig_Haah_code_FTH_operator_illustration_Z_top}
	\end{figure*}

	Let's first consider the top boundary $(Z)$. 
	
	Like in the 2dTC TH and X-cube FTH, we choose the minimal support representative of $e_Z^{\text{bot}}$ to represent $e_Z^{\text{bot}}$, which is $\mathcal G_2^{\text{bot}}$. Under the $(Z)$ top boundary, the operators that creates solely $e_Z^{\text{bot}}$ but no other excitation are identified as the product of $\tilde{Z}$. The operator that solely creates $e_Z^{\text{bot}}$ with configuration $u\in R$ on the bottom boundary is identified as $u\tilde{Z}$. According to Eq.~(\ref{eq_Haah_code_FTH_20}), this directly leads to the identification of $\mathcal G_1^{\text{bot}}$,
	\begin{equation}\label{eq_Haah_code_FTH_gauge_1_Z_identification}
		\Phi_Z\big([\mathcal G_1^{\text{bot}}]\big) = \bx\by F\tilde{Z}\,,
	\end{equation}
	where $\Phi_Z$ is an $R$-linear symplectic homomorphism, the subscript $Z$ stands for the top boundary $(Z)$, and $[]$ here stands for the equivalence class with the equivalence relation being differing by a stabilizer. We illustrate $\mathcal G_1^{\text{bot}}$, its bottom gauge syndrome, and its identification in Fig.~\ref{fig_Haah_code_FTH_operator_illustration_Z_top}(a). Then, we have the pure $e_Z$-transport operator along $z$-direction $\mathcal T_{e_Z,z}$, which leaves nothing but an $e_Z^{\text{bot}}$ configuration $A^{L_z}$ on the bottom boundary [see Eq.~(\ref{def_e_Z_transport_operator}) for definition and Appendix~\ref{appendix_derive_all_low_energy_preserving_operators_Z_top} for details], where $A=1+x+y$ is defined in Eq.~(\ref{eq_Haah_code_FTH_15}). $R$-linearity requires that
	\begin{equation}\label{eq_Haah_code_FTH_e_Z_transport_identification}
		\Phi_Z\big([\mathcal T_{e_Z,z}]\big) = A^{L_z}\tilde Z\,.
	\end{equation}
	$\mathcal T_{e_Z,z}$ with $L_z=5$, its boundary gauge syndrome, and its identification is illustrated in Fig.~\ref{fig_Haah_code_FTH_operator_illustration_Z_top}(b). Then, by the symplectic homomorphism condition of $\Phi_Z$, we get
	\begin{equation}\label{eq_Haah_code_FTH_gauge_2_Z_identification}
		\Phi_Z\big([\mathcal G_2^{\text{bot}}]\big) = \tilde X + v\tilde Z,
	\end{equation}
	where $v=\bar v$. For the minimal low-energy Hamiltonian, the non-zero $v$ case differs from the zero $v$ case by a finite-depth LU circuit, as shown in Appendix~\ref{appendix_LU_circuit_v_neq_0}. We may choose $v=0$ for simplicity. $\mathcal G_2^{\text{bot}}$, its bottom boundary gauge syndrome, and its identification is illustrated in Fig.~\ref{fig_Haah_code_FTH_operator_illustration_Z_top}(c).
	
	While the above is the final result of FTH operator identification under $(Z)$ top boundary, we now formalize it and illustrate the construction process, with technical details left to Appendix~\ref{appendix_Haah_code_FTH_theorems_and_proofs}.
	
	Denote the stabilizer module as $\mathcal S_{(Z)}$, where the subscript $(Z)$ represents the $(Z)$ top boundary. Now we are under $(Z)$ top boundary, so
	\begin{equation}
		\mathcal S_{(Z)} = \text{span}_R\big\{z^kS_X,z^kS_Z,\mathcal G_1^{\text{top}}\mid k=0,1,\cdots,L_z-2\big\}.
	\end{equation}
	The logical operators refer to those operators commuting with all stabilizers and bottom boundary gauge operators (Definition~\ref{def_th_fth_logical_operators}), so the logical Pauli operator module $\mathcal L_{(Z)}$ (modulo stabilizer) is
	\begin{equation}\label{eq_logical_module_Z}
		\mathcal L_{(Z)} = (\mathcal S_{(Z)}+\mathcal G^{\text{bot}})^\Omega/\mathcal S_{(Z)},
	\end{equation}
	where $W^\Omega$ is the symplectic complement of $W\subset \pi P$. The convenience of working under $x,y$-directions infinite OBC is that there is no finite support nontrivial logical operator, $\mathcal L_{(Z)}=\{[0]\}$ (see Lemma~\ref{lem_no_relative_logical_Z} in Appendix~\ref{appendix_no_nontrivial_local_logical}).
	
	The Pauli operators which only raise energy on the bottom boundary (violating $\tilde H$), modulo stabilizers, form a quotient module $\mathcal O_{(Z)}:=\mathcal S_{(Z)}^\Omega/\mathcal S_{(Z)}$. As derived in Appendix~\ref{appendix_generators_of_low_energy_preserving_module_Z_top}, $\mathcal O_{(Z)}$ is generated by $[\mathcal G_1^{\text{bot}}]$, $[\mathcal T_{e_Z,z}]$ and $[\mathcal G_2^{\text{bot}}]$, where
	\begin{equation}\label{def_e_Z_transport_operator}
		\mathcal T_{e_Z,z} := \sum_{k=0}^{L_z-1}A^{L_z-1-k}z^{k-1}X_1
	\end{equation}
	is the generator of $e_Z$-transport operator along $z$-direction. From the definitions it can be seen that $\mathcal L_{(Z)}\subset\mathcal O_{(Z)}$. $\mathcal L_{(Z)}=\{[0]\}$ ensures that $\mathcal T_{e_Z,z}$ is a pure transport operator. The bottom boundary gauge syndrome of $\mathcal T_{e_Z,z}$ is $A^{L_z}\mathcal G_2^{\text{bot}}$, it is therefore identified as $A^{L_z}\tilde Z$.
	
	The identification of $\mathcal G_2^{\text{bot}}$ is settled by requiring $\Phi_Z$ to be a symplectic homomorphism. Suppose
	\begin{equation}
		\Phi_Z(\mathcal G_2^{\text{bot}}) = u\tilde X + v\tilde Z,\quad u,v\in R.
	\end{equation}
	Here we omit the class bracket $[\ ]$ for simplicity. The symplectic homomorphism condition leads to
	\begin{widetext}
	\begin{gather}
		\Omega(\mathcal G_1^{\text{bot}},\mathcal G_2^{\text{bot}}) = \tilde\Omega\left(\Phi_Z(\mathcal G_1^{\text{bot}}),\Phi_Z(\mathcal G_2^{\text{bot}})\right)\quad\Longleftrightarrow\quad \bx\by F = \bar u\bx\by F\quad\Longrightarrow\quad u=1,\\
		\Omega(\mathcal G_2^{\text{bot}},\mathcal G_2^{\text{bot}})=\tilde\Omega\left(\Phi_Z(\mathcal G_2^{\text{bot}}),\Phi_Z(\mathcal G_2^{\text{bot}})\right)\quad\Longleftrightarrow\quad v+\bar v=0, 
	\end{gather}
	\end{widetext}
	\noindent
	see the calculation details in Appendix~\ref{appendix_identification_of_other_boundary_gauge_Z_top}. The identification in Eq.~(\ref{eq_Haah_code_FTH_gauge_2_Z_identification}) is therefore obtained.
	
	The domain of $\Phi_Z$ can be naturally extended to $\mathbb R\mathcal O_{(Z)}$ or $\mathbb C\mathcal O_{(Z)}$, by requiring $\Phi_Z$ also to be $\mathbb R$-linear or $\mathbb C$-linear. Take the minimal low-energy effective Hamiltonian
	\begin{equation}\label{eq_Haah_code_FTH_low_energy_Hamiltonian_def}
		\tilde H = -\sum_{\text{unit }m\in R}m\left(\mathcal G_1^{\text{bot}} +_{\mathbb R} g\mathcal G_2^{\text{bot}}\right),
	\end{equation}
	where $g\in\mathbb R$ is the relative strength factor, $+_{\mathbb R}$ is the addition in $\mathbb R$, to distinguish with the addition in $R$. Under $(Z)$ top boundary, if we take $v=0$ for simplicity, $\tilde H$ is identified as
	\begin{widetext}
	\begin{equation}
		\tilde H_{(Z)} := \Phi_Z(\tilde H) = -\sum_{\text{unit }m\in R}m\left(\bx\by F\tilde Z +_{\mathbb R} g\tilde X\right) = -\sum_{\text{unit }m\in R}m\left(F\tilde Z +_{\mathbb R} g\tilde X\right),
	\end{equation}
	\end{widetext}
	which is the Hamiltonian of a generalized TFPIM. For a general $v=\bar v$, the identified Hamiltonian $\tilde H_{(Z),v}$ is connected to $\tilde H_{(Z)}$ by a finite-depth local unitary (LU) circuit. We leave the explicit construction of this LU circuit to Appendix~\ref{appendix_LU_circuit_v_neq_0}. Therefore, the phase of the effective Hamiltonian does not depend on the choice of $v$, and we will simply take $v=0$ in the following discussions.

	Now we discuss the symmetries and relations of the identified 2d Hamiltonian $\tilde H_{(Z)}$. To characterize symmetry and twist, we need to extend the scope of discussion to infinite support operators. Recall the notation $\hat R := \mathbb F_2[[x^{\pm1},y^{\pm1}]]$ for the possibly infinite support series. The multiplication of two elements in $\hat R$ is not always well-defined since it may diverge, but the result of multiplying a polynomial in $R$ and an element in $\hat R$ is well-defined, so $\hat R$ is naturally identified as an $R$-module instead of a ring. 
    Specifically, $\bar{\,\cdot\,}:(x,y)\mapsto(\bx,\by)$ is well-defined in $\hat R$, which will be used repeatedly. By convention, decorate the $R$-module with possibly infinite support with a hat $\hat{\,\cdot\,}$, e.g. the $R$-module $\hat P\cong \hat R^{4L_z}$ is the possibly infinite support version of $P$. The symplectic bilinear form $\Omega$ can be naturally extended to
	\begin{equation}
		\hat\Omega: \hat P\times P\to \mathbb F_2\,,
	\end{equation}
	since $\forall p_1\in\hat P,\ p_2\in P$, $\supp(p_1)\cap\supp(p_2)$ is a finite set. However, the analogous version on $\hat P\times \hat P$ is ill-defined. For simplicity, we also use the notation $\hat\Omega$ to represent the extended symplectic bilinear form for the identified 2d systems under $(Z)$ or $(X)$ top boundary.
	
	An important insight is that in the identified 2d system, the relation space and the symmetry indicator space are isomorphic, both isomorphic to
	\begin{equation}
		\ann_{\hat R}(F):=\ker(\cdot F:\hat R\to\hat R)=\{r\in \hat R : rF=0\}\,.
	\end{equation}
	\begin{itemize}
		\item \textbf{Relation space under $(Z)$ top boundary.}
		\begin{equation}
			\ann_{\hat R}(F)=\{r\in \hat R\mid rF=0\},
		\end{equation}
		each $r\in\ann_{\hat R}(F)$ naturally represents a relation of $F\tilde Z$ terms in Hamiltonian $\tilde H_Z$, since
		\begin{equation}
			rF\tilde Z=0\,.
		\end{equation}
		\item \textbf{Symmetry-indicator space under $(Z)$ top boundary.}
		Define an anti-linear map
		\begin{equation}
			\Phi_{\text{sym}}^{(Z)}:\ann_{\hat R}(F)\to\mathcal S^{\text{ind}}_{(Z)}\,,\quad r\mapsto\bar r\tilde X\,.
		\end{equation}
		Each relation $r$ corresponds to a symmetry indicator
		\begin{equation}
			\Phi_{\text{sym}}^{(Z)}(r) = \bar r\tilde X\,,
		\end{equation}
		since
		\begin{equation}
			\hat\Omega(\bar r\tilde X, F\tilde Z) = rF=0\,.
		\end{equation}
		Noticing that $r\in\annF\Leftrightarrow\bar r\in\ann_{\hat R}(\bF)$, we get $\ann_{\hat R}(\bF)\tilde X\subset\mathcal S_{(Z)}^{\text{ind}}$. 
		On the other hand, for any $X$-type operator $\bar r\tilde X$ to be a symmetry indicator, it is required to be commuting with any $Z$-type terms in Hamiltonian $\tilde H_Z$, i.e. $\hat\Omega(\bar r\tilde X,F\tilde Z)=rF=0$, $\bar r\tilde X\in\ann_{\hat R}(\bF)\tilde X$, so 
		\begin{equation}\label{eq_symmetry_ind_space_Z_top}
			\mathcal S^{\text{ind}}_{(Z)} = \ann_{\hat R}(\bF)\tilde X\,.
		\end{equation}
		$\Phi_{\text{sym}}^{(Z)}$ is an \textbf{isomorphism} between the relation space under $(Z)$ top boundary and the symmetry-indicator space under $(Z)$ top boundary. 
	\end{itemize}
	We have not succeeded in lifting all relations to twist DOFs by deleting top stabilizer generators yet.

	\subsubsection{Under the (X) top boundary}\label{subsubsec_Haah_code_FTH_X_top}
	
	\begin{figure*}[htbp]
		\centering
		\begin{tikzpicture}
			
			% graph explanation
			\begin{scope}[yshift=10.3cm]
				\fill[white] (-3,0,0) circle (0.1cm);
				\fill[red] (-3,0,0) ++(90:0.1cm) arc[start angle=90,end angle=270,radius=0.1cm] -- (-3,0,0) -- cycle;
				\draw[black, line width=0.6pt] (-3,0,0) circle (0.1cm);
				\node at (-2.45,0,0) {$:X_1$};
			\end{scope}

			\begin{scope}[xshift=2cm, yshift=10.3cm]
				\fill[white] (-3,0,0) circle (0.1cm);
				\fill[red] (-3,0,0) ++(90:0.1cm) arc[start angle=90,end angle=-90,radius=0.1cm] -- (-3,0,0) -- cycle;
				\draw[black, line width=0.6pt] (-3,0,0) circle (0.1cm);
				\node at (-2.45,0,0) {$:X_2$};
			\end{scope}

			\begin{scope}[xshift=4cm, yshift=10.3cm]
				\fill[white] (-3,0,0) circle (0.1cm);
				\fill[red] (-3,0,0) circle (0.1cm);
				\draw[black, line width=0.6pt] (-3,0,0) circle (0.1cm);
				\node at (-2.25,0,0) {$:X_1X_2$};
			\end{scope}

			\begin{scope}[xshift=7cm, yshift=10.3cm]
				\fill[white] (-3,0,0) circle (0.1cm);
				\fill[ForestGreen] (-3,0,0) ++(90:0.1cm) arc[start angle=90,end angle=270,radius=0.1cm] -- (-3,0,0) -- cycle;
				\draw[black, line width=0.6pt] (-3,0,0) circle (0.1cm);
				\node at (-2.45,0,0) {$:Z_1$};
			\end{scope}

			\begin{scope}[xshift=9cm, yshift=10.3cm]
				\fill[white] (-3,0,0) circle (0.1cm);
				\fill[ForestGreen] (-3,0,0) ++(90:0.1cm) arc[start angle=90,end angle=-90,radius=0.1cm] -- (-3,0,0) -- cycle;
				\draw[black, line width=0.6pt] (-3,0,0) circle (0.1cm);
				\node at (-2.45,0,0) {$:Z_2$};
			\end{scope}

			\begin{scope}[xshift=11cm, yshift=10.3cm]
				\fill[white] (-3,0,0) circle (0.1cm);
				\fill[ForestGreen] (-3,0,0) circle (0.1cm);
				\draw[black, line width=0.6pt] (-3,0,0) circle (0.1cm);
				\node at (-2.25,0,0) {$:Z_1Z_2$};
			\end{scope}

			% boundary gauge operator 2
			\begin{scope}[
				x={(-0.6cm,-0.2cm)}, 
				y={(0.6cm,-0.15cm)},
				z={(0cm,1cm)}, yshift=7.5cm
				]
				% 1. 底面网格 (z=0 平面)
				\foreach \i in {0,...,4} \draw[gray, opacity=0.55, line width=0.4pt] (\i,-1,0) -- (\i,5,0);
				\foreach \j in {0,...,4} \draw[gray, opacity=0.55, line width=0.4pt] (-1,\j,0) -- (5,\j,0);

				% 4. 显式的坐标轴箭头（稍微加长到 5.2，标出核心方向）
				\draw[-latex, thick, black!70] (0,0,0) -- (5.2,0,0) node[below left] {$x$};
				\draw[-latex, thick, black!70] (0,0,0) -- (0,5.2,0) node[below right] {$y$};
				\draw[-latex, thick, black!70] (0,0,0) -- (0,0,1.2) node[above] {$z$};

				% bottom boundary gauge syndrome
				\foreach \i/\j in {2/2, 1/2, 0/2, 2/1, 2/0, 1/1}{
					\fill[red, opacity=0.25] (\i,\j,0)--(\i-1,\j,0)--(\i-1,\j-1,0)--(\i,\j-1,0)--cycle;
				}

				% boundary gauge operator 2 times xy
				\foreach \i/\j in {1/0, 0/1}{
					\fill[white] (\i,\j,0) circle (0.1cm);
					\fill[ForestGreen] (\i,\j,0) ++(90:0.1cm) arc[start angle=90,end angle=270,radius=0.1cm] -- (\i,\j,0) -- cycle;
					\draw[black, line width=0.6pt] (\i,\j,0) circle (0.1cm);
				}
				\foreach \i/\j in {0/0}{
					\fill[white] (\i,\j,0) circle (0.1cm);
					\fill[ForestGreen] (\i,\j,0) ++(90:0.1cm) arc[start angle=90,end angle=-90,radius=0.1cm] -- (\i,\j,0) -- cycle;
					\draw[black, line width=0.6pt] (\i,\j,0) circle (0.1cm);
				}
				\foreach \i/\j in {1/1}{
					\fill[ForestGreen] (\i,\j,0) circle (0.1cm);
					\draw[black, line width=0.6pt] (\i,\j,0) circle (0.1cm);
				}
			\end{scope}

			\begin{scope}[yshift=7.5cm]
				\draw[->] (3.5+0.4,0.3,0)--(5+0.4,0.3,0);
				\node at (4.25+0.4,0.6,0) {$\Phi_X$};
				
				\node at (-4,1.2,0) {(a)};
			\end{scope}

			\begin{scope}[xshift=7cm, yshift=6.3cm, scale=0.6]
				\draw[-latex, thick, black!70] (0,0,0) -- (5.6,0) node[below left] {$x$};
				\draw[-latex, thick, black!70] (0,0,0) -- (0,5.6) node[below right] {$y$};
				\foreach \i in {0,...,4} \draw[gray, opacity=0.55, line width=0.4pt] (\i,-1,0) -- (\i,5,0);
				\foreach \j in {0,...,4} \draw[gray, opacity=0.55, line width=0.4pt] (-1,\j,0) -- (5,\j,0);
				
				\foreach \i/\j in {1/1, 2/2, 2/1, 2/0, 1/2, 0/2}{
					\node at (\i-0.5,\j-0.5) {$\tilde Z$};
					\fill[red, opacity=0.25] (\i,\j,0)--(\i-1,\j,0)--(\i-1,\j-1,0)--(\i,\j-1,0)--cycle;
				}
			\end{scope}

			% e_X transport operator
			\begin{scope}[
				x={(-0.6cm,-0.2cm)}, 
				y={(0.6cm,-0.15cm)},
				z={(0cm,0.9cm)}
				]
				% 1. 底面网格 (z=0 平面)
				\foreach \i in {0,...,4} \draw[gray, opacity=0.55, line width=0.4pt] (\i,-1,0) -- (\i,5,0);
				\foreach \j in {0,...,4} \draw[gray, opacity=0.55, line width=0.4pt] (-1,\j,0) -- (5,\j,0);
				
				% 2. 左背墙网格 (y=0 平面)
				\foreach \i in {0,...,4} \draw[gray, opacity=0.15, line width=0.4pt] (\i,0,0) -- (\i,0,4);
				\foreach \k in {0,...,4} \draw[gray, opacity=0.15, line width=0.4pt] (0,0,\k) -- (5,0,\k);
				
				% 3. 右背墙网格 (x=0 平面)
				\foreach \j in {0,...,4} \draw[gray, opacity=0.15, line width=0.4pt] (0,\j,0) -- (0,\j,4);
				\foreach \k in {0,...,4} \draw[gray, opacity=0.15, line width=0.4pt] (0,0,\k) -- (0,5,\k);
				
				% 4. 显式的坐标轴箭头（稍微加长到 5.2，标出核心方向）
				\draw[-latex, thick, black!70] (0,0,0) -- (5.2,0,0) node[below left] {$\bar x$};
				\draw[-latex, thick, black!70] (0,0,0) -- (0,5.2,0) node[below right] {$\bar y$};
				\draw[-latex, thick, black!70] (0,0,0) -- (0,0,4.6) node[above] {$z$};

				% layer indicator
				\fill[gray, opacity=0.25] (0,0,1)--(1,0,1)--(0,1,1)--cycle;
				\fill[gray, opacity=0.25] (0,0,2)--(2,0,2)--(0,2,2)--cycle;
				\fill[gray, opacity=0.25] (0,0,3)--(3,0,3)--(0,3,3)--cycle;
				\fill[gray, opacity=0.25] (0,0,4)--(4,0,4)--(0,4,4)--cycle;

				% bottom boundary gauge syndrome
				\foreach \i/\j in {0/0}{
					\fill[red, opacity=0.25] (\i,\j,0)--(\i-1,\j,0)--(\i-1,\j-1,0)--(\i,\j-1,0)--cycle;
				}
				
				% z=4 layer
				\foreach \i/\j in {0/0, 0/4, 4/0}{
					\fill[white] (\i,\j,4) circle (0.1cm); 
					\fill[ForestGreen] (\i,\j,4) ++(90:0.1cm) arc[start angle=90,end angle=-90,radius=0.1cm] -- (\i,\j,4) -- cycle; 
					\draw[black, line width=0.6pt] (\i,\j,4) circle (0.1cm); 
				}
				
				% z=3 layer
				\foreach \i/\j in {0/0, 1/0, 0/1, 2/0, 0/2, 3/0, 0/3, 2/1, 1/2}{
					\fill[white] (\i,\j,3) circle (0.1cm);
					\fill[ForestGreen] (\i,\j,3) ++(90:0.1cm) arc[start angle=90,end angle=-90,radius=0.1cm] -- (\i,\j,3) -- cycle;
					\draw[black, line width=0.6pt] (\i,\j,3) circle (0.1cm);
				}
				
				% z=2 layer
				\foreach \i/\j in {0/0, 2/0, 0/2}{
					\fill[white] (\i,\j,2) circle (0.1cm);
					\fill[ForestGreen] (\i,\j,2) ++(90:0.1cm) arc[start angle=90,end angle=-90,radius=0.1cm] -- (\i,\j,2) -- cycle;
					\draw[black, line width=0.6pt] (\i,\j,2) circle (0.1cm);
				}
				
				% z=1 layer
				\foreach \i/\j in {0/0, 1/0, 0/1}{
					\fill[white] (\i,\j,1) circle (0.1cm);
					\fill[ForestGreen] (\i,\j,1) ++(90:0.1cm) arc[start angle=90,end angle=-90,radius=0.1cm] -- (\i,\j,1) -- cycle;
					\draw[black, line width=0.6pt] (\i,\j,1) circle (0.1cm);
				}
				
				% z=0 layer
				\foreach \i/\j in {0/0}{
					\fill[white] (\i,\j,0) circle (0.1cm);
					\fill[ForestGreen] (\i,\j,0) ++(90:0.1cm) arc[start angle=90,end angle=-90,radius=0.1cm] -- (\i,\j,0) -- cycle;
					\draw[black, line width=0.6pt] (\i,\j,0) circle (0.1cm);
				}
			\end{scope}

			\draw[->] (3.5+0.4,1.3,0)--(5+0.4,1.3,0);
			\node at (4.25+0.4,1.6,0) {$\Phi_X$};
			
			\node at (-4,4.2,0) {(b)};

			\begin{scope}[xshift=7cm, yshift=0.1cm, scale=0.6]
				\draw[-latex, thick, black!70] (0,0,0) -- (5.6,0) node[below left] {$x$};
				\draw[-latex, thick, black!70] (0,0,0) -- (0,5.6) node[below right] {$y$};
				\foreach \i in {0,...,4} \draw[gray, opacity=0.55, line width=0.4pt] (\i,-1,0) -- (\i,5,0);
				\foreach \j in {0,...,4} \draw[gray, opacity=0.55, line width=0.4pt] (-1,\j,0) -- (5,\j,0);
				
				\foreach \i/\j in {1/1}{
					\node at (\i-0.5,\j-0.5) {$\tilde Z$};
					\fill[red, opacity=0.25] (\i,\j,0)--(\i-1,\j,0)--(\i-1,\j-1,0)--(\i,\j-1,0)--cycle;
				}
			\end{scope}

			% boundary gauge operator 1
			\begin{scope}[
				x={(-0.6cm,-0.2cm)}, 
				y={(0.6cm,-0.15cm)},
				z={(0cm,1cm)}
				, yshift=-4cm]
				% 1. 底面网格 (z=0 平面)
				\foreach \i in {0,...,4} \draw[gray, opacity=0.55, line width=0.4pt] (\i,-1,0) -- (\i,5,0);
				\foreach \j in {0,...,4} \draw[gray, opacity=0.55, line width=0.4pt] (-1,\j,0) -- (5,\j,0);

				% 4. 显式的坐标轴箭头（稍微加长到 5.2，标出核心方向）
				\draw[-latex, thick, black!70] (0,0,0) -- (5.2,0,0) node[below left] {$x$};
				\draw[-latex, thick, black!70] (0,0,0) -- (0,5.2,0) node[below right] {$y$};
				\draw[-latex, thick, black!70] (0,0,0) -- (0,0,1.2) node[above] {$z$};

				% bottom boundary gauge syndrome
				\foreach \i/\j in {0/0, 1/0, 2/0, 0/1, 0/2, 1/1}{
					\fill[ForestGreen, opacity=0.25] (\i,\j,0)--(\i-1,\j,0)--(\i-1,\j-1,0)--(\i,\j-1,0)--cycle;
				}

				% boundary gauge operator 1 times xy
				\foreach \i/\j in {1/0, 0/1}{
					\fill[white] (\i,\j,0) circle (0.1cm);
					\fill[red] (\i,\j,0) ++(90:0.1cm) arc[start angle=90,end angle=270,radius=0.1cm] -- (\i,\j,0) -- cycle;
					\draw[black, line width=0.6pt] (\i,\j,0) circle (0.1cm);
				}
				\foreach \i/\j in {0/0}{
					\fill[white] (\i,\j,0) circle (0.1cm);
					\fill[red] (\i,\j,0) ++(90:0.1cm) arc[start angle=90,end angle=-90,radius=0.1cm] -- (\i,\j,0) -- cycle;
					\draw[black, line width=0.6pt] (\i,\j,0) circle (0.1cm);
				}
				
			\end{scope}

			\begin{scope}[yshift=-4.1cm]
				\draw[->] (3.5+0.4,0.3,0)--(5+0.4,0.3,0);
				\node at (4.25+0.4,0.6,0) {$\Phi_X$};
				
				\node at (-4,1.2,0) {(c)};
			\end{scope}

			\begin{scope}[xshift=7cm, yshift=-5.3cm, scale=0.6]
				\draw[-latex, thick, black!70] (0,0,0) -- (5.6,0) node[below left] {$x$};
				\draw[-latex, thick, black!70] (0,0,0) -- (0,5.6) node[below right] {$y$};
				\foreach \i in {0,...,4} \draw[gray, opacity=0.55, line width=0.4pt] (\i,-1,0) -- (\i,5,0);
				\foreach \j in {0,...,4} \draw[gray, opacity=0.55, line width=0.4pt] (-1,\j,0) -- (5,\j,0);
				
				\foreach \i/\j in {1/1}{
					\node at (\i-0.5,\j-0.5) {$\tilde X$};
					\fill[ForestGreen, opacity=0.25] (\i,\j,0)--(\i-1,\j,0)--(\i-1,\j-1,0)--(\i,\j-1,0)--cycle;
				}
			\end{scope}
		\end{tikzpicture}
		\caption[Operator identification of Haah's code FTH under the X top boundary]{The illustration of operator identification of Haah's code FTH under $(X)$ top boundary. The red circles and green circles represent Pauli $X$ and $Z$, respectively, as shown on the top of this figure. On the left side, generators of low-energy preserving operators and their bottom boundary gauge syndromes are illustrated. Green squares represent $\mathcal G_2^{\text{bot}}$ violation, red squares represent $\mathcal G_1^{\text{bot}}$ violation. The right side illustrates the identified operators in 2d system under $(X)$ top boundary. (a) The bottom boundary gauge operator $\mathcal G_2^{\text{bot}}:=S_Z^{\text{bot}}$, which has the bottom boundary gauge syndrome $xy\bar F\mathcal G_1^{\text{bot}}$, is identified as $xy\bar F\tilde Z$. (b) The $e_X$-transport operator along $z$-direction $\mathcal T_{e_X,z}$, which has the bottom boundary gauge syndrome $\mathcal G_1^{\text{bot}}$, is identified as $\tilde Z$. Note that here we use the axis $\bar x$ and $\bar y$ instead of $x$ and $y$ for visual clearness. (c) The bottom boundary gauge operator $\mathcal G_1^{\text{bot}}:=S_X^{\text{bot}}$, which has the bottom boundary gauge syndrome $\bar x\bar y F \mathcal G_2^{\text{bot}}$, is identified as $\tilde X$.}
		\label{fig_Haah_code_FTH_operator_illustration_X_top}
	\end{figure*}

	In this subsection, we construct the Haah's code FTH with the $(X)$ top boundary. Like before, we choose the minimal support representative of $e_X^{\text{bot}}$ to represent $e_X^{\text{bot}}$, which is $\mathcal G_1^{\text{bot}}$. Under the $(X)$ top boundary, the operators that creates solely $e_X^{\text{bot}}$ but no other excitation are identified as multiples of $\tilde Z$. The operator that solely creates $e_X^{\text{bot}}$ with configuration $u\in R$ on the bottom boundary is identified as $u\tilde Z$. According to Eq.~(\ref{eq_Haah_code_FTH_22}), this directly leads to the identification of $\mathcal G_2^{\text{bot}}$,
	\begin{equation}
		\Phi_X([\mathcal G_2^{\text{bot}}]) = xy\bF\tilde Z,
	\end{equation}
	where $\Phi_X$ is an $R$-linear symplectic homomorphism, the subscript $X$ stands for the top boundary $(X)$, and $[\ ]$ stands for the equivalence class with the equivalence relation being differing by a stabilizer. $\mathcal G_2^{\text{bot}}$, its bottom boundary gauge syndrome, and its identification is illustrated in Fig.~\ref{fig_Haah_code_FTH_operator_illustration_X_top}(a). 
	The $R$-linearity of $\Phi_X$ requires that
	\begin{equation}
		\Phi_X([\mathcal T_{e_X,z}]) = \tilde Z,
	\end{equation}
	where
	\begin{equation}\label{eq_eX_transport_operator_def}
		\mathcal T_{e_X,z} := \bx\by
		\sum_{k=0}^{L_z-1}z^{k-1}\bA^k Z_2,
	\end{equation}
	is the generator of $e_X$-transport operator along $z$-direction. It is proved in Appendix~\ref{appendix_no_nontrivial_local_logical} that there is no finite-support nontrivial logical operator, so $\mathcal T_{e_X,z}$ must be pure. $\mathcal T_{e_X,z}$ with $L_z=5$, its bottom boundary gauge syndrome and its identification are illustrated in Fig.~\ref{fig_Haah_code_FTH_operator_illustration_X_top}(b).
	
	Then, by the symplectic homomorphism condition of $\Phi_X$, we get
	\begin{equation}\label{eq_X_top_gauge1_identification}
		\Phi_X([\mathcal G_1^{\text{bot}}]) = \tilde X + v\tilde Z,
	\end{equation}
	where $v=\bar v$. We may choose $v=0$ for simplicity. $\mathcal G_1^{\text{bot}}$, its bottom boundary gauge syndrome, and its identification is illustrated in Fig.~\ref{fig_Haah_code_FTH_operator_illustration_X_top}(c).
	
	Similar to the $(Z)$ top boundary, denote the module of low-energy preserving operators (modulo stabilizers) as $\mathcal O_{(X)}:=\mathcal S_{(X)}^\Omega/\mathcal S_{(X)}$. We construct the general low-energy preserving Pauli operators in Appendix~\ref{appendix_derive_all_low_energy_preserving_operators_X_top}, and prove that two low-energy preserving Pauli operators with the same bottom boundary gauge syndrome differ by a stabilizer (Appendix~\ref{appendix_bottom_gauge_syndrome_decides_low_energy_operators_up_to_stabilizers_X_top}). Consequently, $\mathcal O_{(X)}$ is generated by $[\mathcal T_{e_X,z}]$ and $[\mathcal G_1^{\text{bot}}]$ (see Appendix~\ref{appendix_generators_of_low_energy_preserving_operators_X_top_conclusion}). While the identification of $\mathcal G_2^{\text{bot}}$ and $\mathcal T_{e_X,z}$ directly follow from the four-Stage framework, $\Phi_X(\mathcal G_1^{\text{bot}})$ is settled by the symplectic homomorphism condition of $\Phi_X$ [similar to the calculation of $\Phi_Z(\mathcal G_2^{\text{bot}})$, see details in Appendix~\ref{appendix_identification_of_other_boundary_gauge_X_top}].
	
	The domain of $\Phi_X$ can be naturally extended to $\mathbb R\mathcal O_{(X)}$ or $\mathbb C\mathcal O_{(X)}$, by requiring $\Phi_X$ also to be $\mathbb R$-linear or $\mathbb C$-linear. The low-energy effective Hamiltonian $\tilde H$ on the bottom boundary is independent of the top boundary, we take $\tilde H$ the same as the one we chose under $(Z)$ top boundary, i.e.
	\begin{equation*}
		\tilde H = -\sum_{\text{unit }m\in R}m\left(\mathcal G_1^{\text{bot}} +_{\mathbb R} g\mathcal G_2^{\text{bot}}\right),
	\end{equation*}
	defined in Eq.~(\ref{eq_Haah_code_FTH_low_energy_Hamiltonian_def}). Note that $g\in\mathbb R$. Under $(X)$ top boundary, $\tilde H$ is identified as 
	\begin{widetext}
	\begin{equation}
		\tilde H_{(X)} := \Phi_X(\tilde H) = -\sum_{\text{unit }m\in R}m\left(\tilde X +_{\mathbb R} gxy\bar F\tilde Z\right) = -\sum_{\text{unit }m\in R}m\left(\tilde X +_{\mathbb R} g\bar F\tilde Z\right),
	\end{equation}
	\end{widetext}
	which is the Hamiltonian of a generalized TFPIM.
	
	When $v\neq0$, the identified Hamiltonian is connected to $\tilde H_{(X)}$ by a finite-depth LU circuit. The explicit construction of this LU circuit is completely analogous to the $(Z)$ top boundary case, see Appendix~\ref{appendix_LU_circuit_v_neq_0} for details.

	The relation space and symmetry indicator space under $(X)$ top boundary are
	\begin{itemize}
		\item \textbf{Relation space under $(X)$ top boundary.}
		\begin{equation}
			\ann_{\hat R}(\bF)=\{r\in\hat R:r\bF=0\}\,,
		\end{equation}
		each $r\in\ann_{\hat R}(\bF)$ naturally represents a relation of $\bF\tilde Z$ terms in Hamiltonian $\tilde H_X$, since
		\begin{equation}
			r\bF\tilde Z=0\,.
		\end{equation}
		\item \textbf{Symmetry indicator space under $(X)$ top boundary.} Define an anti-linear map
		\begin{equation}
			\Phi_{\text{sym}}^{(X)}:\ann_{\hat R}(\bF)\to\mathcal S^{\text{ind}}_{(X)},\quad r\mapsto\bar r\tilde X\,.
		\end{equation}
		Each relation $r$ corresponds to a symmetry indicator
		\begin{equation}
			\Phi^{(X)}_{\text{sym}}(r) = \bar r\tilde X\,,
		\end{equation}
		since 
		\begin{equation}
			\hat\Omega(\bar r\tilde X,\bF\tilde Z) = r\bF=0\,.
		\end{equation}
		The same logic for deriving Eq.~(\ref{eq_symmetry_ind_space_Z_top}) leads to
		\begin{equation}
			\mathcal S^\text{ind}_{(X)} = \ann_{\hat R}(F)\tilde X\,.
		\end{equation}
		$\Phi_{\text{sym}}^{(X)}$ is an \textbf{isomorphism} between the relation space under $(X)$ top boundary and the symmetry-indicator space under $(X)$ top boundary. 
	\end{itemize}

	\subsection{Duality (Haah's cubic code)}\label{subsec_Haah_Stage4}
	
	Finally, in Stage 4, we analyze the duality induced on the bottom boundary when switching the gapped top boundary condition between $(Z)$ and $(X)$. The duality is two-fold: (1) duality between local interaction terms; (2) duality between symmetry and relation. The local interaction terms involve only finite support operators, while symmetry and relation are both involve support operators.

	The duality of local interaction terms are obtained by switching the identification map between $\Phi_Z$ and $\Phi_X$. Consequently, we get the duality between the $\Phi_Z,\Phi_X$ values of bottom boundary gauge operators:
	\begin{equation}
		\Phi_Z(\mathcal G_1^{\text{bot}})\dual \Phi_X(\mathcal G_1^{\text{bot}}),\quad \Phi_Z(\mathcal G_2^{\text{bot}})\dual \Phi_X(\mathcal G_2^{\text{bot}}),
	\end{equation}
	in other words,
	\begin{equation}
		\bx\by F\tilde Z\dual\tilde X,\quad \tilde X\dual xy\bF\tilde Z.
	\end{equation}
	This duality naturally extends to the operator algebras generated by $\{\bx\by F\tilde Z,\tilde X\}$ and $\{\tilde X,xy\bF\tilde Z\}$, which contains the identified minimal Hamiltonians we introduced in this operator algebra:
	\begin{equation}
		\tilde H_{(Z)}\dual\tilde H_{(X)}.
	\end{equation}
	
	For the symmetry and relation part, though the relation space under $(Z)$ top boundary $\ann_{\hat R}(F)$ is isomorphic to the symmetry-indicator space under $(Z)$ top boundary $\mathcal S^{\text{ind}}_{(Z)}=\ann_{\hat R}(\bF)\tilde X$, they still carry distinct geometric information. What is more intimate with the relation space under $(Z)$ top boundary $\ann_{\hat R}(F)$ is the symmetry indicator space under $(X)$ top boundary $\mathcal S^{\text{ind}}_{(X)}=\ann_{\hat R}(F)\tilde X$, they even carry the same geometric information. This is natural since both $\ann_{\hat R}(F)$ and $\mathcal S^{\text{ind}}_{(X)}$ originate from $\mathcal G_1^{\text{bot}}$:
	\begin{itemize}
		\item $\ann_{\hat R}(F)$ is the relation space of $\Phi_Z(\mathcal G_1^{\text{bot}})$;
		\item $\mathcal S^{\text{ind}}_{(X)}=\ann_{\hat R}(F)\tilde X = \ann_{\hat R}(F)\Phi_X(\mathcal G_1^{\text{bot}})$.
	\end{itemize}
	Similarly, $\ann_{\hat R}(\bF)$ and $\mathcal S^{\text{ind}}_{(Z)}$ carry the same geometric information, and they both originate from $\mathcal G_2^{\text{bot}}$: 
	\begin{itemize}
		\item $\mathcal S^{\text{ind}}_{(Z)}=\ann_{\hat R}(\bF)\tilde X=\ann_{\hat R}(\bF)\Phi_Z(\mathcal G_2^{\text{bot}})$;
		\item $\ann_{\hat R}(\bF)$ is the relation space of $\Phi_X(\mathcal G_2^{\text{bot}})$. 
	\end{itemize}
	When switching the top boundary between $(Z)$ and $(X)$, we also switch the identification of symmetry indicators and relations.

	\section{Summary and outlook}\label{sec_summary_and_outlook}
	
	\paragraph{Summary.}
	We formulated fracton topological holography (FTH) for $\mathbb Z_d$ qudit fractons, as a generalization of topological holography (TH). The basic setup is a slab with a dynamical bottom boundary and a gapped top boundary. After projection to the low-energy subspace, operators on the bottom boundary generate the Pauli algebra and effective Hamiltonian of an identified lower-dimensional model. Different admissible top boundaries then give different bottom boundray operator (including Hamiltonian and symmetries) identification, and changing the top boundary realizes a duality between them.
	
	For the X-cube bulk, we analyzed two top-boundary completions: the planeon-condensed top boundary and the $l_z$-lineon-condensed top boundary. The two induced $2$d boundary theories are both transverse-field plaquette Ising models, but the plaquette-Ising terms and transverse-field terms are exchanged. At the same time, subsystem-symmetry data and twist data are exchanged. We also constructed an explicit linear-depth local-unitary sequential quantum circuit that changes the top boundary and implements this duality. 
	
	For Haah's cubic code, we formulated the FTH construction in the Laurent-polynomial stabilizer-module language. The natural $(Z)$ and $(X)$ top boundaries give two qubit systems whose local interaction sectors are related by an exchange between generalized Ising terms and transverse-field terms, together with a local spatial inversion. The nonlocal part of the duality is an interplay between symmetries and relations.
	
	\paragraph{Outlook.}
	A natural next step is to ask whether from any $\mathbb Z_d$ fracton topological order, including the fracton TO with extensive fracton excitations~\cite{li2020fracton,li_ye_2021_fracton_extended_gsd,li_2023_gERG,hyt_1_prepare_TD_model_via_SQC}, can a FTH be constructed, and what duality would they give. 
	Furthermore, among the available FTH slabs, which ones would induce self-duality? 
	Which of these self-dualities would give conformal field theory (CFT) at the critical point? 
	Another possible but challenging problem is to classify admissible top-boundary completions of $\mathbb Z_d$ fracton, to induce a duality web from a single FTH, instead of a duality between two models. 
	For type-II examples, an important open problem is to figure out the relation lift problem, and may identify an algebraic criterion for when relations can be lifted to twist degrees of freedom. For a general framework for FTH, cellular automata (CA) has been successfully used for constructing lattice models with non-deformable symmetries~\cite{yoshida_2013_fractal_spin_liquids,devakul_you_burnell_sondhi_2019_fractal_symmetric_phases,biswas_kwan_parameswaran_2022_matrix_ca,zhang_li_ye_2024_hoca_spt_strange_correlators,huang_zhang_ye_2026_nonuniform_ca}, it is also interesting to try utilizing cellular automata (CA) to systematically construct more general fracton models, discuss their symmetries and relations, aiming for a unified framework in search of topological duality.

	\acknowledgments
	We thank Y. A. Chen, Zongyuan Wang, Hank Chen for useful discussions. This work was supported
	by the National Natural Science Foundation of China (NSFC) under Grant
	No.~12474149, the Research Center for Magnetoelectric Physics of Guangdong
	Province under Grant No.~2024B0303390001, and the Guangdong Provincial Key
	Laboratory of Magnetoelectric Physics and Devices under Grant
	No.~2022B1212010008.

	\appendix

	\onecolumngrid
	
	\section{Details of tensor product structure of low-energy subspace and operator identification of 2dTC TH}\label{details_2dTC_TH_Hilbert_space_structure_and_operator_identification}

	\subsection{with smooth top boundary (\texorpdfstring{$m$}{m} particle condensed)}\label{appendix_2dTC_smooth_top}
	
	Let us start from Eq.~(\ref{stabilizer_TH_Hamiltonian}), i.e.
	\begin{equation*}
		H_{TH}=-\eta\sum\text{stabilizers} + \tilde{H}\,.
	\end{equation*}
	When the top boundary is smooth, the stabilizers are (1) the $B_v$ terms associated with all the crossed vertices in Fig.~\ref{details_2dTC_TH_1}, i.e. all the $B_v$ terms except those associated with the vertices on the bottom boundary and the hollowed vertex in Fig.~\ref{details_2dTC_TH_1}; (2) all the $A_p$ terms.
	\begin{figure}[H]
		\centering
		\begin{tikzpicture}
			\foreach \i in {0,1,2,3}{\draw (\i,0)--(\i,3);}
			\foreach \j in {0,1,2,3}{\draw (0,\j)--(4,\j);}
			\draw[dashed] (4,0)--(4,3);
			
			\foreach \i in {0,1,2,3}{
				\foreach \j in {1,2}{
					\draw[blue, line width=1.2pt] 
					(\i+0.12,\j+0.12)--(\i-0.12,\j-0.12)
					(\i+0.12,\j-0.12)--(\i-0.12,\j+0.12);
			}}
			\foreach \i in {1,2,3}{
				\draw[blue, line width=1.2pt] 
				(\i+0.12,3+0.12)--(\i-0.12,3-0.12)
				(\i+0.12,3-0.12)--(\i-0.12,3+0.12);
			}
			
			\draw[blue, line width=1.5pt] (0,3) circle (0.15); 
			
			\draw[blue, line width=1pt, ->] (-2,0)--(-1,0);
			\draw[blue, line width=1pt, ->] (-2,0)--(-2,1);
			\node[blue] at (-0.8,0) {$x$};
			\node[blue] at (-2,1.2) {$y$};
			
			\fill[blue] (0,0) circle[radius=1.5pt];
			\node[blue] at (0,-0.3) {$(0,0)$};
		\end{tikzpicture}
		\caption{}
		\label{details_2dTC_TH_1}
	\end{figure}
	Since $\eta\gg||\tilde{H}||$, at low temperature ($kT\ll\eta$), the system stays in the common $+1$ eigenspace of stabilizers, i.e. the low-energy subspace, denoted by $\tilde{\mathcal{H}}$. The dimension of $\tilde{\mathcal{H}}$ for 2dTC TH when the top boundary is smooth is $2^{L_x+1}$: (since the $B_v$ and $A_p$ terms are independent)
	\begin{align*}
		&\#\,B_v=L_xL_y-1\quad,\quad\#\,A_p=L_xL_y\\
		&\#\,\text{spin}=L_y\cdot L_x+L_x\cdot(L_y+1)\,,\\
		&\text{log}_2\,\text{dim}\,\tilde{\mathcal{H}}=\#\,\text{spin}-(\#\,B_v+\#\,A_p)=L_x+1\,,
	\end{align*}
	where $\#\ B_v$, $\#\ A_p$ stand for the number of $B_v,A_p$ terms in the stabilizer generator set, respectively.
	
	Next, we show that $\tilde{\mathcal{H}}$ can be decomposed into $\tilde{\mathcal{H}}^{\text{twist}}\bigotimes_{i=0}^{L_x-1}\tilde{\mathcal{H}}_{i+1/2}$, where $\tilde{\mathcal{H}}^{\text{twist}}\cong\tilde{\mathcal{H}}_{i+1/2}\cong\mathbb{C}^2$, for each $\tilde{\mathcal{H}}_{i+1/2}$, the operator algebra $\mathcal{B}(\tilde{\mathcal{H}}_{i+1/2})$ is local with respect to the $x$-direction. With such a decomposition, $\tilde{\mathcal{H}}$ can be identified as a 1$d$ length-$L_x$ spin ring along $x$-direction (a $\frac12$-spin on each edge, with Hilbert space $\tilde{\mathcal{H}}_{i+1/2}$), together with an ancillary twist qubit (with the Hilbert space $\tilde{\mathcal{H}}^{\text{twist}}$) encoding twist information.
	
	We define $\tilde{\mathcal{H}}_{i+1/2}$ through its operator algebra $\mathcal{B}(\tilde{\mathcal{H}}_{i+1/2})$. Let $\mathcal{B}(\tilde{\mathcal{H}}_{i+1/2})$ be the operator algebra generated by $\tilde{X}_{i+1/2}$ and $\tilde{Z}_{i+1/2}$ under addition and multiplication, where $\tilde{X}_{i+1/2}$ and $\tilde{Z}_{i+1/2}$ are identified as following (we show $\tilde{X}_{5/2}$ and $\tilde{Z}_{5/2}$ as examples):
	\begin{equation}
		\begin{tikzpicture}[baseline=7ex]
			\foreach \i in {0,1,2,3,4}{\draw (\i,0)--(\i,3);}
			\foreach \j in {0,1,2,3}{\draw (0,\j)--(5,\j);}
			\draw[dashed] (5,0)--(5,3);
			
			\fill[red, opacity=0.1] (2,0.2)--(3,0.2)--(3,-0.2)--(2,-0.2);
			\node[red] at (2.4,0) {$X$};
			
			\fill[ForestGreen, opacity=0.1] (2.45,-0.2)--(2.75,-0.2)--(2.75,3.2)--(2.45,3.2);
			\node[ForestGreen] at (2.6,0) {$Z$};
			\foreach \j in {1,2,3}{\node[ForestGreen] at (2.6,\j) {$Z$};}
			
			\node[red] at (2.4,-0.5) {$\sim\tilde{X}_{5/2}$};
			
			\node[ForestGreen] at (2.6,3.5) {$\mathcal Z_y(5/2)\sim\tilde{Z}_{5/2}$};
			
			\node at (0,-1) {0};
			\node at (1,-1) {1};
			\node at (2,-1) {2};
			\node at (3,-1) {3};
			\node at (4,-1) {4};
			\node at (5,-1) {5};
			\draw[->] (0,-1.25)--(5.2,-1.25);
			\node at (5.4,-1.25) {$x$};
			
			\node at (6,-1) {$,$};
		\end{tikzpicture}
		\label{details_2dTC_TH_2}
	\end{equation}
	where $\mathcal Z_y(5/2)$ is the extensive $Z$ dual Wilson line along $y$ direction at $x=5/2$. The following four conditions ensure that the above decomposition is legal, and each $\tilde{\mathcal{H}}_{i+1/2}$ can be identified as the Hilbert space of a local spin:
	\begin{enumerate}
		\item $\tilde{X}_{i+1/2},\tilde{Z}_{i+1/2}$ both commute with all stabilizers, so that $\tilde{X}_{i+1/2},\tilde{Z}_{i+1/2}$ map $\tilde{\mathcal{H}}$ to itself;
		\item $\tilde{X}_{i+1/2},\tilde{Z}_{i+1/2}$ generate the full $2\times 2$ matrix algebra $\mathcal{M}_2(\mathbb{C})\cong\mathcal{B}(\tilde{\mathcal{H}}_{1+1/2})$;
		\item $\tilde{X}_{i+1/2},\tilde{Z}_{i+1/2}$ are local along $x$-direction, so each $\mathcal{B}(\tilde{\mathcal{H}}_{i+1/2})$ is local along $x$-direction.
		\item $\forall i\neq j$, $\left[\mathcal{B}(\tilde{\mathcal{H}}_{i+1/2}),\mathcal{B}(\tilde{\mathcal{H}}_{j+1/2})\right]=0$, i.e. operators acting on $\tilde{\mathcal{H}}_{i+1/2}$ and $\tilde{\mathcal{H}}_{j+1/2}$ are commutable.
	\end{enumerate}
	From Eq.~(\ref{details_2dTC_TH_2}) it can be seen that conditions 1,3 and 4 are true. As for condition 2, $\tilde{X}_{i+1/2},\tilde{Z}_{i+1/2}$ indeed generate the full $2\times 2$ matrix algebra $\mathcal{M}_2(\mathbb{C})\cong\mathcal{B}(\tilde{\mathcal{H}}_{i+1/2})$, since (1) $\tilde{X}_{i+1/2},\tilde{Z}_{i+1/2}$ generate $\mathbb{I}$ and $\tilde{X}_{i+1/2}\tilde{Z}_{i+1/2}$ under multiplication; (2) $\mathbb{I},\tilde{X}_{i+1/2},\tilde{Z}_{i+1/2},\tilde{X}_{i+1/2}\tilde{Z}_{i+1/2}$ are linear independent under addition and $\text{dim}_\mathbb{C}\mathcal{M}_2(\mathbb{C})=4$. Moreover, since $\left\{\tilde{X}_{i+1/2},\tilde{Z}_{i+1/2}\right\}=0$, $\tilde{X}_{i+1/2}^2=\tilde{Z}_{i+1/2}^2=1$, and $\tilde{X}_{i+1/2},\tilde{Z}_{i+1/2}$ are hermitian, we can identify $\tilde{X}_{i+1/2},\tilde{Z}_{i+1/2}$ as the Pauli $X$ and Pauli $Z$ of the spin associated with $\tilde{\mathcal{H}}_{i+1/2}$, respectively.
	
	This identification is related to that $m$ particles are condensed on the top boundary: $m$ particles (i.e. $A_p=-1$ excitations) are created by dual $Z$ Wilson lines, so dual $Z$ Wilson lines attached to the top boundary does not violate any stabilizer, i.e. commute with all stabilizers. On the other hand, this identification is not unique, a pointwise basis change unitary can be applied, so that the Pauli $X$ on the bottom boundary at $i+1/2$ and $\mathcal Z_y(i+1/2)$ are identified as
	\begin{equation}
		\tilde{X}'_{i+1/2}=\tilde{U}_{i+1/2}\tilde{X}_{i+1/2}\tilde{U}^\dagger_{i+1/2}\ \ \text{and}\ \ \tilde{Z}'_{i+1/2}=\tilde{U}_{i+1/2}\tilde{Z}_{i+1/2}\tilde{U}^\dagger_{i+1/2}
	\end{equation} 
	respectively, where $\tilde{U}_{i+1/2}$ is any unitary in $U(2)\leq\mathcal{B}(\tilde{\mathcal{H}}_{i+1/2})$. At the same time, other operators in $\mathcal{B}(\tilde{\mathcal{H}}_{i+1/2})$ will be mapped as well,
	\begin{equation}
		\text{poly}\left(\tilde{X}_{i+1/2},\tilde{Z}_{i+1/2}\right)\mapsto\text{poly}\left(\tilde{X}'_{i+1/2},\tilde{Z}'_{i+1/2}\right)\,.
	\end{equation}
	
	Within the low-energy subspace $\tilde{\mathcal{H}}$, we have $\text{stabilizer}=+1$, i.e. we can regard stabilizers as identity operators. If $A=B\cdot\text{stabilizers}$, we write $A\sim B$. Then we can get the $\tilde{Z}_i\tilde{Z}_{i+1}$ in Fig.~\ref{2dTC_TH_1} (a) as following,
	\begin{equation}
		\begin{tikzpicture}[baseline=8.5ex]
			\foreach \i in {0,1,2,3}{\draw (\i,0)--(\i,3);}
			\foreach \j in {0,1,2,3}{\draw (0,\j)--(4,\j);}
			\draw[dashed] (4,0)--(4,3);
			
			\foreach \i in {1.5, 2.5}{
				\fill[ForestGreen, opacity=0.1] (\i-0.15,-0.2)--(\i+0.15,-0.2)--(\i+0.15,3.2)--(\i-0.15,3.2);
				\foreach \j in {0,1,2,3}{
					\node[ForestGreen] at (\i,\j) {$Z$};
				}
			}
			
			\node[ForestGreen] at (1.4,-0.5) {$\sim\tilde{Z}_i$};
			\node[ForestGreen] at (2.7,-0.5) {$\sim\tilde{Z}_{i+1}$};
		\end{tikzpicture}
		\quad\sim\quad
		\begin{tikzpicture}[baseline=8.5ex]
			\foreach \i in {0,1,2,3}{\draw (\i,0)--(\i,3);}
			\foreach \j in {0,1,2,3}{\draw (0,\j)--(4,\j);}
			\draw[dashed] (4,0)--(4,3);
			
			\fill[ForestGreen, opacity=0.1] (1,-0.2)--(1,0.2)--(3,0.2)--(3,-0.2);
			\fill[ForestGreen, opacity=0.1] (1.8,0)--(2.2,0)--(2.2,1)--(1.8,1);
			\node[ForestGreen] at (1.5,0) {$Z$};
			\node[ForestGreen] at (2.5,0) {$Z$};
			\node[ForestGreen] at (2,0.5) {$Z$};
			\node[ForestGreen] at (2,-0.5) {$\sim\tilde{Z}_i\tilde{Z}_{i+1}$};
			
			\node at (4.5,0.5) {$,$};
		\end{tikzpicture}
	\end{equation}
	where $i=\frac12,\frac32,\cdots,L_x-\frac32$. The exception is that when $i=-\frac12\ \text{mod }L_x$, since $A_{v_0}\sim\tilde{Z}^{\text{twist}}$ [see Eq.~(\ref{details_2dTC_TH_3}) or Fig.~\ref{2dTC_TH_1} (a)] is not a stabilizer, we have the following operator identification,
	\begin{equation}\label{details_2dTC_TH_3}
		\begin{tikzpicture}[baseline=8.5ex, >={Triangle[round,length=2mm,width=1.33mm]}]
			\foreach \i in {0,1,2,3}{\draw (\i,0)--(\i,3);}
			\foreach \j in {0,1,2,3}{\draw (0,\j)--(4,\j);}
			\draw[dashed] (4,0)--(4,3);
			
			\draw[ForestGreen, line width=1.5pt] (0,3) circle (0.15);
			\node[ForestGreen] at (0.5,3.5) {$B_{v_0}\sim\tilde{Z}^{\text{twist}}$};
			
			\foreach \j in {0,1,2,3}{\node[ForestGreen] at (0.5,\j) {$Z$};}
			\fill[ForestGreen, opacity=0.1] (0.3,-0.2)--(0.7,-0.2)--(0.7,3.2)--(0.3,3.2);
			\node[ForestGreen] at (0.5,-0.6) {$\sim\tilde{Z}_{1/2}$};
			\foreach \j in {0,1,2,3}{\node[ForestGreen] at (3.5,\j) {$Z$};}
			\fill[ForestGreen, opacity=0.1] (3.3,-0.2)--(3.7,-0.2)--(3.7,3.2)--(3.3,3.2);
			\node[ForestGreen] at (3.5,-0.6) {$\sim\tilde{Z}_{-1/2}$};
		\end{tikzpicture}
		\quad\sim\quad
		\begin{tikzpicture}[baseline=8.5ex]
			\foreach \i in {0,1,2,3}{\draw (\i,0)--(\i,3);}
			\foreach \j in {0,1,2,3}{\draw (0,\j)--(4,\j);}
			\draw[dashed] (4,0)--(4,3);
			
			\node[ForestGreen] at (0,0.5) {$Z$};
			\node[ForestGreen] at (0.5,0) {$Z$};
			\node[ForestGreen] at (3.5,0) {$Z$};
			\fill[ForestGreen, opacity=0.1] (-0.2,1)--(0.2,1)--(0.2,0)--(-0.2,0);
			\fill[ForestGreen, opacity=0.1] (0,0.2)--(0,-0.2)--(1,-0.2)--(1,0.2);
			\fill[ForestGreen, opacity=0.1] (3,0.2)--(3,-0.2)--(4,-0.2)--(4,0.2);
			
			\node[ForestGreen] at (2,-0.6) {$\sim\tilde{Z}_{1/2}\tilde{Z}_{-1/2}\tilde{Z}^{\text{twist}}$};
			
			\node at (4.5,0.5) {$,$};
		\end{tikzpicture}
	\end{equation}
	Twist can be discussed only after we choose the specific low-energy Hamiltonian $\tilde{H}$ in Eq.~(\ref{eq::low_energy_Hamiltonian}) [or equivalently, Eq.~(\ref{eq::low_energy_Hamiltonian_2dTC_2})], so that the eigenvalue of $A_{v_0}$ determines the sign of the term $\tilde{Z}_{-1/2}\tilde{Z}_{1/2}$. From Eq.~(\ref{details_2dTC_TH_3}) we can clearly see that the appearance of $\tilde{Z}^{\text{twist}}$ in $B_{[0,0]}=\prod_{e\revsubset [0,0]}Z_e$ is because $x(v_0)=0$. If $v_0=[i,L_y]$ instead, then $B_{[i,0]}\sim\tilde{Z}_{i+1/2}\tilde{Z}_{i-1/2}\tilde{Z}^{\text{twist}}$. 
	
	We say there is a defect at $v_0$ when $B_{v_0}=-1$. Whether there is a defect at $v_0$ determines whether the 1d TFIM is twisted or not, since $B_{v_0}\sim\tilde{Z}^{\text{twist}}\sim\mathcal Z_x$, as shown in Eq.~(\ref{details_2dTC_TH_4}) (a), where $\mathcal Z_x$ is the uncontractible $Z$ dual Wilson loop along $x$-direction.
	\begin{equation}\label{details_2dTC_TH_4}
		\begin{tikzpicture}[baseline=8.5ex]
			\foreach \i in {0,1,2,3}{\draw (\i,0)--(\i,3);}
			\foreach \j in {0,1,2,3}{\draw (0,\j)--(4,\j);}
			\draw[dashed] (4,0)--(4,3);
			
			\draw[ForestGreen, line width=1.5pt] (0,3) circle (0.15);
			\node[ForestGreen] at (0.5,3.5) {$B_{v_0}\sim\tilde{Z}^{\text{twist}}$};
			
			\foreach \i in {0,1,2,3}{\node[ForestGreen] at (\i,0.5) {$Z$};}
			\fill[ForestGreen, opacity=0.1] (-0.2,0.3)--(-0.2,0.7)--(4.2,0.7)--(4.2,0.3);
			
			\node[ForestGreen] at (4.8,0.5) {$\mathcal Z_x\sim\tilde{Z}^{\text{twist}}$};
			
			\node at (2,-0.5) {(a)};
		\end{tikzpicture}
		\quad\quad
		\begin{tikzpicture}[baseline=8.5ex]
			\foreach \i in {0,1,2,3}{\draw (\i,0)--(\i,3);}
			\foreach \j in {0,1,2,3}{\draw (0,\j)--(4,\j);}
			\draw[dashed] (4,0)--(4,3);
			
			\draw[ForestGreen, line width=1.5pt] (0,3) circle (0.15);
			\node[ForestGreen] at (0.5,3.5) {$B_{v_0}\sim\tilde{Z}^{\text{twist}}$};
			
			\foreach \j in {0.5,1.5,2.5}{\node[red] at (0,\j) {$X$};}
			\fill[red, opacity=0.1] (-0.2,0)--(0.2,0)--(0.2,3)--(-0.2,3);
			
			\node at (2,-0.5) {(b)};
		\end{tikzpicture}
	\end{equation}
	The operator that creates a defect at $v_0$ (or equivalently, change the boundary condition between twisted and untwisted), and is commutable with all stabilizers is the extensive $X$ Wilson line along $y$-direction at $x=0$, denoted by $\mathcal{X}_y(0)$, as shown in Eq.~(\ref{details_2dTC_TH_4}) (b). Since $\mathbb{I},\tilde{X}^{\text{twist}},\tilde{Z}^{\text{twist}},\tilde{X}^{\text{twist}}\tilde{Z}^{\text{twist}}$ are linear independent and $\text{dim}_\mathbb{C}\mathcal{B}(\tilde{\mathcal{H}^{\text{twist}}})=4$, $\tilde{X}^{\text{twist}},\tilde{Z}^{\text{twist}}$ generate the whole $2\times 2$ matrix algebra under multiplication and addition. And since $\forall\ i\in\{0,1,\cdots,L_x-1\}$, $\tilde{X}_{i+1/2},\tilde{Z}_{i+1/2}$ are commutable with $\tilde{Z}^{\text{twist}},\tilde{X}^{\text{twist}}$, the whole $\tilde{\mathcal{H}}$ indeed decomposes to $\tilde{\mathcal{H}}^{\text{twist}}\bigotimes_{i=0}^{L_x-1}\tilde{\mathcal{H}}_{i+1/2}$.
	
	Finally, the global symmetry $\prod_i\tilde{X}_i$ of the 1d TFIM is naturally equivalent to the 1-form uncontractible $X$ Wilson loop along $x$-direction $\mathcal{X}_x\sim\prod_i\tilde{X}_i$. It is consistent that $\tilde{Z}_i$ toggles $\prod_i\tilde{X}_i$ and $\tilde{Z}_i\tilde{Z}_{i+1}$ preserves $\prod_i\tilde{X}_i$.

	\subsection{with rough top boundary (\texorpdfstring{$e$}{e} particle condensed)}\label{appendix_2dTC_rough_top}

	On the other side, it can be calculated that the dimension of low-energy subspace $\tilde{\mathcal{H}}$ when the top boundary is rough [with an omitted $A_p$ term as shown in Fig.~\ref{2dTC_TH_1} (b)] is also given by $\text{log}_2\dim\tilde{\mathcal H}=L_x+1$. Then, it can be shown that $\tilde{\mathcal{H}}$ can be decomposed into $\tilde{\mathcal{H}}^{\text{twist}}\bigotimes_{i=1}^{L_x}\tilde{\mathcal{H}}_{i}$, where $\tilde{\mathcal{H}}^{\text{twist}}\cong\tilde{\mathcal{H}}_i\cong\mathbb{C}^2$. Each $\tilde{\mathcal{H}}_i$ has a local (along $x$-direction) operator algebra $\mathcal{B}(\tilde{\mathcal{H}}_i)$, thus can be identified as the Hilbert space of a local $\frac12$-spin. The whole $\tilde{\mathcal{H}}$ is identified as a length-$L_x$ ring with each vertex hosting a $\frac12$-spin, with an extra ancilla qubit with Hilbert space $\tilde{\mathcal{H}}^{\text{twist}}$, encoding twist information. We quickly go through the identification progress here. Define $\tilde{\mathcal{H}}_i$ through its operator algebra $\mathcal{B}(\tilde{\mathcal{H}}_i)$, where $\mathcal{B}(\tilde{\mathcal{H}}_i)$ is generated by $\tilde{X}_i,\tilde{Z}_i$, which are identified as following (we show $\tilde{X}_3$ and $\tilde{Z}_3$ as examples):
	\begin{equation}
		\begin{tikzpicture}[baseline=7ex]
			\foreach \i in {0,1,2,3,4}{\draw (\i,0)--(\i,3);}
			\foreach \j in {0,1,2}{\draw (0,\j)--(5,\j);}
			\draw[dashed] (5,0)--(5,3);
			
			\fill[ForestGreen, opacity=0.1] (2,-0.2)--(4,-0.2)--(4,0.2)--(2,0.2);
			\fill[ForestGreen, opacity=0.1] (2.9,0)--(3.3,0)--(3.3,1)--(2.9,1);
			\node[ForestGreen] at (2.5,0) {$Z$};
			\node[ForestGreen] at (3.5,0) {$Z$};
			\node[ForestGreen] at (3.1,0.5) {$Z$};
			\node[ForestGreen] at (3,-0.5) {$\sim\tilde{X}_3$};
			
			\fill[red, opacity=0.1] (2.7,0)--(3.1,0)--(3.1,3)--(2.7,3);
			\foreach \j in {0.5,1.5,2.5}{\node[red] at (2.9,\j) {$X$};}
			\node[red] at (3,3.3) {$\sim\tilde{Z}_3$};
			
			\node at (0,-1) {0};
			\node at (1,-1) {1};
			\node at (2,-1) {2};
			\node at (3,-1) {3};
			\node at (4,-1) {4};
			\node at (5,-1) {5};
			\draw[->] (0,-1.25)--(5.2,-1.25);
			\node at (5.4,-1.25) {$x$};
			
			\node at (6,-1) {$,$};
		\end{tikzpicture}
	\end{equation}
	Each $\tilde{\mathcal{H}}_i$ is identified as the Hilbert space of a local $\frac12$-spin (on $x=i$ vertex) in the length-$L_x$ spin ring. $\tilde{X}_i,\tilde{Z}_i$ are identified as the Pauli $X,Z$ of the spin associated with the Hilbert space $\tilde{\mathcal{H}}_i$, respectively. This identification is related to that $e$ particles (i.e. $B_v=-1$ excitations) are condensed on the top boundary: $e$ particles are created by $X$ Wilson lines, so $X$ Wilson lines attached to the top boundary commute with all stabilizers. Within the low-energy subspace $\tilde{\mathcal{H}}$, we have $\text{stabilizer}=+1$, the stabilizers can be regarded as identity operators, we can get the $\tilde{Z}_i\tilde{Z}_{i+1}$ in Fig.~\ref{2dTC_TH_1} (b) as following,
	\begin{equation}
		\begin{tikzpicture}[baseline=8.5ex]
			\foreach \i in {0,1,2,3}{\draw (\i,0)--(\i,3);}
			\foreach \j in {0,1,2}{\draw (0,\j)--(4,\j);}
			\draw[dashed] (4,0)--(4,3);
			
			\foreach \i in {2,3}{
				\fill[red, opacity=0.1] (\i-0.2,0)--(\i+0.2,0)--(\i+0.2,3)--(\i-0.2,3);		
				\foreach \j in {0.5,1.5,2.5}{
					\node[red] at (\i,\j) {$X$};
				}
			}
			
			\node[red] at (1.9,-0.4) {$\sim\tilde{Z}_i$};
			\node[red] at (3.2,-0.4) {$\sim\tilde{Z}_{i+1}$};
		\end{tikzpicture}
		\quad\sim\quad
		\begin{tikzpicture}[baseline=8.5ex]
			\foreach \i in {0,1,2,3}{\draw (\i,0)--(\i,3);}
			\foreach \j in {0,1,2}{\draw (0,\j)--(4,\j);}
			\draw[dashed] (4,0)--(4,3);
			
			\fill[red, opacity=0.1] (2,-0.2)--(3,-0.2)--(3,0.2)--(2,0.2);
			\node[red] at (2.5,0) {$X$};
			
			\node[red] at (2.5,-0.5) {$\sim\tilde{Z}_i\tilde{Z}_{i+1}$};
			
			\node at (4.5,0.5) {$,$};
		\end{tikzpicture}
	\end{equation}
	where $i=1,2,\cdots,L_x-1$. The exception is that when $i=0\text{ mod }L_x$, since $A_{p_0}\sim\tilde{Z}^{\text{twist}}$ [see Eq.~(\ref{details_2dTC_TH_5}) or Fig.~\ref{2dTC_TH_1} (b)] is not a stabilizer, we have the following operator identification,
	\begin{equation}\label{details_2dTC_TH_5}
		\begin{tikzpicture}[baseline=8.5ex]
			\foreach \i in {0,1,2,3}{\draw (\i,0)--(\i,3);}
			\foreach \j in {0,1,2}{\draw (0,\j)--(4,\j);}
			\draw[dashed] (4,0)--(4,3);
			
			\foreach \i in {0,1}{
				\fill[red, opacity=0.1] (\i-0.2,0)--(\i+0.2,0)--(\i+0.2,3)--(\i-0.2,3);		
				\foreach \j in {0.5,1.5,2.5}{
					\node[red] at (\i,\j) {$X$};
				}
			}
			
			\node[red] at (0,-0.4) {$\sim\tilde{Z}_0$};
			\node[red] at (1,-0.4) {$\sim\tilde{Z}_1$};
			
			\draw[red, line width=1.5pt] (0.5,2.5) circle (0.15);
			\node[red] at (0.5,2.9) {$A_{p_0}$};
			\node[red] at (1.4,2.9) {$\sim\tilde{Z}^{\text{twist}}$};
		\end{tikzpicture}
		\quad\sim\quad
		\begin{tikzpicture}[baseline=8.5ex]
			\foreach \i in {0,1,2,3}{\draw (\i,0)--(\i,3);}
			\foreach \j in {0,1,2}{\draw (0,\j)--(4,\j);}
			\draw[dashed] (4,0)--(4,3);
			
			\fill[red, opacity=0.1] (0,-0.2)--(1,-0.2)--(1,0.2)--(0,0.2);
			\node[red] at (0.5,0) {$X$};
			
			\node[red] at (1,-0.5) {$\sim\tilde{Z}_0\tilde{Z}_{1}\tilde{Z}^{\text{twist}}$};
			
			\node at (4.5,0.5) {$,$};
		\end{tikzpicture}
	\end{equation}
	From Eq.~(\ref{details_2dTC_TH_5}) we can clearly see that the appearance of $\tilde{Z}^{\text{twist}}$ in $X_{[1/2,0]}$ is because $x(p_0)=1/2$. If $p_0=[i,L_y]$ instead, then $X_{[i,0]}\sim\tilde{Z}_{i-1/2}\tilde{Z}_{i+1/2}\tilde{Z}^{\text{twist}}$.
	
	The rest of the operator identification is similar as the case when the top boundary is smooth, we do not illustrate any more [see Fig.~\ref{2dTC_TH_1} (b)].

	\section{Higher-order stabilizer redundancy: a tool of counting GSD}\label{section_redundancy_formalism_theory}
	
	A usual way of counting GSD of $\mathbb{Z}_2$ stabilizer code is counting the number of spins and the number of independent stabilizer generators, then using the formula
	\begin{equation}\label{red_formal_theory_5}
		\text{log}_2\text{GSD}=\#\,\text{spin}\ -\ \#\,\text{independent stabilizer generators}
	\end{equation}
	to obtain GSD. For stabilizer code with special structure (e.g. translation invariant), sometimes it is convenient to count a set of dependent stabilizer generators (with some redundant generators), and then count the redundancy of stabilizer generators. This can be formalized as the following exact chain:
	\begin{equation}
		\mathbb{F}_2^{\mathcal{R}\mathbb{S}}\xrightarrow{V_s}\mathbb{F}_2^{\mathbb{S}}\xrightarrow{V_P}\mathbb{F}_2^{2n}\,,
	\end{equation}
	where $\mathbb{F}_2^{2n}=\mathbb{F}_2[X_1,\cdots,X_n,Z_1,\cdots,Z_n]$ is isomorphic to the Pauli group modulo phase factors\footnote{The commutation relation can be given by adding a symplectic structure to this vector space.}, $\mathbb{S}=\left\{s_1,\cdots,s_{|\mathbb{S}|}\right\}$ is a set of (possibly redundant) stabilizer generators, $\mathcal{R}\mathbb{S}=\left\{r^1_1,\cdots,r^1_{|\mathcal{R}\mathbb{S}|}\right\}$ is an independent set of stabilizer redundancies, i.e. (i) $\left\{V_s(r^1_i):i=1,\cdots,|\mathcal{R}\mathbb{S}|\right\}$ is linear independent, and (ii) $\forall r^1\in\mathbb{F}_2^{\mathcal{R}\mathbb{S}}$, $V_P\circ V_s(r^1)=0$. $r^1$ being a stabilizer redundancy is encoded into the chain being exact here. We go through details of this formalism now.
	
	Suppose the set $\mathbb{S}$ generates the stabilizer group $\mathcal{S}$. We represent each $s\in\mathbb{S}$ as a unit $|\mathbb{S}|$-dimensional vector over $\mathbb{F}_2$, these vectors span $\mathbb{F}_2^{\mathbb{S}}=\mathbb{F}_2\left[s_1,\cdots,s_{|\mathbb{S}|}\right]$. With the abuse of notation, we do not distinguish $s_i$ and their corresponding unit/basis vectors in $\mathbb{F}_2^{\mathbb{S}}$. $s_i$ are Pauli operators on the $n$ physical qubits, and can be written as the products of single qubit $X$ and $Z$, the information of $s_i$ can be encoded into a linear map $V_P$ from $\mathbb{F}_2^{\mathbb{S}}$ to $\mathbb{F}_2^{2n}$, s.t.
	\begin{equation}
		s_i=\prod_{j=1}^n X_i^{(V_P(s_i))_j} Z_i^{(V_P(s_i))_{j+n}}\ ,\quad i=1,\cdots,|\mathbb{S}|\,.
	\end{equation}
	A stabilizer redundancy is a linear combination of $s_i$, e.g. $\xi=\sum_{i=1}^{|\mathbb{S}|}c_is_i\in\mathbb{F}_2^{\mathbb{S}}$, s.t. $V_P(\xi)=0$. If $\xi_1,\xi_2\in\mathbb{F}_2^{\mathbb{S}}$ are stabilizer redundancies, $\xi_1+\xi_2$ is also a stabilizer redundancy. The stabilizer redundancies form an abelian group $\mathcal{RS}$, with an independent generator set $\mathcal{R}\mathbb{S}$. Analogous to encoding stabilizer generators $\mathbb{S}$ into the linear map $V_P$, the generator set $\mathcal{R}\mathbb{S}$ can be encoded into a linear map $V_s:\mathbb{F}_2^{\mathcal{R}\mathbb{S}}\to\mathbb{F}_2^{\mathbb{S}}$, s.t. $\forall r^1\in\mathcal{R}\mathbb{S}$,
	\begin{equation}\label{red_formal_theory_1}
		V_P\circ V_s(r^1)\overset{P}{=}0\,,
	\end{equation}
	where $\overset{P}{=}$ is the equality in $\mathbb{F}_2^{2n}$, i.e. for any two Pauli operators $P_1,P_2$,
	\begin{equation}
		P_1+P_2\overset{P}{=}0\quad\Longleftrightarrow\quad P_1\cdot P_1=1\,.
	\end{equation}
	The condition in Eq.~(\ref{red_formal_theory_1}) is equivalent to
	\begin{equation}\label{red_formal_theory_2}
		\text{im}\, V_s\overset{s}{=}\text{ker}\, V_P\,,
	\end{equation}
	where $\overset{s}{=}$ is the equivalence in $\mathbb{F}_2^{\mathbb{S}}$. According to rank-nullity theorem,
	\begin{gather}
		\text{dim}(\text{im}\,V_P)+\text{dim}(\text{ker}\,V_P)=\text{dim}(\text{domain}\,V_P)=|\mathbb{S}|\,,\label{rank_nullity_theorem_1}\\
		\text{dim}(\text{im}\,V_s)+\text{dim}(\text{ker}\,V_s)=\text{dim}(\text{domain}\,V_s)=|\mathcal{R}\mathbb{S}|\,.\label{rank_nullity_theorem_2}
	\end{gather}
	We have supposed that $\mathcal{R}\mathbb{S}$ is an independent set of stabilizer redundancies, which means $\text{ker}\,V_s=\{0\}$, $\text{dim}(\text{ker}\,V_s)=0$, take it into Eq.~(\ref{rank_nullity_theorem_2}), we get
	\begin{equation}\label{red_formal_theory_3}
		\text{dim}(\text{im}\,V_s)=|\mathcal{R}\mathbb{S}|\,.
	\end{equation}
	Take Eq.~(\ref{red_formal_theory_2}) into Eq.~(\ref{rank_nullity_theorem_1}), we get
	\begin{equation}\label{red_formal_theory_4}
		\text{dim}(\text{im}\,V_P)+\text{dim}(\text{im}\,V_s)=|\mathbb{S}|\,.
	\end{equation}
	Take Eq.~(\ref{red_formal_theory_3}) into Eq.~(\ref{red_formal_theory_4}), and notice that $\#\text{independent stabilizer}=\text{dim}(\text{im}\,V_P)$, we get
	\begin{equation}
		\#\text{independent stabilizer}=|\mathbb{S}|-|\mathcal{R}\mathbb{S}|\,.
	\end{equation}
	Take it back to Eq.~(\ref{red_formal_theory_5}), we get
	\begin{equation}
		\text{log}_2\text{GSD}=n-\left(|\mathcal{S}|-|\mathcal{R}\mathbb{S}|\right)\,.
	\end{equation}
	
	However, sometimes we meet the situation that $\mathcal{R}\mathbb{S}$ is dependent, we have to further consider its dependency. For stabilizer codes with indefinite size (e.g. 3-dimensional toric code, X-cube), this situation is quite common. We discuss this case now. Suppose $\mathcal{R}\mathbb{S}=\left\{\left.r^1_i\right|i=1,\cdots,|\mathcal{R}\mathbb{S}|\right\}$. Define an order-2 stabilizer redundancy as a linear combination of $\{r^1_i\}$, say, $\eta\in\mathbb{F}_2^{\mathcal{R}\mathbb{S}}$, s.t. $V_s(\eta)\overset{s}{=}0$. The order-2 stabilizer redundancies form an abelian group $\mathcal{R}^2\mathbb{S}$, with a generator set $\mathcal{R}^2\mathbb{S}$, with a generator set $\mathcal{R}^2\mathbb{S}$. Analogous to encoding stabilizer generators $\mathbb{S}$ into the linear map $V_P$ and encoding order-1 stabilizer redundancy generators $\mathcal{R}\mathbb{S}$ into the linear map $V_s$, the generator set $\mathcal{R}^2\mathbb{S}$ can be encoded into a linear map $V_1:\mathbb{F}_2^{\mathcal{R}^2\mathbb{S}}\to\mathbb{F}_2^{\mathcal{R}\mathbb{S}}$, s.t. $\forall r^2\in\mathbb{F}_2^{\mathcal{R}^2\mathbb{S}}$,
	\begin{equation}
		V_s\circ V_1(r^2)\overset{s}{=}0\,,
	\end{equation}
	i.e.
	\begin{equation}
		\text{im}\, V_1\overset{s}{=}\text{ker}\,V_s\,.
	\end{equation}
	This process can be iteratively done, ending with an infinite exact chain
	\begin{equation}
		\cdots\xrightarrow{V_2}\mathbb{F}_2^{\mathcal{R}^2\mathbb{S}}\xrightarrow{V_1}\mathbb{F}_2^{\mathcal{R}\mathbb{S}}\xrightarrow{V_s}\mathbb{F}_2^{\mathbb{S}}\xrightarrow{V_P}\mathbb{F}_2^{2n}\,.
	\end{equation}
	For clarity, we denote the equality in $\mathbb{F}_2^{\mathcal{R}^m\mathbb{S}}$ as $\overset{m}{=}$. The exact chain gives a formula for computing GSD
	\begin{equation}
		\text{log}_2\text{GSD}=n-|\mathbb{S}|+|\mathcal{R}\mathbb{S}|-|\mathcal{R}^2\mathbb{S}|+\cdots
	\end{equation}
	When $\mathcal{R}^m\mathbb{S}$ is linear independent, $|\mathcal{R}^M\mathbb{S}|=0$ for all $M>m$, the series is cut off.

	\section{Computing the dimension of low-energy subspace of X-cube FTH}\label{appendix_X_cube_low_energy_subspace}

	\subsection{under planeon condensed top boundary}\label{appendix_X_cube_FTH_low_energy_subspace_planeon_condensed_top}

	In this subsection, we use the higher-order redundancy theory introduced in Appendix.~\ref{section_redundancy_formalism_theory} to compute the dimension of the low-energy subspace $\tilde{\mathcal{H}}$, i.e. the common $+1$ eigenspace of all the stabilizers\footnote{Here ground space is replaced by the low-energy subspace $\tilde{\mathcal{H}}$, the underlying math is the same.} of the X-cube FTH under planeon condensed top boundary. The lattice and stabilizer setting follow Sec.~\ref{subsec_X_cube_FTH_under_planeon_condensed_top_boundary}. For clarity, we start by counting the number of spins and the number of independent stabilizers when all $B_{v,l}$ terms are present: denote the number of vertices, edges, plaquettes, cubes as $V,E,P,C$, respectively,
	\begin{gather}
		V=L_xL_y(L_z+1)\quad,\quad E=L_xL_y(L_z+1)+L_yL_x(L_z+1)+L_zL_xL_y\notag\\
		P=L_xL_y(L_z+1)+L_xL_zL_y+L_yL_zL_x\quad,\quad C=L_xL_yL_z
	\end{gather}
	The number of spin is
	\begin{equation}
		\#\,\text{spin}=E=2L_xL_y(L_z+1)+L_xL_yL_z\,.
	\end{equation}
	The number of $A_c$ term is
	\begin{equation}
		\#\,A_c=C=L_xL_yL_z\,.
	\end{equation}
	The incomplete $B_{v,l}$ terms with three edges on the top and bottom boundaries are present, so the number of $B_{v,l}$ for any $l\in\{xy,yz,zx\}$ equals to the number of vertices, i.e.
	\begin{gather}
		\#\,B_{v,xy}=\#\,B_{v,yz}=\#\,B_{v,zx}=V=L_xL_y(L_z+1)\,,
	\end{gather}
	the number of $B_{v,l}$ terms is
	\begin{equation}
		\#\,B_{v,l}=3V=3L_xL_y(L_z+1)\,.
	\end{equation}
	Let us stick to the following symbol defined in Ref.~\cite{hyt_1_prepare_TD_model_via_SQC}: for any set-valued function 
	\begin{equation}
		g:X\to Y,\ x\mapsto y=g(x)\,,
	\end{equation}
	define 
	\begin{align}
		g(*)\equiv\bigcup_{x\in X}g(x)\,.
	\end{align}
	With the notation $*$, cubes or plaquettes in an extended line or plane can be conveniently represented, as shown in Fig.~\ref{*-intuition}.
	\begin{figure}[t]
		\centering
		\hspace*{-1cm}
		\begin{subfigure}{0.28\textwidth}
			\includegraphics[width=1\textwidth]{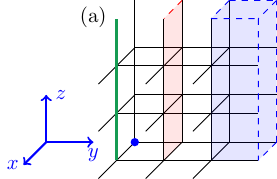}
		\end{subfigure}
		\begin{subfigure}{0.18\textwidth}
			\includegraphics[width=1.15\textwidth]{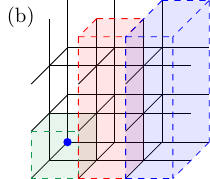}
		\end{subfigure}
		\caption{Illustration of $*$ notation. The original point is dotted blue in the two figures, and the lattice constant is taken to be 1. (a) The ForestGreen line is a circle $S^1$, which can be denoted by $\left[1,0,*\right]$, and viewed as the trajectory (uncontractible loop) of dragging the vertex $[1,0,0]$ along $x_3$-direction. The red surface is a spanned circle $S^1\times B^1$, which can be denoted by $\left[\frac12,1,*\right]$, and viewed as the trajectory (uncontractible loop) of dragging the edge $\left[\frac12,1,0\right]$ along $x_3$-direction. Similarly, the blue volume is a spanned circle $S^1\times B^2$, which can be denoted by $\left[\frac12,\frac52,*\right]$, and viewed as the trajectory (uncontractible loop) of dragging the plaquette $\left[\frac12,\frac52,0\right]$ along $x_3$-direction. (b) The ForestGreen cube, red volume and blue volume can be denoted by $\left[\frac32,\frac12,\frac12\right],\ \left[\frac32,\frac32,*\right]$ and $\left[*,\frac52,*\right]$, respectively. Topologically, they are $B^3$, $T^1\times B^2$ and $T^2\times B^1$, respectively. $T^1=S^1$. For a $D$-dimensional cubic lattice, when $m$ inputs are half-integers and $n$ inputs are taken to be $*$, the output is topologically $T^n\times B^m$.}
		\label{*-intuition}
	\end{figure}
	Let the stabilizer generator set
	\begin{equation}
		\mathbb{S}=\left\{A_c\right\}\cup\left\{B_{v,l}\right\}\,,
	\end{equation}
	we use the higher-order redundancy theory introduced in Appendix.~\ref{section_redundancy_formalism_theory} to count the number of independent stabilizer generators. All the $A_c$ terms redundancies can be generated by the following redundancies,
	\begin{equation}
		V_s\left(r^1\Big(A,\Big[*,*,z+\frac12\Big]\Big)\right)\overset{s}{=}\sum_{x\in\mathbb{Z}_{L_x},y\in\mathbb{Z}_{L_y}}A_{[x+\frac12,y+\frac12,z+\frac12]}\ ,\quad z\in\mathbb{Z}_{L_z}\,,
	\end{equation}
	so we have
	\begin{equation}
		\#\,r^1\left(A,\Big[*,*,z+\frac12\Big]\right)=L_z\,.
	\end{equation}
	For the $B_{v,l}$ terms, first we notice that they have the following local redundancies,
	\begin{equation}
		V_s\left(r^1(B,v)\right)\overset{s}{=}\sum_{l=xy,yz,zx}B_{v,l}\ \ ,\ \ \text{for any vertex }v\ \ ,\ \ \#=V\,.
	\end{equation}
	Then, within any $xy$ plane $[*,*,z]$,
	\begin{equation}
		V_s\left(r^1(B,[*,*,z])\right)\overset{s}{=}\sum_{v\subset[*,*,z]}B_{v,xy}\ \ ,\ \ z=0,1,\cdots,L_z\ \ ,\ \ \#=L_z+1\,.
	\end{equation}
	Similarly, within any $yz$ plane $[x,*,*]$ or $zx$ plane $[*,y,*]$,
	\begin{gather}
		V_s\left(r^1(B,[x,*,*])\right)\overset{s}{=}\sum_{v\subset[x,*,*]}B_{v,yz}\ \ ,\ \ x=0,1\cdots,L_x-1\ \ ,\ \ \#=L_x\,,\\
		V_s\left(r^1(B,[*,y,*])\right)\overset{s}{=}\sum_{v\subset[*,y,*]}B_{v,zx}\ \ ,\ \ y=0,1\cdots,L_y-1\ \ ,\ \ \#=L_y\,.
	\end{gather}
	Then, there is a second order redundancy
	\begin{equation}\label{second_order_redundancy_X_cube}
		V_1\left(r^2\right)\overset{1}{=}\sum_{v}\sum_{l=xy,yz,zx}B_{v,l}+\sum_{z=0}^{L_z}\sum_{v\subset[*,*,z]}B_{v,xy}+\sum_{x=0}^{L_x-1}\sum_{v\subset[x,*,*]}B_{v,yz}+\sum_{y=0}^{L_y-1}\sum_{v\subset[*,y,*]}B_{v,zx}\ \ ,\ \ \#=1\,.
	\end{equation}
	In all, 
	\begin{align}
		\text{log}_2\text{dim}\,\tilde{\mathcal{H}}=\#\text{spin}-(\# A_c + \# B_{v,l})+ \# r^1 - \# r^2 = L_x+L_y+2L_z\,.
	\end{align}
	
	Next, delete the $B_{v,xz},B_{v,yz}$ terms on the bottom boundary from the stabilizer generator set. We count the change of $\text{log}_2\text{dim}\,\tilde{\mathcal{H}}$ when deleting the $B_{v,l}$ terms on the bottom boundary from the stabilizer generator set.
	
	First, there are 2 $B_{v,xz},B_{v,yz}$ terms associated with each vertex on the bottom boundary, in total there are $2L_xL_y$ of them. Denote the bottom boundary (point set) as b.b.. Deleting the $B_{v,xz},B_{v,yz}$ terms on the bottom boundary makes the following 1st order stabilizer redundancies disappear:
	\begin{gather}
		V_s\big(r^1(B,v)\big)\overset{s}{=}\sum_{l=xy,yz,zx}B_{v,l}\ \ ,\ \ v\in\mathcal{V}_b\ \ ,\ \ \#=L_xL_y\\
		V_s\big(r^1(B,[x,*,*])\big)\overset{s}{=}\sum_{v\subset[x,*,*]}B_{v,yz}\ \ ,\ \ x=0,1,\cdots,L_x-1\ \ ,\ \ \#=L_x\label{X_cube_redundancy_1}\\
		V_s\big(r^1(B,[*,y,*])\big)\overset{s}{=}\sum_{v\subset[*,y,*]}B_{v,zx}\ \ ,\ \ y=0,1,\cdots,L_y-1\ \ ,\ \ \#=L_y\,.\label{X_cube_redundancy_2}
	\end{gather}
	The second order redundancy in Eq.~(\ref{second_order_redundancy_X_cube}) also disappears after the deletion, $\#=1$. Therefore, deleting the $B_{v,xz},B_{v,yz}$ terms on the bottom boundary increases $\text{log}_2\text{dim}\,\tilde{\mathcal{H}}$ by 
	\begin{equation}
		2L_xL_y-(L_xL_y+L_x+L_y)+1=L_xL_y-L_x-L_y+1\,,
	\end{equation}
	changes $\text{log}_2\text{dim}\,\tilde{\mathcal{H}}$ from $L_x+L_y+2L_z$ to $L_xL_y+2L_z+1$.
	
	Finally, let us count the $\text{log}_2\text{dim}\,\tilde{\mathcal{H}}$ after deleting the $B_{v,xz},B_{v,yz}$ terms on the top boundary (the ones marked by half-circles in Fig.~\ref{fig_stabilizer_set_of_X_cube_FTH}). There are $2(L_x+L_y-1)$ $B_{v,l}$ terms marked by half-circles in Fig.~\ref{fig_stabilizer_set_of_X_cube_FTH}, and deleting them makes the following $L_x+L_y-1$ order-1 redundancies disappear:
	\begin{gather}
		V_s\big(r^1(B,[i,0,L_z])\big)\overset{s}{=}\sum_{l=xy,yz,zx}B_{[i,0,L_z],l}\ \ ,\ \ i=0,1,\cdots,L_x-1\ \ ,\ \ \#=L_x\,,\\
		V_s\big(r^1(B,[0,j,L_z])\big)\overset{s}{=}\sum_{l=xy,yz,zx}B_{[0,j,L_z],l}\ \ ,\ \ j=1,2,\cdots,L_y-1\ \ ,\ \ \#=L_y-1\,.
	\end{gather}
	So, deleting the $B_{v,l}$ terms marked by half-circles in Fig.~\ref{fig_stabilizer_set_of_X_cube_FTH} increases $\text{log}_2\text{dim}\,\tilde{\mathcal{H}}$ by $L_x+L_y-1$, deleting these $B_{v,l}$ terms changes $\text{log}_2\text{dim}\,\tilde{\mathcal{H}}$ from $L_xL_y+2L_z+1$ to $L_xL_y+L_x+L_y+2L_z$, i.e.
	\begin{equation}\label{eq::X-cube_FTH_smooth_GSD}
		\text{log}_2\text{dim}\,\tilde{\mathcal{H}}=L_xL_y+L_x+L_y+2L_z\,,
	\end{equation} 
	which is the final result of $\text{log}_2\text{dim}\,\tilde{\mathcal{H}}$ in our construction.

	\subsection{under lineon condensed top boundary}\label{appendix_low_energy_subspace_dim_under_linear_condensed_top}

	In this subsection, we use the higher-order redundancy theory introduced in Appendix.~\ref{section_redundancy_formalism_theory} to compute the dimension of the low-energy subspace $\tilde{\mathcal{H}}$, i.e. the common $+1$ eigenspace of all the stabilizers, under the lineon condensed top boundary. The lattice and stabilizer setting follow from Sec.~\ref{subsec_X_cube_FTH_under_lineon_condensed_top_boundary}. For clarity, we start by counting the number of spins and the number of independent stabilizers when all $A_c$ and $B_{v,l}$ terms are present\footnote{Again, the $B_{v,l}$ terms on vertices with $z=L_z+1$ are excluded, since these single spin Pauli $Z$ would break the stabilizer code condition.}:
	
	Denote $L'_z\equiv L_z+1$. The number of cells are
	\begin{gather}
		V=L_xL_y(L'_z+1)\quad,\quad E=L_xL_yL'_z+L_yL_xL'_z+L'_zL_xL_y=3L_xL_yL'_z\notag\\
		P=L_xL_yL'_z+L_xL'_zL_y+L_yL'_zL_x=3L_xL_yL'_z\quad,\quad C=L_xL_yL'_z
	\end{gather}
	The number of spin is
	\begin{equation}
		\#\text{spin}=E=3L_xL_yL'_z
	\end{equation}
	Each cube hosts an $A_c$ term,
	\begin{equation}
		\# A_c=C=L_xL_yL'_z\,.
	\end{equation}
	Not all vertices host $B_{v,l}$ terms. The $L_xL_y$ vertices on the top boundary do not host $B_{v,l}$ terms, so
	\begin{equation}
		\# B_{v,l}=3(V-L_xL_y)=3L_xL_yL'_z\,.
	\end{equation}
	All the $A_c$ terms redundancies can be generated by the following redundancies,
	\begin{equation}\label{X_cube_redundancy_3}
		V_s\left(r^1\Big(A,\Big[*,*,z+\frac12\Big]\Big)\right)\overset{s}{=}\sum_{x\in\mathbb{Z}_{L_x},y\in\mathbb{Z}_{L_y}}A_{[x+\frac12,y+\frac12,z+\frac12]}\ \ ,\ \ z\in\mathbb{Z}_{L'_z}\ \ ,\ \ \#=L'_z
	\end{equation}
	For the $B_{v,l}$ terms, there is a redundancy on each vertex,
	\begin{equation}\label{X_cube_redundancy_4}
		V_s\left(r^1(B,v)\right)\overset{s}{=}\sum_{l=xy,yz,zx}B_{v,l}\ \ ,\ \ \text{for any vertex }v\text{ hosting }B_{v,l}\text{ terms}\ \ ,\ \ \#=L_xL_yL'_z\,.
	\end{equation}
	Within each $xy$ plane $[*,*,z]$,
	\begin{equation}\label{X_cube_redundancy_5}
		V_s\left(r^1(B,[*,*,z])\right)\overset{s}{=}\sum_{v\subset[*,*,z]}B_{v,xy}\ \ ,\ \ z=0,1,\cdots,L_z-1\ \ ,\ \ \#=L'_z\,.
	\end{equation}
	The stabilizer redundancies in Eqs.~(\ref{X_cube_redundancy_3},\ref{X_cube_redundancy_4},\ref{X_cube_redundancy_5}) are independent, so
	\begin{equation}
		\text{log}_2\text{dim}\,\tilde{\mathcal{H}}=\#\text{spin}-(\# A_c+\# B_{v,l})+\# r^1 = 2L'_z=2(L_z+1)\,.
	\end{equation}
	
	Next, delete the $B_{v,xz},B_{v,yz}$ terms on the bottom boundary. There are $2L_xL_y$ $B_{v,xz},B_{v,yz}$ terms on the bottom boundary, and $L_xL_y$ 1st order redundancies disappear because of the deletion of $B_{v,l}$ terms on the bottom boundary, which are
	\begin{gather}
		V_s\left(r^1(B,v)\right)\overset{s}{=}\sum_{l=xy,yz,zx}B_{v,l}\ \ ,\ \ v\subset\text{b.b.}\ \ ,\ \ \#=L_xL_y\,,
	\end{gather}
	where $\text{b.b.}\equiv[*,*,0]$ is the bottom boundary. So, deleting the $B_{v,l}$ terms on the bottom boundary changes $\text{log}_2\text{dim}\,\tilde{\mathcal{H}}$ from $2(L_z+1)$ to $2(L_z+1)+(2L_xL_y-L_xL_y)=L_xL_y+2(L_z+1)$.
	
	Then, delete the hollowed $A_c$ terms on the top boundary, as shown in Fig.~\ref{fig_X_cube_FTH_rough_top_stabilizers}. There are $L_x+L_y-1$ hollowed/deleted $A_c$ terms on the top boundary, and deleting these $A_c$ terms makes 1 stabilizer redundancy disappear, i.e.
	\begin{equation}\label{X_cube_redundancy_6}
		V_s\left(r^1\Big(A,\Big[*,*,L_z+\frac12\Big]\Big)\right)\overset{s}{=}\sum_{x\in\mathbb{Z}_{L_x},y\in\mathbb{Z}_{L_y}}A_{[x+\frac12,y+\frac12,L_z+\frac12]}\ \ ,\ \ \#=1\,.
	\end{equation}
	So, deleting the hollowed $A_c$ terms on the top boundary changes $\text{log}_2\text{dim}\,\tilde{\mathcal{H}}$ from $L_xL_y+2(L_z+1)$ to $L_xL_y+2(L_z+1)+(L_x+L_y-1-1)=L_xL_y+L_x+L_y+2L_z$, which is the same as the $\text{dim}\,\tilde{\mathcal{H}}$ of X-cube FTH on $L_x\times L_y\times L_z$ lattice with smooth top boundary.

	\section{Details of operator identification and redundant DOFs of X-cube FTH}\label{appendix_X_cube_details}
	
	\subsection{Under planeon condensed top boundary}\label{appendix_X_cube_details_planeon}
	
	\paragraph{Twist logical operator and twist indicator. }The identification of twist logical operators are:
	\begin{equation}
		B_{[i,j,L_z],xz}\sim B_{[i,j,L_z],yz}\sim\tilde{Z}^{\text{twist}}(i,j)\ \ ,\ \ i\in\mathbb{Z}_{L_x},\ j\in\mathbb{Z}_{L_y},\ 0\in\{i,j\}\,.
	\end{equation}
	Here $B_{[i,j,L_z],xz}\sim B_{[i,jL_z],yz}$ means differ by a multiplication of stabilizer, while $B_{[i,j,L_z],yz}\sim\tilde Z^{\text{twist}}(i,j)$ means the twist logical operator $B_{[i,j,L_z],yz}$ is identified  as the twist indicator $\tilde Z^{\text{twist}}(i,j)$. The twist logical operators are equivalent to the line-like subsystem symmetries of X-cube [e.g. $B_{[1,0,L_z],yz}\sim \mathcal Z_y(1)$, as shown in Fig.~\ref{fig_X_cube_symmetry_relation_smooth_top}(b)], except $B_{[0,0,L_z],xz}\sim B_{[0,0,L_z],yz}$. More precisely,
	\begin{gather}
		\tilde{Z}^{\text{twist}}(i,0) \sim B_{[i,0,L_z],yz} \sim \mathcal Z_y(i) \equiv \prod_{j=0}^{L_y-1} Z_{[i,j,1/2]}\ \ ,\ \ i=1,\cdots,L_x-1\,,\notag\\
		\tilde{Z}^{\text{twist}}(0,j) \sim B_{[0,j,L_z],xz} \sim \mathcal Z_x(j) \equiv \prod_{i=0}^{L_x-1} Z_{[i,j,1/2]}\ \ ,\ \ j=1,\cdots,L_y-1\,,\label{eq_X_cube_FTH_dual_Z_loop_terminology}
	\end{gather}
	where $\mathcal Z_{y(x)}(i)$ is the uncontractible dual $Z$ loop along $y(x)$-direction in the $x(y)=i,\ z=1/2$ section. By calculation,
	\begin{equation}
		\tilde Z^{\text{twist}}(0,0) = B_{[0,0,L_z],yz} \sim \mathcal Z_x(0)\prod_{i=1}^{L_x-1}\mathcal Z_y(i) \sim \mathcal Z_y(0)\prod_{j=1}^{L_y-1}\mathcal Z_x(j),
	\end{equation}
	which is a composite of uncontractible dual $Z$ loops.

	\paragraph{$f$-transport operator along $z$-direction. }While a specific $f$-transport operator along $z$-direction is illustrated in Fig.~\ref{fig_X_cube_FTH_smooth_top_operator_identification_1}(a), a set of generators of pure $f$-transport operators along $z$-direction, independent from the truncated $B_{v,l}$ terms on the bottom boundary is as following:
	\begin{gather}
		\mathcal T_{f,z}(i+1/2,j)\equiv\prod_{k=0}^{L_z}Z_{[i+1/2,j,k]}\sim\tilde{Z}_{[i+1/2,j-1/2]}\tilde{Z}_{[i+1/2,j+1/2]}\ \ ,\ \ i\in\mathbb{Z}_{L_x},\ j\in\mathbb{Z}_{L_y}\,,\\
		\mathcal T_{f,z}(i,j+1/2)\equiv\prod_{k=0}^{L_z}Z_{[i,j+1/2,k]}\sim\tilde{Z}_{[i-1/2,j+1/2]}\tilde{Z}_{[i+1/2,j+1/2]}\ \ ,\ \ i\in\mathbb{Z}_{L_x},\ j\in\mathbb{Z}_{L_y}\,.
	\end{gather}
	Here $\mathcal T_{f,z}$ represents pure $f$-transport operator along $z$-direction, as defined in Sec.~\ref{subsec_TH_FTH_difference}, they are identified as neighboring pairs of $\tilde Z$, without ambiguity of logical operator.
	
	\paragraph{Truncated $B_{v,l}$ identification without logical ambiguity. }Then, using the condition that operators differing by stabilizers are equivalent in the low-energy subspace, the truncated $B_{v,xz},B_{v,yz}$ terms on the bottom boundary are identified without ambiguity of logical operator as
	\begin{gather}
		B_{[i,j,0],xz}\sim B_{[i,j,0],yz}\sim \prod_{p\revsubset [i,j]}\tilde{Z}_p\ \ ,\ \ i\in\mathbb{Z}_{L_x},\ j\in\mathbb{Z}_{L_y},\ 0\notin\{i,j\},\\
		B_{[i,j,0],xz}\sim B_{[i,j,0],yz}\sim \prod_{p\revsubset [i,j]}\tilde{Z}_p\cdot\tilde{Z}^{\text{twist}}(i,j)\ \ ,\ \ i\in\mathbb{Z}_{L_x},\ j\in\mathbb{Z}_{L_y},\ 0\in\{i,j\}.
	\end{gather}

	\paragraph{$l_z$-transport operator and twist toggler. } 
	We illustrate the $l_z$-transport operator at $x=0,y=1$ as an example, which is identified as $\tilde{X}^{\text{twist}}(0,1)$:
	\begin{equation}\label{eq::X-cube_FTH_smooth_6}
		\begin{tikzpicture}[baseline=4ex]
			\foreach \j in {0,1,2}{\foreach \k in {0,1}{\draw[gray](0,\j,\k)--(3,\j,\k);}}
			\foreach \i in {0,1,2}{\foreach \j in {0,1,2}{\draw[gray](\i,\j,0)--(\i,\j,2);}}
			\foreach \i in {0,1,2}{\foreach \k in {0,1}{\draw[gray](\i,0,\k)--(\i,2,\k);}}
			
			\foreach \i in {0,1,2,3}{\draw[dashed, gray] (\i,0,2)--(\i,2,2);}
			\foreach \j in {0,1,2}{\draw[dashed, gray] (0,\j,2)--(3,\j,2)   (3,\j,0)--(3,\j,2);}
			\foreach \k in {0,1}{\draw[dashed, gray] (3,0,\k)--(3,2,\k);}
			
			\foreach \i/\k in {0/0, 1/0, 2/0, 0/1}{
				\draw[blue, line width=1pt] (\i-0.2,2,\k) arc [start angle=180, end angle=360, radius=0.2];
				\begin{scope}[canvas is yz plane at x=\i]
					\draw[blue, line width=1pt] (2,\k+0.2) arc [start angle=90, end angle=270, radius=0.2];
				\end{scope}
			}
			
			\foreach \i/\k in {1/0}{
				\fill[red, opacity=0.1] (\i-0.2,0,\k)--(\i+0.2,0,\k)--(\i+0.2,2,\k)--(\i-0.2,2,\k);
				\foreach \j in {0.5, 1.5}{\node[red] at (\i,\j,\k) {$X$};}
			}
			\node[red] at (1.3,-0.3,0) {$\tilde{X}^{\text{twist}}(0,1)$};
			
			\draw[->, blue, line width=1pt] (-2-0.7,0,0)--(-2-0.7,0,1);
			\draw[->, blue, line width=1pt] (-2-0.7,0,0)--(-1.2-0.7,0,0);
			\draw[->, blue, line width=1pt] (-2-0.7,0,0)--(-2-0.7,0.8,0);
			\node[blue] at (-2.18-0.7,0,1) {$x$};
			\node[blue] at (-1.2-0.7,-0.22,0) {$y$};
			\node[blue] at (-1.75-0.7,0.8,0) {$z$};
		\end{tikzpicture}\ .
	\end{equation}
	$\mathcal T_{l_z,z}(0,1)$ creates an $l_z$-lineon defect at $[i,j,L_z]$ on the top boundary, and add a twist at $x=0,y=1$ to $\tilde H$ simultaneously.

	\paragraph{Redundant intrinsic logical operators. } The $2(L_z+1)$ pairs of redundant intrinsic logical operators of X-cube FTH are as following:
	\begin{align}
		&\tilde{X}^{\text{red.}}_x(\tau)\equiv\prod_{i=0}^{L_x-1}X_{[i+1/2,0,\tau]}\ \ ,\ \ \tilde{Z}^{\text{red.}}_x(\tau)\equiv \prod_{j=0}^{L_y-1}Z_{[1/2,j,\tau]}\,,\notag\\
		&\tilde{X}^{\text{red.}}_y(\tau)\equiv\prod_{j=0}^{L_y-1}X_{[0,j+1/2,\tau]}\ \ ,\ \ \tilde{Z}^{\text{red.}}_y(\tau)\equiv \prod_{i=0}^{L_x-1}Z_{[i,1/2,\tau]}\,,
	\end{align} 
	where $\tau=0,1,\cdots,L_z$. As an example, we illustrate the anti-commuting pair $\left(\tilde{X}^{\text{red.}}_y(1),\tilde{Z}^{\text{red.}}_y(1)\right)$:
	\begin{equation}
		\begin{tikzpicture}[baseline=4ex]
			\foreach \j in {0,1,2}{\foreach \k in {0,1}{\draw[gray](0,\j,\k)--(3,\j,\k);}}
			\foreach \i in {0,1,2}{\foreach \j in {0,1,2}{\draw[gray](\i,\j,0)--(\i,\j,2);}}
			\foreach \i in {0,1,2}{\foreach \k in {0,1}{\draw[gray](\i,0,\k)--(\i,2,\k);}}
			
			\foreach \i in {0,1,2,3}{\draw[dashed, gray] (\i,0,2)--(\i,2,2);}
			\foreach \j in {0,1,2}{\draw[dashed, gray] (0,\j,2)--(3,\j,2)   (3,\j,0)--(3,\j,2);}
			\foreach \k in {0,1}{\draw[dashed, gray] (3,0,\k)--(3,2,\k);}
			
			\foreach \j in {1}{
				\fill[red, opacity=0.1] (0,\j-0.2,0)--(0,\j+0.2,0)--(3,\j+0.2,0)--(3,\j-0.2,0);
				\foreach \i in {1.5,2.5}{
					\node[red] at (\i,\j,0) {$X$};
				}
			}
			\node[red] at (0.6,1,0) {$X$};
			\node[red] at (3.8,1,0) {$\equiv\tilde{X}^{\text{red.}}_y(1)$};
			
			\node[ForestGreen] at (0.4,1,0) {$Z$};
			\node[ForestGreen] at (0.5,1,1) {$Z$};
			\fill[ForestGreen, opacity=0.1] (0.3,1,-0.4)--(0.7,1,-0.4)--(0.7,1,2)--(0.3,1,2);
			\node[ForestGreen] at (0.5,1,2.4) {$\equiv\tilde{Z}^{\text{red.}}_y(1)$};
			
			\draw[->, blue, line width=1pt] (-2-0.7,0,0)--(-2-0.7,0,1);
			\draw[->, blue, line width=1pt] (-2-0.7,0,0)--(-1.2-0.7,0,0);
			\draw[->, blue, line width=1pt] (-2-0.7,0,0)--(-2-0.7,0.8,0);
			\node[blue] at (-2.18-0.7,0,1) {$x$};
			\node[blue] at (-1.2-0.7,-0.22,0) {$y$};
			\node[blue] at (-1.75-0.7,0.8,0) {$z$};
		\end{tikzpicture}\ .
	\end{equation}

	\subsection{Under lineon condensed top boundary}\label{appendix_X_cube_details_lineon}
	
	Similarly, the explicit coordinate constraints for the lineon condensed boundary are:
	\begin{gather}
		\prod_{e\subset[i+1/2,j+1/2,0]}X_e \sim \prod_{v\subset[i+1/2,j+1/2]}\tilde Z_v,\quad i\in\mathbb Z_{L_x},\ j\in\mathbb Z_{L_y},\ i\neq0,\ j\neq0,\\
		\prod_{e\subset[i+1/2,j+1/2,0]}X_e\sim\prod_{v\subset[i+1/2,j+1/2]}\tilde{Z}_v\cdot \tilde{Z}^{\text{twist}}(i+1/2,j+1/2)\ \ ,\ \ i\in\mathbb{Z}_{L_x},\ j\in\mathbb{Z}_{L_y},\ 0\in\{i,j\}\,.
	\end{gather}

	as defined in Eq.~(\ref{eq_X_cube_X_loop_notation}). We illustrate $\tilde{Z}^{\text{twist}}(3/2,1/2)=A_{[3/2,1/2,L_z+1/2]}\sim\mathcal{X}_y(1)\mathcal{X}_y(2)$ as an example as following ($c=[3/2,1/2,L_z+1/2]$ in the following figure),
	\begin{equation}
		\begin{tikzpicture}[baseline=4ex]
			\foreach \j in {0,1}{\foreach \k in {0,1}{\draw[gray](0,\j,\k)--(3,\j,\k);}}
			\foreach \i in {0,1,2}{\foreach \j in {0,1}{\draw[gray](\i,\j,0)--(\i,\j,2);}}
			\foreach \i in {0,1,2}{\foreach \k in {0,1}{\draw[gray](\i,0,\k)--(\i,2,\k);}}
			
			\foreach \i in {0,1,2,3}{\draw[dashed, gray] (\i,0,2)--(\i,2,2);}
			\foreach \j in {0,1}{\draw[dashed, gray] (0,\j,2)--(3,\j,2)   (3,\j,0)--(3,\j,2);}
			\foreach \k in {0,1}{\draw[dashed, gray] (3,0,\k)--(3,2,\k);}
			
			\fill[red, opacity=0.1] (0,0,0.6)--(0,0,1.4)--(3,0,1.4)--(3,0,0.6);
			\foreach \i in {0.5, 1.5, 2.5}{\node[red] at (\i,0,1) {$X$};}
			\fill[red, opacity=0.1] (0,0,1.6)--(0,0,2.4)--(3,0,2.4)--(3,0,1.6);
			\foreach \i in {0.5, 1.5, 2.5}{\node[red] at (\i,0,2) {$X$};}
			\node[red] at (4.9,0,1.5) {$\sim A_c=\tilde{Z}^{\text{twist}}(3/2,1/2)$};
			
			\fill[red, opacity=0.1] (0,2,1)--(1,2,1)--(1,1,1)--(1,1,2)--(0,1,2)--(0,2,2);
			\node[red] at (0.5,1.5,1.5) {$c$};
			
			\fill[blue] (0,0,0) circle[radius=1.5pt];
			
			\draw[->, blue, line width=1pt] (-2-0.7,0,0)--(-2-0.7,0,1);
			\draw[->, blue, line width=1pt] (-2-0.7,0,0)--(-1.2-0.7,0,0);
			\draw[->, blue, line width=1pt] (-2-0.7,0,0)--(-2-0.7,0.8,0);
			\node[blue] at (-2.18-0.7,0,1) {$x$};
			\node[blue] at (-1.2-0.7,-0.22,0) {$y$};
			\node[blue] at (-1.75-0.7,0.8,0) {$z$};
		\end{tikzpicture}
	\end{equation}
	Then, we identify the $l_z$-transport operators along $z$-direction as single $\tilde Z$ without logical operator ambiguity:
	\begin{equation}
		\mathcal T_{l_z,z}(i,j)=\prod_{k=0}^{L_z}X_{[i,j,k+1/2]}\sim\tilde Z_{[i,j]}\,,
	\end{equation}
	where $i\in\mathbb Z_{L_x},\ j\in\mathbb Z_{L_y}$.
	And the explicit construction of the redundant DOFs under the lineon condensed top boundary is:
	There are $L_x+L_y-2$ independent pairs of $\tilde X^{\text{twist}},\tilde Z^{\text{twist}}$, they generate the operator algebra of $L_x+L_y-2$ twist DOFs' Hilbert space. The operator algebra of low-energy subspace $\tilde{\mathcal H}$ have $2(L_z+1)$ more pairs of independent generators, which are redundant without physical meaning in the identified 2d system, these generators are
	\begin{align}
		&\tilde{X}^{\text{red.}}_x(\tau)\equiv\prod_{i=0}^{L_x-1}X_{[i+1/2,0,\tau]}\ \ ,\ \ \tilde{Z}^{\text{red.}}_x(\tau)\equiv \prod_{j=0}^{L_y-1}Z_{[1/2,j,\tau]}\,,\notag\\
		&\tilde{X}^{\text{red.}}_y(\tau)\equiv\prod_{j=0}^{L_y-1}X_{[0,j+1/2,\tau]}\ \ ,\ \ \tilde{Z}^{\text{red.}}_y(\tau)\equiv \prod_{i=0}^{L_x-1}Z_{[i,1/2,\tau]}\,,
	\end{align} 
	where $\tau=0,1,\cdots,L_z$. As an example, we illustrate the anti-commuting pair $\left(\tilde{X}^{\text{red.}}_y(1),\tilde{Z}^{\text{red.}}_y(1)\right)$:
	\begin{equation}
		\begin{tikzpicture}[baseline=4ex]
			\foreach \j in {0,1,2}{\foreach \k in {0,1}{\draw[gray](0,\j,\k)--(3,\j,\k);}}
			\foreach \i in {0,1,2}{\foreach \j in {0,1,2}{\draw[gray](\i,\j,0)--(\i,\j,2);}}
			\foreach \i in {0,1,2}{\foreach \k in {0,1}{\draw[gray](\i,0,\k)--(\i,3,\k);}}
			
			\foreach \i in {0,1,2,3}{\draw[dashed, gray] (\i,0,2)--(\i,3,2);}
			\foreach \j in {0,1,2}{\draw[dashed, gray] (0,\j,2)--(3,\j,2)   (3,\j,0)--(3,\j,2);}
			\foreach \k in {0,1}{\draw[dashed, gray] (3,0,\k)--(3,3,\k);}
			
			\foreach \j in {1}{
				\fill[red, opacity=0.1] (0,\j-0.2,0)--(0,\j+0.2,0)--(3,\j+0.2,0)--(3,\j-0.2,0);
				\foreach \i in {1.5,2.5}{
					\node[red] at (\i,\j,0) {$X$};
				}
			}
			\node[red] at (0.6,1,0) {$X$};
			\node[red] at (3.8,1,0) {$\equiv\tilde{X}^{\text{red.}}_y(1)$};
			
			\node[ForestGreen] at (0.4,1,0) {$Z$};
			\node[ForestGreen] at (0.5,1,1) {$Z$};
			\fill[ForestGreen, opacity=0.1] (0.3,1,-0.4)--(0.7,1,-0.4)--(0.7,1,2)--(0.3,1,2);
			\node[ForestGreen] at (0.5,1,2.4) {$\equiv\tilde{Z}^{\text{red.}}_y(1)$};
			
			\draw[->, blue, line width=1pt] (-2-0.7,0,0)--(-2-0.7,0,1);
			\draw[->, blue, line width=1pt] (-2-0.7,0,0)--(-1.2-0.7,0,0);
			\draw[->, blue, line width=1pt] (-2-0.7,0,0)--(-2-0.7,0.8,0);
			\node[blue] at (-2.18-0.7,0,1) {$x$};
			\node[blue] at (-1.2-0.7,-0.22,0) {$y$};
			\node[blue] at (-1.75-0.7,0.8,0) {$z$};
		\end{tikzpicture}\ .
	\end{equation}

	\section{Technical details in studying Haah's code FTH under infinite OBC}\label{appendix_Haah_code_FTH_theorems_and_proofs}
	
	\subsection{Pedagogical review of the translation-invariant stabilizer code formalism}\label{appendix_pedagogical_review}
	
	In this Appendix, we review the formalism of translational-invariant $\mathbb Z_p$ stabilizer code introduced by Haah in Ref.~\cite{haah_2013_modules}, where $p$ was assumed to be prime. Later, the formalism was generalized to the cases where the qudit dimension could be non-prime\cite{ruba_yang_2024_homological}. We focus on the prime $p$ cases in this paper. 
	
	For a $p$-dimensional qudit, the Pauli group can be generated by Pauli $X$ and Pauli $Z$ purely by multiplication, where the Pauli $X,Z$ are defined as
	\begin{equation}
		X|j\rangle = |j+1\rangle,\qquad Z|j\rangle=\omega^j|j\rangle,\qquad \omega=e^{2\pi i/p}\,,\qquad ZX=\omega XZ.
	\end{equation}
	Consequently, the Pauli group modulo phase factors can be represented as a vector space $\mathbb F_p^2$, spanned by the Pauli $X$ and Pauli $Z$ of the qudit, where the multiplication of Pauli operators now becomes addition of vectors. The commutation phase information can be encoded into a symplectic bilinear form $\Omega:\mathbb F_p^2\times\mathbb F_p^2\to\mathbb F_p$, represented as a $2\times 2$ matrix 
	\[
	\lambda = \begin{pmatrix} 0 & 1 \\ -1 & 0 \end{pmatrix}
	\]
	in the $X,Z$ basis, i.e. $X:=(1,0)^T,\ Z:=(0,1)^T$, where $\Omega(u,v) := u^T\lambda v$. Specifically, denoting the corresponding Pauli operators of $u,v\in\mathbb{F}_p^2$ as $\mathsf p(u),\mathsf p(v)$, we have
	\[
	\mathsf p(u)\mathsf p(v) = \omega^{u^T\lambda v} \mathsf p(v)\mathsf p(u).
	\]
	
	In Ref.~\cite{haah_2013_modules}, Haah generalized this formalism to describe Pauli operators on an abelian group lattice $\Lambda$ (e.g. $\mathbb Z^d$). For concreteness, we use $\Lambda=\mathbb Z^2$ lattice for illustration. Consider a system with each site in $\mathbb Z^2$ lattice hosting $q$ $p$-dimensional qudits. Then the finite support Pauli group modulo phase factors of this system can be represented as a free module $R^{2q}$ over the base ring
	\begin{equation}
		R = \mathbb F_p[\mathbb Z^2] = \mathbb F_p[x^{\pm1},y^{\pm1}].
	\end{equation}
	The elements in $R$ are finite Laurant polynomials like
	\[
	1,\qquad 1+x,\qquad 1+x+x^2y+\bx\by^3.
	\]
	Where $\overline{(\cdot)}=(\cdot)^{-1}$. The finite support Pauli operators (up to phase factor) are represented as elements of $R^{2q}$ as following:
	\begin{enumerate}
		\item Choose a basis of the free module $R^{2q}$ that correspond to single Pauli $X,Z$ of the system:
		\begin{gather}
			X_1\equiv X_{(0,0),1}:=(1,0,\cdots,0;0,0,\cdots,0)^T,\qquad\cdots\cdots,\qquad X_q\equiv X_{(0,0),q}:=(0,0,\cdots,1;0,0,\cdots,0)^T,\notag\\
			Z_1\equiv Z_{(0,0),1}:=(0,0,\cdots,0;1,0,\cdots,0)^T,\qquad\cdots\cdots,\qquad Z_q\equiv Z_{(0,0),q}:=(0,0,\cdots,0;0,0,\cdots,1)^T,\label{eq_TI_stabilizer_code_basis}
		\end{gather}
		where $X_{(0,0),\alpha}$ is the $\alpha$-th qudit in the cell at $(0,0)$. $Z$ basis are read similarly. 
		\item Let
		\begin{equation}
			x^iy^j X_{(0,0),\alpha}\equiv X_{(i,j),\alpha},
		\end{equation}
		where $X_{(i,j),\alpha}$ is the $\alpha$-th qudit in the cell at $(i,j)$.
		\item According to $R$-linearity, other elements in $R^{2q}$ can be read. For example,
		\begin{equation*}
			(1+x^2,0,\cdots,\by;2x,0,\cdots,0)^T
		\end{equation*}
		corresponds to
		\begin{equation*}
			X_{(0,0),1}X_{(2,0),1}X_{(0,-1),q}Z_{(1,0),1}^2.
		\end{equation*}
		Here $\by\equiv y^{-1}$. Similarly, $\bx\equiv x^{-1}$. For any $u=\sum_{ij}u_{ij}x^iy^j\in R$, $\bar u\equiv\sum_{ij}u_{ij}\bx^i\by^j$.
	\end{enumerate}
	Like before, the commutation phase information can be encoded into a bilinear symplectic form $\Omega:R^{2q}\times R^{2q}\to R$, represented by a $2q\times 2q$ matrix
	\begin{equation}\label{eq_def_lambda_q}
		\lambda_q=\left(\begin{array}{cc}
			0 & 1_q \\
			-1_q & 0
		\end{array}\right),
	\end{equation}
	in the basis in Eq.~(\ref{eq_TI_stabilizer_code_basis}), where $\Omega(U,V)=U^\dagger\lambda_q V$ for any $U,V\in R^{2q}$, $1_q$ is the $q\times q$ identity matrix, $U^\dagger\equiv \bar U^T$. Specifically, denoting the corresponding Pauli operator of $U\in R^{2q}$ as $\mathsf p(U)$, then
	\[
	\mathsf p(U)\mathsf p(V) = \omega^{\tr(U^\dagger\lambda_q V)} \mathsf p(V)\mathsf p(U),
	\]
	where $\forall u = \sum_{ij} u_{ij}x^iy^j \in R$,
	\[
	\tr(u) := u_{00}.
	\]
	Moreover, $\forall g\in\Lambda$ (or equivalently, for any monomial/unit $g\in R$), 
	\[
	\mathsf p(U)\mathsf p(gV) = \omega^{\tr(U^\dagger\lambda_q gV)} \mathsf p(gV)\mathsf p(U).
	\]
	Consequently, $\mathsf p(U)$ is commutable with all the translations of $\mathsf p(V)$, i.e.
	\[
	\forall g\in\Lambda,\quad \mathsf p(U)\mathsf p(gV)=\mathsf p(gV)\mathsf p(U),
	\]
	iff
	\[
	U^\dagger\lambda_q V = 0 \in R.
	\]
	Denote the Pauli module as
	\begin{equation}
		P:=\text{span}_R\{X_\alpha,Z_\alpha\mid \alpha=1,\cdots,q\}\cong R^{2q}.
	\end{equation}
	
	\subsection{Layer-by-layer bulk syndrome and stabilizer representation}\label{appendix_layer_by_layer_calculations}
	
	To facilitate analysis near the boundaries, we separate the stabilizer module $\mathcal S_B$, the truncated Pauli module $\pi P$, and the bulk excitation module $E_B$ into layer-by-layer submodules:
	\[
	\mathcal S_B=\bigoplus_{k=0}^{L_z-1}\mathcal S_{k+1/2},\qquad\pi P=\bigoplus_{k=0}^{L_z-1}P_k,
	\qquad
	E_B=\bigoplus_{k=0}^{L_z-2}E_{k+1/2},
	\]
	where $P_k\cong R^4$ is the Pauli submodule of qubits in the $z=k$ layer, generated by $z^{k-1}\{X_1,X_2,Z_1,Z_2\}$;	$\mathcal S_{k+1/2}\cong E_{k+1/2}\cong R^2$ are the bulk stabilizer submodule and excitation submodule, both generated by $z^kS_X,\ z^kS_Z$. When the commutation relation we are talking about only involves qubits in some specific layers, we can use the symplectic bilinear form in the corresponding Pauli submodule. Specifically, from Eq.~(\ref{eq_Haah_code_sigma}) we can write
	\begin{gather}
		\sigma_B(z^kS_X) = z^{k-1}\bx\by(BX_1+AX_2) + z^k\bx\by(DX_1+X_2)\notag\\
		\sigma_B(z^kS_Z) = z^{k-1}\bx\by(xyZ_1+DZ_2) + z^k\bx\by(xy\bar A Z_1+BZ_2),
	\end{gather}
	for any $k=0,1,\cdots,L_z-2$, where
	\begin{equation*}
		A:=1+x+y,
		\qquad
		B:=1+xy,
		\qquad
		C:=1,
		\qquad
		D:=x+y.
	\end{equation*}
	Similarly, from Eqs.~(\ref{eq_Haah_code_abcd_def},\ref{eq_Haah_code_epsilon}) we can write
	\begin{gather}
		\epsilon_B(z^kX_1)=z^k(A+z)S_Z,\qquad \epsilon_B(z^kX_2)=z^k(B+Dz)S_Z,\notag\\
		\epsilon_B(z^kZ_1)=z^k(D+Bz)S_X,\qquad \epsilon_B(z^kZ_2)=z^k(xy+xy\bar A z)S_X,\label{eq_epsilon_B_value_bulk}
	\end{gather}
	for any $k=0,1,\cdots,L_z-3$. While on the bottom boundary, since $E_B$ is truncated,
	\begin{gather}
		\epsilon_B(\bz X_1) = S_Z,\quad \epsilon_B(\bz X_2) = DS_Z,\quad \epsilon_B(\bz Z_1) = BS_X,\quad \epsilon_B(\bz Z_2) = xy\bar A S_X.\label{eq_epsilon_B_value_bottom_boundary}
	\end{gather}
	Similarly, on the top boundary, since $E_B$ is truncated,
	\begin{gather}
		\epsilon_B(z^{L_z-2}X_1) = z^{L_z-2}AS_Z,\qquad \epsilon_B(z^{L_z-2}X_2) = z^{L_z-2}BS_Z,\notag\\ \epsilon_B(z^{L_z-2}Z_1) = z^{L_z-2}DS_X,\qquad \epsilon_B(z^{L_z-2}Z_2) = z^{L_z-2}xyS_X.\label{eq_epsilon_B_value_top_boundary}
	\end{gather}
	
	\subsection{Boundary gauge operators, syndromes and boundary topological excitations}\label{appendix_boundary_gauge_syndrome_matrices}
	
	\subsubsection{Bottom boundary}
	
	By convention, take the following basis for bottom boundary gauge syndrome module $\mathcal G^{\text{bot}}$,
	\begin{equation}
		\mathcal G_1^{\text{bot}}:=\left(\begin{array}{c}
			1\\0
		\end{array}\right),\qquad\mathcal G_2^{\text{bot}}:=\left(\begin{array}{c}
			0\\1
		\end{array}\right)\,,
	\end{equation}
	where $\mathcal G_{1,2}^{\text{bot}}$ are the bottom truncated stabilizers, defined in Eq.~(\ref{def_boundary_gauge_bot}). $\mathcal G_{1,2}^{\text{bot}}$ only involve Pauli operators in $P_0$, so $P^{\text{bot}}=P_0=\text{span}_R\{\bz X_1,\bz X_2,\bz Z_1,\bz Z_2\}$, we can write down $\sigma^{\text{bot}}$ as a $4\times 2$ matrix
	\begin{equation}
		\sigma^{\text{bot}} = \left(\begin{array}{cc}
			\bx\by(x+y) & 0\\
			\bx\by & 0\\
			0 & \bx\by(x+y+xy)\\
			0 & \bx\by(1+xy) 
		\end{array}\right) = \bx\by\left(\begin{array}{cc}
			D&0\\C&0\\0&xy\bar A\\0&B
		\end{array}\right)\,,
	\end{equation}
	where the basis of $P^{\text{bot}}$ is ordered as $(\bz X_1,\bz X_2,\bz Z_1,\bz Z_2)^T$. Then, analogous to the bulk theory, $\epsilon^{\text{bot}}$ can be calculated as
	\begin{equation}\label{eq_epsilon_bot_values}
		\epsilon^{\text{bot}}=(\sigma^{\text{bot}})^\dagger\lambda_2 = xy\left(\begin{array}{cccc}
			0&0&\bar D&\bar C\\\bar x\bar y A&\bar B&0&0
		\end{array}\right)\,.
	\end{equation}
	Then, we can write
	\begin{equation}\label{eq_Haah_code_FTH_13}
		\eta^{\text{bot}} = \epsilon^{\text{bot}} \sigma^{\text{bot}} = \left(\begin{array}{cc}
			0&xy\bar A\bar D+B\\\bar x\bar yA D+\bar B&0
		\end{array}\right) := \left(\begin{array}{cc}
			0& xy\bar F\\ \bar x\bar yF&0
		\end{array}\right)\,,
	\end{equation}
	where
	\begin{equation}\label{eq_Haah_code_FTH_16}
		F:=xy(\bar x\bar yAD+\bar B) = AD+B = 1+x+x^2+y+y^2+xy\,. 
	\end{equation}
	Eq.~(\ref{eq_Haah_code_FTH_13}) means that 
	\begin{equation}\label{eq_Haah_code_FTH_20}
		\eta^{\text{bot}}\mathcal G_1^{\text{bot}}=\bar x\bar yF\mathcal G_2^{\text{bot}}\,,
	\end{equation}
	i.e. $\mathcal G_1^{\text{bot}}$ violates $\mathcal G_2^{\text{bot}}$ with configuration $\bar x\bar yF$, and 
	\begin{equation}\label{eq_Haah_code_FTH_22}
		\eta^{\text{bot}}\mathcal G_2^{\text{bot}}=xy\bar F\mathcal G_1^{\text{bot}}\,,
	\end{equation}
	i.e. $\mathcal G_2^{\text{bot}}$ violates $\mathcal G_1^{\text{bot}}$ with configuration $xy\bar F$. In all, $\mathcal G_i^{\text{bot}}$ violates $\mathcal G_j^{\text{bot}}$ with the configuration $(\eta^{\text{bot}})_{ji}$. More precisely, $(\eta^{\text{bot}})_{ji}$ is the symplectic product of $\mathcal G_j^{\text{bot}}$ and $\mathcal G_i^{\text{bot}}$, since
	\begin{gather}
		\eta^{\text{bot}}=\epsilon^{\text{bot}}\circ\sigma^{\text{bot}} = (\sigma^{\text{bot}})^\dagger \lambda_{q\ell} \sigma,\label{eq_Haah_FTH_eta_bot_1}\\
		(\eta^{\text{bot}})_{ji} = (\mathcal G_j^{\text{bot}})^\dagger(\sigma^{\text{bot}})^\dagger \lambda_{qh} \sigma \mathcal G_i^{\text{bot}} = \left(\sigma^{\text{bot}}(\mathcal G_j^{\text{bot}})\right)^\dagger\lambda_{q\ell}\sigma^{\text{bot}}(\mathcal G_i^{\text{bot}}) = \Omega\left(\sigma^{\text{bot}}(\mathcal G_j^{\text{bot}}),\sigma^{\text{bot}}(\mathcal G_i^{\text{bot}})\right).\label{eq_Haah_FTH_eta_bot_2}
	\end{gather}
	
	From $\eta^{\text{bot}}$ we see
	\[
	\coker(\eta^{\text{bot}})
	\cong
	R/(\bar F)\oplus R/(F).
	\]
	Precisely speaking, the point-like $\square$ boundary topological excitations are those can be that can be singlely created by possibly infinite support Pauli operators, they form the torsion submodule of $\coker\eta^\square$. The torsion submodule of $\coker\eta^\square$ is defined as 
	\begin{equation}
		\torsion\coker\eta^\square:=\{s\in\coker\eta^\square:\exists(r\neq 0)\in R,\ \text{s.t. }rs=0\},
	\end{equation}
	which physically means the point-like boundary topological excitation $s$ becomes annihilatable by $\square$ boundary gauge operators after multiplying a proper non-zero finite polynomial $r$. Indeed, the boundary topological excitations $e_X^{\text{bot}}$ and $e_Z^{\text{bot}}$ are singlely creatable by infinite support Pauli operators [which we illustrate soon, see words around Eqs.~(\ref{eq_def_infinite_support_R},\ref{eq_infinite_to_finite_syndrome})]. Moreover, we can straightforwardly check that the torsion submodule of $\coker\eta^{\text{bot}}$ equals to $\coker\eta^{\text{bot}}$ itself. For any $([u],[v])\in\text{coker}(\eta^{\text{bot}})$, 
	\begin{equation}
		F\bar F([u],[v]) = (F\bar F[u], \bar FF[v])=([0],[0])\,,
	\end{equation}
	so
	\begin{equation}
		\text{torsion}\ \text{coker}(\eta^{\text{bot}}) = \text{coker}(\eta^{\text{bot}})\,.
	\end{equation}
	The two natural generators of $\torsion\coker(\eta^{\text{bot}})$ are
	\begin{equation}\label{eq_Haah_code_FTH_21_app}
		e_X^{\text{bot}}:=[\mathcal G_1^{\text{bot}}] = \mathcal G_1^{\text{bot}}+(\bF),
		\qquad
		e_Z^{\text{bot}}:=[\mathcal G_2^{\text{bot}}] = \mathcal G_2^{\text{bot}}+(F),
	\end{equation}
	with annihilators
	\[
	\operatorname{ann}(e_X^{\text{bot}})=(\bar F),
	\qquad
	\operatorname{ann}(e_Z^{\text{bot}})=(F).
	\]

	\subsubsection{Top boundary}
	
	Similarly, for the top boundary, take the basis for $\mathcal G^{\text{top}}$,
	\begin{equation}
		\mathcal G_1^{\text{top}}:=\left(\begin{array}{c}
			1\\0
		\end{array}\right),\qquad
		\mathcal G_2^{\text{top}}:=\left(\begin{array}{c}
			0\\1
		\end{array}\right)\,,
	\end{equation}
	where $\mathcal G_{1,2}^{\text{top}}$ are top truncated stabilizers, defined in Eq.~(\ref{def_boundary_gauge_top}). $\mathcal G_{1,2}^{\text{top}}$ only involve Pauli operators in $P_{L_z-1}$, so $P^{\text{top}}=P_{L_z-1}=\text{span}_R\{z^{L_z-2}X_1,z^{L_z-2}X_2,z^{L_z-2}Z_1,z^{L_z-2}Z_2\}$, we can write down $\sigma^{\text{top}}$ as a $4\times 2$ matrix
	\begin{equation}
		\sigma^{\text{top}} = \bx\by\left(\begin{array}{cc}
			B&0\\A&0\\0&xy\\0&D
		\end{array}\right)\,,
	\end{equation}
	where the basis of $P^{\text{top}}$ is ordered as $(z^{L_z-2}X_1,z^{L_z-2}X_2,z^{L_z-2}Z_1,z^{L_z-2}Z_2)^T$.
	
	Then, analogous to the bulk theory, $\epsilon^{\text{top}}$ can be calculated as
	\begin{equation}\label{eq_Haah_code_FTH_17}
		\epsilon^{\text{top}}=(\sigma^{\text{top}})^\dagger\lambda_2
		= xy\left(\begin{array}{cccc}
			0&0&\bar B&\bar A\\
			\bx\by&\bar D&0&0
		\end{array}\right)=\left(\begin{array}{cccc}
			0&0&B&xy\bar A\\1&D&0&0
		\end{array}\right)\,,
	\end{equation}
	where we have used $xy\bar D=D$, $xy\bar B=B$. Eq.~(\ref{eq_Haah_code_FTH_17}) means
	\begin{gather}
		\epsilon^{\text{top}}(z^{L_z-2}X_1)=\mathcal G_2^{\text{top}},
		\qquad
		\epsilon^{\text{top}}(z^{L_z-2}X_2)=D\,\mathcal G_2^{\text{top}},\label{eq_Haah_code_FTH_18}
		\\
		\epsilon^{\text{top}}(z^{L_z-2}Z_1)=B\,\mathcal G_1^{\text{top}},
		\qquad
		\epsilon^{\text{top}}(z^{L_z-2}Z_2)=xy\bar A\,\mathcal G_1^{\text{top}}\,.\label{eq_Haah_code_FTH_19}
	\end{gather}
	Next, we can write
	\begin{equation}
		\eta^{\text{top}} = \epsilon^{\text{top}} \sigma^{\text{top}}
		= \left(\begin{array}{cc}
			0&B+xy\bar A\bar D\\
			\bar B+\bx\by AD&0
		\end{array}\right)\,.
	\end{equation}
	Notice that
	\begin{equation}
		B+xy\bar A\bar D = xy\bar F\,,
	\end{equation}
	where $F=1+x+x^2+y+y^2+xy$ is defined in Eq.~(\ref{eq_Haah_code_FTH_16}). Then we can write
	\begin{equation}
		\eta^{\text{top}}
		= \left(\begin{array}{cc}
			0&xy\bar F\\
			\bx\by F&0
		\end{array}\right)\,,
	\end{equation}
	which means
	\begin{equation*}
		\eta^{\text{top}}\mathcal G_1^{\text{top}} = \bx\by F\mathcal G_2^{\text{top}},\qquad \eta^{\text{top}}\mathcal G_2^{\text{top}} = xy\bF\mathcal G_1^{\text{top}}.
	\end{equation*}
	The same argument as for the bottom boundary [see Eqs.~(\ref{eq_Haah_FTH_eta_bot_1},\ref{eq_Haah_FTH_eta_bot_2})] gives
	\begin{equation}
		(\eta^{\text{top}})_{ji} = \Omega\left(\sigma^{\text{top}}(\mathcal G_j^{\text{top}}),\sigma^{\text{top}}(\mathcal G_i^{\text{top}})\right).
	\end{equation}
	From $\eta^{\text{top}}$ we see
	\begin{equation}\label{eq_Haah_FTH_coker_eta_top}
		\coker(\eta^{\text{top}})
		\cong
		R/(\bar F)\oplus R/(F)\,.
	\end{equation}
	For any $([u],[v])\in\text{coker}(\eta^{\text{top}})$,
	\begin{equation}
		\bar FF([u],[v])=(F\bar F[u],\bar FF[v])=([0],[0])\,,
	\end{equation}
	so
	\begin{equation}
		\text{torsion}\ \text{coker}(\eta^{\text{top}}) = \text{coker}(\eta^{\text{top}})\,.
	\end{equation}
	The two natural generators of $\torsion\,\coker(\eta^{\text{top}})$ are
	\begin{equation}\label{eq_Haah_FTH_top_excitation}
		e_X^{\text{top}}:=[\mathcal G_1^{\text{top}}]=\mathcal G_1^{\text{top}}+(\bF),
		\qquad
		e_Z^{\text{top}}:=[\mathcal G_2^{\text{top}}]=\mathcal G_2^{\text{top}}+(F),
	\end{equation}
	with annihilators
	\begin{equation}\label{eq_Haah_FTH_top_excitation_ann}
		\operatorname{ann}(e_X^{\text{top}})=(\bar F),
		\qquad
		\operatorname{ann}(e_Z^{\text{top}})=(F)\,.
	\end{equation}
	As the discussion of $e_X^{\text{bot}}$ and $e_Z^{\text{bot}}$, we can use Freshman's Dream to construct single $e_X^{\text{top}}$ and $e_Z^{\text{top}}$ creators, so $e_X^{\text{top}}$ and $e_Z^{\text{top}}$ are point-like boundary topological excitations.

	\subsection{Point-like excitations via infinite support operators}\label{appendix_point_like_excitations}
	
	Now we show single $e_X^{\text{bot}}$ and single $e_Z^{\text{bot}}$ can be created by infinite support Pauli operators\footnote{By definition, the support of an operator $O$ is the set of all $(i,j)\in\mathbb Z^2$ where $O$ acts nontrivially on. An operator $O$ is called finite support if $\exists r\in\mathbb N$, s.t.$$\supp(O)\subset\big\{(i,j)\mid -r\leq i,j\leq r\big\}.$$ Conversely, an operator $O$ is called infinite support it is not finite support.}, so they are indeed point-like bottom boundary topological excitations. Denote the $R$-module of possibly infinite support series
	\begin{equation}\label{eq_def_infinite_support_R}
		\hat R:=\mathbb F_2[[x^{\pm1},y^{\pm1}]],
	\end{equation}
	so that the possibly infinite support Pauli operators form the $R$-module $\hat P\cong \hat R^{4L_z}$. According to Eq.~(\ref{eq_Haah_code_FTH_20}), the syndrome caused by applying $xy\mathcal G_1^{\text{bot}}$ is $F\mathcal G_2^{\text{bot}}$. Using Freshman's Dream (i.e. $f(x,y)^2=f(x^2,y^2)$, which is true on $\mathbb F^2$), the syndrome caused by applying $xyF\mathcal G_1^{\text{bot}}$ is
	$$
	F^2\mathcal G_2^{\text{bot}} = \left(1+x^2+x^4+y^2+y^4+x^2y^2\right)\mathcal G_2^{\text{bot}}.
	$$
	The syndrome caused by applying $xyF\mathcal G_1^{\text{bot}}$ is the two times scaling of the syndrome caused by applying $xy\mathcal G_1^{\text{bot}}$. By repeatedly using Freshman's Dream, this scaling can be iteratively done, implying the syndrome caused by applying $xyF^{2^n-1}\mathcal G_1^{\text{bot}}$ is
	\begin{align}
		F^{2^n}\mathcal G_2^{\text{bot}} &= \left(1+x^2+x^4+y^2+y^4+x^2y^2\right)^{2^{n-1}}\mathcal G_2^{\text{bot}}\notag \\&= \cdots = \left(1+x^{2^n}+x^{2^{n+1}}+y^{2^n}+y^{2^{n+1}}+x^{2^n}y^{2^n}\right)\mathcal G_2^{\text{bot}}.\label{eq_infinite_to_finite_syndrome}
	\end{align}
	Under the limit $n\to\infty$, this becomes an infinite support operator $xyF^{2^n-1}\mathcal G_1^{\text{bot}}$, with the finite support syndrome $\mathcal G_2^{\text{bot}}$, since $x^{2^n}\to 0$ when $n\to\infty$\footnote{$(x^{2^n},0)$ leaves any subset of $\mathbb Z^2$ when $n\to\infty$.}, and so are other non-identity coefficients in Eq.~(\ref{eq_infinite_to_finite_syndrome}). $\mathcal G_2^{\text{bot}}$ is a representative of $e_Z^{\text{bot}}$. So indeed, a single $e_Z^{\text{bot}}$ can be created by an infinite support boundary gauge operator $\lim_{n\to\infty}xyF^{2^n-1}\mathcal G_1^{\text{bot}}$, $e_Z^{\text{bot}}$ is a point-like bottom boundary topological excitation. Similarly, a single $e_X^{\text{bot}}$ can be created by an infinite support boundary gauge operator $\lim_{n\to\infty}\bx\by\bF^{2^n-1}\mathcal G_2^{\text{bot}}$, so $e_X^{\text{bot}}$ is a point-like bottom boundary topological excitation.

	\subsection{No nontrivial finite-support logical operator}\label{appendix_no_nontrivial_local_logical}
	
	In this appendix, we place some of the theorems/lemmas and proofs in the study of Haah's code FTH.
	\begin{lemma}\label{lem_no_relative_logical_Z}
		Under $(Z)$ top boundary,
		\[
		\bigl(\mathcal S_{(Z)}+\mathcal G^{\mathrm{bot}}\bigr)^\Omega
		=
		\mathcal S_{(Z)} .
		\]
		Equivalently,
		\[
		\mathcal L_{(Z)}
		:=
		\bigl(\mathcal S_{(Z)}+\mathcal G^{\mathrm{bot}}\bigr)^\Omega
		\Big/
		\mathcal S_{(Z)}
		=\{[0]\}.
		\]
	\end{lemma}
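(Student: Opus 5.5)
We must show that any finite-support Pauli $P\in\pi P$ commuting with $\mathcal S_{(Z)}$ and with all bottom boundary gauge operators already lies in $\mathcal S_{(Z)}$. The plan is to exploit the fact, stated in Stage~2, that there are no nontrivial secondary boundary gauge operators, i.e.\ $\mathcal S_B^\Omega=\mathcal S_T=\mathcal S_B+R\mathcal G_1^{\text{top}}+R\mathcal G_2^{\text{top}}+R\mathcal G_1^{\text{bot}}+R\mathcal G_2^{\text{bot}}$. Since $\mathcal S_B\subset\mathcal S_{(Z)}$, any element of $(\mathcal S_{(Z)}+\mathcal G^{\text{bot}})^\Omega$ is in particular in $\mathcal S_B^\Omega$. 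We can therefore write it as
\begin{equation*}
U = s + \alpha\,\mathcal G_1^{\text{top}} + \beta\,\mathcal G_2^{\text{top}} + \gamma\,\mathcal G_1^{\text{bot}} + \delta\,\mathcal G_2^{\text{bot}},\qquad s\in\mathcal S_B,\ \alpha,\beta,\gamma,\delta\in R .
\end{equation*}
Because $\mathcal G_1^{\text{top}}\in\mathcal S_{(Z)}$, the task reduces to showing that the remaining three coefficients give a stabilizer. In fact I will show $\beta=\gamma=\delta=0$.

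The key step is to impose the remaining commutation conditions using the boundary symplectic products already computed, $(\eta^\square)_{ji}=\Omega(\mathcal G_j^\square,\mathcal G_i^\square)$ from Appendix~\ref{appendix_boundary_gauge_syndrome_matrices}. For $L_z$ large, top and bottom gauge operators have disjoint support and commute with each other, and everything commutes with $\mathcal S_B$. First, commuting with $\mathcal G_1^{\text{top}}$ gives
\begin{equation*}
\Omega(\mathcal G_1^{\text{top}},U)=\beta\,(\eta^{\text{top}})_{12}=\beta\,xy\bar F=0 .
\end{equation*}
Since $R$ is an integral domain and $F\neq0$, this forces $\beta=0$. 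Second, commuting with $\mathcal G_1^{\text{bot}}$ and $\mathcal G_2^{\text{bot}}$ gives $\delta\,xy\bar F=0$ and $\gamma\,\bar x\bar yF=0$, so $\gamma=\delta=0$. Hence $U=s+\alpha\mathcal G_1^{\text{top}}\in\mathcal S_{(Z)}$. The reverse inclusion $\mathcal S_{(Z)}\subset(\mathcal S_{(Z)}+\mathcal G^{\text{bot}})^\Omega$ is immediate: $\mathcal S_{(Z)}$ is abelian, the top-boundary generators commute with $\mathcal G^{\text{bot}}$ by separation, and bulk stabilizers commute with all boundary gauge operators.

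I expect the main obstacle to be bookkeeping at the level of modules rather than translates. The condition ``commutes with all translates'' must be used in its $R$-valued form, $\Omega(V,U)=0\in R$, as in Appendix~\ref{appendix_pedagogical_review}. I also need to check that the decomposition of $U$ is well defined with $R$-coefficients, i.e.\ that $\mathcal S_T$ is generated over $R$ by exactly these generators. This is exactly what the cited Algorithm~1 result and the module isomorphisms $\mathcal G\cong R^4$ provide, and it relies on $x,y$ being infinite rather than periodic. Under periodic boundary conditions the ring would have zero divisors, and annihilators of $F$ would yield genuine logical operators. This is also why the lemma is stated in the infinite-OBC setting.
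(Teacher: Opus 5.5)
Your proposal is correct and is essentially the paper's proof: both use $\mathcal S_B^\Omega=\mathcal S_T$ to expand the operator over $\mathcal S_B$ and the four boundary gauge generators. Both then eliminate the $\mathcal G_1^{\text{bot}}$, $\mathcal G_2^{\text{bot}}$ and $\mathcal G_2^{\text{top}}$ coefficients using the off-diagonal entries $xy\bar F$ and $\bar x\bar y F$ of $\eta^{\text{bot}}$ and $\eta^{\text{top}}$ together with $R$ being an integral domain, leaving an element of $\mathcal S_B+R\mathcal G_1^{\text{top}}=\mathcal S_{(Z)}$; the only difference is that you impose the top-boundary condition before the bottom ones, which is immaterial.
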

	\textbf{Proof. }
	Let
	\[
	O\in
	\bigl(\mathcal S_{(Z)}+\mathcal G^{\mathrm{bot}}\bigr)^\Omega .
	\]
	Since $\mathcal S_B\subset \mathcal S_{(Z)}$, we have $O\in \mathcal S_B^\Omega$. By the absence of secondary boundary gauge operators, $\mathcal S_B^\Omega=\mathcal S_T$.
	Hence, modulo $\mathcal S_B$, $O$ can be written as
	\[
	[O]
	=
	a_1[\mathcal G_1^{\mathrm{top}}]
	+a_2[\mathcal G_2^{\mathrm{top}}]
	+b_1[\mathcal G_1^{\mathrm{bot}}]
	+b_2[\mathcal G_2^{\mathrm{bot}}].
	\]
	Because top and bottom boundary generators have disjoint supports, their mixed
	symplectic pairings vanish. Using Eqs.~(\ref{eq_Haah_code_FTH_13},\ref{eq_Haah_FTH_eta_bot_2}) and the condition $O$ is commuting with $\mathcal G^{\text{bot}}$ we get
	\[
	0
	=
	\Omega\bigl(O,\mathcal G_1^{\mathrm{bot}}\bigr)
	=
	b_2\,\bar x\bar y F,
	\]
	and
	\[
	0
	=
	\Omega\bigl(O,\mathcal G_2^{\mathrm{bot}}\bigr)
	=
	b_1\,xy\bar F.
	\]
	Since $R=\mathbb F_2[x^{\pm1},y^{\pm1}]$ is an integral domain and $F\neq 0$,
	it follows that
	\[
	b_1=b_2=0.
	\]
	
	Next, $\mathcal G_1^{\mathrm{top}}\in\mathcal S_{(Z)}$, so
	\[
	0
	=
	\Omega\bigl(O,\mathcal G_1^{\mathrm{top}}\bigr)
	=
	a_2\,\bar x\bar y F,
	\]
	hence
	\[
	a_2=0.
	\]
	Therefore
	\[
	[O]=a_1[\mathcal G_1^{\mathrm{top}}],
	\]
	so
	\[
	O\in \mathcal S_B+R\mathcal G_1^{\mathrm{top}}
	=\mathcal S_{(Z)}.
	\]
	Thus
	\[
	\bigl(\mathcal S_{(Z)}+\mathcal G^{\mathrm{bot}}\bigr)^\Omega
	\subseteq
	\mathcal S_{(Z)}.
	\]
	
	The reverse inclusion is immediate: every stabilizer commutes with every element of
	$\mathcal S_{(Z)}$, and top, bottom boundary gauge operators have disjoint supports. Hence
	\[
	\mathcal S_{(Z)}
	\subseteq
	\bigl(\mathcal S_{(Z)}+\mathcal G^{\mathrm{bot}}\bigr)^\Omega .
	\]
	So equality holds.
	\textbf{QED.}

	\begin{lemma}\label{lem_no_relative_logical_X}
		Under $(X)$ top boundary,
		\[
		\bigl(\mathcal S_{(X)}+\mathcal G^{\mathrm{bot}}\bigr)^\Omega
		=
		\mathcal S_{(X)} .
		\]
		Equivalently,
		\[
		\mathcal L_{(X)}
		:=
		\bigl(\mathcal S_{(X)}+\mathcal G^{\mathrm{bot}}\bigr)^\Omega
		\Big/
		\mathcal S_{(X)}
		\{[0]\}.
		\]
	\end{lemma}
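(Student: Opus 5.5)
The proof will mirror that of Lemma~\ref{lem_no_relative_logical_Z}, with the roles of $\mathcal G_1^{\mathrm{top}}$ and $\mathcal G_2^{\mathrm{top}}$ exchanged. Under the $(X)$ top boundary the stabilizer module is
\[
\mathcal S_{(X)}=\mathcal S_B+R\,\mathcal G_2^{\mathrm{top}} .
\]
Take an arbitrary
\[
O\in\bigl(\mathcal S_{(X)}+\mathcal G^{\mathrm{bot}}\bigr)^\Omega .
\]
Since $\mathcal S_B\subset\mathcal S_{(X)}$, we have $O\in\mathcal S_B^\Omega$. The absence of secondary boundary gauge operators gives $\mathcal S_B^\Omega=\mathcal S_T$. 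Hence, modulo $\mathcal S_B$, we may expand
\[
[O]=a_1[\mathcal G_1^{\mathrm{top}}]+a_2[\mathcal G_2^{\mathrm{top}}]+b_1[\mathcal G_1^{\mathrm{bot}}]+b_2[\mathcal G_2^{\mathrm{bot}}],
\qquad a_i,b_i\in R .
\]

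The bottom coefficients are eliminated exactly as before. Top and bottom generators have disjoint supports, so their mixed pairings vanish. The pairing matrix is $\eta^{\mathrm{bot}}$ of Eq.~(\ref{eq_Haah_code_FTH_13}), and through Eq.~(\ref{eq_Haah_FTH_eta_bot_2}) commutation with $\mathcal G^{\mathrm{bot}}$ gives two equations:
\[
\Omega(O,\mathcal G_1^{\mathrm{bot}})=b_2\,\bar x\bar yF=0,
\qquad
\Omega(O,\mathcal G_2^{\mathrm{bot}})=b_1\,xy\bar F=0 .
\]
Since $R$ is an integral domain and $F,\bar F\neq0$, it follows that $b_1=b_2=0$.

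Next, $O$ must commute with $\mathcal G_2^{\mathrm{top}}\in\mathcal S_{(X)}$. The top pairing matrix $\eta^{\mathrm{top}}$ has off-diagonal entries $xy\bar F$ and $\bar x\bar yF$ and vanishing diagonal entries. The diagonal vanishes because $\mathcal G_i^{\mathrm{top}}$ is a truncated CSS stabilizer of a single Pauli type. Therefore
\[
\Omega(O,\mathcal G_2^{\mathrm{top}})=a_1\,xy\bar F=0 ,
\]
where the exact placement of the bar or conjugation on $a_1$ is immaterial by the integral domain property. This forces $a_1=0$. We conclude $[O]=a_2[\mathcal G_2^{\mathrm{top}}]$, that is,
\[
O\in\mathcal S_B+R\,\mathcal G_2^{\mathrm{top}}=\mathcal S_{(X)} .
\]

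The reverse inclusion is immediate. Elements of $\mathcal S_{(X)}$ commute with one another, since $\mathcal S_{(X)}$ is a stabilizer module. They also commute with $\mathcal G^{\mathrm{bot}}$: bulk stabilizers commute with the bottom truncated stabilizers, and $\mathcal G_2^{\mathrm{top}}$ has support disjoint from the bottom layer because $L_z$ is large. Equality then follows, and so $\mathcal L_{(X)}=\{[0]\}$.

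No step is a real obstacle. The only point needing care is confirming that the diagonal entries of $\eta^{\mathrm{top}}$ vanish, so that the $a_2$ term drops out of the pairing with $\mathcal G_2^{\mathrm{top}}$. This holds because $\mathcal G_2^{\mathrm{top}}$ is purely $Z$-type. The rest is a direct transcription of the $(Z)$ argument.
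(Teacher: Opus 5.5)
Your proof is correct and is essentially the paper's argument: expand modulo $\mathcal S_B$ using $\mathcal S_B^\Omega=\mathcal S_T$, eliminate $b_1,b_2$ through $\eta^{\mathrm{bot}}$, and eliminate $a_1$ through commutation with $\mathcal G_2^{\mathrm{top}}\in\mathcal S_{(X)}$; the paper does exactly this by adapting Lemma~\ref{lem_no_relative_logical_Z}. The only cosmetic difference is that you write the pairing with $\mathcal G_2^{\mathrm{top}}$ as $a_1\,xy\bar F$ where the paper writes $a_1\bar x\bar y F$; these differ only by conjugation (the order of arguments in $\Omega$), and as you note, either one forces $a_1=0$ because $R$ is an integral domain.
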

	\textbf{Proof. }
	The proof of Lemma~\ref{lem_no_relative_logical_X} is nearly the same as the proof of Lemma~\ref{lem_no_relative_logical_Z}, except that $\mathcal G_2^{\text{top}}\in\mathcal S_{(X)}$, so $0=\Omega(O,\mathcal G_2^{\text{top}}) = a_1\bx\by F$, hence $a_1=0$, $[O]=a_2[\mathcal G_2^{\text{top}}]$, $O\in\mathcal S_B+R\mathcal G_2^{\text{top}}=\mathcal S_{(X)}$.
	\textbf{QED.}

	\subsection{Generators of low-energy preserving operator module under \texorpdfstring{$(Z)$}{(Z)} top boundary}\label{appendix_generators_of_low_energy_preserving_module_Z_top}
	
	In this Appendix, we derive the generators of low-energy preserving operator module under $(Z)$ top boundary, namely, $\mathcal O_{(Z)}$. We first derive the general form of low-energy preserving Pauli operators, i.e. elements in $\mathcal S_{(Z)}^\Omega$; then we prove that two low-energy preserving Pauli operators with the same bottom boundary gauge syndrome differ by a stabilizer; finally, we conclude the generators of $\mathcal O_{(Z)}$.
	
	\subsubsection{Derive all low-energy preserving operators}\label{appendix_derive_all_low_energy_preserving_operators_Z_top}
	
	Since the Haah's code is CSS, we can separate the generators of $\mathcal S_{(Z)}^\Omega$ into $X$-part and $Z$-part. The $Z$ part is easier: since $S_X^{\text{top}}$ is in the generator set of $\mathcal S_{(Z)}$, and there is no nontrivial logical operator, the only low-energy preserving $Z$-type operators are those generated by $\mathcal G_2^{\text{bot}}$ (up to stabilizers). For the $X$-part, we now consider the general finite-support $X$-type operator, and apply low-energy preserving condition on it.
	
	For a general finite-support $X$-type operator $O_X$ under $(Z)$ top boundary, we separate it into layers:
	\begin{equation}\label{eq_O_X_def_1}
		O_X = \sum_{k=0}^{L_z-1}O_{X,k},
	\end{equation}
	where each layer operator is represented as:
	\begin{equation}\label{eq_O_X_def_2}
		O_{X,k} = z^{k-1}\left(u_kX_1+v_kX_2+w_kxyz\mathcal G_1^{\text{bot}}\right)\in P_k\,,
	\end{equation}
	with $u_k,v_k,w_k \in R$. Eq.~(\ref{eq_O_X_def_2}) is a redundant (thus faithful) representation of $O_X$. Under the $(Z)$ top boundary, requiring $O_X$ to violate no stabilizers is equivalent to having no bulk syndrome. By analyzing the bulk syndrome layer-by-layer, the cancellation of $E_{k+1/2}$ syndromes between $O_{X,k}$ and $O_{X,k+1}$ gives the no-bulk-syndrome condition:
	\begin{equation}\label{eq_O_X_condition}
		u_k+v_kD = u_{k+1}A+v_{k+1}B+w_{k+1}F,
	\end{equation}
	for all $k=0,1,\cdots,L_z-2$. 
	
	Solving these constraints layer-by-layer gives the general form of finite-support $X$-type Pauli operators with no stabilizer violation. The only excitation created by $O_X$ is on the bottom boundary:
	\begin{equation}
		(u_0A+v_0B+w_0F)S_Z^{\text{bot}}\in E^{\text{bot}}.
	\end{equation}
	We now figure out what constraints do the conditions in Eq.~(\ref{eq_O_X_condition}) give to the bottom boundary excitation/syndrome of $O_X$.
	
	Denote the common $E_{k+1/2}$ syndrome of $O_{X,k}$ and $O_{X,k+1}$ as
	\begin{equation}\label{eq_rk_def}
		r_{k+1/2}:=u_k+v_kD=u_{k+1}A+v_{k+1}B+w_{k+1}F,\quad k=0,1,\cdots,L_z-2.
	\end{equation}
	Take $k=L_z-2$, the form of $r_{L_z-3/2}$ directly gives
	\begin{equation}\label{eq_rLz32_ideal}
		r_{L_z-3/2}\in(A,B,F)=(A,F),
	\end{equation}
	where we have used $B=AF+D\Rightarrow(A,B,F)=(A,F)$. From Eq.~(\ref{eq_rk_def}) we can write $u_k=r_{k+1/2}+Dv_k$. Taking it into $r_{k-1/2}=u_kA+v_kB+w_kF$, we get
	\begin{align}
		r_{k-1/2} &= (r_{k+1/2}+Dv_k)A + v_kB + w_kF = Ar_{k+1/2} + (AD+B)v_k + Fw_k = Ar_{k+1/2} + Ft_k,\label{eq_Haah_code_FTH_35}
	\end{align}
	where
	\begin{equation}
		t_k := v_k+w_k.
	\end{equation}
	Using Eq.~(\ref{eq_Haah_code_FTH_35}) iteratively, we get
	\begin{align}
		r_{1/2} &= Ar_{3/2} + Ft_1 = A(Ar_{5/2}+Ft_2) + Ft_1 = \cdots\notag\\
		&= A^{L_z-2}r_{L_z-3/2} + F(t_1+At_2+\cdots+A^{L_z-3}t_{L_z-2}).\label{eq_r12_iteration}
	\end{align}
	Combining Eqs.~(\ref{eq_rLz32_ideal},\ref{eq_r12_iteration}), we get 
	\begin{equation}
		r_{1/2} \in A^{L_z-2}(A,F) + (F) = (A^{L_z-1},F).
	\end{equation}
	Finally, from Eq.~(\ref{eq_rk_def}) we can write $u_0=r_{1/2}+v_0D\in(A^{L_z-1},F)$, so that the coefficient of the bottom boundary gauge syndrome of $O_X$ is
	\begin{align}
		r_{-1/2} := u_0A+v_0B+w_0F = (r_{1/2}+v_0D)A + v_0B + w_0F = Ar_{1/2} + F(v_0+w_0).
	\end{align}
	Taking $r_{1/2}\in(A^{L_z-1},F)$ into above, we get the constraint of bottom boundary gauge syndrome of $O_X$
	\begin{equation}\label{eq_Haah_code_FTH_39}
		r_{-1/2}S_Z^{\text{bot}} \in (A^{L_z},F)S_Z^{\text{bot}}.
	\end{equation}

	\subsubsection{Same bottom gauge syndrome decides low-energy preserving operators up to stabilizers}\label{appendix_same_bottom_gauge_syndrome_decides_low_energy_operator_up_to_stabilizers}
	
	\begin{theorem}\label{theorem_Z_top_OX_differ_by_stabilizer}
		Under $(Z)$ top boundary, for two Pauli operators $O_X,O'_X\in\mathcal S_{(Z)}^\Omega$ with the same bottom syndrome, $O_X-O'_X\in\mathcal S_{(Z)}$.
	\end{theorem}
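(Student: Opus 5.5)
The plan is to reduce the statement to the absence of logical operators, Lemma~\ref{lem_no_relative_logical_Z}. Set $W:=O_X-O'_X$, which lies in $\mathcal S_{(Z)}^\Omega$ because the symplectic complement is an $R$-submodule. By hypothesis the bottom boundary gauge syndrome of $W$ vanishes. Concretely, $\Omega(W,\mathcal G_1^{\text{bot}})=0$ and $\Omega(W,\mathcal G_2^{\text{bot}})=0$ as elements of $R$, i.e. $W$ commutes with every translate of the bottom boundary gauge generators. Together with $W\in\mathcal S_{(Z)}^\Omega$, this gives $W\in(\mathcal S_{(Z)}+\mathcal G^{\text{bot}})^\Omega$. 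Lemma~\ref{lem_no_relative_logical_Z} then yields $W\in\mathcal S_{(Z)}$.

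The key steps, in order, are as follows.
\begin{enumerate}
\item Make precise that ``bottom syndrome'' means the image under $\epsilon^{\text{bot}}$, i.e. the pair $\bigl(\Omega(\cdot,\mathcal G_1^{\text{bot}}),\Omega(\cdot,\mathcal G_2^{\text{bot}})\bigr)$ up to conjugation conventions.
\item Note that this map is $R$-linear, so equal syndromes imply zero syndrome for the difference. In the $\mathbb F_2$ setting subtraction is addition, and the Pauli product has the added vector.
\item Check that $W$ has no violation of any element of $\mathcal S_{(Z)}$; this is immediate from membership in $\mathcal S_{(Z)}^\Omega$.
\item Invoke Lemma~\ref{lem_no_relative_logical_Z}.
\end{enumerate}

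Since $O_X,O'_X$ are $X$-type, their syndromes with respect to $\mathcal G_1^{\text{bot}}$ (an $X$-type operator) vanish automatically. Only the $\mathcal G_2^{\text{bot}}$ component, $r_{-1/2}S_Z^{\text{bot}}$, carries information, which matches the layer-by-layer computation above. As a consistency check, one could also argue directly through the layer recursion Eq.~(\ref{eq_Haah_code_FTH_35}). With $r_{-1/2}=0$, one would inductively show that the layer data of $W$ can be stripped off by subtracting $z^kS_X$ terms and finally $\mathcal G_1^{\text{top}}$. I would not pursue this route, because it requires controlling the $F t_k$ freedom and is messier than the lemma.

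The main obstacle I expect is a subtle point about finiteness. Lemma~\ref{lem_no_relative_logical_Z} uses $\mathcal S_B^\Omega=\mathcal S_T$, the absence of secondary boundary gauge operators. This identification must apply to the finite-support operator $W$, and $W$ must be expressible as an $R$-combination of the four boundary gauge generators modulo $\mathcal S_B$, not merely an $\hat R$-combination. Since $O_X,O'_X$ are finite-support, $W\in\pi P$, and the paper's statement $\mathcal S_B^\Omega=\mathcal S_T$ within $\pi P$ covers this, so I would just cite it. The remaining cancellations $b_1=b_2=a_2=0$ rest on $R$ being an integral domain, which is already done in the lemma's proof.
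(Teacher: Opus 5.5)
Your steps 1--3 are fine: $W=O_X-O'_X$ lies in $\mathcal S_{(Z)}^\Omega$, has zero bottom syndrome, and so lies in $(\mathcal S_{(Z)}+\mathcal G^{\text{bot}})^\Omega$. The gap is in step 4, at the point you flag as the main obstacle and then dismiss. The identity $\mathcal S_B^\Omega=\mathcal S_T$ does \emph{not} hold for arbitrary finite-support operators in $\pi P$. The Algorithm-1 check only excludes \emph{uniformly local} secondary gauge operators, whereas $W$ is generically slab-spanning; the theorem is used precisely to compare transport operators.

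The transport operator itself is a counterexample. $\mathcal T_{e_Z,z}$ commutes with every bulk stabilizer, so $\mathcal T_{e_Z,z}\in\mathcal S_B^\Omega$, and its bottom syndrome is $A^{L_z}\mathcal G_2^{\text{bot}}$. On the other hand, every element of $\mathcal S_T$ has $\mathcal G_2^{\text{bot}}$-syndrome $b_1\bar x\bar yF\in(F)$. This holds because bulk stabilizers commute with the bottom truncated stabilizers, the top generators have disjoint support, and $\eta^{\text{bot}}_{22}=0$. Since $F=AD+B$ gives $\gcd(A,F)=\gcd(A,B)=1$ and $F$ is not a unit, $A^{L_z}\notin(F)$, hence $\mathcal T_{e_Z,z}\notin\mathcal S_T$. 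So the first step in the proof of Lemma~\ref{lem_no_relative_logical_Z}, writing $[O]$ as an $R$-combination of the four boundary generators modulo $\mathcal S_B$, is unjustified for exactly the kind of operator you apply it to. Worse, restricted to $X$-type operators that lemma is equivalent to the theorem you are proving (take $O'_X=0$). Without the false identity, your reduction is circular.

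What is missing is a direct argument that an $X$-type operator with zero bulk and zero bottom syndrome is a product of stabilizers. This is the layer sweep you set aside, and it does not require controlling the $Ft_k$ freedom if you work with the actual $X_1,X_2$ coefficients rather than $r_{k+1/2}$. In the bottom layer set $g_1^{(0)}=u_0+Dw_0$ and $g_2^{(0)}=v_0+w_0$. Using $F=AD+B$, the condition $u_0A+v_0B+w_0F=0$ becomes $Ag_1^{(0)}=Bg_2^{(0)}$. Coprimality of $A$ and $B$ in the UFD $R$ then gives $g_1^{(0)}=BU_{1/2}$ and $g_2^{(0)}=AU_{1/2}$.

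So the bottom layer of $W$ is $\bar z(BX_1+AX_2)U_{1/2}$, which is the layer-$0$ part of $xyS_XU_{1/2}=\bigl(xyz\mathcal G_1^{\text{bot}}+\bar z(BX_1+AX_2)\bigr)U_{1/2}$. Adding this stabilizer clears layer $0$. The remainder still has zero bulk syndrome, and its $E_{1/2}$ condition has the same form, since layer-$1$ qubits $X_1,X_2$ have lower syndromes $A,B$. You can therefore repeat the step layer by layer. At the top layer the remainder is $z^{L_z-2}(BX_1+AX_2)U_{L_z-1/2}=xy\mathcal G_1^{\text{top}}U_{L_z-1/2}\in\mathcal S_{(Z)}$. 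This is the paper's proof, and it never invokes $\mathcal S_B^\Omega=\mathcal S_T$.
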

	\textbf{Proof. }$O_X,O'_X$ both have the form of Eqs.~(\ref{eq_O_X_def_1},\ref{eq_O_X_def_2}), so $\delta O_X=O_X-O'_X$ also has the form of Eqs.~(\ref{eq_O_X_def_1},\ref{eq_O_X_def_2}). Suppose
	\begin{equation}
		\delta O_X=\sum_{k=0}^{L_z-1}\delta O_{X,k} = \sum_{k=0}^{L_z-1}z^{k-1}\left(u_kX_1+v_kX_2+w_kxyz\mathcal G_1^{\text{bot}}\right).
	\end{equation}
	Since $O_X,O'_X$ have the same bottom boundary gauge syndrome, $\delta O_X$ has zero bottom boundary gauge syndrome. The bottom syndromes of $\bz X_1,\bz X_2,xy\mathcal G_1^{\text{bot}}$ are $AS_Z^{\text{bot}},BS_Z^{\text{bot}},FS_Z^{\text{bot}}$, respectively, so the zero bottom syndrome condition means
	\begin{align}
		u_0A+v_0B+w_0F=0\quad\Longleftrightarrow\quad u_0A+v_0B+w_0(AD+B)=0\quad\Longleftrightarrow\quad A(Dw_0+u_0)+B(w_0+v_0)=0.
	\end{align}
	Denote
	\begin{equation}
		g_1^{(0)}:=Dw_0+u_0,\quad g_2^{(0)}:=w_0+v_0,
	\end{equation}
	we have
	\begin{equation}
		Ag_1^{(0)}+Bg_2^{(0)}=0\quad\Longleftrightarrow\quad Ag_1^{(0)}=Bg_2^{(0)}\quad\Longrightarrow\quad A|Bg_2^{(0)}\ \ \&\ \ B|Ag_1^{(0)},
	\end{equation}
	where $a|b$ means $a$ divides $b$. Since $R$ is a UFD, and $\text{gcd}(A,B)=1$, we can use Theorem~\ref{theorem_UFD_divide_property} to derive from $A|Bg_2^{(0)}$ and $B|Ag_1^{(0)}$ that
	\begin{equation}
		A|g_2^{(0)}\ \ \&\ \ B|g_1^{(0)}.
	\end{equation}
	Suppose $Af_1=g_2^{(0)}$, $Bf_2=g_1^{(0)}$, taking them back to $Ag_1^{(0)}=Bg_2^{(0)}$, we get $ABf_2=BAf_1$. Since $R$ is a UFD and $AB$ is non-zero, $f_1=f_2$. Denote $U_{1/2}:=f_1=f_2$, we have $g_1^{(0)}=BU_{1/2},\ g_2^{(0)}=AU_{1/2}$. On the other hand, $g_1^{(0)},g_2^{(0)}$ have a more straightforward physical meaning, i.e. the coefficients of $X_1,X_2$ of $O_X$ in layer $k=0$:
	\begin{equation*}
		u_0X_1+v_0X_2+w_0xyz\mathcal G_1^{\text{bot}} = u_0X_1+v_0X_2+w_0(DX_1+X_2) = g_1^{(0)}X_1+g_2^{(0)}X_2.
	\end{equation*}
	Therefore,
	\begin{equation*}
		O_X=\bz\left(g_1^{(0)}X_1+g_2^{(0)}X_2\right) + \sum_{k=1}^{L_z-1}\delta O_{X,k} = \bz\left(BX_1+AX_2\right)U_{1/2} + \sum_{k=1}^{L_z-1}\delta O_{X,k}.
	\end{equation*}
	Recall that
	\begin{equation*}
		\mathcal G_1^{\text{bot}} = \bx\by\bz\left(DX_1+X_2\right),\quad S_X=\bx\by\bz\big(z(DX_1+X_2)+(BX_1+AX_2)\big),
	\end{equation*}
	so we have
	\begin{equation*}
		S_X=\bx\by\big(xyz\mathcal G_1^{\text{bot}} + \bz(BX_1+AX_2)\big).
	\end{equation*}
	So,
	\begin{equation*}
		\delta O_X^{(1)} := \delta O_X + xyS_XU_{1/2} = xyz\mathcal G_1^{\text{bot}}U_{1/2} + \sum_{k=1}^{L_z-1}\delta O_{X,k},
	\end{equation*}
	which has no support in the $z=0$ layer.
	
	We then do the same thing to $\delta O_X^{(1)}$ to clear the $z=1$ layer Pauli operators by adding multiples of $S_X$ in the $z=3/2$ layer. The process is shown as following:
	\begin{enumerate}
		\item $\delta O_X^{(1)}$ has no bulk syndrome, so it has zero syndrome in the $k=1/2$ bulk layer. Only the $k=1$ layer part of $\delta O_X^{(1)}$ can have $k=1/2$ syndrome, so the $k=1$ layer part of $\delta O_X^{(1)}$ has zero $k=1/2$ syndrome, 
		\begin{align*}
			U_{1/2}F + u_1A + v_1B + w_1F = 0\quad&\Longleftrightarrow\quad u_1A + v_1B + (U_{1/2}+w_1)(AD+B) = 0\\
			&\Longleftrightarrow\quad A(u_1+DU_{1/2}+Dw_1) + B(v_1+U_{1/2}+w_1) = 0.
		\end{align*}
		\item Denote
		\begin{equation*}
			g_1^{(1)} := u_1 + DU_{1/2} + Dw_1,\quad g_2^{(1)} := v_1 + U_{1/2} + w_1,
		\end{equation*}
		we get
		\begin{equation*}
			Ag_1^{(1)} + Bg_2^{(1)} = 0\quad\Longleftrightarrow\quad Ag_1^{(1)}=Bg_2^{(1)}\quad\Longrightarrow\quad A|Bg_2^{(1)}\ \ \&\ \ B|Ag_1^{(1)}.
		\end{equation*}
		Using $\text{gcd}(A,B)=1$ and Theorem~\ref{theorem_UFD_divide_property}, we get $A|g_2^{(0)}$ and $B|g_1^{(0)}$. Combining with $Ag_1^{(1)}=Bg_2^{(1)}$, we get
		\begin{equation*}
			g_1^{(1)} = BU_{3/2},\quad g_2^{(1)} = AU_{3/2},
		\end{equation*}
		for some $U_{3/2}\in R$.
		\item $g_1^{(1)},g_2^{(1)}$ are the $X_1,X_2$ coefficients of $\delta O_X^{(1)}$ in the $z=1$ layer:
		\begin{align*}
			xyz\mathcal G_1^{\text{bot}}U_{1/2} + \delta O_{X,1} =& (DX_1+X_2)U_{1/2} + u_1X_1 + v_1X_2 + w_1(DX_1+X_2)\\
			=&(DU_{1/2}+u_1+w_1D)X_1 + (U_{1/2}+v_1+w_1)X_2 = g_1^{(1)}X_1 + g_1^{(2)}X_2.
		\end{align*}
		Therefore,
		\begin{align*}
			\delta O_X^{(1)} = g_1^{(1)}X_1 + g_2^{(1)}X_2 + \sum_{k=2}^{L_z-1}\delta O_{X,k} = (BX_1+AX_2)U_{3/2} + \sum_{k=2}^{L_z-1}\delta O_{X,k}.
		\end{align*}
		\item Recall that
		\begin{equation*}
			S_X = \bx\by\left(xyz\mathcal G_1^{\text{bot}} + \bz(BX_1+AX_2)\right),
		\end{equation*}
		we have
		\begin{equation*}
			\delta O_X^{(2)} := \delta O_X^{(1)} + xyzS_XU_{3/2} = xyz^2\mathcal G_1^{\text{bot}}U_{3/2} + \sum_{k=2}^{L_z-1}\delta O_{X,k},
		\end{equation*}
		which has no support in the $z=0,1$ layers.
	\end{enumerate}
	This can be iteratively done, until we get
	\begin{equation*}
		\delta O_X^{(L_z-1)} := \delta O_X^{(L_z-2)} + xyz^{L_z-2}S_XU_{L_z-3/2} = xyz^{L_z-1}\mathcal G_1^{\text{bot}}U_{L_z-3/2} + \delta O_{X,L_z-1},
	\end{equation*}
	which has support only in the $z=L_z-1$ top layer. 
	
	The final step is similar, except that we will add multiples of top truncated $S_X$, instead of complete bulk $S_X$. Final step:
	\begin{enumerate}
		\item $\delta O_X^{(L_z-1)}$ has no bulk syndrome,
		\begin{align*}
			&U_{L_z-3/2}F + u_{L_z-1}A + v_{L_z-1}B + w_{L_z-1}F = 0\\
			\Longleftrightarrow\quad&A\left(u_{L_z-1}+DU_{L_z-3/2}+Dw_{L_z-1}\right) + B\left(v_{L_z-1}+U_{L_z-3/2}+w_{L_z-1}\right) = 0.
		\end{align*}
		\item Denote
		\begin{equation*}
			g_1^{(L_z-1)} := u_{L_z-1}+DU_{L_z-3/2}+Dw_{L_z-1},\quad g_2^{(L_z-1)} := v_{L_z-1}+U_{L_z-3/2}+w_{L_z-1},
		\end{equation*}
		so that the no bulk syndrome condition of $\delta O_X^{(L_z-1)}$ can be written as
		\begin{equation*}
			Ag_1^{(L_z-1)} + Bg_2^{(L_z-1)} = 0\quad\Longleftrightarrow\quad Ag_1^{(L_z-1)} = Bg_2^{(L_z-1)}\quad\Longrightarrow\quad A|Bg_2^{(L_z-1)}\ \ \&\ \ B|Ag_1^{(L_z-1)}.
		\end{equation*}
		Using $\text{gcd}(A,B)=1$ and Theorem~\ref{theorem_UFD_divide_property}, we get $A|g_2^{(L_z-1)}$ and $B|g_1^{(L_z-1)}$. Combining with $Ag_1^{(L_z-1)}=Bg_2^{(L_z-1)}$, we get
		\begin{equation*}
			g_1^{(L_z-1)}=BU_{L_z-1/2},\quad g_2^{(L_z-1)}=AU_{L_z-1/2},
		\end{equation*}
		for some $U_{L_z-1/2}\in R$.
		\item $g_1^{(L_z-1)},g_2^{(L_z-1)}$ are the $X_1,X_2$ coefficients of $\delta O_X^{(L_z-1)}$ in the $k=L_z-1$ top layer. Direct calculation gives
		\begin{align*}
			\delta O_X^{(L_z-1)} &= xyz^{L_z-1}\mathcal G_1^{\text{bot}}U_{L_z-3/2} + \delta O_{X,L_z-1} \\&= z^{L_z-2}(DX_1+X_2)U_{L_z-3/2} + u_{L_z-1}z^{L_z-2}X_1 + v_{L_z-1}z^{L_z-2}X_2 + w_{L_z-1}z^{L_z-2}(DX_1+X_2)\\
			&= g_1^{(L_z-1)}z^{L_z-2}X_1 + g_2^{(L_z-1)}z^{L_z-2}X_2 = z^{L_z-2}(BX_1+AX_2)U_{L_z-1/2}.
		\end{align*}
		\item Recall that
		\begin{equation*}
			\mathcal G_1^{\text{top}} = \bx\by z^{L_z-2}(BX_1+AX_2).
		\end{equation*}
		$\mathcal G_1^{\text{top}}$ is a stabilizer on the $(Z)$ top boundary.
		\begin{equation*}
			\delta O_X^{(L_z)} := \delta O_X^{(L_z-1)} + xy\mathcal G_1^{\text{top}}U_{L_z-1/2} = 0.
		\end{equation*}
	\end{enumerate}
	In all,
	\begin{equation*}
		\delta O_X^{(L_z)} = \delta O_X + \sum_{k=0}^{L_z-2}xyz^k S_X U_{L_z-3/2} + xy\mathcal G_1^{\text{top}}U_{L_z-1/2} = 0,
	\end{equation*}
	$\delta O_X$ is a stabilizer, $O_X$ and $O'_X$ differ by a stabilizer.
	\textbf{QED. }
	
	Clear the definition of $u_k$, $v_k$, $r_{k+1/2}$, $t_k$, $g_1^{(k)}$, $g_2^{(k)}$, $U_{k+1/2}$, they will be redefined later in Appendix~\ref{appendix_generators_of_low_energy_preserving_operators_X_top}.

	\subsubsection{Conclusion}
	
	We have derived in Appendix~\ref{appendix_derive_all_low_energy_preserving_operators_Z_top} that the bottom syndrome of $X$-type low-energy preserving operator $O_X$ lies in $(A^{L_z},F)S_Z^{\text{bot}}$, and in Appendix~\ref{appendix_same_bottom_gauge_syndrome_decides_low_energy_operator_up_to_stabilizers} that the $X$-type low-energy preserving operators with the same bottom boundary gauge syndrome differ by a stabilizer only. Now, we write down two canonical generators of $X$-type low-energy preserving operators, and then conclude the generators of $\mathcal O_{(Z)}$.
	
	Note that the $e_Z$-transport operator
	\begin{equation*}
		\mathcal T_{e_Z,z} := \sum_{k=0}^{L_z-1}A^{L_z-1-k}z^{k-1}X_1
	\end{equation*}
	is finite support, $X$-type, low-energy preserving, and has the bottom syndrome $A^{L_z}\mathcal G_2^{\text{bot}}$. On the other hand, the bottom boundary gauge operator $\mathcal G_1^{\text{bot}}$ is finite support, $X$-type, low-energy preserving, and has the bottom syndrome $F\mathcal G_2^{\text{bot}}$. Therefore, by Theorem~\ref{theorem_Z_top_OX_differ_by_stabilizer}, any finite support $X$-type low-energy preserving operator $O_X$ with the bottom boundary gauge syndrome $(uA^{L_z}+vF)\mathcal G_2^{\text{bot}}$ differ from $u\mathcal T_{e_Z,z}+v\mathcal G_1^{\text{bot}}$ by a stabilizer only. Combining with that $Z$-type low-energy preserving operators are generated by $\mathcal G_2^{\text{bot}}$, we conclude that $\mathcal O_{(Z)}$ is generated by $[\mathcal G_1^{\text{bot}}],[\mathcal T_{e_Z,z}],[\mathcal G_2^{\text{bot}}]$, where $[\ ]$ stands for the equivalence class with equivalence relation ``differing by a stabilizer''.

	\subsection{Generators of low-energy preserving operator module under \texorpdfstring{$(X)$}{(X)} top boundary}\label{appendix_generators_of_low_energy_preserving_operators_X_top}
	
	In this Appendix, we derive the generators of low-energy preserving operators module under $(X)$ top boundary, namely, $\mathcal O_{(X)}$. We first derive the general form of low-energy preserving Pauli operators, i.e. elements in $\mathcal S_{(X)}^\Omega$; then we prove that two low-energy preserving operators with the same bottom boundary gauge syndrome differ by a stabilizer only; finally, we conclude the generators of $\mathcal O_{(X)}$.
	
	\subsubsection{Derive all low-energy preserving operators}\label{appendix_derive_all_low_energy_preserving_operators_X_top}
	
	Like for the $(Z)$ top boundary case, we separate the generators of $\mathcal S_{(X)}^\Omega$ into $X$-part and $Z$-part. Now the $X$-part is easier: since $S_Z^{\text{top}}$ is in the generator set of $\mathcal S_{(Z)}$, and there is no nontrivial logical operator, the only low-energy preserving $X$-type operators are those generated by $\mathcal G_1^{\text{bot}}$ (up to stabilizers). For the $Z$-part, we now consider the general finite-support $Z$-type operator, and apply low-energy preserving condition on it.
	
	For a general finite-support $Z$-type operator $O_Z$, we separate it into layers:
	\begin{equation}\label{eq_O_Z_def_1}
		O_Z = \sum_{k=0}^{L_z-1}O_{Z,k},
	\end{equation}
	where each layer operator is:
	\begin{equation}\label{eq_O_Z_def_2}
		O_{Z,k} = z^{k-1}(u_kZ_1+v_kZ_2)\in P_k,
	\end{equation}
	with $u_k,v_k \in R$. The no-bulk-syndrome condition requires the $E_{k+1/2}$ syndromes of $O_{Z,k}$ and $O_{Z,k+1}$ to cancel, which gives:
	\begin{equation}\label{eq_Haah_code_O_Z_no_bulk_syndrome_condition}
		u_kB+v_kxy\bar A = u_{k+1}D+v_{k+1}xy,
	\end{equation}
	for all $k=0,1,\cdots,L_z-2$. 
	
	Solving these constraints layer-by-layer gives the general form of finite-support $Z$-type Pauli operators with no stabilizer violation. The only excitation created by $O_Z$ is on the bottom boundary:
	\begin{equation}
		(u_0D+v_0xy)S_X^{\text{bot}}\in E^{\text{bot}}.
	\end{equation}
	We now figure out what constraints do the conditions in Eq.~(\ref{eq_Haah_code_O_Z_no_bulk_syndrome_condition}) give to the bottom boundary excitation/syndrome of $O_Z$. In fact, for any $k=0,1,\cdots,L_z-2$, and any choice of $u_k,v_k,u_{k+1}$, Eq.~(\ref{eq_Haah_code_O_Z_no_bulk_syndrome_condition}) always has a solution:
	\begin{equation}
		v_{k+1} = \bx\by\left(u_kB+ v_kxy\bA+u_{k+1}D\right).
	\end{equation}
	This is because $\epsilon_B(z^{k-1}Z_2)$ has a monomial syndrome configuration $xy$ in the $z=k-1/2$ layer $E_{k-1/2}$, so we can always apply a proper multiple of $z^{k-1}Z_2$ to move all the $S_X$ syndromes in $E_{k-1/2}$ layer above to the $E_{k+1/2}$ layer. Consequently, the set of bottom syndromes that operators in a general low-energy preserving $Z$-type operator $O_Z$ could leave is $R\mathcal S_X^{\text{bot}}$. 
	
	\subsubsection{Same bottom gauge syndrome decides low-energy operators up to stabilizers}\label{appendix_bottom_gauge_syndrome_decides_low_energy_operators_up_to_stabilizers_X_top}
	
	\begin{theorem}\label{theorem_X_top_OZ_differ_by_stabilizer}
		Under $(X)$ top boundary, for two Pauli operators $O_Z,O'_Z\in\mathcal S_{(X)}^\Omega$ with the same bottom syndrome, $O_Z-O'_Z\in\mathcal S_{(X)}$.
	\end{theorem}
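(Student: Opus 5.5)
The plan mirrors the proof of Theorem~\ref{theorem_Z_top_OX_differ_by_stabilizer}, with the roles of $X$ and $Z$ exchanged. Under $(X)$ the top truncated $S_Z$ is a stabilizer, so the layer-by-layer sweep ends on $\mathcal G_2^{\text{top}}$. The argument should in fact be simpler, because one of the relevant syndrome coefficients is a monomial.

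\textbf{Reduction to the $Z$-type case.} Since the code is CSS, split $O_Z-O'_Z$ into its $X$-type and $Z$-type parts. Each part commutes with $\mathcal S_{(X)}$ separately, and each carries its own component of the bottom syndrome: $X$-type operators violate only $\mathcal G_2^{\text{bot}}$, and $Z$-type operators violate only $\mathcal G_1^{\text{bot}}$. By the argument of Appendix~\ref{appendix_derive_all_low_energy_preserving_operators_X_top} together with Lemma~\ref{lem_no_relative_logical_X}, the $X$-type part is, modulo $\mathcal S_{(X)}$, of the form $b\,\mathcal G_1^{\text{bot}}$ with $b\in R$. Its bottom syndrome is $b\,\bar x\bar y F\,\mathcal G_2^{\text{bot}}$. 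If this vanishes, then $b=0$, because $R$ is an integral domain and $F\neq0$. It therefore suffices to treat a $Z$-type operator $\delta O_Z=\sum_k z^{k-1}(u_kZ_1+v_kZ_2)$ that has no bulk syndrome and zero bottom syndrome.

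\textbf{Sweep upward through the layers.} By $\epsilon^{\text{bot}}$ in Eq.~(\ref{eq_epsilon_bot_values}), the bottom syndrome of $\delta O_Z$ is $(u_0D+v_0xy)\mathcal G_1^{\text{bot}}$. Requiring it to vanish gives $v_0=\bar x\bar y D u_0$, because $xy$ is a unit. No UFD argument is needed here. The layer-$0$ part is then $u_0\,\bar x\bar y\bar z(xyZ_1+DZ_2)$, which is exactly $u_0$ times the $z=0$ component of $\sigma_B(S_Z)$. Hence $\delta O_Z^{(1)}:=\delta O_Z+u_0S_Z$ has no support in layer $0$.

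The residual still has no bulk syndrome. Its lowest occupied layer $k$ is the only contributor to the syndrome on $E_{k-1/2}$, and that contribution has the same form $u'D+v'xy$ as above, read off from Eq.~(\ref{eq_epsilon_B_value_bulk}). So the same step applies: solve $v'=\bar x\bar yDu'$, recognise the layer-$k$ part as $u'$ times the lower component of $z^kS_Z$, and subtract $u'z^kS_Z$. Iterate for $k=1,\dots,L_z-2$.

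\textbf{Final step at the top layer.} What remains is supported only in layer $L_z-1$ and has no syndrome on $E_{L_z-3/2}$. The same computation shows it equals $u''\,xy\,\mathcal G_1^{\text{top}}$-analogue; more precisely, it is $u''$ times $\bar x\bar y z^{L_z-2}(xyZ_1+DZ_2)$, which is $u''\,\sigma^{\text{top}}(\mathcal G_2^{\text{top}})$. Since $\mathcal G_2^{\text{top}}\in\mathcal S_{(X)}$ under the $(X)$ boundary, $\delta O_Z$ is a sum of bulk $S_Z$'s and top $S_Z$'s, i.e.\ it lies in $\mathcal S_{(X)}$.

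\textbf{Expected obstacle.} The main care point is the index bookkeeping in the sweep. One must check that the syndrome of the lowest surviving layer on the layer below is indeed $u'D+v'xy$, matching $\epsilon^{\text{bot}}$, and that the lower half of $z^kS_Z$ and the top truncated $S_Z$ have exactly the form $\bar x\bar y(xyZ_1+DZ_2)$. Once this is verified from the formulas in Appendix~\ref{appendix_layer_by_layer_calculations}, each layer elimination is immediate.
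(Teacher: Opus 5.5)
Your proposal is correct and follows the paper's proof essentially step for step: you use the zero bottom syndrome $u_0D+v_0xy=0$ to write the layer-$0$ part as a multiple of the lower half of $S_Z$, sweep upward through $z^kS_Z$ for $k\le L_z-2$, and absorb the top-layer remainder into $\mathcal G_2^{\text{top}}\in\mathcal S_{(X)}$. Your direct use of $xy$ being a unit is the same as the paper's divisibility step $D\mid g_2^{(0)}$. The preliminary CSS reduction is an addition, since the paper simply takes $O_Z,O'_Z$ to be $Z$-type, and it is harmless; but if you are willing to invoke Lemma~\ref{lem_no_relative_logical_X} there, that lemma alone already yields the whole statement, because an operator in $\mathcal S_{(X)}^\Omega$ with zero bottom syndrome lies in $\bigl(\mathcal S_{(X)}+\mathcal G^{\text{bot}}\bigr)^\Omega$.
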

	\textbf{Proof. }$O_Z,O'_Z$ both have the form of Eqs.~(\ref{eq_O_Z_def_1},\ref{eq_O_Z_def_2}), so $\delta O_Z=O_Z-O'_Z$ also has the form of Eqs.~(\ref{eq_O_Z_def_1},\ref{eq_O_Z_def_2}). Suppose
	\begin{equation}
		\delta O_Z = \sum_{k=0}^{L_z-1}\delta O_{Z,k} = \sum_{k=0}^{L_z-1}z^{k-1}\left(u_kZ_1+v_kZ_2\right).
	\end{equation}
	Since $O_Z,O'_Z$ have the same bottom syndrome, $\delta O_X$ has zero bottom syndrome. The bottom syndrome of $\bz Z_1,\bz Z_2$ are $DS_X^{\text{bot}}$, $xyS_X^{\text{bot}}$, respectively, so the zero bottom syndrome condition means
	\begin{equation}
		u_0D+v_0xy=0.
	\end{equation}
	Denote
	\begin{equation}
		g_1^{(0)}:=u_0,\quad g_2^{(0)}:=v_0,
	\end{equation}
	then
	\begin{equation}
		Dg_1^{(0)}+xyg_2^{(0)}=0\quad\Longleftrightarrow\quad Dg_1^{(0)}=xyg_2^{(0)}\quad\Longrightarrow\quad D|g_2^{(0)}.
	\end{equation}  
	Suppose $g_2^{(0)}=DU_{1/2}$, taking it back to $Dg_1^{(0)}=xyg_2^{(0)}$, we get $Dg_1^{(0)}=xyDU_{1/2}$, which implies $g_1^{(0)}=xyU_{1/2}$ since $R$ is a UFD. Hence the bottom layer ($z=0$) part of $\delta O_Z$ can be written as
	\begin{equation}
		\delta O_{Z,0} = \bar z(xyZ_1+DZ_2)U_{1/2}.
	\end{equation}
	Notice that
	\begin{equation*}
		xyS_Z = \bar z(xyZ_1+DZ_2)+xy\bar A Z_1+BZ_2,
	\end{equation*}
	we can write
	\begin{equation}
		\delta O_Z^{(1)} := \delta O_Z + xyS_ZU_{1/2} = (xy\bA Z_1+BZ_2)U_{1/2} + \sum_{k=1}^{L_z-1}\delta O_{Z,k},
	\end{equation}
	which does not have support in the $k=0$ layer. Next, we add multiples of $S_Z$ in the $k=3/2$ layer to cancel the $k=1$ part of $\delta O_Z^{(1)}$ as following:
	\begin{enumerate}
		\item $\delta O_Z^{(1)}$ has no bulk syndrome, so it has zero syndrome in the $k=1/2$ bulk layer. Only the $k=1$ layer part of $\delta O_Z^{(1)}$ can have $k=1/2$ syndrome, so the $k=1$ part of $\delta O_Z^{(1)}$ has zero $k=1/2$ syndrome,
		\begin{align*}
			D(xy\bA U_{1/2} + u_1) + xy(BU_{1/2}+v_1) = 0.
		\end{align*}
		\item Denote
		\begin{equation*}
			g_1^{(1)} := xy\bA U_{1/2} + u_1,\quad g_2^{(1)} := BU_{1/2}+v_1,
		\end{equation*}
		we get
		\begin{equation*}
			Dg_1^{(1)}+xyg_2^{(1)} = 0\quad\Longleftrightarrow\quad Dg_1^{(1)}=xyg_2^{(1)}\quad\Longrightarrow\quad D|g_2^{(1)}.
		\end{equation*}
		Suppose $g_2^{(1)}=DU_{3/2}$, taking it back to $Dg_1^{(1)}=xyg_2^{(1)}$, we get $Dg_1^{(1)}=xyDU_{3/2}$, which implies $g_1^{(1)}=xyU_{3/2}$ since $R$ is a UFD.
		\item $g_1^{(1)},g_2^{(1)}$ are the $Z_1,Z_2$ coefficients of $\delta O_Z^{(1)}$ in the $z=1$ layer. Therefore,
		\begin{equation*}
			\delta O_Z^{(1)} = g_1^{(1)}Z_1 + g_2^{(1)}Z_2 + \sum_{k=2}^{L_z-1}\delta O_{X,k} = (xyZ_1+DZ_2)U_{3/2} + \sum_{k=2}^{L_z-1}\delta O_{X,k}.
		\end{equation*}
		\item Using
		\begin{equation*}
			xyS_Z = \bar z(xyZ_1+DZ_2)+xy\bA Z_1+BZ_2,
		\end{equation*}
		we can write
		\begin{equation*}
			\delta O_Z^{(2)} := \delta O_Z^{(1)} + xyzS_ZU_{3/2} = z(xy\bA Z_1 + BZ_2)U_{3/2} + \sum_{k=2}^{L_z-1}\delta O_{X,k},
		\end{equation*}
		which has no support in the $z=0,1$ layers.
	\end{enumerate}
	This can be iteratively done, until we get
	\begin{equation*}
		\delta O_Z^{(L_z-1)} := \delta O_Z^{(L_z-2)} + xyz^{L_z-2}S_XU_{L_z-3/2} = z^{L_z-2}(xy\bA Z_1+BZ_2)U_{L_z-3/2} + \delta O_{X,L_z-1},
	\end{equation*}
	which has support only in the $z=L_z-1$ top layer.
	
	The final step is similar, except that we will add multiples of top truncated $S_X$, instead of complete bulk $S_X$. Final step:
	\begin{enumerate}
		\item $\delta O_Z^{(L_z-1)}$ has no bulk syndrome in the $z=L_z-3/2$ layer, so
		\begin{align*}
			D(xy\bA U_{L_z-3/2} + u_{L_z-1}) + xy(BU_{L_z-3/2}+v_{L_z-1}) = 0.
		\end{align*}
		\item Denote
		\begin{equation*}
			g_1^{(L_z-1)} := xy\bA U_{L_z-3/2} + u_{L_z-1},\quad g_2^{(L_z-1)} := BU_{L_z-3/2}+v_{L_z-1},
		\end{equation*}
		we get
		\begin{equation*}
			Dg_1^{(L_z-1)}+xyg_2^{(L_z-1)} = 0\quad\Longleftrightarrow\quad Dg_1^{(L_z-1)}=xyg_2^{(L_z-1)}\quad\Longrightarrow\quad D|g_2^{(L_z-1)}.
		\end{equation*}
		Suppose $g_2^{(L_z-1)}=DU_{L_z-1/2}$, taking it back to $Dg_1^{(L_z-1)}=xyg_2^{(L_z-1)}$, we get $Dg_1^{(L_z-1)}=xyDU_{L_z-1/2}$, which implies $g_1^{(L_z-1)}=xyU_{L_z-1/2}$ since $R$ is a UFD.
		\item $g_1^{(L_z-1)},g_2^{(L_z-1)}$ are the $Z_1,Z_2$ coefficients of $\delta O_Z^{(L_z-1)}$ in the $z=L_z-1$ layer. Therefore,
		\begin{equation*}
			\delta O_Z^{(L_z-1)} = z^{L_z-2}(g_1^{(L_z-1)}Z_1 + g_2^{(L_z-1)}Z_2) = z^{L_z-2}(xyZ_1+DZ_2)U_{L_z-1/2}.
		\end{equation*}
		\item Recall that
		\begin{equation*}
			xy\mathcal G_2^{\text{top}} = z^{L_z-2}(xyZ_1+DZ_2),
		\end{equation*}
		so we can write
		\begin{equation*}
			\delta O_Z^{(L_z)} := \delta O_Z^{(L_z-1)} + xy\mathcal G_2^{\text{top}}U_{L_z-1/2} = 0.
		\end{equation*}
	\end{enumerate}
	In all, 
	\begin{equation*}
		\delta O_Z^{(L_z)} = \delta O_Z + \sum_{k=0}^{L_z-2}xyz^k S_Z U_{k+1/2} + xy\mathcal G_2^{\text{top}}U_{L_z-1/2} = 0.
	\end{equation*}
	Since $\mathcal G_2^{\text{top}}\in\mathcal S_{(X)}$, $\delta O_Z$ differ from 0 by a stabilizer. $O_Z,O'_Z$ differ by a stabilizer.
	\textbf{QED.}

	\subsubsection{Conclusion}\label{appendix_generators_of_low_energy_preserving_operators_X_top_conclusion}
	
	We have derived in Appendix~\ref{appendix_derive_all_low_energy_preserving_operators_X_top} that the bottom syndrome of $Z$-type low-energy preserving operator $O_Z$ lies in $R \mathcal G_1^{\text{bot}}$, and in Appendix~\ref{appendix_bottom_gauge_syndrome_decides_low_energy_operators_up_to_stabilizers_X_top} that the $Z$-type low-energy preserving operators with the same bottom boundary gauge syndrome differ by a stabilizer only. Now, we write down a canonical generators of $Z$-type low-energy preserving operators, and then conclude the generators of $\mathcal O_{(X)}$.
	
	Note that the $e_X$-transport operator
	\begin{equation*}
		\mathcal T_{e_X,z} := \bx\by
		\sum_{k=0}^{L_z-1}z^{k-1}\bA^k Z_2
	\end{equation*}
	is finite-support, $Z$-type, low-energy preserving, and has the bottom syndrome $1\cdot\mathcal G_1^{\text{top}}$. Therefore, by Theorem~\ref{theorem_X_top_OZ_differ_by_stabilizer}, any finite support $Z$-type low-energy preserving operator $O_Z$ with the bottom boundary gauge syndrome $u\mathcal G_1^{\text{bot}}$ differ from $u\mathcal T_{e_X,z}$ by a stabilizer only. Combining with that $X$-type low-energy preserving operators are generated by $\mathcal G_1^{\text{bot}}$, we conclude that $\mathcal O_{(X)}$ is generated by $[\mathcal T_{e_X,z}],[\mathcal G_1^{\text{bot}}]$, where $[\ ]$ stands for the equivalence class with equivalence relation ``differing by a stabilizer''. Specifically, $\mathcal G_2^{\text{bot}}$ has the bottom syndrome $xy\bF\mathcal G_1^{\text{bot}}$ [Eq.~(\ref{eq_bottom_boundary_gauge_syndrome})], so it differ from $xy\bF\mathcal T_{e_X,z}$ by a stabilizer only, which is why $\mathcal O_{(X)}$ has two generators only (while $\mathcal O_{(Z)}$ has three generators).

	\subsection{Identification of \texorpdfstring{\textit{other}}{other} bottom boundary gauge operators}\label{appendix_identification_of_other_boundary_gauge_operators}
	
	In this Appendix, we illustrate the detailed calculation for settling down the identification of \textit{other} bottom boundary gauge operator generators, where \textit{other} means the operator's boundary gauge syndrome is not a sum of top condensed topological excitations, or equivalently, those bottom boundary gauge generators whose identification is settled from the symplectic homomorphism condition of operator identification map ($\Phi_Z$ or $\Phi_X$).
	
	\subsubsection{Under \texorpdfstring{$(Z)$}{(Z)} top boundary}\label{appendix_identification_of_other_boundary_gauge_Z_top}
	
	We have calculated before [see Eq.~(\ref{eq_Haah_FTH_eta_bot_2})] that
	\begin{equation}
		\Omega\left(\mathcal G_2^{\text{bot}},\mathcal G_1^{\text{bot}}\right) = \eta_{21}^{\text{bot}} = \bx\by F.
	\end{equation}
	On the other hand, denoting the symplectic bilinear form of $\tilde P_{(Z)}$ as $\tilde\Omega$, we have
	\begin{equation}
		\tilde\Omega\left(\Phi_Z(\mathcal G_2^{\text{bot}}),\Phi_Z(\mathcal G_1^{\text{bot}})\right) = \tilde\Omega\left(u\tilde X+v\tilde Z,\bx\by F\tilde Z\right) = (\bar u,\bar v)\left(\begin{array}{cc}
			0&1\\1&0
		\end{array}\right)\left(\begin{array}{c}
			0\\\bx\by F
		\end{array}\right) = \bar u\bx\by F.
	\end{equation}
	The symplectic homomorphim condition of $\Phi_Z$ requires 
	\begin{equation}
		\Omega\left(\mathcal G_2^{\text{bot}},\mathcal G_1^{\text{bot}}\right) = \tilde\Omega\left(\Phi_Z(\mathcal G_2^{\text{bot}}),\Phi_Z(\mathcal G_1^{\text{bot}})\right)\quad\Longleftrightarrow\quad \bx\by F = \bar u\bx\by F.
	\end{equation}
	Since $R$ is a UFD and $F$ is non-zero, this implies 
	\begin{equation}
		\bx\by=\bar u\bx\by\quad\Longleftrightarrow\quad u=1\quad\Longrightarrow\quad \Phi_Z(\mathcal G_2^{\text{bot}}) = \tilde X + v\tilde Z.
	\end{equation}
	Another constraint comes from 
	\begin{equation}
		0 = \eta_{22}^{\text{bot}} = \Omega(\mathcal G_2^{\text{bot}},\mathcal G_2^{\text{bot}}) = \tilde\Omega\left(\Phi_Z(\mathcal G_2^{\text{bot}}),\Phi_Z(\mathcal G_2^{\text{bot}})\right) = (1,\bar v)\left(\begin{array}{cc}
			0&1\\1&0
		\end{array}\right)\left(\begin{array}{c}
			1\\v
		\end{array}\right) = v+\bar v.
	\end{equation}
	So, we get the identification of $\mathcal G_2^{\text{bot}}$ in Eq.~(\ref{eq_Haah_code_FTH_gauge_2_Z_identification}). Furthermore, we check the consistency of the identification $\Phi_Z$ by comparing the commutation phase between $\mathcal T_{e_Z,z}$ and $\mathcal G_2^{\text{bot}}$ before and after $\Phi_Z$. $\mathcal T_{e_Z,z}$ and $\mathcal G_2^{\text{bot}}$ only have overlap in the $z=0$ layer $P_0$, so
	\begin{equation}
		\Omega(\mathcal T_{e_Z,z},\mathcal G_2^{\text{bot}}) = \Omega\left(\bA^{L_z-1}\bz X_1,\bx\by\bz(xy\bA Z_1 + DZ_2)\right) = (A^{L_z-1},0,0,0)\left(\begin{array}{cc}
			0&1_{2}\\1_2&0
		\end{array}\right)\left(\begin{array}{c}
			0\\0\\\bA\\\bx\by D
		\end{array}\right)= \bA^{L_z},
	\end{equation}
	where $1_2$ is the $2\times 2$ identity matrix. On the other hand, 
	\begin{equation}
		\tilde\Omega\left(\Phi_Z(\mathcal T_{e_Z,z}),\Phi_Z(\mathcal G_2^{\text{bot}})\right) = \tilde\Omega\left(A^{L_z}\tilde Z,\tilde X+v\tilde Z\right) = (0,\bA^{L_z})\left(\begin{array}{cc}
			0&1\\1&0
		\end{array}\right)\left(\begin{array}{c}
			1\\v
		\end{array}\right) = \bA^{L_z}.
	\end{equation}
	The commutation phase between $\mathcal T_{e_Z,z}$ and $\mathcal G_2^{\text{bot}}$ are indeed the same before and after $\Phi_Z$, the identification $\Phi_Z$ is consistent.
	
	\subsubsection{Under \texorpdfstring{$(X)$}{(X)} top boundary}\label{appendix_identification_of_other_boundary_gauge_X_top}
	
	Recall that $\mathcal G_1^{\text{bot}}=\bx\by\bz(DX_1+X_2)$, and the $z=0$ layer part of $\mathcal T_{e_X,z}$ is $\bx\by\bz Z_2$ [Eq.~(\ref{eq_eX_transport_operator_def})]. $\mathcal T_{e_X,z}$ and $\mathcal G_1^{\text{bot}}$ only have overlap in the $z=0$ layer $P_0$, so
	\begin{equation}
		\Omega(\mathcal T_{e_X,z},\mathcal G_1^{\text{bot}}) = \Omega\left(\bx\by\bz Z_2,\bx\by\bz(DX_1+X_2)\right) = (0,0,0,xy)\left(\begin{array}{cc}
			0&1_2\\1_2&0
		\end{array}\right)\left(\begin{array}{c}
			\bx\by D\\\bx\by\\0\\0
		\end{array}\right) = 1.
	\end{equation}
	On the other hand,
	\begin{equation}
		\tilde\Omega\left(\Phi_X(\mathcal T_{e_X,z}),\Phi_X(\mathcal G_1^{\text{bot}})\right) = \tilde\Omega\left(\tilde Z, u\tilde X+v\tilde Z\right) = (0,1)\left(\begin{array}{cc}
			0&1\\1&0
		\end{array}\right)\left(\begin{array}{c}
			u\\v
		\end{array}\right) = u.
	\end{equation}
	The symplectic homomorphism property of $\Phi_X$ requires
	\begin{equation}
		\Omega(\mathcal T_{e_X,z},\mathcal G_1^{\text{bot}}) = \tilde\Omega\left(\Phi_X(\mathcal T_{e_X,z}),\Phi_X(\mathcal G_1^{\text{bot}})\right) = \tilde\Omega\left(\tilde Z, u\tilde X+v\tilde Z\right),
	\end{equation}
	so
	\begin{equation}
		u=1\quad\Longrightarrow\quad \Phi_X(\mathcal G_1^{\text{bot}}) = \tilde X + v\tilde Z.
	\end{equation}
	Another constraint comes from
	\begin{equation}
		0=\eta_{11}^{\text{bot}} = \Omega(\mathcal G_1^{\text{bot}},\mathcal G_1^{\text{bot}}) = \tilde\Omega\left(\Phi_X(\mathcal G_1^{\text{bot}}),\Phi_X(\mathcal G_1^{\text{bot}})\right) = (1,\bar v)\left(\begin{array}{cc}
			0&1\\1&0
		\end{array}\right)\left(\begin{array}{c}
			1\\v
		\end{array}\right) = \bar v+v.
	\end{equation}

	\subsection{LU circuit connectivity for \texorpdfstring{$v \neq 0$}{v != 0}}\label{appendix_LU_circuit_v_neq_0}
	
	Under the $(Z)$ top boundary, for a general $v=\bar v$, the identified Hamiltonian is
	\begin{equation}
		\tilde H_{(Z),v} := -\sum_{\text{unit }m\in R}m\left(F\tilde Z +_{\mathbb R} h(\tilde X+v\tilde Z)\right).
	\end{equation}
	$v=\bar v$ means $v$ is symmetric under $(x,y)\to(\bx,\by)$, so the general form of $v$ is
	\begin{equation}
		v=a_1+\sum_{m\in\mathscr H}a_m(m+\bar m),
	\end{equation}
	where $a_1,a_m\in\mathbb F_2$, and $\mathscr H$ is the half space of $\mathbb Z^2$, s.t. $\mathscr H\cup\bar{\mathscr H}=\mathbb Z^2$. For example, we can take
	\begin{equation}
		\mathscr H = \big\{x^iy^j\in R:i>0\big\}\cup\big\{y^j:j>0\big\}.
	\end{equation}
	Since $v\in R$, only finite number of $a_m$ is non-zero, denote
	\begin{equation}
		M(v):=\big\{m\in\mathscr H:a_m\neq0\big\}\subset R,
	\end{equation}
	so we can write
	\begin{equation}
		v=a_1+\sum_{m\in M(v)}(m+\bar m).
	\end{equation}
	Note that for $m,m'\in\mathbb Z^2$, the application of controlled-$\tilde Z$ gates map $\tilde X$ and $\tilde Z$ as following: for any $m,m',m''\in \mathbb Z^2$,
	\begin{gather}
		C\tilde Z_{m,m'}\left(m''\tilde X\right)C\tilde Z_{m,m'} = m''\tilde X+\delta_{m,m''}m'\tilde Z + \delta_{m',m''}m\tilde Z,\\
		C\tilde Z_{m,m'}\left(m''\tilde Z\right)C\tilde Z_{m,m'} = m''\tilde Z.
	\end{gather}
	Therefore, denoting the adjoint action of $C\tilde Z_{m,m'}$ as
	\begin{equation}
		Ad\left(C\tilde Z_{m,m'}\right)(\bullet) := C\tilde Z_{m,m'}\bullet C\tilde Z_{m,m'},
	\end{equation}
	we can write
	\begin{align}
		&\Big(\prod_{\text{unit }m\in R}\,\prod_{m'\in M(v)}Ad\left(C\tilde Z_{m,mm'}\right)\Big)\left(m''\tilde X\right)\notag\\
		=&m''\tilde X + \sum_{\text{unit }m\in R}\,\sum_{m'\in M(v)}\left(\delta_{m,m''}mm'\tilde Z + \delta_{mm',m''}m\tilde Z\right)\notag\\
		=&m''\tilde X + \sum_{m'\in M(v)}\left(m'm''\tilde Z + \bar m'm''\tilde Z\right) = m''\left(\tilde X + (v-a_1)\tilde Z\right).
	\end{align}
	Since $Ad\left(C\tilde Z_{m,m'}\right)$ acts on $\tilde Z$ trivially, we have
	\begin{align}
		\Big(\prod_{\text{unit }m\in R}\,\prod_{m'\in M(v)}Ad\left(C\tilde Z_{m,mm'}\right)\Big)\left(\tilde H_{(Z)}\right) = -\sum_{\text{unit }m\in R}m\left(F\tilde Z +_{\mathbb R} h(\tilde X+(v-a_1)\tilde Z)\right).
	\end{align}
	When $a_1=0$, the above Hamiltonian is $\tilde H_{(Z),v}$. When $a_1=1$, we can further apply the transversal $z$-rotation gate
	\begin{equation}
		\prod_{\text{unit }m\in R}R_z^{(m)}\left(\frac\pi 2\right) = \prod_{\text{unit }m\in R}m\cdot\text{exp}\left(-i\frac\pi 4\tilde Z\right),
	\end{equation}
	which maps $m\tilde X\to m\tilde Y$, $m\tilde Z\to m\tilde Z$, resulting in $\tilde H_{(Z),v}$. All the $C\tilde Z$ gates are commutable, and $M(v)$ is finite, each qubits involves in a finite number ($2|M(v)|$) of $C\tilde Z$ gates. So $\tilde H_{(Z)}$ and $\tilde H_{(Z),v}$ are connected by a finite-depth local\footnote{Here local should be understood as finite support, rather than uniformly local. Or in another way, if we ask $\tilde X + v\tilde Z$ to be uniformly local, then the local here can be understood as uniformly local.} unitary (LU) circuit. The $(X)$ top boundary case is completely analogous.

	\subsection{A theorem in unique factorization domain (UFD) and UFD explanation}
	
	The following is a basic theorem that will be repeatedly used in the proof of Theorem~\ref{theorem_Z_top_OX_differ_by_stabilizer}.
	\begin{theorem}\label{theorem_UFD_divide_property}
		In a Unique Factorization Domain (UFD), if $A$ and $F$ are coprime, and $F|As$, then $F|s$.
	\end{theorem}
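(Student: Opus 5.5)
The plan is to reduce the statement to the defining property of prime elements, by factoring $A$, $F$ and $s$ into irreducibles. Here the letters $A$, $F$, $s$ are generic elements of a UFD $D$, not the specific polynomials of Haah's code. Coprimality of $A$ and $F$ will mean that $\gcd(A,F)$ is a unit, i.e.\ no irreducible element divides both. First dispose of the degenerate cases. If $s=0$ the claim $F\mid s$ is trivial. If $F$ is a unit it divides everything. If $A=0$, coprimality forces $F$ to be a unit, since every element divides $0$.

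Next, the key lemma: in a UFD every irreducible element $p$ is prime. To prove it, suppose $p\mid ab$ with $ab=pc$. Factor $a$, $b$ and $c$ into irreducibles. Uniqueness of factorization of $ab=pc$, up to units and order, forces $p$ to be associate to an irreducible factor of $a$ or of $b$. Hence $p\mid a$ or $p\mid b$.

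Main argument, by induction on the number $n(F)$ of irreducible factors of $F$, counted with multiplicity. If $n(F)=0$, $F$ is a unit and we are done. Otherwise write $F=pF'$ with $p$ irreducible and $n(F')=n(F)-1$. From $F\mid As$ we get $p\mid As$. By the lemma, $p\mid A$ or $p\mid s$. The case $p\mid A$ is excluded, since $p$ would then be a common non-unit divisor of $A$ and $F$. Hence $s=ps'$. Then $pF'\mid Aps'$, and cancelling $p$ (valid because $D$ is an integral domain) gives $F'\mid As'$. Moreover $F'$ is still coprime to $A$, because any common divisor of $F'$ and $A$ also divides $F$. By induction $F'\mid s'$, so $F=pF'\mid ps'=s$.

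The only step with real content is the lemma that irreducibles are prime, which uses uniqueness of factorization essentially. Everything else is cancellation in an integral domain and bookkeeping of units. For the application in the proof of Theorem~\ref{theorem_Z_top_OX_differ_by_stabilizer}, note that $R=\mathbb F_2[x^{\pm1},y^{\pm1}]$ is a UFD, since it is a localization of the UFD $\mathbb F_2[x,y]$. Note also that $\gcd(1+x+y,\,1+xy)=1$, because $1+x+y$ has total degree one and is therefore irreducible, and it does not divide $1+xy$.
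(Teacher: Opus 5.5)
Your proof is correct, but it is organized differently from the paper's.

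The paper argues in one step by comparing multiplicities. It writes $F\sim p_1^{\alpha_1}\cdots p_m^{\alpha_m}$ and observes that each $p_i^{\alpha_i}$ must occur in the factorization of $As$. Since $\gcd(A,F)=1$, no $p_i$ occurs in the factorization of $A$. It concludes that every $p_i^{\alpha_i}$ occurs in the factorization of $s$.

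You instead isolate the use of uniqueness in a single lemma, that irreducibles are prime. You then peel off one prime factor of $F$ at a time, using cancellation and induction on $n(F)$.

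The paper's route is shorter. However, it leaves implicit exactly where uniqueness enters: one has to write $As=Ft$ and match the two resulting factorizations of $As$ up to associates. Your version makes that step explicit and handles the degenerate cases. It also shows that the induction step only needs $F$ to be a product of primes, so the argument works in any integral domain once that is known.

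Two small loose ends. First, your induction on $n(F)$ presupposes $F\neq 0$, and $F=0$ is not among the degenerate cases you dispose of. It is trivial: coprimality forces $A$ to be a unit, so $0\mid As$ gives $s=0$. The paper's proof skips this case as well.

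Second, the claim $1+x+y\nmid 1+xy$ in your closing remark deserves a line of justification. The substitution $y\mapsto 1+x$ is a ring homomorphism from $\mathbb{F}_2[x^{\pm1},y^{\pm1}]$ into $\mathbb{F}_2(x)$. It kills $1+x+y$ but sends $1+xy$ to $1+x+x^2\neq 0$.
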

	\textbf{Explanation. }
	\begin{itemize}
		\item A UFD is an integral domain $R$, s.t. every nonzero nonunit element $a\in R$ can be written as a finite product of irreducible elements
		\begin{equation*}
			a=p_1p_2\cdots p_n\,,
		\end{equation*}
		and this factorization is unique up to order and multiplication by units. More precisely, if
		\begin{equation*}
			a=p_1p_2\cdots p_n=q_1q_2\cdots q_m\,,
		\end{equation*}
		where $p_i$ and $q_j$ are irreducible nonzero nonunit factors, then $n=m$, and after reordering,
		\begin{equation*}
			p_i=u_iq_i
		\end{equation*}
		for all $i$, with some unit $u_i$.
		\item $A$ and $F$ are coprime means $\text{gcd}(A,F)=1$, up to the multiplication of unit.
	\end{itemize}
	\textbf{Proof. }Suppose $F\sim p_1^{\alpha_1}\cdots p_m^{\alpha_m}$, where $\sim$ means differ by multiplying a unit here. Since $F|As$, all $p_i^{\alpha_i}$ ($i=1,\cdots,m$) must appear in the factorization of $As$. On the other hand, since $\text{gcd}(A,F)=1$, no $p_i$ ($i=1,\cdots,m$) can appear in the factorization of $A$. Therefore, all prime factors $p_i^{\alpha_i}$ must appear in the factorization of $s$, which implies $F|s$.
	\textbf{QED.}
	
	In particular, $R=\mathbb F_2[x^{\pm1},y^{\pm1}]$ is a UFD.

	\twocolumngrid

    % \bibliography{references}
    %apsrev4-2.bst 2019-01-14 (MD) hand-edited version of apsrev4-1.bst
%Control: key (0)
%Control: author (8) initials jnrlst
%Control: editor formatted (1) identically to author
%Control: production of article title (0) allowed
%Control: page (0) single
%Control: year (1) truncated
%Control: production of eprint (0) enabled
%

\end{document}